\newcommand{\be}{\begin{equation}}
\newcommand{\ee}{\end{equation}}
\newcommand{\beq}{\begin{equation}}
\newcommand{\eeq}{\end{equation}}
\newtheorem{definition}{Definition}
\newtheorem{proposition}{Proposition}
\newcommand{\cA}{{\cal A}}
\newcommand{\cC}{{\cal C}}
\newcommand{\cD}{{\cal D}}
\newcommand{\cF}{{\cal F}}
\newcommand{\cK}{{\cal K}}
\newcommand{\cL}{{\cal L}}
\newcommand{\cM}{{\cal M}}
\newcommand{\cO}{{\cal O}}
\newcommand{\cP}{{\cal P}}
\newcommand{\cS}{{\cal S}}
\newcommand{\cU}{{\cal U}}
\newcommand{\cV}{{\cal V}}
\newcommand{\cZ}{{\cal Z}}
\def\g{\mathfrak g}
\def\sl{\mathfrak{sl}}
\def\an{\mathfrak{an}}
\def\su{\mathfrak{su}}
\def\bbR{\mathbb{R}}
\def\bbC{\mathbb{C}}
\def\bbK{\mathbb{K}}
\def\SU{\mathrm{SU}}
\def\AN{\mathrm{AN}}
\def\dd{\mathrm{d}}
\def\dh{\mathrm{d} h}
\def\dx{\mathrm{d} x}
\def\id{\textrm{id}}
\def\idd{\textrm{id}^{\ot 3}}
\def\coV{\overline \cV}
\def\coK{\overline \cK}
\def\rhd{\triangleright}
\def\lhd{\triangleleft}
\def\la{\langle}
\def\ra{\rangle}
\def\ot{\otimes}
\def\op{\oplus}
\newcommand\blackbowtie{\mathrel{\scalerel*{\blacktriangleright\joinrel\blacktriangleleft}{x}}}
\newcommand\semicoprodl{\mathrel{\scalerel*{\mathrel{>}\joinrel\blacktriangleleft}{x}}}
\def\uDe{\underline{\Delta}}
\def\um{\underline{m}}
\def\sDe{\Delta_{\cV}}
\def\dela{\hat{\delta}_{A}}
\def\delh{\delta_{H}}
\def\demi{\frac{1}{2}}
\def\mone{^{-1}}
\definecolor{darkgreen}{rgb}{0.0, 0.5, 0.0}
\definecolor{darkorange}{rgb}{1.0, 0.5, 0.0}
\definecolor{darkbrown}{rgb}{0.6, 0.3, 0.0}
\definecolor{cargreen}{rgb}{0.0, 0.8, 0.6}
\definecolor{darkpink}{rgb}{0.9, 0.3, 0.5}
\definecolor{gold}{rgb}{0.85, 0.65, 0.1}
\definecolor{bluette}{rgb}{0.3, 0.25, 0.55}
\definecolor{darkred}{rgb}{0.75, 0.0, 0.0}
\definecolor{darkblue}{rgb}{0.0, 0.3, 0.65}
\renewenvironment{abstract}{%
    \if@twocolumn
      \section*{\abstractname}%
    \else 
      \begin{center}%
        {\bfseries\sffamily\abstractname\vspace{\z@}}
      \end{center}%
      \quotation
    \fi}
    {\if@twocolumn\else\endquotation\fi}
\title{Group field theory on quantum groups}
\begin{document}

\author[1]{\sffamily Florian Girelli\thanks{florian.girelli@uwaterloo.ca}}
\author[2,1]{\sffamily Matteo Laudonio\thanks{matteo.laudonio@u-bordeaux.fr}}

\affil[1]{\small Department of Applied Mathematics, University of Waterloo, 200 University Avenue West, Waterloo, Ontario, Canada, N2L 3G1}
\affil[2]{\small Univ. Bordeaux, LABRI, 351 Cours de la Libération, 33400 Talence, France}
\maketitle
\begin{abstract}
 We introduce the framework of Hopf algebra field theory (HAFT) which generalizes the notion of group field theory  to the  quantum group  (Hopf algebra) case. We focus in particular on the 3d case and show how the HAFT we considered is topological. The highlight of the construction is the notion of  plane-wave  which  leads, in the specific example of $\SU_q(2)$ with $q$ real, to a discretization of the Euclidian $BF$ action with a negative cosmological constant. This work can be viewed as the generalization of the seminal work by Baratin and Oriti \cite{Baratin:2010wi}: we have a  (non-commutative) formulation of simplicial 3d gravity in the presence of a non-zero cosmological constant.        
\end{abstract}

\tableofcontents

\section{Introduction}
\label{Sec_HAFT_Intro}

The Turaev-Viro-Barrett-Westbury (TV) model is a topological state-sum model defined in terms  of the category of representations of a (weak) Hopf algebra/quantum group (or even a spherical category \cite{Turaev:1992hq, BARRETT1999-1, barrett1993}). Given a 3d manifold $\cM$, it consists in decorating the edges of a triangulation of $\cM$ in terms of the representation labels. The key ingredient is the (q-deformed) $6j$-symbol.
From a field theory perspective, it is related to the Chern-Simons theory or 3d gravity with a non-zero cosmological constant. Indeed, it can be seen as providing a discretization of the amplitude of the $BF$ action (with gauge symmetry given by the Lie algebra $\g$) with a cosmological constant.
\be
\label{amplitude0}
    \cZ_{BF} = \int [\dd B] [\dd A] \, e^{i\int_{\cM} \, B^I \wedge F_I + \frac{\Lambda}{6} \epsilon_{IJK} B^I\wedge B^J \wedge B^K} \,,
\ee
where $B$ is a 1-form with value in the Lie algebra $\g^*$, $A$ is the 1-form connection with value in the Lie algebra\footnote{The Lie algebras $\g$ and $\g^*$ are dual to each other, and equipped with a co-cycle structure inherit form the classical Drinfeld double.} $\g$ and $F(A)$ is the associated curvature. As a 3d quantum gravity model, the TV model is a spinfoam and it provides the full information about the 3d quantum gravity regime (with no matter fields, but particles, as topological defects can be introduced). 

The TV model has no divergence when one deals with, for example, the $q$ root of unity deformation of $\SU(2)$ (a quasi-Hopf algebra), which corresponds in the (3d) gravity language to the Euclidian case with a positive cosmological constant. In the other cases of signature and sign of the cosmological constant, it is divergent, just as the Ponzano-Regge (PR) model\cite{Ponzano_Regge_1969, FREIDEL2000237}. The PR model corresponds to the state sum model based on the representations of the gauge group of gravity with a zero cosmological constant.

The appearance of a quantum group/deformed symmetry structure can be derived from the Hamiltonian formulation of Chern-Simons or gravity, through some discretization procedure of the fields \cite{Alekseev:1994au, Alekseev:1994pa, Dupuis:2020ndx}. In particular from the gravity picture, one has to choose the proper variables to actually have the gauge transformations depending on the cosmological constant, which fits the quantum theory defined in terms a quantum group symmetry as gauge symmetry \cite{Dupuis:2020ndx}, with the deformation parameter depending on the cosmological constant. 

One can also derive the PR model from a direct discretization of the amplitude \eqref{amplitude0}, with $\Lambda=0$. Indeed, when the cosmological constant is zero, this amplitude can be essentially seen as a plane-wave and therefore as a Dirac delta function implementing that the curvature has to be zero (the $B$ field can be seen as some kind of Lagrange multiplier to implement the zero curvature condition). 
\begin{align}
    \label{amplitude1}
    \cZ^{\Lambda=0}_{BF} &= \int [dB] [dA] \, e^{i\int_{\cM} \, B^I \wedge F_I} \sim     
    \int [dA] \, \delta (F) \\
    &\rightarrow\,\,\,
    \cZ^{\Lambda=0}_{d} = \int [dX] [dg] \, \prod_e \, e^{i \la X_e \,,\, g_e\ra} = \int [dg] \, \prod_e \, \delta(g_e) \,, \label{amplitude11}
\end{align}
where $X_e \in \bbR^3$ is associated to the edge $e$ of a triangulation and $g_e$ is the closed holonomy on the loop spanning the face $e^*$ dual to the edge $e$. The notation $\la X_e \,,\, g_e\ra$ means that we consider a specific coordinate choice $k(g_e)$ for $g_e$ (bearing in mind that multiple coordinate patches are typically needed) so that 
\be
    \label{pw0}
    e^{i\la X_e \,,\, g_e\ra} \equiv e^{i X^a k_a(g_e)} \,,
\ee
and we recover indeed a plane-wave. We note that as a function of $X$, the plane-wave is a non-commutative function with star product $\star$ (which depends on the choice of coordinates), since we must have 
\be
    \label{pw1}
    e^{i\la X_e \,,\, g_e\ra} = e^{i\la X_e \,,\, \prod_{l\in \partial e^*} \, g_l\ra} = \star_l \, \sigma_l 
    \,\,,\quad 
    \sigma_{l\in \partial e^*} \equiv e^{i\la X_e \,,\, g_l\ra} \,,  
\ee
where the $g_l$ is the holonomy associated to the link $l$ of the dual complex \cite{Baratin:2010wi, Baratin:2010nn, Guedes:2013vi}.  
No analogue of such discretization really exists when the cosmological constant is not zero. Asymptotic analysis of the TV amplitude shows that one can recover the Regge action with a cosmological constant \cite{Mizoguchi:1991hk}. One can nevertheless wonder whether a similar derivation as the PR model could exist for the TV model and in particular whether there would exist a generalized notion of plane-wave which would be used to encode a discretized $BF$ action. One should notice that the $BF$ amplitude with a non-zero cosmological constant should be seen as some modified plane-wave. Indeed, similarly to \eqref{amplitude1}, we have   
\be
    \int \dd^3 B \, e^{i \int \, \big(B \cdot F + \frac{\Lambda}{6} \, B \cdot (B \times B)\big)}
    =
    \frac{2}{|\Lambda|} \, K_0\Big(\sqrt{\frac{4 \, F \cdot (F \times F)}{\Lambda}}\Big) \,,
    \label{Bessel}
\ee
where $K_0(x)$ is the modified Bessel function of the second kind, which can be seen as a smearing of the Dirac delta function. 

\medskip 

We intend to show here that even in the TV case, there is a notion of plane-wave which  gives a discretized version of the $BF$ action with a non-zero cosmological constant. Importantly, the recovered action is actually not the standard one given in \eqref{amplitude0} since it is not defined in terms of the fields which give rise to the quantum group structure upon discretization \cite{Dupuis:2020ndx}. 

\medskip

The framework of this construction will be the group field theory (GFT) approach \cite{Oriti:2006GFT,Freidel:2005qe}, which we will upgrade to a general Hopf algebra context. The reason is that group field theories have a configuration and a momentum formulation which correspond to formulating the model in the triangulation picture or in its dual complex. The two descriptions are related by a Fourier transform and hence in terms of a notion of plane-wave. Such plane-wave will appear as the building block to characterize the discretized $BF$ action action with or without cosmological constant.   

The group field theory approach on top of conveniently generating the different triangulations (or dual complex) as Feynman diagrams, emphasizes the fact that a pair of dual Hopf algebras are actually at play, one decorating the triangulation, the other one decorating the dual complex. The pair of Hopf algebras forms the symmetries of the system, namely the (generalized) Drinfeld double and the duality is encoded in the canonical element of $H \otimes A$ \cite{Majid:1996kd}.
On a practical manner, it is already known that when the cosmological constant is zero, we recover a formulation of simplicial gravity with the metric degrees of freedom, decorating the edges of the triangulation, which are non-commutative \cite{Baratin:2010wi}. Introducing an homogeneous curvature will deform further the metric degrees of freedom and will render the decorations of the dual complex non-commutative.  

We will provide the general framework to construct group field theories for any type of Hopf algebra $H$. The combinatorics we chose will focus on generating 3d triangulations, but it can be generalized for any dimension. Our construction emphasizes that the main structure at play is actually not the Hopf algebra $H$, but rather its generalized Drinfeld double $H \bowtie A$. Our work shares some similarities in the formulation with Krasnov's construction which also sets up the GFT in the Hopf algebra language to introduce matter (or topological excitations) \cite{Krasnov:2005GFT3dMatterDrinfeldDouble}.

In section \ref{Sec_HopfAlgebra_Intro}, we introduce the notations and conventions we will use extensively in the paper. We introduce in particular the notion Fourier transform and hence the notion of plane-wave that we will use to characterise the duality between the Hopf algebras which constitute the Drinfeld double. 

In section \ref{Sec_HopfAlgebraFieldTheory}, we introduce the main ingredients  of a Hopf algebra field theory (HAFT), which generalizes the notion of group field theory. We consider a specific action and show that the Feynman diagrams can be related by the Pachner moves, so that their amplitude are actually topological invariants. 

In section \ref{Sec_Examples}, we construct the main example, namely the case of the Drinfeld double of $\SU_q(2)$, with $q$ real. We recall how the HAFT Feynman diagrams in this specific example can be related to TV model, as Boulatov originally discussed. We then discuss how the plane-wave we have can be related to the discretization of the $BF$ action with a cosmological constant.

We have several appendices which contain the proofs of many of the propositions given in the main text. We have also provided two more examples, the case of the finite group, which is relevant to discuss spinfoams with finite groups \cite{Bahr:2011yc} and the undeformed $\SU(2)$ case.

\section{Conventions and notations }
\label{Sec_HopfAlgebra_Intro}

In this section we recall the key Hopf algebra properties that we need and introduce the relevant notations and conventions. The expert reader can skip this section.

\subsection{Hopf algebra notations}
We note as usual  $\Delta$ the co-product, $S$ the antipode, $\varepsilon$ the co-unit, $m$ (or sometimes for short $\cdot$) the multiplication, $\eta$ (or, for simplicity, $1$) the unit and $\tau$ the permutation map. Use use the symbol $\circ$ for the composition of maps and $\id^n$ for the tensor product of $n$ identity maps.  
A bi-algebra $(A,m,\eta,\Delta,\epsilon)$ over the field $\mathbb{K}$ is a vector space over $\mathbb{K}$ which is both an algebra and a co-algebra in a compatible way. An Hopf algebra $(A,m,\eta,\Delta,\epsilon,S)$ over $\mathbb{K}$ is a bi-algebra over $\mathbb{K}$ equipped with an antipode $S$. \\
Given the Hopf algebra $A$, the co-unit and antipode satisfy the axioms
\begin{align}
    &
    (\id \ot \varepsilon) \circ \Delta
    =
    (\varepsilon \ot \id) \circ \Delta
    =
    \id \,,
    \label{HopfAlg_Axiom_Co-unit} 
    \\
    &
    m \circ (S \ot \id) \circ \Delta = 
    m \circ (\id \ot S) \circ \Delta = 
    \eta \, \varepsilon \,.
    \label{HopfAlg_Axiom_Antipode} 
\end{align}
Given the Hopf algebra $A$, we denote $A^{n} = \bigotimes_i^n A_i$ its associated tensor product and consider the following maps.
\begin{align}
    \Delta^{n} 
    & : A \rightarrow A^{n}
    \nonumber \\
    & \text{with} \quad  
    \Delta^{n} \equiv (\id \ot \dots \ot \id \ot \Delta) \circ \dots \circ \Delta
    \,, \quad  n \geq 2 
    \label{1-d_CoProduct} \\
    m^{n} 
    & : A^{n} \rightarrow A
    \nonumber \\
    & \text{with} \quad  
    m^{n} \equiv m \circ \dots \circ (\id \ot \dots \ot \id \ot m) 
    \,, \quad  n \geq 2
    \label{1-d_Multiplication} \\
    \tau^{n} 
    & : A^{n} \rightarrow A^{n}
    \nonumber \\
    & \text{with} \quad 
    \tau^{n} (a_1 \ot \dots \ot a_n) =
    a_n \ot \dots \ot a_1
    \label{1-d_Permutation} \\
    \varepsilon^n 
    & : A^{n} \rightarrow \mathbb{K} 
    \label{1-d_CoUnit} \\
    \eta^n
    & : \mathbb{K} \rightarrow A^{n} 
    \label{1-d_Unit} 
\end{align}
We note that, since the co-product (resp. the multiplication) is co-associative (resp. associative), the position of the co-product in \eqref{1-d_CoProduct} (resp. the multiplication in \eqref{1-d_Multiplication}) does not matter.
Furthermore, the tensor product $A^{n}$ is in turn an Hopf algebra with associated maps
\begin{align}
    \Delta^{(n)} 
    & : A^{n} \rightarrow A^{n} \ot A^{n} 
    \nonumber \\
    & \text{with} \quad 
    \Delta^{(n)} (a_1 \ot \dots \ot a_n) = ({a_1}_{(1)} \ot \dots {a_n}_{(1)}) \ot ({a_1}_{(2)} \ot \dots \ot {a_n}_{(2)})
    \label{d-d_CoProduct} \\    
    m^{(n)} 
    & : A^{n} \ot A^{n} \rightarrow A^{n}
    \nonumber \\
    & \text{with} \quad 
    m^{(n)} \big( (a_1 \ot \dots \ot a_n) \ot (b_1 \ot \dots \ot b_n) \big) = m(a_1 \ot b_1) \ot \dots \ot m(a_n \ot b_n)
    \label{d-d_Multiplication} \\
    \tau^{(n)}
    & : A^{n} \ot A^{n} \rightarrow A^{n} \ot A^{n}
    \nonumber \\
    & \text{with} \quad 
    \tau^{(n)} \big( (a_1 \ot \dots \ot a_n) \ot (b_1 \ot \dots \ot b_n) \big) =
    (b_1 \ot \dots \ot b_n) \ot (a_1 \ot \dots \ot a_n)
    \label{d-d_Permutation}
    \\
    S^{(n)}
    & : A^{n} \rightarrow A^{n}
    \nonumber \\
    & \text{with} \quad 
    S^{(n)} (a_1 \ot \dots \ot a_n) =
    (Sa_1 \ot \dots \ot Sa_n)
    \label{d-d_Antipode}
\end{align}
Here we used the Sweedler notation for the co-product: $\Delta a = \sum a_{(1)} \ot a_{(2)}$. Given a tensor product of Hopf-algebras, we will later use the notation
\begin{align}
    &
    m_{ij} (1 \ot \dots \ot a_1 \ot \dots \ot a_2 \ot \dots \ot 1) \equiv
    1 \ot \dots \ot m(a_1 \ot a_2) \ot \dots \ot 1 \ot \dots \ot 1 \,,
    \label{ij_Product}
    \\
    &
    \Delta_{ij} a \equiv 1 \ot \dots \ot a_{(1)} \ot \dots \ot a_{(2)} \ot \dots \ot 1 \,, \qquad a \in A
    \,.
    \label{ij_CoProduct}
\end{align}
The map $m_{ij}$ stands for a product of two elements embedded in the tensor product of Hopf algebras, respectively living in the $i^{th}$ and $j^{th}$ places, and the result lives in the $i^{th}$ space, for $i < j$. The map $\Delta_{ij}$ stands for usual co-product embedded in a tensor product of Hopf algebras and its two components live respectively in the $i^{th}$ and $j^{th}$ copies of the algebra. We use a minus sign as notation to imply the antipode on one of the tensor spaces, for instance
\be
    \Delta_{-i \, j} a \equiv 1 \ot \dots \ot Sa_{(1)} \ot \dots \ot a_{(2)} \ot \dots \ot 1\,,
    \qquad a \in A
    \,.
    \label{ij_CoProduct_Antipode}
\ee
These notations can be  generalized to the $n$-dimensional product \eqref{1-d_Multiplication} and co-product \eqref{1-d_CoProduct}, which we denote $m^n_{i_1 \dots i_n}$ and $\Delta^n_{i_1 \dots i_n}$. Similarly, they can be generalized to the $n$-dimensional product \eqref{d-d_Multiplication} and co-product \eqref{d-d_CoProduct}, which we denote $m^{(n)}_{i_1 \dots i_n \,\,\, j_1 \dots j_n}$ and $\Delta^{(n)}_{i_1 \dots i_n \,\,\, j_1 \dots j_n}$. 
\begin{definition}[Simplex maps]
\label{Def_Underline_Maps}
    Consider the maps
    \be
        \um : A^{12} \to A^{6}
        \,\,,\quad
        \uDe : A^{6} \to A^{12}
        \,\,,\quad
        \sDe : A^{6} \to A^{16} \,,
    \ee
    that stand for a combination of products or co-products that have the same pattern of a tetrahedron. 
    The maps are defined as
    \begin{align}
        \um (a_1 \ot \cdots \ot a_{12})
        \equiv \, &
        (a_1 \cdot a_{9}) \ot
        (a_2 \cdot a_{12}) \ot
        (a_3 \cdot a_{4}) \ot
        (a_5 \cdot a_{11}) \ot
        (a_6 \cdot a_{7}) \ot
        (a_8 \cdot a_{10}) \,,
        \label{Underline_Multiplication_3d}
        \\
        \uDe (a_1 \ot \cdots \ot a_6)
        \equiv \, &
        \Delta_{1 \, 9} a_1 \ot 
        \Delta_{2 \, 12} a_2 \ot
        \Delta_{3 \, 4} a_3 \ot
        \Delta_{5 \, 11} a_4 \ot 
        \Delta_{6 \, 7} a_5 \ot 
        \Delta_{8 \, 10} a_6 \,,
        \label{Underline_CoProduct_3d}
        \\
        \sDe (a_1 \ot \cdots \ot a_6)
        \equiv \, &
        \Delta^4_{-16 \, 4 \, -1 \, 6} \, a_1 \cdot
        \Delta^4_{-15 \, 4 \, 2 \, 9} \, a_2 \cdot
        \Delta^4_{-14 \, 4 \, -3 \, 12} \, a_3 \cdot
        \Delta^4_{-11 \, 3 \, 2 \, 10} \, a_4 \cdot
        \Delta^4_{-13 \, 3 \, -1 \, 5} \, a_5 \cdot
        \Delta^4_{-8 \, -2 \, -1 \, 7} \, a_6 \,,
        \label{Underline_CoProduct_Simplex_3d}
    \end{align}
    where $\Delta_{ij}$ (and similarly its generalization $\Delta^4_{i_1 \, i_2 \, i_3 \, i_4}$) is given in \eqref{ij_CoProduct}.
\end{definition}

\subsection{Quantum double}

\begin{definition}[Matched pair bi-algebras] \cite{Majid:1994nw}
    Two bi-algebras $H$ and $A$ form a matched pair if there exist a pair of actions
    \be
        \rhd : A \ot H \to H
        \,\,,\quad
        \lhd : A \ot H \to A 
    \ee
    that satisfy the compatibility relations
    \be
        \begin{aligned}
            &
            (ab) \lhd h = (a \lhd (b_{(1)} \rhd h_{(1)})) (b_{(2)} \lhd h_{(2)})
            \,\,,\qquad
            1 \lhd h = \varepsilon (h) 
            \,,\\
            &
            a \rhd (hg) = (a_{(1)} \rhd h_{(1)}) ((a_{(2)} \lhd h_{(2)}) \rhd g)
            \,\,, \qquad
            a \rhd 1 = \varepsilon (a)
            \,,\\
            & \qquad     
            a_{(1)} \lhd h_{(1)} \ot a_{(2)} \rhd h_{(2)}
            =
            a_{(2)} \lhd h_{(2)} \ot a_{(1)} \rhd h_{(1)} \,.
        \end{aligned}
        \label{MatchedPair}
    \ee
\end{definition}
The following definition introduces a key-object of our construction. 

\begin{definition}[Skew paired bi-algebras.]
\label{Def_SkewSymmetricBi-Algebras}
    \cite{Majid:1994nw}
    Two bi-algebras $H$ and $A$ are skew paired if there exists a map called skew pairing $\sigma : A \ot H \to \bbK$ such that
    \be
        \begin{aligned}
            &
            \sigma(a \cdot b \ot h) =
            \sigma(a \ot h_{(1)}) \, \sigma(b \ot h_{(2)}) \,,\\
            &
            \sigma(a \ot h \cdot g) =
            \sigma(a_{(1)} \ot g) \, \sigma(a_{(2)} \ot h) \,,
        \end{aligned}
    \label{SkewMap_Multiplication-CoProduct}
    \ee
    and
    \be
        \sigma(1 \ot h) = \varepsilon(h) 
        \,\,,\quad 
        \sigma(a \ot 1) = \varepsilon(a) \,,
        \label{SkewMap_CoUnit}
    \ee
    for all $a,b \in A$ and $h,g \in H$. We have used the symbol $1$ as a shorthand to encode the unit in both the bi-algebras $H$ and $A$. If either $A$ has an antipode or $H$ has an inverse-antipode, then there exists a "convolution inverse" $\sigma\mone$ that satisfies \cite{Majid:1996kd}
    \be
        \sigma\mone(a \ot h) \equiv \sigma(S a \ot h) = \sigma(a \ot S\mone h) \,.
    \ee
\end{definition}
\begin{definition}[Generalized quantum double]
\label{Def_GenQuantumDouble}
    \cite{Majid:1994nw}
    Let $H$ and $A$ skew co-paired bi-algebras with skew pairing $\sigma$ which is convolution-invertible. 
    The generalized quantum double is the double cross product bi-algebra $D(H,A,\sigma) \equiv H \bowtie A$ built on $H \ot A$, equipped with mutual actions
    \be
        \begin{aligned}
            &
            a \lhd h = a_{(2)} \, \sigma\mone(a_{(1)} \ot h_{(1)}) \, \sigma(a_{(3)} \ot h_{(2)}) \,,\\
            &
            a \rhd h = h_{(2)} \, \sigma\mone(a_{(1)} \ot h_{(1)}) \, \sigma(a_{(2)} \ot h_{(3)}) \,,
        \end{aligned}
        \label{QuantumDouble_Actions}
    \ee
    that make the bi-algebras $H$ and $A$ a matched pair. 
    The generalized quantum double $D(H,A,\sigma)$ is a bi-algebra with product
    \be
        (h \ot a) \cdot (g \ot b) = 
        h (a_{(1)} \rhd g_{(1)}) \ot (a_{(2)} \lhd g_{(2)}) b = 
        h g_{(2)} \ot a_{(2)} b 
        \,\sigma\mone(a_{(1)} \ot g_{(1)}) \, \sigma(a_{(3)} \ot g_{(3)}) \,,
    \ee
    co-product $\Delta = (\Delta_H \ot \Delta_A)$, tensor product, unit and co-unit.
\end{definition}
\begin{definition}[Matched co-pair bi-algebras]
    \cite{Majid:1994nw}
    Two bi-algebras $H$ and $A$ form a matched co-pair if there exist a pair of co-actions
    \be
        \alpha : H \rightarrow H \ot A
        \,\,,\quad
        \beta : A \rightarrow H \ot A
    \ee
    that satisfy the compatibility relations
    \be
        \begin{aligned}
            &
            (\Delta \ot \id) \circ \alpha(h) 
            = 
            ((\id \ot \beta) \circ \alpha(h_{(1)})) (1 \ot \alpha(h_{(2)})) \,, \\
            &
            (\id \ot \Delta) \circ \beta(a) 
            = 
            ((\beta(a_{(1)}) \ot 1) ((\alpha \ot \id) \circ \beta(a_{(2)}))  \,, \\
            & \qquad \qquad \qquad
            \alpha(h)\beta(a) = \beta(a)\alpha(h) \,.
        \end{aligned}
    \ee
\end{definition}
\begin{definition}[Skew co-paired bi-algebras.]
    \cite{Majid:1994nw}
    Two bi-algebras $H$ and $A$ are skew co-paired if there exists an element called skew co-pairing $\sigma \in H \ot A$ such that\footnote{In  \eqref{SkewEl_id-CoProduct}, \eqref{SkewEl_CoProduct-id}, \eqref{SkewEl_id-CoProduct_inv} and \eqref{SkewEl_CoProduct-id_inv},  we used for clarity the notation $\Delta_H$ and $\Delta_A$ to encode  the co-products on $H$ and $A$. In the following, for simplicity, we will drop the indices $H$ and $A$.}
    \begin{align}
        &
        (\id \ot \Delta_A) \, \sigma = 
        \sigma_{13} \sigma_{12} \,,
        \label{SkewEl_id-CoProduct}
        \\
        &
        (\Delta_H \ot \id) \, \sigma = 
        \sigma_{13} \sigma_{23} \,.
        \label{SkewEl_CoProduct-id}
    \end{align}
    If either $A$ admits an antipode or $H$ admits an inverse-antipode, then the skew co-pairing is invertible, with inverse
    \be
        \sigma\mone \equiv (\id \ot S) \, \sigma = (S\mone \ot \id) \, \sigma \,,
        \label{InvSkewEl}
    \ee
    that satisfies the axioms
    \begin{align}
        &
        (\id \ot \Delta_A) \, \sigma^{-1} = 
        \sigma^{-1}_{12} \sigma^{-1}_{13} \,,
        \label{SkewEl_id-CoProduct_inv}
        \\
        &
        (\Delta_H \ot \id) \, \sigma^{-1} = 
        \sigma^{-1}_{23} \sigma^{-1}_{13} \,.
        \label{SkewEl_CoProduct-id_inv}
    \end{align}
\end{definition}
Using the co-unit axiom \eqref{HopfAlg_Axiom_Co-unit} and the co-pairing properties \eqref{SkewEl_id-CoProduct} and \eqref{SkewEl_CoProduct-id}, one derives the identities 
\be
    (\varepsilon \ot \id) \, \sigma =
    (\id \ot \varepsilon) \, \sigma = 1 \,,
\ee
that we call co-unit properties of the skew co-pairing element.
\begin{definition}[Dual of the generalized quantum double]
\label{Def_DualGenQuantumDouble}
    \cite{Majid:1994nw}
    Let $H$ and $A$ skew co-paired bi-algebras with invertible skew co-pairing. The dual of the generalized quantum double is the double cross co-product bi-algebra $D^*(A,H,\sigma) = A \blackbowtie H$ built on $A \ot H$, equipped with mutual co-actions $\alpha : H \to H \ot A$ and $\beta : A \to H \ot A$ given by
    \be
        \begin{aligned}
            &
            \alpha(h) = \sigma\mone (h \ot 1) \sigma \,, \\
            &
            \beta(a) = \sigma\mone (1 \ot a) \sigma \,,
        \end{aligned}
        \label{QuantumDouble_CoActions}
    \ee
    that make the bi-algebras $H$ and $A$ a matched co-pair. 
    The dual of the generalized quantum double $D^*(A,H,\sigma)$ is a bi-algebra with co-product
    \be
        \Delta (a \ot h) = 
        \big( (\id \ot \alpha \ot \id) \circ (1 \ot \Delta_H h) \big) \cdot
        \big( (\id \ot \beta \ot \id) \circ (\Delta_A a \ot 1) \big) =
        \sigma_{23}\mone (\Delta_A a \ot \Delta_H h) \sigma_{23} \,,
        \label{CoProduct_DualQuantumDouble}
    \ee
    product $(a \ot h) \cdot (b \ot g) = ab \ot hg$, tensor product, unit and co-unit.
\end{definition}
\noindent
We will later need to extend this construction in the context of tensor product of bi-algebras. 
We consider two tensor product bi-algebras $H^n = \bigotimes_i^n H_i$ and $A^n = \bigotimes_i^n A_i$ such that the sub bi-algebras $H_i$ and $A_i$ are skew paired (resp. skew co-paired), that is for each pair $H_i, A_i$ we have a map $\sigma_i : A_i \ot H_i \rightarrow \mathbb{K}$ (resp. an element $\sigma_i \in H_i \ot A_i$). Then the tensor product bi-algebras $H^n,A^n$ are skew paired by the map 
\be
    \Sigma^n : A^n \ot H^n \rightarrow \mathbb{K}
    \label{d_SkewMap}
\ee
or skew co-paired by the element 
\be
    \Sigma^n \in H^n \ot A^n \,.
    \label{d_SkewEl}
\ee
In particular, since the bi-algebras $H^n$ and $A^n$ are tensor products of independent bi-algebras, the $n$ dimensional skew co-pairing can be written in the tensor product notation as 
\be
    \Sigma^n = \sigma_{1 \, n+1} \, \sigma_{2 \, n+2} \, \cdots \, \sigma_{n \, 2n} \,.
    \label{d_SkewEl_TensorProd}
\ee
\begin{proposition}[Properties of the multidimensional skew co-pairing element]
\label{Prop_MultiDimensionalSkewEl}
    Let $H^n$ and $A^n$ be skew co-paired tensor product bi-algebras with invertible skew co-pairing $\Sigma^n$ and convolution inverse $\Sigma^{n \, -1}$, the identities below are satisfied.
    \be
        \begin{aligned}
            &
            (m^n \ot \id^{\ot n}) \, \Sigma^n = (\id \ot \tau^n \circ \Delta^n) \, \sigma
            \,,\\
            &
            (m^n \ot \id^{\ot n}) \, \Sigma^{n \, -1} = (\id \ot \Delta^n) \, \sigma\mone 
            \,,
        \end{aligned}
        \qquad
        \begin{aligned}
            &
            (\id^{\ot n} \ot m^n) \, \Sigma^n = (\Delta^n \ot \id) \, \sigma
            \,,\\
            &
            (\id^{\ot n} \ot m^n) \, \Sigma^{n \, -1} = (\tau^n \circ \Delta^n \ot \id) \, \sigma\mone 
            \,.
        \end{aligned}
        \label{n-dimensional_SkewEl_Property}
    \ee
\end{proposition}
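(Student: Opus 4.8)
The plan is to reduce all four identities to the two elementary co-pairing axioms \eqref{SkewEl_id-CoProduct} and \eqref{SkewEl_CoProduct-id} by an induction on $n$, and then to obtain the two inverse identities essentially for free from the antipode. First I would record the two facts that make the bookkeeping manageable. Since the factors $\sigma_{i\,n+i}$ in \eqref{d_SkewEl_TensorProd} sit in pairwise disjoint tensor slots they commute, so writing $\sigma=\sum h\ot a$ one has the fully explicit form $\Sigma^n=\sum h^{(1)}\ot\cdots\ot h^{(n)}\ot a^{(1)}\ot\cdots\ot a^{(n)}$, where each pair $(h^{(i)},a^{(i)})$ is an independent copy of $\sigma$; and, generalizing \eqref{InvSkewEl}, the convolution inverse factorizes as $\Sigma^{n\,-1}=(\id^{\ot n}\ot S^{(n)})\,\Sigma^n=((S^{-1})^{(n)}\ot\id^{\ot n})\,\Sigma^n$, again because disjoint-slot factors invert independently. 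With the explicit form in hand, $(m^n\ot\id^{\ot n})\Sigma^n$ and $(\id^{\ot n}\ot m^n)\Sigma^n$ are computed by merely applying the linear map $m^n$ to the explicit tensor, giving $\sum h^{(1)}\cdots h^{(n)}\ot a^{(1)}\ot\cdots\ot a^{(n)}$ and $\sum h^{(1)}\ot\cdots\ot h^{(n)}\ot a^{(1)}\cdots a^{(n)}$ respectively.

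Next I would prove the two iterated co-pairing lemmas
\[
(\id\ot\Delta^n)\,\sigma=\sigma_{1\,n+1}\,\sigma_{1\,n}\cdots\sigma_{1\,2}
\qquad\text{and}\qquad
(\Delta^n\ot\id)\,\sigma=\sigma_{1\,n+1}\,\sigma_{2\,n+1}\cdots\sigma_{n\,n+1}
\]
by induction on $n$. The base case $n=2$ is exactly \eqref{SkewEl_id-CoProduct} and \eqref{SkewEl_CoProduct-id}. For the inductive step I would write $\Delta^n$ as one further coproduct composed with $\Delta^{n-1}$, which is legitimate by co-associativity (cf. \eqref{1-d_CoProduct}), apply that single coproduct to the level-$(n-1)$ identity, and use that a coproduct is an algebra homomorphism, so it distributes over the product of $\sigma$'s: it splits the unique factor whose relevant leg it hits, via the base axiom, and leaves the remaining factors — which occupy disjoint slots — untouched up to a reindexing. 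The only genuine content here is this slot bookkeeping, and it is precisely what explains the asymmetry of the final identities: \eqref{SkewEl_id-CoProduct} produces $\sigma_{13}\sigma_{12}$ in reversed slot order, so the $A$-legs in the first lemma come out decreasing, whereas \eqref{SkewEl_CoProduct-id} produces $\sigma_{13}\sigma_{23}$ in increasing order.

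I would then read off identities three and one. The right-hand side of the second lemma, multiplied out, is exactly $\sum h^{(1)}\ot\cdots\ot h^{(n)}\ot a^{(1)}\cdots a^{(n)}=(\id^{\ot n}\ot m^n)\Sigma^n$, so the third identity is immediate. For the first identity I would invoke that $\id\ot\tau^n$ is an algebra automorphism of the tensor-product algebra $H\ot A^{\ot n}$ (permuting tensor legs commutes with the componentwise product); applying it to the first lemma reverses the $A$-slot order, turning $\sigma_{1\,n+1}\cdots\sigma_{1\,2}$ into $\sigma_{1\,2}\cdots\sigma_{1\,n+1}=\sum h^{(1)}\cdots h^{(n)}\ot a^{(1)}\ot\cdots\ot a^{(n)}=(m^n\ot\id^{\ot n})\Sigma^n$, and by construction the left-hand side is $(\id\ot\tau^n\circ\Delta^n)\sigma$.

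Finally, the two inverse identities follow without further induction. Applying $\id\ot S^{(n)}$ to the first identity turns its left-hand side into $(m^n\ot\id^{\ot n})\Sigma^{n\,-1}$, because $S^{(n)}$ acts on the $A$-slots and commutes with $m^n$; on the right-hand side the iterated antipode relation $\Delta^n\circ S=S^{(n)}\circ\tau^n\circ\Delta^n$ collapses $S^{(n)}\circ\tau^n\circ\Delta^n$ to $\Delta^n\circ S$, and \eqref{InvSkewEl} then gives $(\id\ot\Delta^n)\sigma^{-1}$ — the second identity, with the flip $\tau^n$ absorbed by the antipode. Symmetrically, applying $(S^{-1})^{(n)}\ot\id$ to the third identity and using $\tau^n\circ\Delta^n\circ S^{-1}=(S^{-1})^{(n)}\circ\Delta^n$ together with $\sigma^{-1}=(S^{-1}\ot\id)\sigma$ yields the fourth identity, now with a $\tau^n$ created on the $H$-side. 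The step I expect to be the main obstacle is the disciplined tracking of which tensor slot each leg occupies in the inductive step; once the two lemmas are set up carefully, the four identities amount only to this relabeling and to the single antipode computation.
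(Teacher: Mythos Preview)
Your argument is correct and is essentially what the paper has in mind: it states only that ``the proof of this proposition follows directly from the definition of $\sigma$,'' i.e.\ from iterating \eqref{SkewEl_id-CoProduct}--\eqref{SkewEl_CoProduct-id} on the factorized form \eqref{d_SkewEl_TensorProd}. Your write-up simply makes explicit the induction and the antipode step that the paper leaves to the reader.
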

The proof of this proposition follows directly from the definition of $\sigma$. 

\subsection{Integral and Fourier transform}
In this section we recall the definition of the Fourier transform  between Hopf algebras. Hence we work at a purely algebraic level, we are not concerned about analytical issues, which would matter if we want to  analyze the possible divergences of the theory.  \\
First we define the notion of \textit{integral} in the Hopf algebra setting. This notion differs from the usual notion of integral one uses in calculus. This latter is actually what will be called the \textit{co-integral}. 
\begin{definition}[Integral and co-integral]
    \cite{Majid:1996kd}
    A left (resp. right) integral in $A$ is an  element $\ell_L$, (resp. $\ell_R$) in $ A$ such that
    \be
        a \cdot \ell_L = \varepsilon(a) \ell_L
        \,\,,\quad
        \ell_R \cdot a = \varepsilon(a) \ell_R 
        \,,\qquad \forall a \in A \,.
        \label{Integral_LeftRightInvariance}
    \ee
    The integral $\ell$ is normalized if $\varepsilon(\ell) = 1$. \\
    A left (resp. right) co-integral on $A$ is a map $\int_A^L : A \rightarrow \mathbb{K}$ (resp. $\int_A^R : A \rightarrow \mathbb{K}$) that satisfies the left  (resp. right) invariance condition
    \be
        \left( \text{id} \ot \int_A^L \right) \, \Delta a = 1 \ot \int_A^L a \,\,,\qquad
        \left( \int_A^R \ot \text{id} \right) \, \Delta a = \int_A^R a \ot 1 
        \,,\qquad \forall a \in A \,.
        \label{CoIntegral_LeftRightInvariance}
    \ee
    The co-integral $\int_A$ is normalized if $\int_A \, 1 = 1$.
\end{definition}
\noindent 
In order to keep track of divergences, in the following we consider non-normalized co-integrals on $H$ and $A$
\be
    \int_H \, 1 \equiv V_H 
    \,\,,\quad
    \int_A \, 1 \equiv V_A \,.
\ee
Using $\sigma$, we can now define the notion of Fourier transform and its inverse. 

\begin{definition}[Fourier transform]
    Let $H$ and $A$ be skew co-paired bi-algebras. The Fourier transform from $H$ to $A$ is a map $\mathcal{F} : H \rightarrow A$ defined as\footnote{Note that we used left co-integrals for both the Fourier transform \eqref{FourierTransform} and its inverse \eqref{Inverse_FourierTransform}. As an alternative convention, we could have used right co-integrals. In this case the inverse skew co-pairing element $\sigma\mone$ would appear in the definition \eqref{FourierTransform} and the skew co-pairing element $\sigma$ in \eqref{Inverse_FourierTransform}.}
    \be
        \mathcal{F}[h] 
        \equiv \frac{1}{\sqrt{\mu}} \left( \int_{H}^L \ot \, \id \right) \,
        \big( \sigma \cdot (h \ot 1)\big) \,,
        \label{FourierTransform} 
    \ee
    with inverse map $\mathcal{F}\mone : A \rightarrow H$
    \be
        \mathcal{F}^{-1}[a]
        \equiv \frac{1}{\sqrt{\mu}} \left( \id \ot \int_{A}^L \right) \,
        \big( \sigma\mone \cdot (1 \ot a)\big) \,,
        \label{Inverse_FourierTransform} 
    \ee
    with $h \in H$, $a \in A$, and normalization factor
    \be
        \mu 
        \equiv \left( \int_{H}^L \ot \int_{A}^L \right) \, \sigma
        = \left( \int_{H}^L \ot \int_{A}^L \right) \, \sigma\mone \,.
        \label{Normalization_Factor}
    \ee
\end{definition}
\begin{proposition}
\label{Prop_Fourier}
    The Fourier transform \eqref{FourierTransform} and the inverse Fourier transform \eqref{Inverse_FourierTransform} are inverse maps in the sense that
    \be
        (\mathcal{F} \circ \mathcal{F}^{-1}) = (\mathcal{F}^{-1} \circ \mathcal{F})= \id \,.
    \ee
\end{proposition}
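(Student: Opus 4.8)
The plan is to verify $\mathcal{F}\mone\circ\mathcal{F}=\id$ by direct computation and then obtain $\mathcal{F}\circ\mathcal{F}\mone=\id$ from the symmetry of the construction under exchanging the roles of $H,A$ and of $\sigma,\sigma\mone$. Writing the skew co-pairing as $\sigma=\sum_\alpha X_\alpha\ot Y_\alpha$ with $X_\alpha\in H$, $Y_\alpha\in A$, the explicit form $\sigma\mone=(\id\ot S)\sigma=\sum_\alpha X_\alpha\ot SY_\alpha$ from \eqref{InvSkewEl} lets me unfold the two transforms \eqref{FourierTransform} and \eqref{Inverse_FourierTransform}. First I would substitute $\mathcal{F}[h]$ into $\mathcal{F}\mone$ and pull the scalar-valued co-integral $\int_H^L$ out of the second one, reducing the composite to a single expression in $H\ot H\ot A$ acted on by two co-integrals, namely $\mathcal{F}\mone[\mathcal{F}[h]]=\frac1\mu\big(\int_H^L\ot\id\ot\int_A^L\big)\big[\sigma\mone_{23}\,\sigma_{13}\,(h\ot1\ot1)\big]$, which in components reads $\frac1\mu\sum_{\alpha,\beta}\int_H^L(X_\alpha h)\,X_\beta\,\int_A^L(SY_\beta\,Y_\alpha)$.

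The core of the proof is a reproducing-kernel identity showing that this double sum collapses to $\mu\,h$. Here I would use the skew co-pairing axioms \eqref{SkewEl_id-CoProduct}--\eqref{SkewEl_CoProduct-id_inv} to trade the product $SY_\beta\,Y_\alpha$ sitting inside $\int_A^L$ for a co-product on a single leg, so that the left-invariance of the co-integral \eqref{CoIntegral_LeftRightInvariance} applies and removes one tensor factor, turning $(\id\ot\int_A^L)\circ\Delta$ into a unit. The antipode then appears in the combination $m\circ(S\ot\id)\circ\Delta$, which by the antipode axiom \eqref{HopfAlg_Axiom_Antipode} produces $\eta\,\varepsilon$; together with the co-unit property $(\varepsilon\ot\id)\sigma=(\id\ot\varepsilon)\sigma=1$ of the skew co-pairing, this ties the index $\beta$ to $\alpha$ and forces the surviving factor $\sum_\alpha X_\alpha(\cdots)$ to rebuild $h$. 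Finally I would check that the residual scalar is exactly the normalization $\mu=\big(\int_H^L\ot\int_A^L\big)\sigma$ of \eqref{Normalization_Factor}, so that the prefactor $1/\mu$ cancels and leaves $h$.

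I expect the reproducing-kernel identity to be the main obstacle. The difficulty is that neither $H$ nor $A$ need be (co)commutative, so the two co-integrals do not commute past the skew co-pairing freely: one has to choose, at each step, which form of $\sigma\mone$ (either $(\id\ot S)\sigma$ or $(S\mone\ot\id)\sigma$) and which of the co-pairing axioms \eqref{SkewEl_id-CoProduct}--\eqref{SkewEl_CoProduct-id_inv} to insert, so that the co-product produced lands precisely in the slot seen by a co-integral. Getting this bookkeeping right---so that the left-invariance \eqref{CoIntegral_LeftRightInvariance} and the antipode axiom \eqref{HopfAlg_Axiom_Antipode} fire in the correct order and the leftover scalar is exactly $\mu$ rather than some other integral of $\sigma$---is the delicate part; once it is done, $\mathcal{F}\circ\mathcal{F}\mone=\id$ follows by the same argument with $(H,\sigma)$ and $(A,\sigma\mone)$ interchanged.
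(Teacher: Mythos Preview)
Your plan is essentially the paper's own argument, with two cosmetic differences: you compute $\mathcal F^{-1}\circ\mathcal F$ first (the paper does $\mathcal F\circ\mathcal F^{-1}$), and you work directly with the co-pairing axioms where the paper packages the same step through the delta function. Concretely, the paper recognises that $(\int_H^L\ot\id\ot\id)(\sigma_{13}\sigma_{12})=\sqrt\mu\,\Delta\dela$ via \eqref{SkewEl_id-CoProduct}, and then the ``reproducing-kernel'' manipulation you flag as the obstacle is exactly the identity $(a\ot1)\cdot\Delta\dela=(1\ot Sa)\cdot\Delta\dela$ of Prop.~\ref{Prop_Delta}; after that, left-invariance of $\int_A^L$ and the normalisation \eqref{Delta_Normalization} finish the job. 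So the bookkeeping you are worried about is precisely what the delta-function lemma \eqref{Delta_CoProduct_Property} is designed to absorb --- if you route your computation through $\delh$ (the $H$-side delta) in the same way, the ordering issues disappear and the leftover scalar is automatically $\mu$.
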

The analogue of the Dirac delta function will be a very important object in our construction since it allows to identify the decorations of the geometric structures. 
\begin{definition}[Delta function]
    Let $H$ and $A$ be skew co-paired bi-algebras. The Dirac delta functions of $H$ (noted $\delh$) and  $A$ (noted $\dela$), are respectively defined as the Fourier transform and the inverse Fourier transform of the unit:
    \beq
        \dela \equiv \cF[1]
        =
        \frac{1}{\sqrt{\mu}} \left( \int_{H}^L \ot \, \id \right) \, \sigma
        \,\,,\quad
        \delh \equiv \cF\mone[1] 
        =
        \frac{1}{\sqrt{\mu}} \left( \id \ot \int_{A}^L \right) \, \sigma\mone \,.
        \label{Delta}
    \eeq
    Similarly, we call opposite delta functions the elements
    \be
        \dela^{-1}
        =
        \frac{1}{\sqrt{\mu}} \left( \int_{H}^L \ot \, \id \right) \, \sigma\mone
        \,\,,\quad
        \delh^{-1}
        =
        \frac{1}{\sqrt{\mu}} \left( \id \ot \int_{A}^L \right) \, \sigma \,.
        \label{Delta_Opposite}
    \ee
\end{definition}
\begin{proposition}[Properties of the delta function]
\label{Prop_Delta}
    Let $H$ and $A$ be skew co-paired Hopf algebras.
    The delta functions  satisfy the identities below.
    \be    
        \begin{aligned}
            &
            (a \ot 1) \cdot \Delta \dela = (1 \ot Sa) \cdot \Delta \dela
            \,,\\
            &
            (h \ot 1) \cdot \Delta \delh = (1 \ot S\mone h) \cdot \Delta \delh 
            \,,
        \end{aligned}
        \qquad \quad
        \begin{aligned}
            &
            \Delta \dela\mone \cdot (a \ot 1) = \Delta \dela\mone \cdot (1 \ot Sa)
            \,,\\
            &
            \Delta \delh\mone \cdot (h \ot 1) = \Delta \delh\mone \cdot (1 \ot S\mone h) \,.
        \end{aligned}
        \label{Delta_CoProduct_Property}
    \ee
    \smallskip
    Moreover, the delta functions are normalized in the sense that
    \be
        \frac{1}{\sqrt{\mu}} \int_A^L \, \dela = 
        \frac{1}{\sqrt{\mu}} \int_A^L \, \dela\mone = 1
        \,\,,\quad
        \frac{1}{\sqrt{\mu}} \int_H^L \, \delh = 
        \frac{1}{\sqrt{\mu}} \int_H^L \, \delh\mone = 1 \,.
        \label{Delta_Normalization}
    \ee
\end{proposition}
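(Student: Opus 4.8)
The plan is to split the statement into the four co-product identities and the four normalizations, proving one representative of each family in detail; the remaining cases then follow by the symmetry $H\leftrightarrow A$ together with the replacement $\sigma\leftrightarrow\sigma\mone$. I take $(a\ot1)\cdot\Delta\dela=(1\ot Sa)\cdot\Delta\dela$ as the model case, and the normalization $\frac{1}{\sqrt{\mu}}\int_A^L\dela=1$ as the model for the second half.

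First I would establish the key lemma that $\dela$ is a \emph{left integral} in $A$, i.e. $a\cdot\dela=\varepsilon(a)\,\dela$ in the sense of \eqref{Integral_LeftRightInvariance}. Writing $\sqrt{\mu}\,\dela=(\int_H^L\ot\id)\,\sigma$ from \eqref{Delta}, the idea is to feed the left-invariance of the co-integral \eqref{CoIntegral_LeftRightInvariance} into the second skew co-pairing axiom \eqref{SkewEl_CoProduct-id}. Concretely, applying $(\id\ot\int_H^L\ot\id)$ to $(\Delta\ot\id)\sigma=\sigma_{13}\sigma_{23}$ and using $(\id\ot\int_H^L)\Delta h = 1\,\int_H^L h$ collapses the middle $H$-leg and produces the invariance relation $\sigma\cdot(1\ot\dela)=1\ot\dela$. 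Pairing the free $H$-leg of this relation against an arbitrary functional, and using the co-unit property $(\id\ot\varepsilon)\sigma=1$ together with the non-degeneracy of $\sigma$ (equivalently, the invertibility of the Fourier transform, Proposition \ref{Prop_Fourier}), promotes it to $a\cdot\dela=\varepsilon(a)\dela$ for all $a\in A$. Running the same computation with \eqref{SkewEl_CoProduct-id_inv} shows that the opposite delta $\dela\mone$ of \eqref{Delta_Opposite} is a \emph{right} integral, while the $H\leftrightarrow A$ analogues (which bring in $S\mone$ through $\sigma\mone=(S\mone\ot\id)\sigma$ of \eqref{InvSkewEl}) treat $\delh$ and $\delh\mone$.

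Granted the left-integral property, the co-product identity is the standard quasi-invariance of an integral. Since $\Delta$ is an algebra map and $\dela$ is a left integral, $\Delta(a\cdot\dela)=\varepsilon(a)\,\Delta\dela$, i.e. $\sum a_{(1)}\dela_{(1)}\ot a_{(2)}\dela_{(2)}=\varepsilon(a)\,\Delta\dela$. Starting from $(a\ot1)\Delta\dela=\sum a_{(1)}\varepsilon(a_{(2)})\dela_{(1)}\ot\dela_{(2)}$ and substituting $\varepsilon(a_{(2)})=S(a_{(2)})a_{(3)}$ from the antipode axiom \eqref{HopfAlg_Axiom_Antipode}, the factor $a$ is transported across the tensor product and re-emerges on the second leg as $Sa$, yielding $(1\ot Sa)\Delta\dela$. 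The identities for $\delh$, $\dela\mone$ and $\delh\mone$ follow in the same way from the right-integral property and from the two presentations $\sigma\mone=(\id\ot S)\sigma=(S\mone\ot\id)\sigma$ in \eqref{InvSkewEl}. I expect this step to be the main obstacle: migrating the external element $a$ into $Sa$ on the opposite factor without disturbing $\dela$ is exactly where the antipode axiom is indispensable, and it is also where one must carefully track the \emph{side} of multiplication and whether $S$ or $S\mone$ appears, since those are the only features separating the four identities.

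The normalizations are then immediate from the definitions. By \eqref{Delta} and linearity of the co-integral, $\frac{1}{\sqrt{\mu}}\int_A^L\dela=\frac{1}{\mu}\big(\int_H^L\ot\int_A^L\big)\sigma$, which equals $1$ by the definition \eqref{Normalization_Factor} of $\mu$; the cases of $\dela\mone$, $\delh$ and $\delh\mone$ are identical once one also uses the second equality $\mu=\big(\int_H^L\ot\int_A^L\big)\sigma\mone$ in \eqref{Normalization_Factor} for the opposite deltas.
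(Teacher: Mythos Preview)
Your strategy—first show that $\dela$ is a left integral, then invoke the ``quasi-invariance of an integral'' to get \eqref{Delta_CoProduct_Property}—does not close, and the gap is precisely at the transport step you flagged. From the left–integral property $a\,\dela=\varepsilon(a)\,\dela$ alone, the identity one can actually extract is
\[
(a\ot 1)\,\Delta\dela \;=\; (1\ot S^{-1}a)\,\Delta\dela,
\]
not the version with $S$. Indeed, applying $\Delta$ to $a\,\dela=\varepsilon(a)\dela$ and tensoring once more gives $a_{(1)}\dela_{(1)}\ot a_{(2)}\dela_{(2)}\ot a_{(3)}=\dela_{(1)}\ot\dela_{(2)}\ot a$; collapsing the third leg into the second \emph{from the left} requires $S^{-1}(b_{(2)})b_{(1)}=\varepsilon(b)$, hence produces $S^{-1}$. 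Your proposed substitution $\varepsilon(a_{(2)})=S(a_{(2)})a_{(3)}$ leaves you with $a_{(1)}\dela_{(1)}\ot S(a_{(2)})a_{(3)}\dela_{(2)}$, and there is no Hopf-algebra identity turning $S(a_{(3)})a_{(2)}$ into a counit. A concrete check in Sweedler's four–dimensional Hopf algebra (left integral $\Lambda=x+gx$) gives $(x\ot1)\Delta\Lambda=(1\ot S^{-1}x)\Delta\Lambda\neq(1\ot Sx)\Delta\Lambda$. Note also that the four identities in \eqref{Delta_CoProduct_Property} carry \emph{different} antipodes ($S$ for $\dela,\dela^{-1}$ versus $S^{-1}$ for $\delh,\delh^{-1}$); since all four deltas are one–sided integrals by Proposition~\ref{Prop_DeltaIntegral}, an argument based solely on the integral property cannot possibly reproduce this pattern.

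The paper's proof is genuinely different: it never abstracts away from $\sigma$. Writing $a=\hat\Phi=\cF[\Phi]$, it expresses $(\hat\Phi\ot1)\Delta\dela$ as an $\int_H^L\ot\int_H^L$ of a product of $\sigma$'s, inserts $\sigma_{14}^{-1}\sigma_{14}$, and uses the skew axiom $(\Delta_H\ot\id)\sigma=\sigma_{13}\sigma_{23}$ together with left invariance of $\int_H^L$ to strip the coproduct. The antipode appears only at the very end through $\sigma^{-1}=(\id\ot S)\sigma$ from \eqref{InvSkewEl}, and it is exactly the skewness of \eqref{SkewEl_id-CoProduct}–\eqref{SkewEl_CoProduct-id} that selects $S$ for $\dela$ and $S^{-1}$ for $\delh$. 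This extra input is what your route discards.

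A smaller point: to upgrade $\sigma\cdot(1\ot\dela)=1\ot\dela$ to $a\,\dela=\varepsilon(a)\dela$ you appeal to Proposition~\ref{Prop_Fourier}; in the paper that proposition is \emph{proved using} \eqref{Delta_CoProduct_Property}, so as written the argument is circular. Your treatment of the normalizations \eqref{Delta_Normalization} is correct and coincides with the paper's.
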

\begin{proposition}[Delta function as integral in the Hopf algebra]
\label{Prop_DeltaIntegral}
    Consider the skew co-paired Hopf algebras $H,A$.
    The delta functions \eqref{Delta} of the Hopf algebras $H$ and $A$ are resp. left integrals of $A$ and $H$:
    \be
        a \cdot \dela = \dela \, \varepsilon(a)
        \,\,,\quad
        h \cdot \delh = \delh \, \varepsilon(h) \,.
    \label{Delta_RightInvariance}
    \ee
    The opposite delta functions \eqref{Delta_Opposite} are resp. right integrals of $A$ and $H$:
    \be
        \dela\mone \cdot a = \dela\mone \, \varepsilon(a)
        \,\,,\quad
        \delh\mone \cdot h = \delh\mone \, \varepsilon(h) \,.
    \label{OppDelta_LeftInvariance}
    \ee
\end{proposition}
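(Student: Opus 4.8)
The plan is to deduce all four statements from the co-product properties of the delta functions already recorded in Proposition \ref{Prop_Delta}, by contracting the second tensor leg with the co-unit. Recall that, by the definition of integral, an element $\ell\in A$ is a left integral precisely when $a\cdot\ell=\varepsilon(a)\,\ell$ for all $a$, and a right integral when $\ell\cdot a=\varepsilon(a)\,\ell$. Each of the four claims is exactly one such multiplication identity, so it suffices to produce them.

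First I would treat $\dela$. Proposition \ref{Prop_Delta} gives $(a\ot 1)\cdot\Delta\dela=(1\ot Sa)\cdot\Delta\dela$. Applying $\id\ot\varepsilon$ to both sides and using the co-unit axiom \eqref{HopfAlg_Axiom_Co-unit} in the form $(\id\ot\varepsilon)\circ\Delta=\id$, the left-hand side collapses: $(\id\ot\varepsilon)\big((a\ot1)\Delta\dela\big)=\sum a\,\dela_{(1)}\,\varepsilon(\dela_{(2)})=a\cdot\dela$. On the right-hand side, using that $\varepsilon$ is an algebra map together with $\varepsilon\circ S=\varepsilon$, one gets $\sum\dela_{(1)}\,\varepsilon(Sa)\,\varepsilon(\dela_{(2)})=\varepsilon(a)\,\dela$. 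Hence $a\cdot\dela=\varepsilon(a)\,\dela$, i.e. $\dela$ is a left integral of $A$. The statement for $\delh$ follows identically from the second identity of Proposition \ref{Prop_Delta}, the only change being the use of $\varepsilon\circ S\mone=\varepsilon$ to handle the $S\mone h$ term.

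The right-integral claims for the opposite deltas are obtained the same way from the two identities in the right column of \eqref{Delta_CoProduct_Property}, which carry the external multiplication on the right: applying $\id\ot\varepsilon$ to $\Delta\dela\mone\cdot(a\ot1)=\Delta\dela\mone\cdot(1\ot Sa)$ yields $\dela\mone\cdot a=\varepsilon(a)\,\dela\mone$, and the same manipulation with $S\mone$ gives $\delh\mone\cdot h=\varepsilon(h)\,\delh\mone$.

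Since the real content is already packaged in Proposition \ref{Prop_Delta}, this route has essentially no obstacle beyond bookkeeping. The only auxiliary fact to verify is $\varepsilon\circ S=\varepsilon$ (and its inverse-antipode analogue), which I would establish by applying $\varepsilon$ to the antipode axiom \eqref{HopfAlg_Axiom_Antipode} and using $\sum a_{(1)}\varepsilon(a_{(2)})=a$. If instead one wanted a derivation not relying on Proposition \ref{Prop_Delta}, the harder path would be to compute $a\cdot\dela$ directly from the definition \eqref{Delta}, with $\dela=\tfrac1{\sqrt\mu}(\int_{H}^L\ot\id)\,\sigma$; there the main difficulty is to convert the external left multiplication by $a\in A$ into an operation on the $H$-leg where the left-invariance \eqref{CoIntegral_LeftRightInvariance} of the co-integral can be applied, which forces one to invoke the skew co-pairing axioms \eqref{SkewEl_id-CoProduct} and \eqref{SkewEl_CoProduct-id} together with the antipode relation \eqref{InvSkewEl}. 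The co-unit contraction above neatly sidesteps this.
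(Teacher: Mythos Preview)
Your proof is correct but follows a different route than the paper. The paper proves the statement by a direct computation from the definition $\dela=\tfrac{1}{\sqrt\mu}\big(\int_H^L\ot\id\big)\sigma$: it writes a generic element of $A$ as $\hat\Phi=\cF[\Phi]$, expresses $\hat\Phi\cdot\dela$ in terms of products of skew co-pairings, applies the axiom \eqref{SkewEl_CoProduct-id} to rewrite $\sigma_{13}\sigma_{23}=(\Delta\ot\id)\sigma$, and then uses left invariance of $\int_H^L$ to strip off the $\Phi$-dependence, leaving $\dela\,\varepsilon(\hat\Phi)$. This is precisely the ``harder path'' you sketch in your final paragraph.

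Your approach instead leverages Proposition~\ref{Prop_Delta} as a lemma and contracts with $\id\ot\varepsilon$, which is shorter and more conceptual: it shows that the integral property is an immediate consequence of the delta-function shift property \eqref{Delta_CoProduct_Property}. The trade-off is that your argument inherits whatever machinery went into proving Proposition~\ref{Prop_Delta} (which in the paper is itself a computation with $\sigma$ and the left-invariant co-integral), whereas the paper's direct proof is self-contained and in fact slightly more elementary in its use of the co-pairing axioms. Both are valid; yours makes the logical dependence on Proposition~\ref{Prop_Delta} explicit, while the paper's reveals directly that left invariance of the co-integral on $H$ is the operative ingredient.
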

\smallskip
\noindent
For the sake of clarity, here we specified whether the co-integral are left or right by using the indices $L,R$. In the following we will omit such index, implying that all the co-integrals are \textit{left} invariant ones.
\smallskip \\
\textit{Since the skew co-pairing element $\sigma$ is the kernel of the Fourier transform, we will  call it \textbf{plane wave}.}
\smallskip \\
In App. \ref{app:hopfproof} we give some proofs of the propositions and statements given in this section.

\section{Hopf algebra field theory}
\label{Sec_HopfAlgebraFieldTheory}

In this section we describe what we will call a Hopf algebra field theory (HAFT). This is a field theory based on Hopf algebras which can be seen as a generalization of ordinary group field theories \cite{Boulatov:1992vp, Oriti:2017ave}. We provide an action specified by a kinetic and an interaction term, and  we will derive the amplitude of a general cellular decomposition, expanded as a "sum" over the building blocks of a triangulation (or its dual complex).
\smallskip \\
We call three dimensional HAFT the field theory which Feynman diagrams can be seen as dual of a three dimensional triangulation. This will be the focus of our discussion. It can be generalized in a direct manner to any dimensions.

HAFT is meant to be a model about discrete geometries.
Hence, throughout its construction, we will be driven by our geometric intuition. In particular, we consider a three dimensional  triangulation and its dual complex.
We use elements of the Hopf algebra $A$ to decorate objects in the triangulation and elements of $H$ to decorate objects in the dual complex. This is analogue to what happens for the Kitaev model defined in terms of Drinfeld doubles \cite{BMCA, Buerschaper2010}, which is using a quadrangulation that is then self-dual.

We introduce the nomenclature \textit{node} and \textit{links} resp. for 0$d$ and 1$d$ objects living in the dual complex, while \textit{vertices} and \textit{edges} are resp. 0$d$ and 1$d$ objects in the triangulation. In 3d nodes and links are respectively dual to 3- and 2-simplices (tetrahedra and triangles).

\subsection{Field and closure constraint}

We note $\Phi$ the field which is an element in the tensor product of three copies of $H$, $\Phi \in H^3$. On the other hand, $\hat{\Phi}$ is the dual field and is an element in three copies of $A$, $\hat{\Phi} \in A^3$. The fields $\Phi$ and $\hat{\Phi}$ are related by a Fourier transform \eqref{FourierTransform} and its inverse \eqref{Inverse_FourierTransform}, with kernel $\Sigma$. 

Geometrically, the dual field $\hat{\Phi} \in A^3$ is associated to a triangle (2-simplex), where each of its sub-components (elements of $A$) decorates one of the edges that compose its boundary.
Dually, the field $\Phi \in H^3$ is associated to the graph dual to such triangle and its sub-components (elements of $H$) are the links, that share a single node, dual to the  edges. \\
The fields $\Phi$ and $\hat{\Phi}$ are represented in Fig. \ref{Fig_Field-DualField}. In this context, the skew co-pairing $\sigma$ encodes the information on both the triangulation and the dual complex.
\begin{figure}
     \centering
        \begin{tikzpicture}[scale=1]
    \draw[- , thick] (1.5,0) -- (-1.5,0) -- (0,2.5) node[right] {$\hat{\phi}$} -- (1.5,0);
    
    \coordinate (c) at ($ 1/3*(1.5,0) + 1/3*(-1.5,0) + 1/3*(0,2.5) $);
    
    \draw[fill , darkblue] (c) circle [radius=0.04] node[above] {$\phi$}; 
    
    \draw[- , thick , darkblue] (c) -- (0,-0.6);
    \draw[- , thick , darkblue] (c) -- (-1.5,1.5);
    \draw[- , thick , darkblue] (c) -- (1.5,1.5);
\end{tikzpicture} 
        \caption{The dual field $\hat{\Phi}$ is represented in black as a triangle; the field $\Phi$ is its dual graph represented in blue, made of a central node and three links.}
        \label{Fig_Field-DualField}
\end{figure}
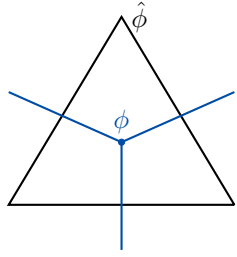
We impose that the field $\Phi$ should be invariant under some gauge symmetry, which we enforce through a projector.
\begin{definition}[Gauge projector]
    Consider the (left\footnote{We may alternatively introduce a right projector as
    \be
        (\cP_R \, \Phi) = \frac{1}{V_H} \left(\idd \ot \int_H\right) \,
        \big((\idd \ot m^3) \circ \Delta^{(3)} \Phi\big) \,,
    \ee
    defined through a right co-integral.}) projector $\cP_L : H^3 \to H^3$ whose action on the field is called gauge averaging
    \be
        (\cP_L \, \Phi) = \frac{1}{V_H} \left(\int_H \ot \idd\right) \,
        \big((m^3 \ot \idd) \circ \Delta^{(3)} \Phi\big) \,.
        \label{GaugeProj_Field}
    \ee
    We recall that $\int_H$ is the left co-integral on $H$, while $m^3$ and $\Delta^{(3)}$ are the maps resp. defined in \eqref{1-d_Multiplication} and \eqref{d-d_CoProduct}.
\end{definition}
In App. \ref{App_GaugeProjector} we prove that the operator $\cP_L$ defined above is a projector. 
The field $\Phi$ we defined is invariant under gauge averaging:
\be
    (\cP_L \, \Phi) = \Phi \,.
    \label{GaugeInvariance}
\ee
The element $(m^3 \ot \idd) \circ \Delta^{(3)} \, \Phi$ in \eqref{GaugeProj_Field} belongs to the Hopf algebra $H^{4}$: this can be understood as the graph dual to the triangle (represented by the field) plus an extra link. Such extra link is interpreted as a parallel transport in the dual complex, and \eqref{GaugeInvariance} enforces the invariance of the field under any possible translation of this type. In analogy with ordinary group field theory, we call it \textit{gauge symmetry} and  \eqref{GaugeInvariance} encodes the \textit{gauge invariance} of the field. We represent the gauge symmetry in Fig. \ref{Fig_GaugeSymmetry}.
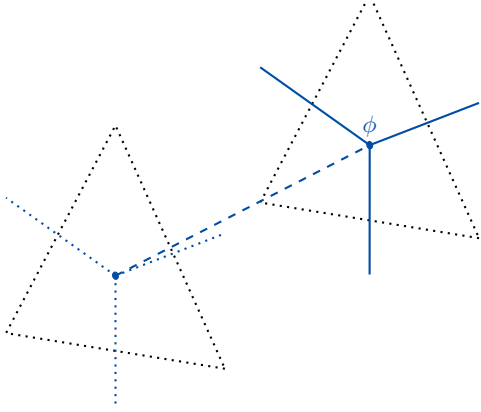
\begin{figure}
    \centering
    \begin{tikzpicture}[scale=1.2 , rotate around y=-20]
    \coordinate (v1) at (1.5,0,0);
    \coordinate (v2) at (-1.5,0,0);
    \coordinate (v3) at (0,2.5,0);
    
    \coordinate (w1) at (1.5,0,-4);
    \coordinate (w2) at (-1.5,0,-4);
    \coordinate (w3) at (0,2.5,-4);
    
    \coordinate (c) at ($ 1/3*(v1) + 1/3*(v2) + 1/3*(v3) $);
    \coordinate (c') at ($ 1/3*(w1) + 1/3*(w2) + 1/3*(w3) $);
    
    \draw[- , thick, dotted] (v1) -- (v2) -- (v3) -- cycle;
    \draw[- , thick, dotted] (w1) -- (w2) -- (w3) -- cycle;
    
    \draw[fill , darkblue] (c) circle [radius=0.04];
    \draw[fill , darkblue] (c') circle [radius=0.04] node[above , scale=0.9] {$\phi$}; 
    
    \draw[- , thick , darkblue] (c') -- (0,-0.6,-4);
    \draw[- , thick , darkblue] (c') -- (-1.5,1.5,-4);
    \draw[- , thick , darkblue] (c') -- (1.5,1.5,-4);
    
    \draw[- , thick , dashed , darkblue] (c) -- (c');
    
    \draw[- , thick, dotted , darkblue] (c) -- (0,-0.6,0);
    \draw[- , thick, dotted , darkblue] (c) -- (-1.5,1.5,0);
    \draw[- , thick, dotted , darkblue] (c) -- (1.5,1.5,0);
\end{tikzpicture} 
    \caption{The gauge symmetry is interpreted as a translation in the dual complex. As the field $\Phi$ is given by three links dual to a triangle (in blue), the (gauge) translation is an extra variable (the dashed blue line).}
    \label{Fig_GaugeSymmetry}
\end{figure}
Dually (upon Fourier transform), the gauge symmetry translates into the closure constraint. 
\begin{proposition}[Closure constraint]
    The Fourier transform of the gauged projected field \eqref{GaugeProj_Field} gives
    \be
        \cF[(\cP_L \, \Phi)] = \hat{\cC} \cdot \hat{\Phi} \,,
        \label{ClosureConstraint_Transform}
    \ee
    where the element $\hat{\cC} \in A^3$ is  
    \be
        \hat{\cC} = \frac{\sqrt{\mu}}{V_H} \, \Delta^3 \dela\mone \,.
        \label{ClosureConstraint}
    \ee
\end{proposition}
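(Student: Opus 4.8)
The plan is to substitute the definition \eqref{GaugeProj_Field} of the gauge projector into $\cF[(\cP_L\,\Phi)]$ and to exploit the factorisation of the multidimensional kernel $\Sigma^3 = \sigma_{1\,4}\,\sigma_{2\,5}\,\sigma_{3\,6}$ from \eqref{d_SkewEl_TensorProd}, so that the Fourier transform on $H^3$ acts as $\cF\ot\cF\ot\cF$ leg by leg. Writing $\Phi = \Phi^1\ot\Phi^2\ot\Phi^3$ in Sweedler notation, the projector first co-multiplies each leg through $\Delta^{(3)}$, then multiplies the three "first" legs via $m^3$ and integrates them with $\int_H$, while the three "second" legs are paired against the kernel. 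This yields
\be
  \cF[(\cP_L\,\Phi)] = \frac{1}{\mu^{3/2}\,V_H}\,
  \int_H\!\big(\Phi^1_{(1)}\,\Phi^2_{(1)}\,\Phi^3_{(1)}\big)\,
  \prod_{i=1}^{3}\int_H\!\big(\sigma^{[1]}_i\,\Phi^i_{(2)}\big)\;
  \big(\sigma^{[2]}_1\ot\sigma^{[2]}_2\ot\sigma^{[2]}_3\big)\,,
\ee
where I abbreviate $\sigma = \sigma^{[1]}\ot\sigma^{[2]}$ and the subscript labels independent copies. A count of normalisations already matches the claim: the three kernels give $\mu^{-3/2}$ and the projector gives $V_H^{-1}$, while on the right-hand side $\hat{\cC}=\tfrac{\sqrt\mu}{V_H}\,\Delta^3\dela\mone$ carries a compensating $\mu^{-\demi}$ hidden in $\dela\mone$ (cf. \eqref{Delta_Opposite}) and $\hat{\Phi}=\cF[\Phi]$ carries $\mu^{-3/2}$; both sides are thus proportional to $\mu^{-3/2}V_H^{-1}$.

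Next I would expand the target $\hat{\cC}\cdot\hat{\Phi}$. Using the definition \eqref{Delta_Opposite} of $\dela\mone$ together with the iterated inverse co-pairing identity $(\id\ot\Delta^3)\,\sigma\mone = \sigma\mone_{1\,2}\,\sigma\mone_{1\,3}\,\sigma\mone_{1\,4}$ — equivalently the second relation of Proposition \ref{Prop_MultiDimensionalSkewEl} read as $(m^3\ot\id^{\ot 3})\,\Sigma^{3\,-1} = (\id\ot\Delta^3)\,\sigma\mone$ — I can present $\hat{\cC}$ as a single $\int_H$ of a product of three inverse kernels sharing one $H$-slot, co-multiplied into the three $A$-slots. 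Multiplying component-wise by $\cF[\Phi^i]$ then puts, in each slot, a product $(\sigma\mone)^{[2]}\,(\sigma)^{[2]}$ on the $A$-side and couples the two $H$-arguments through the shared integral.

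The crux is to identify the two expressions. The mechanism is to undo the splitting $\Phi^i_{(1)}\ot\Phi^i_{(2)}$ introduced by $\Delta^{(3)}$ and reassemble the full leg $\Phi^i$ that appears in $\hat{\Phi}$: one transports the product $m^3$ of the three gauge (translation) legs $\Phi^i_{(1)}$, sitting under the single left co-integral $\int_H$, across the pairing, turning it into the iterated co-product $\Delta^3$ acting on the $A$-factor of the \emph{inverse} skew co-pairing. The two tools are the left-invariance of the co-integral \eqref{CoIntegral_LeftRightInvariance}, used to merge the translation integral with the three kernel integrals, and the skew co-pairing axioms \eqref{SkewEl_CoProduct-id} and \eqref{SkewEl_id-CoProduct_inv} together with $\sigma\mone = (S\mone\ot\id)\,\sigma$ from \eqref{InvSkewEl}, which is precisely what replaces $\dela$ by its opposite $\dela\mone$ and supplies the antipode. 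The hard part will be the Sweedler bookkeeping of this transposition: keeping the antipode placement correct so that the single gauge integral collapses to exactly $\Delta^3\dela\mone$ rather than $\Delta^3\dela$, and checking that no spurious factors of $\mu$ or $V_H$ survive. As a consistency check one verifies, through Proposition \ref{Prop_DeltaIntegral}, that the element $\hat{\cC}$ so obtained is a right integral on each slot of $A^3$, which is exactly the property that makes $\hat{\cC}\cdot\hat{\Phi}$ enforce the closure relation dual to gauge invariance.
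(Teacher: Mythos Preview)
Your proposal is correct and uses the same ingredients as the paper: the multidimensional kernel property of Proposition~\ref{Prop_MultiDimensionalSkewEl} (specifically $(m^3\ot\id^{\ot 3})\,\Sigma^{3\,-1}=(\id\ot\Delta^3)\,\sigma\mone$), the skew co-pairing axiom \eqref{SkewEl_CoProduct-id}, and the left invariance of the $H$-co-integral, assembled to turn the gauge integral into $\Delta^3\dela\mone$.

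The one methodological difference is that you plan to expand $\cF[(\cP_L\,\Phi)]$ and $\hat{\cC}\cdot\hat{\Phi}$ separately and then match, which is why you anticipate the Sweedler bookkeeping as ``the hard part''. The paper avoids that two-sided matching by computing purely forward: after writing $\cF[(\cP_L\,\Phi)]$ with the kernel $\Sigma_{23}$ and the projector's $(m^3\ot\idd)\circ\Delta^{(3)}$, it inserts the identity $\Sigma_{13}\mone\cdot\Sigma_{13}$ under $m^3$. The combination $\Sigma_{13}\cdot\Sigma_{23}$ collapses to $(\Delta^{(3)}\ot\idd)\,\Sigma$ by \eqref{SkewEl_CoProduct-id}, and the left invariance of $\int_H$ then strips the co-product off $\Phi$ in one step, leaving $\Sigma_{13}\mone$ multiplied by the ordinary Fourier kernel for $\Phi$. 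Applying $(m^3\ot\id^{\ot 3})$ to $\Sigma_{13}\mone$ and invoking Proposition~\ref{Prop_MultiDimensionalSkewEl} yields $(\id\ot\Delta^3)\,\sigma\mone$ directly, hence $\Delta^3\dela\mone$. This identity-insertion trick is what makes the antipode placement and normalisations come out automatically, so you will find the execution considerably shorter than a two-sided Sweedler comparison.
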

\begin{proof}
    We show here that the Fourier transform of the projected field \eqref{GaugeProj_Field} gives the closure constraint \eqref{ClosureConstraint}.
    The Fourier transform of the gauge projected field \eqref{GaugeProj_Field} is
    \begin{align}
        \cF[(\cP_L \, \Phi)]
        & =
        \frac{1}{\sqrt{\mu^3}} \left(\int_{H^3} \ot \idd\right) \,
        \big(\Sigma \cdot ((\cP_L \, \Phi) \ot 1)
        \nonumber \\
        & =
        \frac{1}{\sqrt{\mu^3} \, V_H} \left(\int_H \ot \int_{H^3} \ot \idd\right) \,
        \big((m^3 \ot \idd \ot \idd) \, \Sigma_{23} \cdot (\Delta^{(3)}\Phi \ot 1)\big)
        \nonumber \\
        & =
        \frac{1}{\sqrt{\mu^3} \, V_H} \left(\int_H \ot \int_{H^3} \ot \idd\right) \,
        (m^3 \ot \idd \ot \idd) \,
        \big(\Sigma_{13}\mone \cdot \Sigma_{13} \cdot \Sigma_{23} \cdot (\Delta^{(3)}\Phi \ot 1)\big) \,.
    \end{align}
    Then, using the property \eqref{SkewEl_CoProduct-id} for the multi dimensional skew co-pairing element\footnote{We will often simply use $\Sigma$ as a short hand notion for $\Sigma^n$ in order to not clutter the notations.} $\Sigma$ and left invariance of the co-integral on $H$, the expression reduces to
    \begin{align}
        \cF[(\cP_L \, \Phi)]
        & =
        \frac{1}{\sqrt{\mu^3} \, V_H} \left(\int_H \ot \int_{H^3} \ot \idd\right) \,
        (m^3 \ot \idd \ot \idd) \,
        \big(\Sigma_{13}\mone \cdot (\Delta^{(3)} \ot \idd) \,(\Sigma \cdot (\Phi \ot 1)\big)
        \nonumber \\
        & =
        \frac{1}{\sqrt{\mu^3} \, V_H} \left(\int_H \ot \int_{H^3} \ot \idd\right) \,
        (m^3 \ot \idd \ot \idd) \,
        \big(\Sigma_{13}\mone \cdot \Sigma_{23} \cdot (1 \ot \Phi \ot 1)\big) \,.
    \end{align}
    Last, using the property \eqref{n-dimensional_SkewEl_Property} for the three dimensional convolution inverse $\Sigma\mone$, together with the definition of Fourier transform and opposite delta function, we get the expression of the closure constraint \eqref{ClosureConstraint}.
    \begin{align}
        \cF[(\cP_L \, \Phi)]
        & =
        \frac{1}{\sqrt{\mu^3} \, V_H} \left(\int_H \ot \int_{H^3} \ot \idd\right) \,
        \big((\id \ot \idd \ot \Delta^3) \, \sigma\mone \cdot \Sigma_{23} \cdot (1 \ot \Phi \ot 1)\big)
        \nonumber \\
        & =
        \frac{1}{V_H} \left(\int_H \ot \idd\right) \,
        \big((\id \ot \Delta^3) \, \sigma\mone \cdot (1 \ot \hat{\Phi})\big)
        =
        \frac{\sqrt{\mu}}{V_H} \, \Delta^3 \dela\mone \cdot \hat{\Phi} \,.
    \end{align}
\end{proof}
\noindent
Performing a Fourier transformation of the gauge invariance condition \eqref{GaugeInvariance} gives the identity
\be
    (\hat{\cC} \cdot \hat{\Phi}) = \hat{\Phi} \,.
\ee
$\hat{\cC}$ is called the \textit{closure constraint} as it encodes the closure of the triangle \cite{Baratin:2010nn}. Indeed, as an element of the tensor product algebra $A^3$, the closure constraint is interpreted as the combination of three edges in the triangulation. Being given as the co-product of the delta function, these edges are naturally interpreted as part of a discrete closed path. 
Therefore, \eqref{ClosureConstraint_Transform} implements the closure of the boundary (the three sub-components) of the (triangle) dual field $\hat{\Phi}$.

Note that we could also demand to implement a symmetry under permutations of the arguments $H$ in $H^3$. We will not impose this and follow instead \cite{Freidel:2009hd} where it was shown that in 3d, if the proper interaction term is chosen, there is no need to use such permutation symmetry.

\subsection{Action}

The HAFT action is composed by a kinetic plus an interaction term (with coupling constant set to $1$ for simplicity) 
\be
    \cS = \cS_{\cK} + \cS_{\cV} \,.
\ee
The interaction term is defined as the product of fields with the combinatorics of the tetrahedron. 
\begin{definition}[Interaction term]
    The interaction term of three dimensional HAFT is
    \be
        \cS_{\cV} = \int_{H^{6}} \, \big(\um \, (\cP_L \, \Phi \ot \cP_L \, \Phi \ot \cP_L \, \Phi \ot \cP_L \, \Phi)\big) \,.
        \label{InteractionTerm}
    \ee
    The map $\um$ was introduced in Def. \ref{Def_Underline_Maps} and it is explicitly given in \eqref{Underline_Multiplication_3d}.
\end{definition}
Going to the dual picture, namely the triangulation picture, allows a better interpretation of such interaction term. 
\begin{proposition}[Tetrahedron]
\label{Prop_Vertex}
    The interaction term \eqref{InteractionTerm} can be expressed as an integral operator in the two forms below
    \be
        \cS_{\cV} =
        \int_{H^{12}} \, \big(\cV \cdot (\Phi \ot \Phi \ot \Phi \ot \Phi)\big) =
        \int_{A^{12}} \, \big(\hat{\cV} \cdot (\hat{\Phi} \ot \hat{\Phi} \ot \hat{\Phi} \ot \hat{\Phi})\big) \,,
        \label{InteractionTerm_Amplitude}
    \ee
    where
    \be
        \begin{aligned}
            \cV 
            & =
            \frac{1}{\mu^{6} \, V_H^{4}} 
            \left(\int_{H^{4}} \ot \id^{12}\right) \, \sDe \delh\mone
            \,,\\
            \hat{\cV} 
            & =
            \frac{1}{\mu^{6}} \, \uDe \dela\mone \, \cdot \, (\hat{\cC} \ot \hat{\cC} \ot \hat{\cC} \ot \hat{\cC}) \,,
        \end{aligned}
        \label{VertexAmplitude}
    \ee
    are the kernels of the \textit{tetrahedron amplitudes}, resp. in the $H$ and $A$ representations. 
    The co-product $\sDe$ was introduced in Def. \ref{Def_Underline_Maps}, moreover the opposite delta functions $\delh\mone$ and $\dela\mone$ in both the amplitudes \eqref{VertexAmplitude} are six dimensional delta functions.
\end{proposition}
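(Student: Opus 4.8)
The statement asserts two things at once: that the interaction term \eqref{InteractionTerm} equals $\int_{H^{12}}\cV\cdot(\Phi^{\ot 4})$ in the $H$-representation and $\int_{A^{12}}\hat\cV\cdot(\hat\Phi^{\ot 4})$ in the $A$-representation, with the two kernels \eqref{VertexAmplitude} related by Fourier transform. The plan is to prove each form by a direct computation starting from \eqref{InteractionTerm}: the $H$-representation by unfolding the four gauge projectors and the simplex product $\um$ into a single kernel, and the $A$-representation by Fourier-transforming the four fields and invoking the closure-constraint identity \eqref{ClosureConstraint_Transform}. Since both expressions equal the same scalar $\cS_{\cV}$, their agreement also furnishes a Plancherel-type consistency check underwritten by Proposition \ref{Prop_Fourier}.

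For the $H$-representation I would substitute the projector definition \eqref{GaugeProj_Field} into each of the four factors $\cP_L \Phi$. Each projector contributes one factor $1/V_H$ and one co-integral $\int_H$ carrying the $m^3 \circ \Delta^{(3)}$ structure, so the four together supply the $1/V_H^4$ and the combined integration $\int_{H^4}$ appearing in $\cV$. I would then treat $\int_{H^6} \circ \um$ as a single linear functional on $H^{12}$ and move the six edge-integrations onto the bare fields: because $(x,y) \mapsto \int_H (x \cdot y)$ is the nondegenerate pairing whose reproducing kernel is the integral element, each integrated edge is traded for one copy of the opposite delta $\delh\mone$, with the normalizations \eqref{Delta_Normalization} accounting for the $\mu$-powers. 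The tetrahedron gluing encoded in $\um$ together with the four projector holonomies then dictates exactly how these six deltas are co-multiplied and distributed over the sixteen intermediate slots — with antipodes recorded in the minus-labelled positions — which is precisely the content of the simplex co-product $\sDe$ of Definition \ref{Def_Underline_Maps}. Collecting everything yields $\cV = \frac{1}{\mu^6 V_H^4} (\int_{H^4} \ot \id^{12}) \, \sDe \delh\mone$.

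For the $A$-representation I would instead express each field through the inverse Fourier transform, $\Phi = \cF\mone[\hat\Phi]$; the four inverse transforms of the three-component fields each carry a factor $\mu^{-3/2}$, producing the overall $\mu^{-6}$ in $\hat\cV$. Applying the closure-constraint identity \eqref{ClosureConstraint_Transform} replaces each projected field by $\hat\cC \cdot \hat\Phi$ and so generates the four factors $\hat\cC \ot \hat\cC \ot \hat\cC \ot \hat\cC$. Under the Fourier transform the simplex product $\um$ on the $H$-side dualizes to the simplex co-product $\uDe$ on the $A$-side — a consequence of the skew co-pairing identities \eqref{n-dimensional_SkewEl_Property} and the co-integral invariance \eqref{CoIntegral_LeftRightInvariance} — while the six edge-integrations paired against the plane wave $\Sigma$ collapse to the six-dimensional opposite delta $\dela\mone$. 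Assembling the pieces gives $\hat\cV = \frac{1}{\mu^6} \, \uDe \dela\mone \cdot (\hat\cC \ot \hat\cC \ot \hat\cC \ot \hat\cC)$, as claimed.

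The hard part is not any single algebraic identity but the combinatorial bookkeeping that matches the explicit index assignments of $\um$, $\uDe$ and especially $\sDe$ in Definition \ref{Def_Underline_Maps} to the geometric gluing of four triangles into a tetrahedron. One must keep careful track of which edges inherit an antipode — the minus-labelled slots, coming from the relative orientations of the glued faces — and verify that the six co-integrals, the four projector integrations, and the delta normalizations of Propositions \ref{Prop_Delta} and \ref{Prop_DeltaIntegral} conspire to give exactly the prefactors $1/(\mu^6 V_H^4)$ and $1/\mu^6$. Once this index dictionary is fixed, each rewriting step is a routine application of co-integral invariance, the skew co-pairing relations, and the delta identities.
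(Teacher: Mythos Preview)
Your plan is sound, and for the $A$-representation it coincides with the paper's derivation: write each $\cP_L\Phi$ as $\cF^{-1}[\hat\cC\cdot\hat\Phi]$ via \eqref{ClosureConstraint_Transform}, recognise the resulting product of twelve $\sigma^{-1}$'s as $(\id^{\ot 6}\ot\uDe)\,(\Sigma^6)^{-1}$ from the simplex pattern of $\um$, and collapse the six co-integrals over $H$ to the six-dimensional $\dela^{-1}$.

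For the $H$-representation your route differs from the paper's. You propose to stay on the $H$-side throughout, unfolding the four projectors and then trading the six edge co-integrals for copies of $\delh^{-1}$ via the reproducing-kernel property of the integral element. The paper instead establishes the $A$-form first and then Fourier-transforms back: it reinserts $\hat\cC\cdot\hat\Phi=\cF[\cP_L\Phi]$, unfolds the projectors explicitly via \eqref{GaugeProj_Field}, and manipulates the resulting array of $\sigma^{\pm 1}$'s using \eqref{SkewEl_CoProduct-id}, \eqref{SkewEl_CoProduct-id_inv}, the delta identity \eqref{Delta_CoProduct_Property}, and left invariance of the co-integrals, until the six four-fold co-products constituting $\sDe$ appear. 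Your direct route is viable in principle --- the ``trading'' step is precisely the mechanism underlying the propagator kernel in Proposition~\ref{Prop_Propagator} --- but note that even there the paper obtains $\cK$ by passing through the $A$-side. The practical advantage of the paper's detour is that the index pattern of $\sDe$, antipode placements included, is \emph{forced} by the plane-wave algebra rather than having to be matched by hand against Definition~\ref{Def_Underline_Maps}. Either way the substantive work is the combinatorial bookkeeping you correctly flag in your final paragraph.
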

\noindent 
Due to the length and complexity of the proof of Prop. \ref{Prop_Vertex}, we provide it in App. \ref{App_TetrahedronAmplitude}. \\
Each dual field is associated to a triangle, by construction, the interaction term  represents the composition of four triangles, with the combinatorics of a tetrahedron specified by the co-product \eqref{Underline_CoProduct_Simplex_3d}.
The associated amplitude is thus interpreted as the boundary of a tetrahedron or the bulk and the boundary of its dual graph and it is illustrated in Fig. \ref{Fig_VertexAmplitude}. In analogy with ordinary field theory, the tetrahedron amplitude encodes the smallest information (we call it \textit{building block}) of any Feynman diagram of Hopf algebra field theory. As a consequence, the HAFT Feynman diagrams are (dual to) three dimensional triangulations.
\begin{figure}
     \centering
     \begin{subfigure}[b]{0.4\textwidth}
        \centering
        \begin{tikzpicture}[scale=1.6 , rotate around y=-45]
    \coordinate (v1) at (1.5,0,0);
    \coordinate (v2) at (-1.5,0,0);
    \coordinate (v3) at (0,2.5,0);
    
    \coordinate (w1) at (1.5,0,-1.5);
    \coordinate (w2) at (-1.5,0,-1.5);
    \coordinate (w3) at (0,2.5,-1.5);
    
    \coordinate (c) at ($ 1/3*(v1) + 1/3*(v2) + 1/3*(v3) $);
    \coordinate (c') at ($ 1/3*(w1) + 1/3*(w2) + 1/3*(w3) $);
    
    \draw[-> , thick] ($ 0.975*(v1) + 0.025*(v2) $) -- ($ 0.025*(v1) + 0.975*(v2) $);
    \draw[-> , thick] ($ 0.975*(v2) + 0.025*(v3) $) -- ($ 0.025*(v2) + 0.975*(v3) $);
    \draw[-> , thick] ($ 0.975*(v3) + 0.025*(v1) $) -- ($ 0.025*(v3) + 0.975*(v1) $);
    \draw[<- , thick] ($ 0.975*(w1) + 0.025*(w2) $) -- ($ 0.025*(w1) + 0.975*(w2) $);
    \draw[<- , thick] ($ 0.975*(w2) + 0.025*(w3) $) -- ($ 0.025*(w2) + 0.975*(w3) $);
    \draw[<- , thick] ($ 0.975*(w3) + 0.025*(w1) $) -- ($ 0.025*(w3) + 0.975*(w1) $);
    
    \draw[fill , darkblue] (c) circle [radius=0.03];
    \draw[fill , darkblue] (c') circle [radius=0.03]; 
    
    \draw[-> , thick , darkblue] (c') -- (0,-0.6,-1.5);
    \draw[-> , thick , darkblue] (c') -- (-1.5,1.5,-1.5);
    \draw[-> , thick , darkblue] (c') -- (1.5,1.5,-1.5);
    
    \draw[-> , thick , darkblue] (c) -- (0,-0.6,0);
    \draw[-> , thick , darkblue] (c) -- (-1.5,1.5,0);
    \draw[-> , thick , darkblue] (c) -- (1.5,1.5,0);
\end{tikzpicture} 
    \caption{The propagator amplitude enforces the identification of a pair of triangles (in black) in the $A$ representation or their dual graphs (in blue) in the $H$ representation. We used arrows to emphasize that edges are identified with opposite orientations and links with the same orientation. This is reflected by the antipode in \eqref{PropagatorAmplitude}.}
    \label{Fig_PropagatorAmplitude}
    \end{subfigure}
    \hfill
    \begin{subfigure}[b]{0.55\textwidth}
        \centering
        \begin{tikzpicture}[scale=0.7 , rotate around y=5]
    \coordinate (v1) at (0,0,4);
    \coordinate (v2) at (7,0,0);
    \coordinate (v3) at (0,0,-4);
    \coordinate (v4) at ($ 1/3*(v1) + 1/3*(v2) + 1/3*(v3) + (0,6.5,0) $);
    
    \coordinate (v12) at ($ 1/2*(v1) + 1/2*(v2) $);
    \coordinate (v13) at ($ 1/2*(v1) + 1/2*(v3) $);
    \coordinate (v14) at ($ 1/2*(v1) + 1/2*(v4) $);
    \coordinate (v23) at ($ 1/2*(v2) + 1/2*(v3) $);
    \coordinate (v24) at ($ 1/2*(v2) + 1/2*(v4) $);
    \coordinate (v34) at ($ 1/2*(v3) + 1/2*(v4) $);
    
    \coordinate (c) at ($ 1/4*(v1) + 1/4*(v2) + 1/4*(v3) + 1/4*(v4)$);
    
    \coordinate (v123) at ($ 1/3*(v1) + 1/3*(v2) + 1/3*(v3) $);
    \coordinate (v124) at ($ 1/3*(v1) + 1/3*(v2) + 1/3*(v4) $);
    \coordinate (v134) at ($ 1/3*(v1) + 1/3*(v3) + 1/3*(v4) $);
    \coordinate (v234) at ($ 1/3*(v2) + 1/3*(v3) + 1/3*(v4) $);
    
    \draw[fill] (c) circle [radius=0.05];
    
    \draw[fill , darkblue] (v123) circle [radius=0.03];
    \draw[fill , darkblue] (v124) circle [radius=0.03];
    \draw[fill , darkblue] (v134) circle [radius=0.03];
    \draw[fill , darkblue] (v234) circle [radius=0.03];
    
    \draw[- , thick , darkred] (c) -- (v124);
    \draw[- , dotted , thick , darkred] (c) -- (v123);
    \draw[- , dotted , thick , darkred] (c) -- (v134);
    \draw[- , thick , darkred] (c) -- (v234);
    
    \draw[- , dotted , thick , darkblue] (v123) -- ($ 0.01*(v123) + 0.99*(v12) $);
    \draw[- , dotted , thick , darkblue] (v123) -- ($ 0.01*(v123) + 0.99*(v13) $);
    \draw[- , dotted , thick , darkblue] (v123) -- ($ 0.01*(v123) + 0.99*(v23) $);
    
    \draw[- , dotted , thick , darkblue] (v124) -- ($ 0.01*(v124) + 0.99*(v12) $);
    \draw[- , dotted , thick , darkblue] (v124) -- ($ 0.01*(v124) + 0.99*(v14) $);
    \draw[- , thick , darkblue] (v124) -- ($ 0.01*(v124) + 0.99*(v24) $);
    
    \draw[- , dotted , thick , darkblue] (v134) -- ($ 0.01*(v134) + 0.99*(v13) $);
    \draw[- , dotted , thick , darkblue] (v134) -- ($ 0.01*(v134) + 0.99*(v14) $);
    \draw[- , dotted , thick , darkblue] (v134) -- ($ 0.01*(v134) + 0.99*(v34) $);
    
    \draw[- , dotted , thick , darkblue] (v234) -- ($ 0.01*(v234) + 0.99*(v23) $);
    \draw[- , thick , darkblue] (v234) -- ($ 0.01*(v234) + 0.99*(v24) $);
    \draw[- , dotted , thick , darkblue] (v234) -- ($ 0.01*(v234) + 0.99*(v34) $);
    
    \draw[- , thick] (v1) -- (v2) -- (v4) -- cycle;
    \draw[- , thick] (v1) -- (v3);
    \draw[- , thick] (v3) -- (v2);
    \draw[- , thick] (v3) -- (v4);
\end{tikzpicture}
        \caption{The tetrahedron amplitude is given by the combination of four closure constraints (in black) representing the triangles that compose the boundary of a tetrahedron.
        The graph dual to the bulk of the tetrahedron is in red and the graph dual to its boundary is in blue. The graph dual to a tetrahedron is given as a combination of six loops made of two bulk links (solid red) and two boundary links (solid blue). Each of these loops is given by one of the six co-products in $\sDe$.}
        \label{Fig_VertexAmplitude}
     \end{subfigure}
     \caption{Representation of the propagator and vertex/tetrahedron amplitude.}
\end{figure}
\smallskip \\
\begin{definition}[Kinetic term]
    The \textit{kinetic term} of three dimensional Hopf algebra field theory is
    \be
        \cS_{\cK} = \int_{H^3} \, \Phi \cdot (\tau \Phi) \,.
        \label{KineticTerm}
    \ee
\end{definition}
The associated amplitude, called propagator, thus represents the identification of the dual graph of two triangles.
As in ordinary in ordinary field theories, the propagator amplitudes are used to glue pairs of amplitudes associated to the interaction term. In this case, two tetrahedra are merged by identifying two triangles (or the associated boundary graph). We represent such identification in Fig. \ref{Fig_PropagatorAmplitude}. Such composition allows to construct arbitrary simplicial three dimensional triangulations or dual complexes (with or without boundaries). \\
In the triangulation picture, the propagator consists in identifying the edges of the triangles.
\begin{proposition}[Propagator]
\label{Prop_Propagator}
    The kinetic term \eqref{KineticTerm} can be expressed as an integral operator in two different ways, 
    \be
        \cS_{\cK} =
        \left(\int_{H^3} \ot \int_{H^3}\right) \, \big(\cK \cdot (\Phi \ot \Phi)\big) =
        \left(\int_{A^3} \ot \int_{A^3}\right) \, \big(\hat{\cK} \cdot (\hat{\Phi} \ot \hat{\Phi})\big) \,,
        \label{KineticTerm_Amplitude}
    \ee
    where
    \be
        \cK = 
        \frac{1}{\sqrt{\mu^3}} \, (S^{(3)} \ot \tau^3) \circ \Delta^{(3)} \delh 
        \,\,,\quad
        \hat{\cK} = 
        \frac{1}{\sqrt{\mu^3}} \, (\idd \ot \tau^3) \circ \Delta^{(3)} \dela\mone \,,
        \label{PropagatorAmplitude}
    \ee
    are the kernels of the \textit{propagator amplitudes}, resp. in the $H$ and $A$ representations.
\end{proposition}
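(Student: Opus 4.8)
The statement packages two rewritings of the kinetic term \eqref{KineticTerm}: one in the $H$-representation, with kernel $\cK$, and one in the $A$-representation, with kernel $\hat{\cK}$. My plan is to prove the $H$-representation directly, by recognizing that $\cK$ is nothing but the co-product of the reproducing kernel $\delh$, and then to obtain the $A$-representation by running the identical argument in $A$ with the opposite delta $\dela\mone$ (or, alternatively, by Fourier-transforming the $H$-result). The entire content of the proposition is that $\delh$ reproduces fields under integration; everything else is organizing antipodes, the flip $\tau^3$, Sweedler legs and the factor $\mu^{-3/2}$.

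The heart of the matter is the \emph{reproducing property} of the (three-dimensional) delta: writing $\Delta^{(3)}\delh=(\delh)_{(1)}\ot(\delh)_{(2)}$ in Sweedler notation with each leg in $H^3$, the identity
\[
\frac{1}{\sqrt{\mu^3}}\Big(\int_{H^3}\ot\idd\Big)\big[(S^{(3)}\ot\idd)\,\Delta^{(3)}\delh\cdot(\Phi\ot 1)\big]=\Phi
\]
should hold for every $\Phi\in H^3$. I would justify it by using the co-product property of the delta, the second line of \eqref{Delta_CoProduct_Property}, to trade the left multiplication by $S^{(3)}(\delh)_{(1)}$ against a shift of the second leg $(\delh)_{(2)}$, together with the left-invariance \eqref{CoIntegral_LeftRightInvariance} of the co-integral and the normalization \eqref{Delta_Normalization}; equivalently, this is exactly the invertibility $\cF\mone\circ\cF=\id$ of Proposition \ref{Prop_Fourier} restricted to $H^3$, since $\delh=\cF\mone[1]$.

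Granting the reproducing identity, I would substitute it for the \emph{first} factor inside $\cS_{\cK}=\int_{H^3}\Phi\cdot(\tau\Phi)$; the surviving field $(\delh)_{(2)}$ then pairs with the copy of $\Phi$ fed into the reproducing integral, doubling the single integral into a double one. The second integral that appears is $\int_{H^3}(\delh)_{(2)}\cdot\tau\Phi$, and to match the stated kernel I would convert it using two elementary facts: $\tau^3$ is an algebra homomorphism of $H^3$ (the product there being componentwise), and the co-integral $\int_{H^3}=\int_{H}\ot\int_{H}\ot\int_{H}$ is invariant under $\tau^3$ (the same $\int_{H}$ sits in each slot). These turn $\int_{H^3}(\delh)_{(2)}\cdot\tau\Phi$ into $\int_{H^3}\tau^3(\delh)_{(2)}\cdot\Phi$, so that
\[
\cS_{\cK}=\frac{1}{\sqrt{\mu^3}}\Big(\int_{H^3}\ot\int_{H^3}\Big)\big[(S^{(3)}\ot\tau^3)\,\Delta^{(3)}\delh\cdot(\Phi\ot\Phi)\big],
\]
which is precisely the first form of \eqref{KineticTerm_Amplitude} with $\cK$ as in \eqref{PropagatorAmplitude}. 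Notably, routing the argument through the first factor plus $\tau^3$-invariance means no trace (cyclicity) property of the co-integral is needed.

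For the $A$-representation I would not repeat the reproducing estimate from scratch but exploit the symmetry of the skew-copaired pair $(H,A)$: the same manipulation carried out in $A^3$ with $\dela\mone$ in place of $\delh$ yields the second form of \eqref{KineticTerm_Amplitude}. The structural difference in \eqref{PropagatorAmplitude}, namely that $\hat{\cK}$ carries $\idd$ where $\cK$ carries $S^{(3)}$, is explained exactly by the fact that $\dela\mone$ is a \emph{right} integral of $A$ (Proposition \ref{Prop_DeltaIntegral}) governed by the third line of \eqref{Delta_CoProduct_Property}: the antipode inserted by hand in the $H$-case is already absorbed into the opposite delta. I expect the reproducing property to be the only genuine obstacle; within the surrounding bookkeeping the delicate points are (i) checking that it is $(\delh)_{(1)}$, not $(\delh)_{(2)}$, that must carry $S^{(3)}$ for the co-integral invariance to act in the correct slot, and (ii) the $\tau^3$-conversion above, which is what lets the product $\Phi\cdot(\tau\Phi)$ reappear with the stated normalization.
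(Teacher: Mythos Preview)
Your $H$-side argument is correct and is a genuinely different route from the paper's. The paper never writes down the reproducing identity
\[
\frac{1}{\sqrt{\mu^3}}\Big(\int_{H^3}\ot\idd\Big)\big[(S^{(3)}\ot\idd)\,\Delta^{(3)}\delh\cdot(\Phi\ot 1)\big]=\Phi
\]
as such; instead it first passes to the $A$-picture by substituting $\Phi=\cF\mone[\hat\Phi]$ twice, combines the two copies of $\Sigma\mone$ via \eqref{SkewEl_id-CoProduct_inv} into $(\id\ot\Delta^{(3)})\Sigma\mone$, and integrates out $H^3$ to obtain $\hat\cK$; only then does it Fourier-transform back to recover $\cK$. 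Your path to $\cK$ is shorter and stays entirely in $H$: the reproducing identity is exactly $\cF\mone\circ\cF=\id$ rewritten through $(S\ot\id)\Delta\delh$, and your use of $\tau^3$-invariance of $\int_{H^3}$ to move the flip onto $(\delh)_{(2)}$ is clean. What the paper's approach buys is that $\hat\cK$ drops out first, with $\cK$ as a corollary; what yours buys is a one-step derivation of $\cK$ that never unpacks $\sigma$.

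There is, however, a gap in your $A$-side ``symmetry'' argument. Running the \emph{same} manipulation in $A^3$ would start from $\int_{A^3}\hat\Phi\cdot\tau^3\hat\Phi$ and produce an $A$-kernel, but nothing you have written establishes the Parseval-type identity $\cS_\cK=\int_{A^3}\hat\Phi\cdot\tau^3\hat\Phi$; the kinetic term is \emph{defined} on the $H$-side, and passing to $\hat\Phi$ already requires the Fourier step you were hoping to avoid. Moreover, your explanation for why $\hat\cK$ carries $\idd$ rather than $S^{(3)}$ is not quite right: the $A$-analogue of your reproducing identity (with $\dela\mone$ replacing $\delh$) still needs an antipode on the first leg to return $\hat\Phi$ rather than $S\hat\Phi$ --- you can check this in the finite-group example of Appendix~\ref{Sub_FiniteGroup}, where $\Delta\dela\mone\propto\sum_u u\ot u$ and integrating the first leg against $\hat\Phi$ without an $S$ produces $S\hat\Phi$. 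The absence of $S^{(3)}$ in the stated $\hat\cK$ is instead an artifact of how the paper arrives there (combining $\Sigma\mone_{12}$ and $(\id\ot\tau^3)\Sigma\mone_{13}$ directly). Your parenthetical alternative --- obtain $\hat\cK$ by Fourier-transforming the $H$-result --- is the correct fix, and is essentially the paper's first computation run in the opposite direction.
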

\begin{proof}
    We show that the kinetic term \eqref{KineticTerm} can be expressed as in \eqref{KineticTerm_Amplitude} with kernels \eqref{PropagatorAmplitude}.
    Let us first focus on the kernel in the $A$ representation. We  use the definition of the inverse Fourier transform, the property \eqref{SkewEl_id-CoProduct_inv} and the definition of opposite delta function.
    \begin{align}
        \cS_{\cK} & =
        \int_{H^3} \, \Phi \cdot (\tau^3 \Phi)
        =
        \frac{1}{\mu^3} \left(\int_{H^3} \ot \int_{A^3} \ot \int_{A^3}\right) \,
        \big(\Sigma_{12}\mone \cdot (\id \ot \tau^3)\Sigma_{13}\mone \cdot (1 \ot \hat{\Phi} \ot \hat{\Phi})\big)
        \nonumber \\
        & =
        \frac{1}{\mu^3} \left(\int_{H^3} \ot \int_{A^3} \ot \int_{A^3}\right) \,
        \big((\id \ot \id \ot \tau^3) \circ (\id \ot \Delta^{(3)}) \Sigma\mone \cdot (1 \ot \hat{\Phi} \ot \hat{\Phi})\big)
        \nonumber \\
        & =
        \frac{1}{\sqrt{\mu^3}} \left(\int_{A^3} \ot \int_{A^3}\right) \,
        \big((\id \ot \tau^3) \circ \Delta^{(3)} \dela\mone \cdot (\hat{\Phi} \ot \hat{\Phi})\big) \,.
    \end{align}
    From this, we get the expression of $\hat{\cK}$ in \eqref{PropagatorAmplitude}. Now, to derive the kernel $\cK$ in the $H$ representation, use the definition of the Fourier transform \eqref{FourierTransform} for both the fields $\hat{\Phi}$, and the property \eqref{Delta_CoProduct_Property} of the delta function.
    \begin{align}
        \cS_{\cK} 
        & =
        \frac{1}{\sqrt{\mu^{6}}} \left(\int_{H^{6}} \ot \int_{A^{6}}\right) \,
        \big((1 \ot 1 \ot (\id \ot \tau^3) \circ \Delta^{(3)} \dela\mone) \cdot \Sigma_{13} \cdot \Sigma_{24} \cdot (\Phi \ot \Phi \ot 1 \ot 1)\big)
        \nonumber \\
        & =
        \frac{1}{\sqrt{\mu^{6}}} \left(\int_{H^{6}} \ot \int_{A^{6}}\right) \,
        (\id \ot \id \ot \id \ot \tau^3)
        \big((1 \ot 1 \ot \Delta^{(3)} \dela\mone) \cdot \Sigma_{13} \cdot (\id \ot \tau^3 \ot \id \ot \id) \, \Sigma_{24} \cdot (\Phi \ot \Phi \ot 1 \ot 1)\big)
        \nonumber \\
        & =
        \frac{1}{\sqrt{\mu^{6}}} \left(\int_{H^{6}} \ot \int_{A^{6}}\right) \,
        \big((1 \ot 1 \ot (\id \ot \tau^3) \circ \Delta^{(3)} \dela\mone) \cdot (\id \ot \tau^3 \ot \id \ot \id) \, \Sigma_{23}\mone \cdot \Sigma_{13} \cdot (\Phi \ot \Phi \ot 1 \ot 1)\big) \,.
    \end{align}
    We get rid of the delta function by using the left invariance of the co-integral on $A$, plus the normalization property \eqref{Delta_Normalization}.
    \begin{align}
        \cS_{\cK}
        & =
        \frac{1}{\sqrt{\mu^{6}}} \left(\int_{H^{6}} \ot \int_{A^{6}}\right) \,
        \big((1 \ot 1 \ot 1 \ot \tau^3 \dela\mone) \cdot (\id \ot \tau^3 \ot \id \ot \id) \Sigma_{23}\mone \cdot \Sigma_{13} \cdot (\Phi \ot \Phi \ot 1 \ot 1)\big)
        \nonumber \\
        & =
        \frac{1}{\mu^{3}} \left(\int_{H^{6}} \ot \int_{A^3}\right) \,
        \big((\id \ot \tau^3 \ot \id \ot \id) \, \Sigma_{23}\mone \cdot \Sigma_{13} \cdot (\Phi \ot \Phi \ot 1)\big) \,.
    \end{align}
    Last, using the definition of convolution inverse, its property \eqref{SkewEl_id-CoProduct} and the definition of delta function we derive the expression of the kernel $\cK$ in \eqref{PropagatorAmplitude}.
    \begin{align}
        \cS_{\cK}
        & =
        \frac{1}{\mu^{3}} \left(\int_{H^3} \ot \int_{H^3} \ot \int_{A^3}\right) \,
        \big((S^{(3)} \ot \tau^3 \ot \id) \circ (\Delta^{(3)} \ot \id) \, \Sigma\mone \cdot (\Phi \ot \Phi \ot 1)\big)
        \nonumber \\
        & =
        \frac{1}{\sqrt{\mu^{3}}} \left(\int_{H^3} \ot \int_{H^3}\right) \,
        \big((S^{(3)} \ot \tau^3) \circ \Delta^{(3)} \delh \cdot (\Phi \ot \Phi)\big) \,.
    \end{align}
\end{proof}

\subsection{Feynman diagram amplitude}

The HAFT Feynman diagrams represent decorated three dimensional  triangulations or their dual complexes. 
The strength of GFT's or HAFT's is that their Feynman diagrams can be seen as the triangulations (or their dual complexes) of a manifold $\cM$. The interaction term is chosen to encode the $d$-simplex if the manifold $\cM$ has dimension $d$. As argued before we focused here on the case $d=3$. 
The Feynman diagrams amplitudes of the HAFT in the $A$ polarization defined earlier can be expanded in terms of diagrams $\Gamma$ which correspond to the triangulations of manifolds. In particular, the amplitude of a triangulation is expressed as a combination of tetrahedron amplitudes \eqref{VertexAmplitude} glued by the proper propagator amplitudes \eqref{PropagatorAmplitude}.

Dually, in the $H$ polarization, the Feynman diagrams $\Gamma^*$ correspond to the dual complexes of the triangulation $\Gamma$. In particular, the amplitude of a general dual complex can be expressed as a combination of loops made of an arbitrary number $N$ of bulk links (part of the bulk graph of a tetrahedron). Each of such loops is the closed path spanning a face dual to an edge (shared by a number $N$ of tetrahedra) of the triangulation. \\
In line with the usual interpretation of curvature in models of discrete geometries, each of these loops can be interpreted as probing the local curvature around this edge. 

We intend now to give the expression of the Feynman diagram amplitude in the different polarizations. We will show that it can actually be expressed in terms of the plane-wave $\sigma$ (which we recall is the canonical element of the dual of the generalized quantum double $D^*(H,A,\sigma)$). As we will show in Sec. \ref{Sec_Examples}, this expression is associated to the discretization of a $BF$ theory.
\begin{proposition}[Amplitude in the triangulation]
\label{Prop_GeneratingFunction}
    Let $\Gamma$ be the triangulation of a manifold  built as the combination of $M$ tetrahedra $\tau$. As a Feynman diagram of the HAFT, it is associated to the amplitude 
    \be
        \cA_{\Gamma} = 
        \frac{1}{\mu^{6 M}} \, 
        \int_{A^{12 M}} \, \prod_{\tau} \,
        \big(\uDe^{(3)} \dela\mone \, \cdot \, (\hat{\cC} \ot \hat{\cC} \ot \hat{\cC} \ot \hat{\cC})\big) \,. 
        \label{GeneratingFunction_Triangulation}
    \ee
\end{proposition}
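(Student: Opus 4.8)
The plan is to build $\cA_{\Gamma}$ by assembling the $M$ tetrahedron amplitudes of Prop.~\ref{Prop_Vertex} and gluing them along internal triangles with the propagator of Prop.~\ref{Prop_Propagator}, following the combinatorics dictated by $\Gamma$. Concretely, I would start from the perturbative (Feynman) definition of the amplitude: expand the interaction $\cS_{\cV}$ to order $M$, contract the resulting field legs with the kinetic covariance, and read off the term associated to $\Gamma$. In the $A$-polarization this produces an integral over the edge-slot variables of a product of $M$ vertex kernels $\hat{\cV}=\tfrac{1}{\mu^{6}}\,\uDe\dela\mone\cdot(\hat{\cC}\ot\hat{\cC}\ot\hat{\cC}\ot\hat{\cC})$, one per tetrahedron $\tau$, together with one propagator kernel $\hat{\cK}=\tfrac{1}{\sqrt{\mu^{3}}}(\idd\ot\tau^3)\circ\Delta^{(3)}\dela\mone$ for each internal triangle shared by two tetrahedra.

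The heart of the argument is to carry out the propagator integrations. Since $\hat{\cK}$ is, up to the reordering $(\idd\ot\tau^3)$, the coproduct $\Delta^{(3)}$ of the opposite delta function $\dela\mone$, each propagator acts as a delta function identifying the three edge variables of the two glued triangles. I would evaluate these contractions one internal face at a time, equivalently by induction on $M$, attaching one new tetrahedron at each step. The delta structure of $\hat{\cK}$ together with the left-invariance of the co-integral $\int_A$ collapses the two triples of variables onto a single shared triple, while the permutation $\tau^3$ and the antipode carried by $\dela\mone$ enforce the opposite-orientation matching of glued edges depicted in Fig.~\ref{Fig_PropagatorAmplitude}. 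The base case $M=1$ is exactly Prop.~\ref{Prop_Vertex}, and the inductive step must show that gluing the $(M{+}1)$-th vertex through one propagator reproduces one further factor $\uDe\dela\mone\cdot(\hat{\cC}\ot\hat{\cC}\ot\hat{\cC}\ot\hat{\cC})$ and extends the measure to $\int_{A^{12(M+1)}}$.

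Running in parallel is the normalization bookkeeping, which I would track with the delta-function identities of Prop.~\ref{Prop_Delta}. The $M$ vertices contribute an overall $\mu^{-6M}$; each propagator carries a $\mu^{-3/2}$, and I expect each such factor to be exactly absorbed when its gluing integral is performed, using the normalization $\tfrac{1}{\sqrt{\mu}}\int_A\dela\mone=1$ together with the $V_H$ normalizations hidden in the closure constraints $\hat{\cC}$. The target is to show that after all contractions the only surviving prefactor is $\mu^{-6M}$, in agreement with \eqref{GeneratingFunction_Triangulation}.

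I expect the main obstacle to be precisely this combinatorial and orientation bookkeeping: maintaining a consistent labelling of the $12M$ slots as faces are identified, verifying that the antipodes produced by the opposite delta functions and the permutations $\tau^3$ in successive propagators compose into the correct relative orientations on every shared edge, and checking that the $\mu$- and $V_H$-factors cancel to leave exactly $\mu^{-6M}$ independently of the gluing pattern of $\Gamma$. A secondary technical point is to confirm that the four closure constraints remain correctly attached to each tetrahedron throughout the gluing, so that the integrand organizes into the stated product $\prod_{\tau}\big(\uDe^{(3)}\dela\mone\cdot(\hat{\cC}\ot\hat{\cC}\ot\hat{\cC}\ot\hat{\cC})\big)$.
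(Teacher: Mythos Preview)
Your proposal is correct and follows essentially the same route as the paper: assemble the $M$ vertex kernels $\hat{\cV}$ of Prop.~\ref{Prop_Vertex}, glue them along shared triangles with the propagator $\hat{\cK}$ of Prop.~\ref{Prop_Propagator}, and use the delta-function identities of Prop.~\ref{Prop_Delta} together with the left-invariance of $\int_A$ to collapse each gluing and cancel the $\mu^{-3/2}$ factors. The paper packages the two key gluing computations you describe (identifying a pair of triangles, and handling several tetrahedra meeting along a common edge) into the dedicated lemmas Prop.~\ref{Prop_Tetrahedra_Gluing} and Prop.~\ref{Prop_InternalEdge}, which are exactly the inductive steps you outline; your anticipated normalization cancellation $\sqrt{\mu^3}\cdot\mu^{-3/2}=1$ is precisely what those lemmas establish.
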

\begin{proposition}[Amplitude in the dual complex]
    Consider $N$ tetrahedra sharing a single edge and let $\cL_N$ be the closed loop made of $2N$ half bulk links. The closed loop spans a face dual to the edge.
    Let $\Gamma^*$ be the complex dual to a three dimensional triangulation $\Gamma$, built as the combination of such loops $\cL_N$. Each loop is   made of  some number $N$ links (in general different for each loop).
    The HAFT amplitude for the graph $\Gamma^*$ in the $H$ polarization is
    \be
        \cA_{\Gamma^*} = 
        \prod_{\{\cL_N\}} \, \frac{1}{\mu^N \, V_H^{N}} \, \int_{H^{2 N}} \,
        \Delta^{2N}_{1 \, -2 \, 3 \, -4 \, \dots \, (2N-1) \, -2N} \, \delh\mone \,.
        \label{GeneratingFunction_DualComplex}
    \ee
\end{proposition}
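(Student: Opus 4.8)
The plan is to build $\cA_{\Gamma^*}$ in the $H$ polarization out of the building blocks already established in Propositions \ref{Prop_Vertex} and \ref{Prop_Propagator}, and then to reorganize the resulting tensor network face by face, i.e.\ edge by edge of $\Gamma$. Concretely, for each of the $M$ tetrahedra of $\Gamma$ I would insert one copy of the vertex kernel $\cV=\frac{1}{\mu^{6}V_H^{4}}\big(\int_{H^{4}}\ot\id^{12}\big)\sDe\,\delh\mone$, and for each pair of glued triangles one copy of the propagator kernel $\cK=\frac{1}{\sqrt{\mu^3}}(S^{(3)}\ot\tau^3)\circ\Delta^{(3)}\delh$, integrating over all the $H$-variables to obtain a single large integral. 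The structural fact I would exploit is that, by Definition \ref{Def_Underline_Maps}, $\sDe$ factorizes into six co-products, one per edge of the tetrahedron, each generating a small loop of two bulk half-links and two boundary half-links (the red/blue loops of Fig.\ \ref{Fig_VertexAmplitude}). Since the target formula is itself a product over loops $\cL_N$, the real content is to show that the global network factorizes over the faces of $\Gamma^*$, which reduces the problem to computing a single factor associated with one edge shared by $N$ tetrahedra.

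Next I would carry out the gluing. Each propagator carries a factor $\Delta^{(3)}\delh$ and identifies the three boundary links of one triangle with those of its partner, with an antipode $S^{(3)}$ acting on exactly one side (Fig.\ \ref{Fig_PropagatorAmplitude}). Because $\delh$ is a left integral of $H$ (Proposition \ref{Prop_DeltaIntegral}) and the co-integral on $H$ is left-invariant, integrating out each boundary-link variable collapses the associated propagator delta, absorbing the two boundary half-links and sewing the small two-link loops of adjacent tetrahedra end to end. Iterating this around an edge shared by $N$ tetrahedra merges the $N$ individual two-link loops into a single closed loop $\cL_N$ carried by $2N$ bulk half-links, leaving a single integral over $H^{2N}$ acting on one surviving $\delh\mone$ (the product of the per-tetrahedron $\delh\mone$'s after the boundary integrations, simplified using the co-product identities of Proposition \ref{Prop_Delta}).

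The remaining two checks are the antipode pattern and the normalization. For the former, I would track the antipode introduced by each sewing step together with the built-in $\pm$ structure of $\sDe$; traversing the loop these combine into the strictly alternating arrangement of \eqref{GeneratingFunction_DualComplex}, which is the algebraic incarnation of the holonomy $g_1g_2\mone g_3 g_4\mone\cdots$ around the face dual to the edge, so that the surviving object is precisely $\Delta^{2N}_{1\,-2\,3\,-4\,\dots\,(2N-1)\,-2N}\,\delh\mone$. For the normalization I would collect the per-tetrahedron factors $1/(\mu^{6}V_H^{4})$, the per-propagator factors $1/\sqrt{\mu^3}$, and the factors of $\mu$ and $V_H$ produced whenever a collapsed delta is integrated (via \eqref{Delta_Normalization}), and show they repackage edge by edge into the claimed $\frac{1}{\mu^{N}V_H^{N}}$ for each $\cL_N$.

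The step I expect to be the main obstacle is precisely this last reorganization: proving that the global integral factorizes cleanly over faces and that the three distinct sources of $\mu$- and $V_H$-powers (vertices, propagators, and the integrations that collapse the glued deltas) recombine into a single per-loop factor. This is essentially an Euler-characteristic bookkeeping tied to the edge--tetrahedron incidences of $\Gamma$ and to the number of internal faces, and the delicate part is not any new algebraic identity but rather ensuring the antipodes land in the exact alternating pattern and that the surplus integration factors cancel the vertex and propagator prefactors down to $1/(\mu^N V_H^N)$.
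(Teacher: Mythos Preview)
Your proposal is essentially correct and follows the same strategy as the paper: assemble the amplitude from vertex and propagator kernels in the $H$ polarization, then reorganize face by face so that each edge of $\Gamma$ contributes a single closed loop of bulk half-links.

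The main organizational difference is where the factorization over faces is performed. You propose to insert the full $\cV$ and $\cK$ for all tetrahedra and triangles, and then argue a global factorization over $\{\cL_N\}$ afterward, flagging the $\mu,V_H$ bookkeeping as the delicate point. The paper instead short-circuits this by extracting, from the outset, the \emph{partial} pieces relevant to a single edge: from each tetrahedron only the one four-fold co-product in $\sDe$ associated with that edge, giving $\coV=\frac{1}{\mu V_H}(\id\ot\int_{H^2}\ot\id)\,\Delta^4_{-1\,2\,-3\,4}\delh\mone$, and from each propagator only the single co-product $\coK=\frac{1}{\sqrt{\mu}}\Delta_{-1\,2}\delh$. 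It then glues these iteratively, computing $\coV_2$ explicitly by rewriting the deltas as skew co-pairings, using \eqref{SkewEl_CoProduct-id}, \eqref{SkewEl_CoProduct-id_inv} and left invariance, and observing the pattern $\Delta^{2N+2}_{-1\,2\,\dots}\delh\mone$ with prefactor $1/(\mu^N V_H^N)$ for $\coV_N$, before closing with one last partial propagator. The advantage of the paper's route is that the normalization you are worried about never becomes a global Euler-characteristic count: each partial gluing step produces exactly one factor of $\sqrt{\mu}$ (from \eqref{Delta_Normalization}) which cancels locally, and the alternating antipode pattern is verified by direct inspection at each step rather than by a parity argument around the loop. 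Your approach would work too, but the global factorization and prefactor accounting you identify as the obstacle are precisely what the partial-amplitude trick is designed to avoid.
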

\begin{proposition}[Amplitude in terms of the plane-wave]
The amplitude of the Feynman diagram $\Gamma^*$  can be represented in terms of the plane wave $\sigma$, as
    \be
        \cA_{\Gamma^*} = 
        \prod_{\{\cL_N\}} \, \frac{1}{\sqrt{\mu^{2N+1}} \, V_H^{N}} \, \left(\int_{H^{2 N}} \ot \int_A\right) \,
        \big(\Delta^{2N}_{1 \, -2 \, 3 \, -4 \, \dots \, (2N-1) \, -2N} \ot \id\big) \,\sigma \,.
        \label{GeneratingFunction_Double}
    \ee
\end{proposition}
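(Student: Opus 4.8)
The plan is to start from the dual-complex amplitude \eqref{GeneratingFunction_DualComplex} and unfold, loop by loop, the single factor attached to each $\cL_N$ by inserting the definition of the one-dimensional opposite delta function. Recall from \eqref{Delta_Opposite} that $\delh\mone = \frac{1}{\sqrt{\mu}} \left( \id \ot \int_A \right) \sigma$, an element of $H$ obtained by integrating out the $A$-leg of the plane wave $\sigma \in H \ot A$. Substituting this into the loop factor turns $\Delta^{2N}_{1\,-2\,\dots\,(2N-1)\,-2N}\,\delh\mone$ into the $2N$-fold, antipode-decorated co-product of $(\id \ot \int_A)\,\sigma$, and brings an extra factor $\frac{1}{\sqrt{\mu}}$ in front of each loop.

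The central step is to commute the co-product map $\Delta^{2N}_{1\,-2\,\dots\,(2N-1)\,-2N}$, which acts only on the $H$-leg of $\sigma$, past the co-integral $\int_A$, which acts only on the $A$-leg. Since $\sigma$ lives in $H \ot A$ and the two maps operate on disjoint tensor factors, they commute, giving
\be
    \Delta^{2N}_{1\,-2\,\dots\,(2N-1)\,-2N}\,\delh\mone
    =
    \frac{1}{\sqrt{\mu}} \left( \id^{\ot 2N} \ot \int_A \right)
    \big( \Delta^{2N}_{1\,-2\,\dots\,(2N-1)\,-2N} \ot \id \big)\,\sigma \,.
\ee
I would justify this interchange simply by the naturality of applying a linear map to one tensor slot while a functional acts on another, so that no Hopf-algebra axiom beyond the definitions of $\delh\mone$ and of $\Delta^{2N}$ is needed.

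Next I would merge the external integration $\int_{H^{2N}}$ with the surviving $\int_A$. Because $\int_{H^{2N}}$ is a co-integral on $H^{2N}$ and $\int_A$ on $A$, acting on distinct spaces, they combine into the joint co-integral $\int_{H^{2N}} \ot \int_A$ applied to $\big(\Delta^{2N}_{1\,-2\,\dots} \ot \id\big)\,\sigma$. Collecting the prefactors, the per-loop normalization becomes $\frac{1}{\mu^N V_H^N}\cdot\frac{1}{\sqrt{\mu}} = \frac{1}{\mu^{N+1/2} V_H^N} = \frac{1}{\sqrt{\mu^{2N+1}}\, V_H^N}$, which is exactly the coefficient appearing in \eqref{GeneratingFunction_Double}. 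Taking the product over all loops $\{\cL_N\}$ then reproduces the claimed expression for $\cA_{\Gamma^*}$.

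I expect the only delicate point to be bookkeeping rather than any deep identity. One must check that the antipodes sitting on the even legs of $\Delta^{2N}_{1\,-2\,\dots}$ are carried along unchanged when the map is transferred onto the $H$-leg of $\sigma$, and that the $\int_A$ slot (the final tensor factor) lines up with the $\ot\,\id$ position in the target formula. Since the substitution of $\delh\mone$ and the commutation argument are essentially immediate, the proof is short; the main obstacle is therefore just presenting the interchange of the $H$-side co-product with the $A$-side co-integral cleanly, which is why I would isolate it as the single displayed identity above.
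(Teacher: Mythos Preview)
Your proof is correct and follows exactly the route the paper takes: the paper's own derivation (in App.~\ref{App_PartitionFunction}) consists of the single remark that \eqref{GeneratingFunction_Double} ``can be derived by the above one simply using the definition of opposite delta function,'' which is precisely your substitution $\delh\mone = \tfrac{1}{\sqrt{\mu}}(\id\ot\int_A)\,\sigma$ together with the trivial commutation of the $H$-side co-product and the $A$-side co-integral. Your bookkeeping of the prefactor $\tfrac{1}{\mu^N V_H^N}\cdot\tfrac{1}{\sqrt{\mu}} = \tfrac{1}{\sqrt{\mu^{2N+1}}\,V_H^N}$ is correct.
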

\noindent
In the appendix \ref{App_PartitionFunction}, we show how to derive these three amplitudes.

\subsection{Topological invariance}
Given two simplicial decompositions of a manifold $\cM$, one can find a finite set of transformations, called Pachner moves, that maps one simplicial decomposition into the other \cite{Pachner1991Pachner}. If the associated amplitudes are invariant under the action of these Pachner moves (up to some possible constant re-scaling), then we say that the amplitude is a \textit{topological invariant}.
\begin{proposition}
\label{Prop_Pachner}
    The amplitude \eqref{GeneratingFunction_Double} is a topological invariant.
\end{proposition}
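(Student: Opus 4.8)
The plan is to prove topological invariance in the manner standard for state-sum models: reduce to the elementary three-dimensional Pachner moves and check that the amplitude \eqref{GeneratingFunction_Double} is preserved under each, up to the multiplicative factors in $V_H$ and $\mu$ that the definition allows. By Pachner's theorem, any two triangulations of a closed oriented $3$-manifold are joined by a finite sequence of the $2$--$3$ move (two tetrahedra sharing a triangle are replaced by three tetrahedra sharing the edge dual to that triangle) and the $1$--$4$ move (a tetrahedron is subdivided into four about an interior vertex), together with their inverses. Since the three expressions \eqref{GeneratingFunction_Triangulation}, \eqref{GeneratingFunction_DualComplex} and \eqref{GeneratingFunction_Double} coincide, I am free to carry out each check in whichever polarization is most convenient; I would use the triangulation form \eqref{GeneratingFunction_Triangulation}, built from the tetrahedron kernel $\hat{\cV}$ of \eqref{VertexAmplitude}, so as to follow the geometry of the moves directly.

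For the $2$--$3$ move, the substantive case, I would assemble both configurations by gluing copies of $\hat{\cV}$ along the propagator $\hat{\cK}$ of \eqref{PropagatorAmplitude}, which identifies the variables on a shared edge with an antipode inserted. On the two-tetrahedron side the shared triangle is dual to a single internal link and there is no internal edge; on the three-tetrahedron side the new edge shared by the three tetrahedra is dual to an internal face, i.e. a loop $\cL_3$, while the loops around the surrounding edges change their length accordingly. Integrating over the internal variables and using the integral property \eqref{Delta_RightInvariance} of the delta function together with its normalization \eqref{Delta_Normalization}, both amplitudes should collapse to the same element of the boundary algebra. Equality then reduces to a single co-product identity for $\dela\mone$, obtained by combining co-associativity of $\Delta$ with the skew co-pairing axioms \eqref{SkewEl_id-CoProduct}--\eqref{SkewEl_CoProduct-id} and their inverses \eqref{SkewEl_id-CoProduct_inv}--\eqref{SkewEl_CoProduct-id_inv}. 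This identity is the Hopf-algebraic form of the Biedenharn--Elliott (pentagon) relation that underlies invariance of the Turaev--Viro model.

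For the $1$--$4$ move, subdividing a tetrahedron introduces an interior vertex and a closed ``bubble'' in the dual complex. I would collapse the three new internal loops one at a time: each collapse produces a delta function which, by its integral property \eqref{Delta_RightInvariance} and normalization \eqref{Delta_Normalization}, is absorbed by a co-integral and leaves behind a fixed factor in $V_H$ and $\mu$. The outcome is the single-tetrahedron amplitude multiplied by such a factor, which is exactly the rescaling permitted in the statement of topological invariance and is the source of the familiar bubble divergence when $V_H$ is infinite.

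The hard part is the tensor-leg bookkeeping in the $2$--$3$ move: one must match the leg assignments of the simplex maps $\uDe$ and $\sDe$ and the antipode placements encoded by the minus signs in $\Delta^{2N}_{1\,-2\,3\,-4\,\dots\,(2N-1)\,-2N}$ on both sides, so that the internal integrations collapse precisely to the co-product identity above. It is here that the full Hopf structure enters: the antipode axiom \eqref{HopfAlg_Axiom_Antipode} is what makes the internal edge variables drop out after integration, and a mismatch between $S$ and $S\mone$ would obstruct the collapse. Once the elementary moves are verified, topological invariance of $\cA_{\Gamma^*}$ follows at once from the Pachner reduction.
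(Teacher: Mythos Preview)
Your overall strategy---reduce to Pachner moves and verify each in the $A$ (triangulation) polarization via the tetrahedron kernel $\hat{\cV}$ and propagator $\hat{\cK}$---is exactly the route the paper takes. Where your sketch and the paper diverge is in the mechanism of the verification. You frame the $2$--$3$ move as collapsing to a single co-product identity for $\dela\mone$ that you identify with a Hopf-algebraic Biedenharn--Elliott relation. The paper does not invoke any such pentagon identity; instead it develops a small calculus of closure constraints $\hat{\cC}_{i_1 i_2 i_3}$ (cyclic and inversion invariance, the projector property $\hat{\cC}\cdot\hat{\cC}=\hat{\cC}$, and explicit gluing lemmas for triangles and tetrahedra), writes the reduced amplitudes $\cA_{\cV}$, $\cA_{\cV^2}$, $\cA_{\cV^3}$, $\cA_{\cV^4}$ as products of $\hat{\cC}$'s, and then shows the two sides of each move agree after removing redundant constraints and relabelling tensor slots. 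The ``pentagon'' you allude to is not an off-the-shelf Hopf-algebra fact in this generality; the closure-constraint calculus is where the paper actually proves it, and that is the content you would still have to supply.

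Two smaller points. For the $1$--$4$ move, subdividing a tetrahedron about an interior vertex introduces four new internal edges, not three; in the paper's accounting the proportionality factor is $\mu^{-9}$ (or $V_H^3$ with the alternative normalization), consistent with integrating out those edges. And your remark that the antipode axiom \eqref{HopfAlg_Axiom_Antipode} is what makes internal variables drop out is morally right, but in the paper this enters through the delta-function identities \eqref{Delta_CoProduct_Property} and \eqref{OppDelta_LeftInvariance} rather than a direct appeal to \eqref{HopfAlg_Axiom_Antipode}.
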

\noindent
In three dimensions there exist two Pachner moves, denoted $P_{(1,4)}$ and $P_{(2,3)}$. We provide below the precise relation between the different relevant amplitudes, showing the topological invariance of the model.
In App. \ref{App_TopologicalInvariance} we provide the details of the proofs.  
\paragraph{Pachner move $P_{(1,4)}$.}    
The Pachner move $P_{(1,4)}$ takes the amplitude of one tetrahedron (denoted $\cA_{\cV}$) into the amplitude of four tetrahedra (denoted $\cA_{\cV^4}$). The (reduced) amplitudes of one and four tetrahedra are
\be
    \begin{aligned}
        \cA_{\cV} & =
        \frac{1}{\mu^3} \, \int_{A^{6}} \,
        \big(\hat{\cC}_{-6 \, -4 \, -2} \cdot 
        \hat{\cC}_{-5 \, 6 \, -1} \cdot 
        \hat{\cC}_{-3 \, 4 \, 5} \cdot \hat{\cC}_{1 \, 2 \, 3}\big)
        \,,\\
        \cA_{\cV^4} & =
        \frac{1}{\mu^{12}} \, \int_{A^{6}} \,
        \big(
        \hat{\cC}_{-1 \, -3 \, -2} \cdot
        \hat{\cC}_{-5 \, 1 \, -4} \cdot 
        \hat{\cC}_{-6 \, 4 \, 2} \cdot 
        \hat{\cC}_{3 \, 5 \, 6}\big) \,.
    \end{aligned}
\ee
The action of the $P_{(1,4)}$ move gives
\be
    (P_{(1,4)} \, \cA_{\cV}) 
    = 
    \cA_{\cV^4} =
    \frac{1}{\mu^9} \, \cA_{\cV} \,.
    \label{Pachner14}
\ee

\paragraph{Pachner move $P_{(2,3)}$.}    
The Pachner move $P_{(2,3)}$ takes the amplitude of two tetrahedra (denoted $\cA_{\cV^2}$) into the one of three (denoted $\cA_{\cV^3}$). The (reduced) amplitudes of two and three tetrahedra are
\be
    \begin{aligned}
        \cA_{\cV^2} & =
        \frac{1}{\mu^{6}} \, \int_{A^{9}} \,
        \big(\hat{\cC}_{-6 \, -4 \, -2} \cdot 
        \hat{\cC}_{-5 \, 6 \, -1} \cdot 
        \hat{\cC}_{-3 \, 4 \, 5} \cdot
        \hat{\cC}_{-9 \, 2 \, -7} \cdot 
        \hat{\cC}_{-8 \, 1 \, 9} \cdot 
        \hat{\cC}_{3 \, 8 \, 7}\big)
        \,,\\
        \cA_{\cV^3} & =
        \frac{1}{\mu^{9}} \, \int_{A^{9}} \,
        \big(\hat{\cC}_{-5 \, -3 \, -2} \cdot 
        \hat{\cC}_{-4 \, 5 \, -1} \cdot 
        \hat{\cC}_{-8 \, 2 \, -6} \cdot 
        \hat{\cC}_{-7 \, 1 \, 8} \cdot 
        \hat{\cC}_{-9 \, 6 \, 3} \cdot 
        \hat{\cC}_{4 \, 7 \, 9}\big) \,.
    \end{aligned}
\ee
The relation between the two amplitudes, encoded by the Pachner move $P_{(2,3)}$ is 
\be
    (P_{(2,3)} \, \cA_{\cV^2}) 
    = 
    \cA_{\cV^3} =
    \frac{1}{\mu^3} \, \cA_{\cV^2} \,.
    \label{Pachner23}
\ee
\medskip \\
In Fig. \ref{Fig_Pachner(1,4)} and \ref{Fig_Pachner(2,3)} we give the representation of the Pachner moves $P_{(1,4)}$ and $P_{(2,3)}$ in the triangulation picture.
The amplitudes associated to the building blocks of any Feynman diagram of the HAFT are invariant (up to some constant factors) under the action of the Pachner moves. Hence, the HAFT we proposed is a topological  model.
\begin{figure}
    \centering
    \begin{tikzpicture}[scale=0.95]
    \coordinate (v1) at (0,0,5.2);
    \coordinate (v2) at (3,0,0);
    \coordinate (v3) at (-3,0,0);
    \coordinate (v4) at ($ 1/3*(v1) + 1/3*(v2) + 1/3*(v3) + (0,4.9,0) $);
    
    \coordinate (w1) at (10,0,5.2);
    \coordinate (w2) at (13,0,0);
    \coordinate (w3) at (7,0,0);
    \coordinate (w4) at ($ 1/3*(w1) + 1/3*(w2) + 1/3*(w3) + (0,4.9,0) $);

    \coordinate (v123) at ($ 1/3*(v1) + 1/3*(v2) + 1/3*(v3) $);
    \coordinate (v124) at ($ 1/3*(v1) + 1/3*(v2) + 1/3*(v4) $);
    \coordinate (v134) at ($ 1/3*(v1) + 1/3*(v3) + 1/3*(v4) $);
    \coordinate (v234) at ($ 1/3*(v2) + 1/3*(v3) + 1/3*(v4) $);
    
    \coordinate (w123) at ($ 1/3*(w1) + 1/3*(w2) + 1/3*(w3) $);
    \coordinate (w124) at ($ 1/3*(w1) + 1/3*(w2) + 1/3*(w4) $);
    \coordinate (w134) at ($ 1/3*(w1) + 1/3*(w3) + 1/3*(w4) $);
    \coordinate (w234) at ($ 1/3*(w2) + 1/3*(w3) + 1/3*(w4) $);
    
    \coordinate (w12c) at ($ 5/12*(w1) + 5/12*(w2) + 1/12*(w3) + 1/12*(w4) $);
    \coordinate (w13c) at ($ 5/12*(w1) + 1/12*(w2) + 5/12*(w3) + 1/12*(w4) $);
    \coordinate (w14c) at ($ 5/12*(w1) + 1/12*(w2) + 1/12*(w3) + 5/12*(w4) $);
    \coordinate (w23c) at ($ 1/12*(w1) + 5/12*(w2) + 5/12*(w3) + 1/12*(w4) $);
    \coordinate (w24c) at ($ 1/12*(w1) + 5/12*(w2) + 1/12*(w3) + 5/12*(w4) $);
    \coordinate (w34c) at ($ 1/12*(w1) + 1/12*(w2) + 5/12*(w3) + 5/12*(w4) $);
    
    \coordinate (c) at ($ 0.25*(v1) + 0.25*(v2) + 0.25*(v3) + 0.25*(v4) $);
    \coordinate (c') at ($ 0.25*(w1) + 0.25*(w2) + 0.25*(w3) + 0.25*(w4) $);
    \coordinate (c1) at ($ 0.25*(w1) + 0.25*(w2) + 0.25*(w3) + 0.25*(c') $);
    \coordinate (c2) at ($ 0.25*(w1) + 0.25*(w2) + 0.25*(w4) + 0.25*(c') $);
    \coordinate (c3) at ($ 0.25*(w1) + 0.25*(w3) + 0.25*(w4) + 0.25*(c') $);
    \coordinate (c4) at ($ 0.25*(w2) + 0.25*(w3) + 0.25*(w4) + 0.25*(c') $);
    
    \draw[fill , darkblue] (c) circle [radius=0.04];
    
    \draw[- , thick] (v1) -- (v2) -- (v3) -- cycle;
    \draw[- , thick] (v1) -- (v4) -- (v2);
    \draw[- , thick] (v3) -- (v4);
    
    \draw[- , thick , darkblue] (c) -- ($ 1.25*(v123) - 0.25*(c) $);
    \draw[- , thick , darkblue] (c) -- ($ 1.25*(v124) - 0.25*(c) $);
    \draw[- , thick , darkblue] (c) -- ($ 1.25*(v134) - 0.25*(c) $);
    \draw[- , thick , darkblue] (c) -- ($ 1.25*(v234) - 0.25*(c) $);

    \draw[{Implies}-{Implies} , double distance=2] ($ 0.6*(c) + 0.4*(c') $) -- node[above] {$P_{(1,4)}$} ($ 0.4*(c) + 0.6*(c') $);

    \draw[fill] (c') circle [radius=0.04];
    \draw[fill , darkblue] (c1) circle [radius=0.04];
    \draw[fill , darkred] (c2) circle [radius=0.04];
    \draw[fill , darkgreen] (c3) circle [radius=0.04];
    \draw[fill , cargreen] (c4) circle [radius=0.04];
    
    \draw[- , thick] (w1) -- (w2) -- (w3) -- cycle;
    \draw[- , thick] (w1) -- (w4) -- (w2);
    \draw[- , thick] (w3) -- (w4);
    \draw[- , thick] (w1) -- (c') -- (w2);
    \draw[- , thick] (w3) -- (c') -- (w4);
    
    \draw[- , thick , darkblue] (c1) -- ($ 1.25*(w123) - 0.25*(c') $);
    \draw[- , thick , darkblue] (c1) -- (w12c);
    \draw[- , thick , darkblue] (c1) -- (w13c);
    \draw[- , thick , darkblue] (c1) -- (w23c);
    
    \draw[- , thick , darkred] (c2) -- ($ 1.25*(w124) - 0.25*(c') $);
    \draw[- , thick , darkred] (c2) -- (w12c);
    \draw[- , thick , darkred] (c2) -- (w14c);
    \draw[- , thick , darkred] (c2) -- (w24c);
    
    \draw[- , thick , darkgreen] (c3) -- ($ 1.25*(w134) - 0.25*(c') $);
    \draw[- , thick , darkgreen] (c3) -- (w13c);
    \draw[- , thick , darkgreen] (c3) -- (w14c);
    \draw[- , thick , darkgreen] (c3) -- (w34c);
    
    \draw[- , thick , cargreen] (c4) -- ($ 1.25*(w234) - 0.25*(c') $);
    \draw[- , thick , cargreen] (c4) -- (w23c);
    \draw[- , thick , cargreen] (c4) -- (w24c);
    \draw[- , thick , cargreen] (c4) -- (w34c);
\end{tikzpicture}
    \caption{The Pachner move $P_{(1,4)}$ takes one tetrahedron to the combination of four. In the triangulation it is realized by connecting the center of the tetrahedron with its four vertices. The surfaces ranging between the center of the tetrahedron and any pair of vertices are internal triangles. The four external faces of the initial tetrahedron thus become the external faces of the four different tetrahedra that share the six internal triangles. 
    In the dual complex, the move is realized by taking four nodes (the colored ones on the right) at the place of the central node (in blue on the left). Each node is connected to one of the four external links and to each of the other nodes.}
    \label{Fig_Pachner(1,4)}
\end{figure}
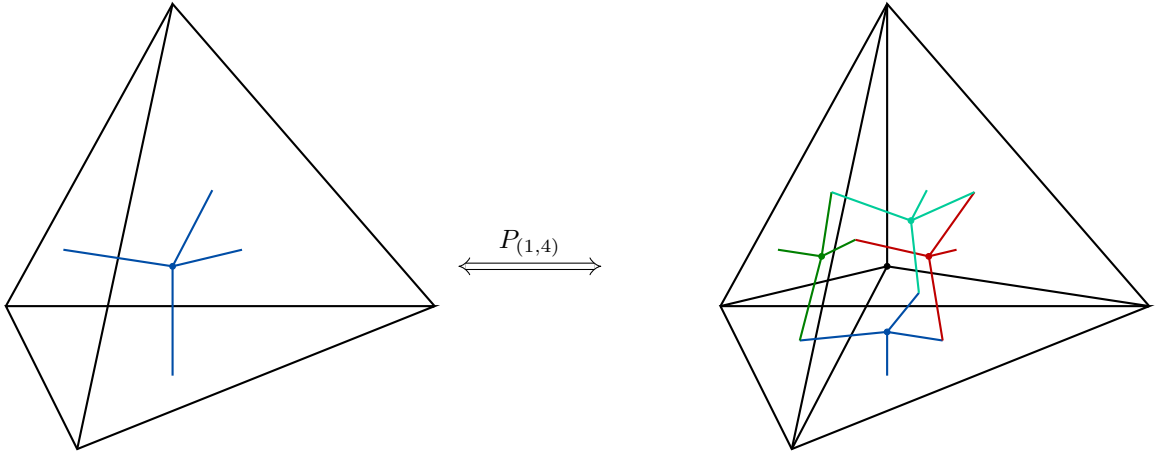
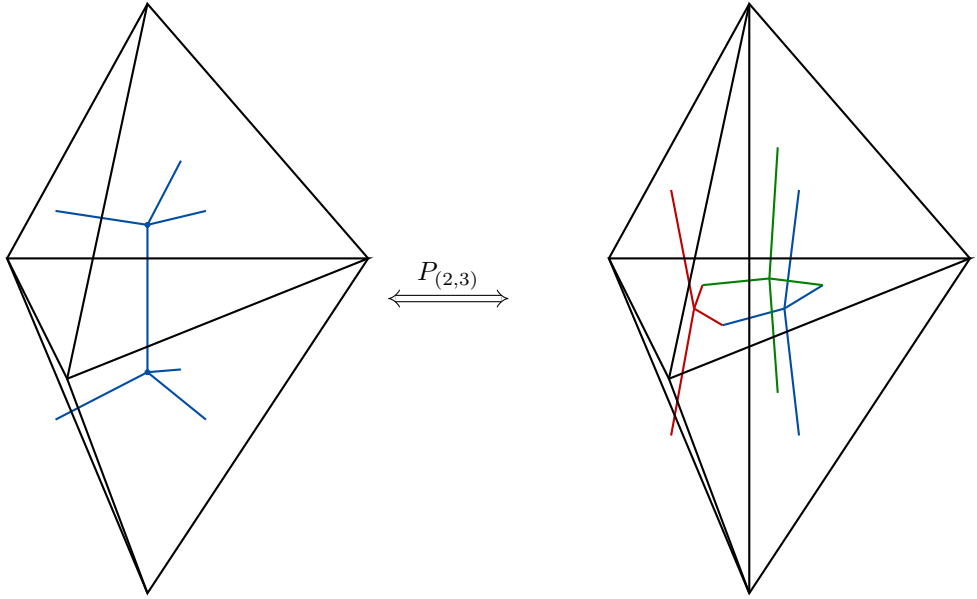
\begin{figure}
    \centering
    \begin{tikzpicture}[scale=0.8]
    \coordinate (v1) at (0,0,5.2);
    \coordinate (v2) at (3,0,0);
    \coordinate (v3) at (-3,0,0);
    \coordinate (v4) at ($ 1/3*(v1) + 1/3*(v2) + 1/3*(v3) + (0,4.9,0) $);
    \coordinate (v5) at ($ 1/3*(v1) + 1/3*(v2) + 1/3*(v3) + (0,-4.9,0) $);
    
    \coordinate (w1) at (10,0,5.2);
    \coordinate (w2) at (13,0,0);
    \coordinate (w3) at (7,0,0);
    \coordinate (w4) at ($ 1/3*(w1) + 1/3*(w2) + 1/3*(w3) + (0,4.9,0) $);
    \coordinate (w5) at ($ 1/3*(w1) + 1/3*(w2) + 1/3*(w3) + (0,-4.9,0) $);
    
    \coordinate (v123) at ($ 1/3*(v1) + 1/3*(v2) + 1/3*(v3) $);
    \coordinate (v124) at ($ 1/3*(v1) + 1/3*(v2) + 1/3*(v4) $);
    \coordinate (v134) at ($ 1/3*(v1) + 1/3*(v3) + 1/3*(v4) $);
    \coordinate (v234) at ($ 1/3*(v2) + 1/3*(v3) + 1/3*(v4) $);
    \coordinate (v125) at ($ 1/3*(v1) + 1/3*(v2) + 1/3*(v5) $);
    \coordinate (v135) at ($ 1/3*(v1) + 1/3*(v3) + 1/3*(v5) $);
    \coordinate (v235) at ($ 1/3*(v2) + 1/3*(v3) + 1/3*(v5) $);
    
    \coordinate (w123) at ($ 1/3*(w1) + 1/3*(w2) + 1/3*(w3) $);
    \coordinate (w124) at ($ 1/3*(w1) + 1/3*(w2) + 1/3*(w4) $);
    \coordinate (w134) at ($ 1/3*(w1) + 1/3*(w3) + 1/3*(w4) $);
    \coordinate (w234) at ($ 1/3*(w2) + 1/3*(w3) + 1/3*(w4) $);
    \coordinate (w125) at ($ 1/3*(w1) + 1/3*(w2) + 1/3*(w5) $);
    \coordinate (w135) at ($ 1/3*(w1) + 1/3*(w3) + 1/3*(w5) $);
    \coordinate (w235) at ($ 1/3*(w2) + 1/3*(w3) + 1/3*(w5) $);
    
    \coordinate (w145) at ($ 1/3*(w1) + 1/3*(w4) + 1/3*(w5) $);
    \coordinate (w245) at ($ 1/3*(w2) + 1/3*(w4) + 1/3*(w5) $);
    \coordinate (w345) at ($ 1/3*(w3) + 1/3*(w4) + 1/3*(w5) $);
    
    \coordinate (c1) at ($ 0.25*(v1) + 0.25*(v2) + 0.25*(v3) + 0.25*(v4) $);
    \coordinate (c2) at ($ 0.25*(v1) + 0.25*(v2) + 0.25*(v3) + 0.25*(v5) $);
    \coordinate (c1') at ($ 0.25*(w1) + 0.25*(w2) + 0.25*(w3) + 0.25*(w4) $);
    \coordinate (c2') at ($ 0.25*(w1) + 0.25*(w2) + 0.25*(w3) + 0.25*(w5) $);
    \coordinate (cc1') at ($ 0.25*(w1) + 0.25*(w2) + 0.25*(w4) + 0.25*(w5) $);
    \coordinate (cc2') at ($ 0.25*(w1) + 0.25*(w3) + 0.25*(w4) + 0.25*(w5) $);
    \coordinate (cc3') at ($ 0.25*(w2) + 0.25*(w3) + 0.25*(w4) + 0.25*(w5) $);

    \draw[{Implies}-{Implies} , double distance=2] ($ 0.6*(v123) + 0.4*(w123) $) -- node[above] {$P_{(2,3)}$} ($ 0.4*(v123) + 0.6*(w123) $);

    \draw[fill , darkblue] (c1) circle [radius=0.04];
    \draw[fill , darkblue] (c2) circle [radius=0.04];
    \draw[- , thick , darkblue] (c1) -- (c2);
    \draw[- , thick , darkblue] (c1) -- ($ 1.25*(v124) - 0.25*(c1) $);
    \draw[- , thick , darkblue] (c1) -- ($ 1.25*(v134) - 0.25*(c1) $);
    \draw[- , thick , darkblue] (c1) -- ($ 1.25*(v234) - 0.25*(c1) $);
    \draw[- , thick , darkblue] (c2) -- ($ 1.25*(v125) - 0.25*(c2) $);
    \draw[- , thick , darkblue] (c2) -- ($ 1.25*(v135) - 0.25*(c2) $);
    \draw[- , thick , darkblue] (c2) -- ($ 1.25*(v235) - 0.25*(c2) $);
    
    \draw[- , thick] (v1) -- (v2) -- (v3) -- cycle;
    \draw[- , thick] (v1) -- (v4) -- (v2) -- (v5) -- cycle;
    \draw[- , thick] (v4) -- (v3) -- (v5);
    
    \draw[- , thick , darkblue] (cc1') -- ($ 1.25*(w124) - 0.25*(cc1') $);
    \draw[- , thick , darkblue] (cc1') -- ($ 1.25*(w125) - 0.25*(cc1') $);
    \draw[- , thick , darkblue] (cc1') -- (w145);
    \draw[- , thick , darkblue] (cc1') -- (w245);
    
    \draw[- , thick , darkred] (cc2') -- ($ 1.25*(w134) - 0.25*(cc2') $);
    \draw[- , thick , darkred] (cc2') -- ($ 1.25*(w135) - 0.25*(cc2') $);
    \draw[- , thick , darkred] (cc2') -- (w145);
    \draw[- , thick , darkred] (cc2') -- (w345);
    
    \draw[- , thick , darkgreen] (cc3') -- ($ 1.25*(w234) - 0.25*(cc3') $);
    \draw[- , thick , darkgreen] (cc3') -- ($ 1.25*(w235) - 0.25*(cc3') $);
    \draw[- , thick , darkgreen] (cc3') -- (w245);
    \draw[- , thick , darkgreen] (cc3') -- (w345);
    
    \draw[- , thick] (w1) -- (w2) -- (w3) -- cycle;
    \draw[- , thick] (w1) -- (w4) -- (w2) -- (w5) -- cycle;
    \draw[- , thick] (w4) -- (w3) -- (w5);
    \draw[- , thick] (w4) -- (w5);
\end{tikzpicture}
    \caption{The Pachner move $P_{(2,3)}$ takes two tetrahedra to the combination of three. In the triangulation it is realized by connecting the vertices above and below through an internal edge. The surfaces ranging between the internal edge and one of the three vertices shared by the two initial tetrahedra become internal triangles. The three faces above, that initially belonged to a single tetrahedron, now belong to three different tetrahedra. The same for the faces below.
    In the dual complex, the move is realized by taking three nodes (the colored ones on the right) at the place of the two initial ones (in blue on the left). Each node is thus connected to each other by two of its links, plus to one external link above and one below.}
    \label{Fig_Pachner(2,3)}
\end{figure}
\medskip \\
Note that, in the standard analysis of the three dimensional Pachner moves, the proportionality constant for the move $P_{(1,4)}$ amounts to the cube of the volume of the group (or Hopf algebra in our case) that decorates the dual complex, whereas, the proportionality constant for the Pachner move $P_{(2,3)}$ is a single volume term. 
Geometrically, these terms amount to the difference between the number of (independent) internal edges from the initial and to the final amplitudes of a given move. In \eqref{Pachner14} and \eqref{Pachner23} we do not encounter the (usually) expected proportionality constants. One can retrieve them by setting the normalization factor \eqref{Normalization_Factor} to the identity, $\mu = 1$ (as in the standard case), and by removing the volume term $V_H$ in the definition of gauge projector \eqref{GaugeProj_Field}. This would prevent the operator $\cP_L$ in \eqref{GaugeProj_Field} to be a projector, as it would satisfy the identity $\cP_L^2 = V_H \, \cP_L$ instead of $\cP_L^2 = \cP_L$, but this choice would  lead to the standard proportionality constants in the moves \eqref{Pachner14} and \eqref{Pachner23},
\be
    (P_{(1,4)} \, \cA_{\cV}) 
    : = 
    \cA_{\cV^4} =
    V_H^3 \, \cA_{\cV}
    \,\,,\quad
    (P_{(2,3)} \, \cA_{\cV^2}) 
    : = 
    \cA_{\cV^3} =
    V_H \, \cA_{\cV^2} \,.
\ee

\section{Example: $q$-deformed group field theory}
\label{Sec_Examples}

We use our formalism to construct the HAFT when dealing with the quantum group which would generate the Turaev-Viro amplitude \cite{Turaev:1992hq}, which is  Boulatov's original group field theory \cite{Boulatov:1992vp}. We will  show how the plane-wave $\sigma$ allows to recover the discretization of (Euclidian) BF theory in the presence of a (negative) cosmological constant. 
\smallskip \\
A quantum group is typically seen as a deformation of the usual notion of group. Therefore the HAFT we will construct can be seen as a deformation of the usual notion of group field theory. In appendix \ref{App_Examples}, we discuss the cases of a finite group and the undeformed $\SU(2)$ case.

\subsection{$q$-deformed group field theory}

We would like to illustrate our construction with the Hopf algebras $H$ and $A$ taken to be the $q$-deformations of $F(\SU(2))$ and $\cU(\su(2))$ respectively. 
Such standard deformations \cite{Majid:1996kd} are the non commutative and non co-commutative Hopf algebras denoted $\mathcal{U}_q(\an(2)) \cong F(\SU_q(2))$ and $\mathcal{U}_q(\mathfrak{su}(2)) \cong F(\AN_q(2))$.

\subsubsection{Building blocks}

Let the \textit{real} deformation parameter be $q = e^{\ell \lambda}$, where $\ell$ has the dimension of a length and is the characteristic scale of $\SU(2)$, while $\lambda$ has the dimension of an inverse length and is the characteristic scale of $\AN(2)$. 
Let $H,X_{\pm}$ be the generators of $\mathcal{U}_q(\mathfrak{su}(2))$ with the dimension of a length ($\ell$) and obeying Hopf algebra structure
\be
    \begin{aligned}
        \text{Product:} 
        & \quad
        [H , X_{\pm}] = {\pm} 2 \ell X_{\pm} 
        \qquad \qquad \to \quad 
        e^{\lambda H} X_{\pm} =  q^{\pm 2} \, X_{\pm} e^{\lambda H} \,,\\
        & \quad
        [X_+ , X_-] = \ell^2 q\mone \frac{\sinh(\lambda H)}{\sinh(\ell \lambda)} \,,\\
        \text{Co-product:} 
        & \quad
        \Delta H = H \ot 1 + 1 \ot H \\
        & \quad
        \Delta X_+ = X_+ \ot 1 + e^{-\lambda H} \ot X_+ \\
        & \quad
        \Delta X_- = X_- \ot e^{\lambda H} + 1 \ot X_- \,,\\
        \text{Co-unit:}
        & \quad 
        \varepsilon(H) = \varepsilon(X_{\pm}) = 0 \,,\\
        \text{Antipode:}
        & \quad 
        S(H) = - H \,,\\
        & \quad
        S(X_+) = - e^{\lambda H} X_+ \,,\\
        & \quad
        S(X_-) = -  X_- e^{-\lambda H} \,.
    \end{aligned}
    \label{Bi-Algebra_Uq(su)}
\ee
Let $\phi,\varphi_{\pm}$ be the generators of $F(SU_q(2))$ with the dimension of an inverse length ($\lambda$) and with Hopf algebra structure specified by
\be
    \begin{aligned}
        \text{Product:} 
        & \quad
        [\phi , \varphi_{\pm}] = -i\lambda \varphi_{\pm}
        \qquad \qquad \to \quad 
        e^{i\ell \phi} \varphi_{\pm} =  q \, \varphi_{\pm} e^{i\ell \phi} \,,\\
        & \quad
        [\varphi_+ , \varphi_-] = 0 \,,\\
        \text{Co-product:} 
        & \quad
        \Delta \phi = \frac{i}{\ell} 
        \log\bigg(
        \frac{1}{\Delta \varphi_0} 
        \big( \varphi_0 e^{-i\ell \phi} \ot \varphi_0 e^{-i\ell \phi} - \ell^2 \varphi_- \ot \varphi_+ \big)\bigg) \,,\\
        & \quad
        \Delta \varphi_+ = \varphi_+ \ot \varphi_0 e^{-i\ell \phi} + e^{i\ell \phi} \varphi_0 \ot \varphi_+ \,,\\
        & \quad
        \Delta \varphi_- = \varphi_- \ot e^{i\ell \phi} \varphi_0 + \varphi_0 e^{-i\ell \phi} \ot \varphi_- \,,\\
        \text{Co-unit:}
        & \quad 
        \varepsilon(\phi) = \varepsilon(\varphi_{\pm}) = 0 \,,\\
        \text{Antipode:}
        & \quad
        S(\phi) = -\phi \,,\\
        & \quad
        S(\varphi_{\pm}) = - q^{\mp} \varphi_{\pm} \,.
    \end{aligned}
    \label{Bi-Algebra_F(SUq)}
\ee
with $\varphi_0 = \sqrt{1 - q\mone \ell^2 \varphi_- \varphi_+}$.
We express the two Hopf algebras through their PBW basis \cite{Chari:1994pz}:
\be
    \{X_+^b H^a X_-^c\}_{abc=0}^{\infty} \in \mathcal{U}_q(\mathfrak{su}(2))
    \,\,,\quad
    \{\varphi_-^b \phi^a \varphi_+^c\}_{abc=0}^{\infty} \in \mathcal{U}_q(\an(2)) \,.
\ee
The above $\cU_q(\su(2))$ basis  is obtained from the standard symmetric deformation (with generators $\{J_{\pm},J_3\}$) presented in \cite{Majid:1996kd}, with the re-scaling
\be
    J_{\pm} \,\,\,\to\,\,\, e^{\mp\frac{1}{2}\lambda H} \, X_{\pm}
    \,\,,\quad
    J_3 \,\,\,\to\,\,\, H \,.
\ee 
The $F(\SU_q(2))$ basis given above is obtained by the following change of coordinates on the standard $\SU_q(2)$ matrix element \cite{Majid:1996kd}
\be
    g =
    \begin{pmatrix}
    a & b \\
    -qb^* & a^*
    \end{pmatrix}
    \,\,\,\to\,\,\,
    g =
    \begin{pmatrix}
    \varphi_0 e^{-i \ell \phi} & i\ell \varphi_- \\
    i\ell \varphi_+ & e^{i \ell \phi} \varphi_0
    \end{pmatrix}
    \quad \Rightarrow \quad
    \left\{\,\,
    \begin{aligned}
        & 
        a = \varphi_0 e^{-i \ell \phi} \,, \\
        & 
        b = i\ell \varphi_- \,.
    \end{aligned}
    \right.
    \label{SUq_GroupElement}
\ee
The Haar measures on $F(\SU_q(2))$ and $F(\AN_q(2))$ are resp. given by the standard $q$-deformation of the Haar measure on $\SU(2)$ \cite{Majid:1996kd} and by the $q$-deformation of the Haar measure on $AN(2)$, in the coordinate basis used above.

\subsubsection{Plane-wave}

The $q$-deformed group field theory can be seen as an example of a HAFT presented in Sec. \ref{Sec_HopfAlgebraFieldTheory}, specified by  $H = F_q(\SU(2)) \cong \cU_q(\an(2))$ and $A = \mathcal{U}_q(\mathfrak{su}(2)) \cong F_q(AN(2))$. 
\begin{proposition}[Generalized quantum double of $q$-deformed group field theory]
\label{Prop_qDeformedGenQuantumDouble}
    The generalized quantum double (Def. \ref{Def_GenQuantumDouble}) of the $q$-deformation of group field theory is the Drinfeld double of $\mathcal{U}_q(\mathfrak{su}(2))$:
    \be
        \cD(\cU_q(\su(2)), \cU_q(\an(2)), \sigma) =
        \cD(\cU_q(\su(2)) \cong
        \cU_q(\su(2)) \bowtie_{\sigma} \cU_q(\an(2)) \,,
    \ee
    with (diagonal or canonical) skew pairing
    \be
        \sigma\big(X_+^j H^i X_-^k \,,\, \varphi_-^b \phi^a \varphi_+^c\big)
        = i^{a+b+c} \,
        \delta_{ai} \delta_{bj} \delta_{ck} \,
        a![b]_{q^{2}}![c]_{q^{-2}}! \,,
        \label{qDeformedSkewMap}
    \ee
    and mutual actions
    \be
        \begin{aligned}
            &
            H \rhd \phi = 0 \,,\\
            &
            X_{\pm} \rhd \phi = \mp \ell q^{2B} \varphi_{\pm} \,,\\
            &
            H \rhd \varphi_{\pm} = \pm \ell \varphi_{\pm} \,,\\
            &
            X_{\pm} \rhd \varphi_{\pm} = 0 \,,\\
            &
            X_{\pm} \rhd \varphi_{\mp} = \pm i(\varphi_0 e^{-i\ell \phi} - e^{i\ell \phi} \varphi_0) \,,
        \end{aligned}
        \qquad
        \begin{aligned}
            &
            H \lhd \phi = 0 \,,\\
            &
            X_{\pm} \lhd \phi = i\lambda X_{\pm} \,,\\
            &
            H \lhd \varphi_{\pm} = 0 \,,\\
            &
            X_{\pm} \rhd \varphi_{\pm} = 0 \,,\\
            &
            X_{\pm} \rhd \varphi_{\mp} = \mp i(1-q e^{\pm \lambda H}) \,.
        \end{aligned}
    \ee
    Here
        \be
            [n]_f = \sum_{i=0}^n \, f^i = \frac{1-f^{n+1}}{1-f} 
            \,\,,\quad
            [n]_f! = \prod_{i=1}^n \, [i]_f \,,
        \ee
        are respectively the $q$-number and the $q$-factorial.
\end{proposition}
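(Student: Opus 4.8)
The plan is to proceed in three stages: (i) verify that \eqref{qDeformedSkewMap} defines a genuine skew pairing between $A = \cU_q(\su(2))$ and $H = \cU_q(\an(2))$ in the sense of Def. \ref{Def_SkewSymmetricBi-Algebras}; (ii) observe it is convolution-invertible; and (iii) feed $\sigma$ and $\sigma\mone$ into Def. \ref{Def_GenQuantumDouble} to extract the mutual actions. The identification with the Drinfeld double is then immediate by construction: since $\cU_q(\an(2)) \cong F_q(\SU(2))$ is the restricted dual of $\cU_q(\su(2))$ and \eqref{qDeformedSkewMap} is precisely the canonical evaluation pairing, the double cross product $\cU_q(\su(2)) \bowtie_\sigma \cU_q(\an(2))$ produced by Def. \ref{Def_GenQuantumDouble} \emph{is} the quantum double $\cD(\cU_q(\su(2)))$. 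So the real content of the proof is to pin down $\sigma$ and then compute.

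\textbf{Verifying the pairing.}
By the multiplicativity relations \eqref{SkewMap_Multiplication-CoProduct}, the pairing is determined by its values on generators, and it suffices to fix $\sigma(H,\phi) = i$, $\sigma(X_+,\varphi_-) = i$, $\sigma(X_-,\varphi_+) = i$, with all mixed pairings vanishing, and then check that the multiplicative extension reproduces the full PBW formula \eqref{qDeformedSkewMap}. The nontrivial point is the emergence of the $q$-factorials $a![b]_{q^2}![c]_{q^{-2}}!$. To obtain them I would split the second argument $\varphi_-^n$ by iterating the second relation in \eqref{SkewMap_Multiplication-CoProduct}, which forces the iterated coproduct $\Delta^{(n)} X_+^n$ to appear; because $\Delta X_+ = X_+ \ot 1 + e^{-\lambda H} \ot X_+$ with $e^{\lambda H} X_+ = q^{2} X_+ e^{\lambda H}$, the expansion carries $q^2$-binomial coefficients which, after contracting each $X_+$ against a single $\varphi_-$, assemble into $[n]_{q^2}!$. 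The opposite grouplike convention in $\Delta X_- = X_- \ot e^{\lambda H} + 1 \ot X_-$ gives $[n]_{q^{-2}}!$, while the primitive $H$ contributes the ordinary factorial $a!$. Equivalently, one may start from the standard symmetric pairing of \cite{Majid:1996kd} and transport it through the rescaling $J_\pm \to e^{\mp\frac{1}{2}\lambda H} X_\pm$, $J_3 \to H$ and the coordinate change \eqref{SUq_GroupElement}, which reproduces \eqref{qDeformedSkewMap}.

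\textbf{Invertibility and actions.}
Convolution-invertibility is automatic: both Hopf algebras carry antipodes, so Def. \ref{Def_SkewSymmetricBi-Algebras} yields $\sigma\mone(a \ot h) = \sigma(Sa \ot h)$. The mutual actions then follow by substituting $\sigma,\sigma\mone$ into \eqref{QuantumDouble_Actions}. On the generators with primitive-type coproducts these are short finite computations: for instance, using $\Delta H = H \ot 1 + 1 \ot H$ together with the explicit coproduct of $\varphi_\pm$, the $\phi$-content of the grouplike factors $e^{\pm i\ell\phi}$ pairs with $H$ through $\sigma(H,\phi)=i$ and produces $H \rhd \varphi_\pm = \pm\ell\varphi_\pm$, while $H \rhd \phi = 0$; likewise $\sigma(X_+,\varphi_-)=i$ drives the coadjoint-type result $X_+ \rhd \varphi_- = i(\varphi_0 e^{-i\ell\phi} - e^{i\ell\phi}\varphi_0)$ and $X_+ \rhd \varphi_+ = 0$. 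The action on $\phi$ is the most delicate, since $\Delta\phi$ is the logarithmic expression in \eqref{Bi-Algebra_F(SUq)}; I would obtain $X_\pm \rhd \phi$ and $X_\pm \lhd \phi$ either directly from this coproduct or, more cleanly, from the matched-pair compatibility \eqref{MatchedPair} together with the relation $e^{-i\ell\phi} = \varphi_0\mone\,\varphi_0 e^{-i\ell\phi}$ expressing $\phi$ through the matrix elements, whose coproducts are polynomial.

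\textbf{Main obstacle.}
The principal difficulty is the combinatorial step of the pairing verification: showing that the multiplicative extension of the generator pairing collapses exactly onto the $q$-factorial normalization of \eqref{qDeformedSkewMap}, uniformly in the three exponents, and in particular that the two opposite conventions ($q^2$ for $X_+/\varphi_-$ versus $q^{-2}$ for $X_-/\varphi_+$) remain consistent with the cross-relation $[X_+,X_-] \propto \sinh(\lambda H)$ and with the grouplike factors in $\Delta\varphi_\pm$. A clean lemma recording $\Delta^{(n)} X_\pm^n$ and $\Delta^{(n)}\varphi_\mp^n$ in $q$-binomial form, cross-checked against the co-unit normalizations \eqref{SkewMap_CoUnit}, is what makes the argument go through; once it is in hand, both the pairing formula and the action table reduce to bookkeeping.
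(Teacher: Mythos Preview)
Your proposal is correct and follows essentially the same route as the paper: verify the skew-pairing axioms \eqref{SkewMap_Multiplication-CoProduct} directly on PBW monomials using the $q$-binomial expansion of the iterated coproducts $\Delta X_\pm^n$ (which is exactly how the $q$-factorials $[b]_{q^2}!$, $[c]_{q^{-2}}!$ arise), note that the second axiom requires the $\star$-product on $\cU_q(\su(2))$ and is handled by the machinery of App.~\ref{App_Bi-Alg_Derivation}, and then read off the actions from \eqref{QuantumDouble_Actions}. The paper's proof is slightly more bare-bones---it computes both sides of one axiom explicitly and defers the other to the appendix without deriving the action table---so your outline is, if anything, more complete on that last point.
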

\begin{proof}
        Let us prove that the skew pairing
        \be
            \sigma\big(X_+^j H^i X_-^k \,,\, \varphi_-^b \phi^a \varphi_+^c\big)
            = i^{a+b+c} \,
            \delta_{ai} \delta_{bj} \delta_{ck} \,
            a![b]_{q^{2}}![c]_{q^{-2}}!
        \ee
        satisfies the relations \eqref{SkewMap_Multiplication-CoProduct}. We carry out the proof just using the bi-algebra structures \eqref{Bi-Algebra_Uq(su)} of $\cU_q(\su(2))$ and \eqref{Bi-Algebra_F(SUq)} of $F(\SU_q(2))$. To compute the $n^{th}$ power of the co-product of $X_{\pm}$ we will make use of the $q$-deformed binomial theorem (see \cite{Majid:1996kd}). Given two elements $A,B$ with obeying the relation $AB = q^{\alpha} BA$, the deformed binomial theorem states that
        \be
            (A + B)^n = \sum_{i=0}^n \, \frac{A^i B^{n-i}}{[n]_{q^{\alpha}}} \,.
        \ee
        Consider first
        \begin{align}
            \sigma\big(\Delta (X_+^j H^i X_-^k) & \,,\, 
            \varphi_-^b \phi^a \varphi_+^c \ot \varphi_-^r \phi^p \varphi_+^s\big) :=
            \sum_{u,v,w=0}^{i,j,k} \, 
            \binom{i}{u}
            \begin{bmatrix}
            j \\
            v
            \end{bmatrix}_{q^{2}}
            \begin{bmatrix}
            k \\
            w
            \end{bmatrix}_{q^{-2}} \, 
            \nonumber \\
            & \qquad
            \sigma\big(
            X_+^{j-v} e^{-v\lambda H} H^{i-u} X_-^{w} \ot 
            X_+^{v} H^{u} e^{w\lambda H} X_-^{k-w} \,,\, 
            \varphi_-^b \phi^a \varphi_+^c \ot \varphi_-^r \phi^p \varphi_+^s\big)
            \nonumber \\
            & =
            \sum_{u,v,w=0}^{i,j,k} \sum_{\alpha,\beta=0}^{\infty} \, 
            \binom{i}{u}
            \begin{bmatrix}
            j \\
            v
            \end{bmatrix}_{q^{2}}
            \begin{bmatrix}
            k \\
            w
            \end{bmatrix}_{q^{-2}} \, \frac{\lambda^{\alpha+\beta} (-v)^{\alpha} w^{\beta}}{\alpha!\beta!}
            \nonumber \\
            & \qquad
            \sigma\big(
            X_+^{j-v} H^{i-u+\alpha} X_-^{w} \ot 
            X_+^{v} H^{u+\beta} X_-^{k-w} \,,\,
            \varphi_-^b \phi^a \varphi_+^c \ot \varphi_-^r \phi^p \varphi_+^s\big)
            \nonumber \\
            & = i^{a+p+j+k} \,
            \delta_{j,b+r} \delta_{k,c+s} [b+r]_{q^2} [c+s]_{q^{-2}} \, \lambda^{a+p-i} \, \sum_{u=0}^i \,
            \frac{i!a!p! \, (-r)^{a-i+u} c^{p-u}}{u!(i-u)!(p-u)!(a-i+u)!}
            \,,\\
            \sigma\big(X_+^j H^i X_-^k & \,,\, 
            \varphi_-^b \phi^a \varphi_+^c \cdot \varphi_-^r \phi^p \varphi_+^s\big) =
            \sigma\big(X_+^j H^i X_-^k \,,\, 
            \varphi_-^b \varphi_-^r (\phi - i\lambda r)^a (\phi + i\lambda c)^p \varphi_+^c \varphi_+^s\big) 
            \nonumber \\
            & =
            \sum_{u,v=0}^{a,p} \, \binom{a}{u} \binom{p}{v} \, (i\lambda)^{a+p-u-v} (-r)^{a-u} c^{p-v} \,
            \sigma\big(X_-^j H^i X_+^k \,,\, 
            \varphi_+^{b+r} \phi^{u+v} \varphi_-^{c+s}\big)
            \nonumber \\
            & 
            = i^{a+p+j+k} \,
            \delta_{j,b+r} \delta_{k,c+s} \, [b+r]_{q^2} [c+s]_{q^{-2}} \, \lambda^{a+p-i} \, \sum_{u=0}^p \,
            \frac{i!a!p! \, (-r)^{a-i+u} c^{p-u}}{u!(p-u)!(i-u)!(p-i+u)!}
            \,.
        \end{align}
        Here
        \be
            \begin{bmatrix}
            n \\
            m
            \end{bmatrix}_f
            = \frac{[n]_f!}{[m]_f! \, [n-m]_f!} \,,
        \ee
        is the $q$-binomial.
        The second identity is more involved, as in the flat case, it requires the star product on $\cU_q(\su(2))$. We refer to App. \ref{App_Bi-Alg_Derivation} which provides the general setting for such type of computation.
        Note that also the unit relations are trivially satisfied.
\end{proof}
\noindent
\begin{proposition}[Dual of the generalized quantum double of $q$-deformed group field theory]
\label{Prop_qDeformedDualGenQuantumDouble}
    The dual of generalized quantum double (Def. \ref{Def_DualGenQuantumDouble}) of the $q$-deformation of group field theory is
    \be
        \mathcal{D}(\mathcal{U}_q(\an(2)), \mathcal{U}_q(\mathfrak{su}(2)), \sigma) =
        \mathcal{U}_q(\an(2)) \blackbowtie_{\sigma} \mathcal{U}_q(\mathfrak{su}(2)) \,,
    \ee
    with skew co-pairing element given by the $q$-star exponential
    \be
        \sigma
        =
        e_{\star \, q^{2}}^{i\varphi_+ \ot X_-} \,
        e_{\star}^{i\phi \ot H} \,
        e_{\star \, q^{-2}}^{i\varphi_- \ot X_+} \,,
        \label{qDeformedSkewEL}
    \ee
    and co-actions \eqref{QuantumDouble_CoActions}
    \begin{align}
        \alpha(\phi) 
        & = 
        (\phi \ot 1) - 
        i\lambda \sum_{n=1}^{\infty} \,
        \frac{(1-q^2)^n}{1-q^{2n}} \, \Big(
        q^{2n} (e^{2ni\ell \phi} \ot e^{-n\lambda H}) \,
        e_{\star \, q^{2}}^{-i\varphi_- \ot X_+} \,
        (i \varphi_+ \ot X_-)^n
        e_{\star \, q^{-2}}^{i\varphi_- \ot X_+} -
        (-i \varphi_- \ot X_+)^n
        \Big)
        \nonumber \\
        & \approx
        \phi \ot 1 + \lambda (\varphi_+ \ot X_- + \varphi_- \ot X_+) + \cO(\lambda^2) + \cO(\ell) + \cO(\ell\lambda)
        \label{CoActions_qDeformed_0}
        \,,\\
        \alpha(\varphi_{\pm})
        & =
        e_{\star \, q^{2}}^{-i\varphi_- \ot X_+} \,
        e_{\star \, q^{-2}}^{i\varphi_- \ot X_+ \, q^{-2}} \,
        (\varphi_{\pm} \ot e^{-\lambda H})
        \approx 
        \varphi_{\pm} \ot 1 - \lambda \varphi_{\pm} \ot H
         + \cO(\lambda^2) + \cO(\ell) + \cO(\ell\lambda) \,,
         \label{CoActions_qDeformed_pm}
        \nonumber \\
        \beta(H) 
        & =
        (1 \ot H) -2\ell \sum_{n=1}^{\infty} \,
        \frac{(1-q^2)^n}{1-q^{2n}} \, \Big(
        q^{2n} (e^{2ni\ell \phi} \ot e^{-n\lambda H}) \,
        e_{\star \, q^{2}}^{-i\varphi_- \ot X_+} \,
        (i \varphi_+ \ot X_-)^n
        e_{\star \, q^{-2}}^{i\varphi_- \ot X_+} -
        (-i \varphi_- \ot X_+)^n
        \Big)
        \nonumber \\
        & \approx
        1 \ot H - 2i\ell (\varphi_+ \ot X_- + \varphi_- \ot X_+)
        + \cO(\lambda) + \cO(\ell^2) + \cO(\ell\lambda)
        \,,\\
        \beta(X_{\pm})
        & \approx
        1 \ot X_{\pm} \pm i\ell (\varphi_{\pm} \ot H - 2 \phi \ot X_{\pm})
        + \cO(\lambda) + \cO(\ell^2) + \cO(\ell\lambda) \,.
    \end{align}
    Since the explicit expression of the mutual co-actions is very involved as it require the use of the star products, we provided the first terms of the expansions wither in the parameter $\ell$ or $\lambda$.
    The $q$-exponential is defined as
    \be
        e_q^x = 
        \sum_{n=0}^{\infty} \, \frac{x^n}{[n]_q!} = 
        \exp\bigg(\sum_{n=1}^{\infty} \, \frac{x^n}{n} \frac{(1-q)^n}{1-q^n}\bigg) \,.
    \ee
\end{proposition}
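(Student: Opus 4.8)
The plan is to construct the skew co-pairing element $\sigma\in H\ot A$ as the canonical element associated to the skew pairing map $\sigma:A\ot H\to\bbK$ whose explicit values \eqref{qDeformedSkewMap} were already established in Proposition~\ref{Prop_qDeformedGenQuantumDouble}. Because that pairing is \emph{diagonal} in the PBW bases $\{X_+^bH^aX_-^c\}$ of $A=\cU_q(\su(2))$ and $\{\varphi_-^b\phi^a\varphi_+^c\}$ of $H=\cU_q(\an(2))$, the basis of $A$ dual to $\{\varphi_-^b\phi^a\varphi_+^c\}$ can be read off directly as $\big(i^{a+b+c}\,a!\,[b]_{q^2}!\,[c]_{q^{-2}}!\big)^{-1}X_+^bH^aX_-^c$. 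First I would assemble the canonical element as the triple series
\[
    \sigma=\sum_{a,b,c\ge 0}\frac{1}{i^{a+b+c}\,a!\,[b]_{q^2}!\,[c]_{q^{-2}}!}\,\big(\varphi_-^b\phi^a\varphi_+^c\big)\ot\big(X_+^bH^aX_-^c\big)\,,
\]
keeping in mind that the conventional powers of $i$ and the choice of $\sigma$ versus $\sigma\mone$ are fixed by the Fourier-transform conventions of Section~\ref{Sec_HopfAlgebra_Intro}.

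Next, I would show that this series resums to the ordered product of three $q$-star exponentials \eqref{qDeformedSkewEL}. The diagonal structure of the pairing decouples the three PBW directions, so that $\sigma$ factorizes into three one-parameter families built respectively on the pairs $(\varphi_+,X_-)$, $(\phi,H)$ and $(\varphi_-,X_+)$; each family is precisely a $q$-exponential whose base ($q^2$, $1$, $q^{-2}$) is dictated by the corresponding $q$-factorial $[b]_{q^2}!$, $a!$, $[c]_{q^{-2}}!$ appearing in the denominator. The star product $\star$ on $H\ot A$ is exactly what is needed to recombine the powers of the building blocks $\varphi_\pm\ot X_\mp$ and $\phi\ot H$ into the PBW-ordered monomials of the series, using the commutation relations \eqref{Bi-Algebra_Uq(su)} and \eqref{Bi-Algebra_F(SUq)} together with the $q$-binomial theorem already invoked in the proof of Proposition~\ref{Prop_qDeformedGenQuantumDouble}. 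The skew co-pairing axioms \eqref{SkewEl_id-CoProduct} and \eqref{SkewEl_CoProduct-id} then need not be checked by hand on the exponential: they are dual to the multiplicativity relations \eqref{SkewMap_Multiplication-CoProduct} of the pairing, which hold by Proposition~\ref{Prop_qDeformedGenQuantumDouble}, so the identification of the double as $\cU_q(\an(2))\blackbowtie_\sigma\cU_q(\su(2))$ is automatic.

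Finally, the co-actions are obtained by feeding the exponential form of $\sigma$ and its inverse $\sigma\mone=(\id\ot S)\,\sigma$ from \eqref{InvSkewEl} into the defining conjugations $\alpha(h)=\sigma\mone(h\ot1)\sigma$ and $\beta(a)=\sigma\mone(1\ot a)\sigma$ of \eqref{QuantumDouble_CoActions}. Commuting the generator past each star exponential --- using $[\phi,\varphi_\pm]=-i\lambda\varphi_\pm$ and $e^{\lambda H}X_\pm=q^{\pm2}X_\pm e^{\lambda H}$ --- turns the conjugation into a $q$-deformed adjoint-action series, and collecting terms reproduces the closed forms \eqref{CoActions_qDeformed_0}--\eqref{CoActions_qDeformed_pm}; the stated expansions follow by retaining only the first order in $\ell$ and in $\lambda$.

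The hard part will be the two star-product computations: the resummation of the triple series into \eqref{qDeformedSkewEL}, and above all the closed-form evaluation of the conjugations defining the co-actions. Because $\phi\ot1$ (resp.\ $1\ot H$) does not commute with the exponential building blocks, the conjugation does not truncate but generates the infinite sum visible in \eqref{CoActions_qDeformed_0}, whose control requires the explicit star product on $\cU_q(\su(2))$. This is precisely why only the leading expansions are displayed in the statement, and the systematic treatment of such manipulations is deferred to App.~\ref{App_Bi-Alg_Derivation}.
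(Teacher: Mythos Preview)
Your approach is correct but genuinely different from the paper's. The paper does not build $\sigma$ as the canonical element dual to the pairing of Proposition~\ref{Prop_qDeformedGenQuantumDouble}; instead it \emph{posits} the ordered product of $q$-star exponentials \eqref{qDeformedSkewEL} and then verifies the skew co-pairing axioms directly. The identity $(\id\ot\Delta)\sigma=\sigma_{13}\sigma_{12}$ is checked by hand using the $\cU_q(\su(2))$ coproduct and the $q$-exponential splitting rule $e_{q^\alpha}^A e_{q^\alpha}^B=e_{q^\alpha}^{A+B}$ for $AB=q^\alpha BA$; the companion identity $(\Delta\ot\id)\sigma=\sigma_{13}\sigma_{23}$ is declared ``more involved'' because it involves the $F(\SU_q(2))$ coproduct and is deferred to the star-product formalism of App.~\ref{App_Bi-Alg_Derivation}. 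Your route --- write down the canonical element as a triple series in the dual PBW bases, resum into \eqref{qDeformedSkewEL}, and then invoke that the co-pairing axioms \eqref{SkewEl_id-CoProduct}--\eqref{SkewEl_CoProduct-id} are the exact duals of the already-established pairing axioms \eqref{SkewMap_Multiplication-CoProduct} --- is more conceptual and spares you the explicit axiom check. The trade-off is that the work moves into the resummation step: the naive expansion of the exponential product produces monomials in the order $\varphi_+^m\phi^n\varphi_-^p\ot X_-^mH^nX_+^p$, i.e.\ the reverse of the PBW order used in the series, and matching the two is precisely where the star product (and the framework of App.~\ref{App_Bi-Alg_Derivation}) enters. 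So both arguments ultimately lean on the same technical appendix; yours packages the duality more cleanly, while the paper's makes one of the two co-pairing identities fully explicit. The treatment of the co-actions is the same in both.
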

\begin{proof}
    Let us prove that the skew co-pairing (quantum plane wave)
    \be\label{sigmaqgp}
        \sigma
        =
        e_{\star \, q^{2}}^{i\varphi_+ \ot X_-} \,
        e_{\star}^{i\phi \ot H} \,
        e_{\star \, q^{-2}}^{i\varphi_- \ot X_+} \,,
    \ee
    satisfies the relations \eqref{SkewEl_id-CoProduct},\eqref{SkewEl_CoProduct-id}. We carry out the proof just using the bi-algebra structures \eqref{Bi-Algebra_Uq(su)} of $\cU_q(\su(2))$ and \eqref{Bi-Algebra_F(SUq)} of $F(\SU_q(2))$. 
    In order to split the $q$-$\star$ exponentials we used their property related to the $q$-deformed binomial theorem (see \cite{Majid:1996kd}). Given two elements $A,B$ with obeying the relation $AB = q^{\alpha} BA$, the $q$-exponential satisfies the relation
        \be
            e_{q^{\alpha}}^A \, e_{q^{\alpha}}^B = e_{q^{\alpha}}^{A+B} \,.
        \ee
    Consider first
    \begin{align}
        (\id \ot \Delta) \, \sigma := &
        e_{\star \, q^{2}}^{i\varphi_+ \ot \Delta X_-} \,
        e_{\star}^{i\phi \ot \Delta H} \,
        e_{\star \, q^{-2}}^{i\varphi_- \ot \Delta X_+} =
        e_{\star \, q^{2}}^{i\varphi_+ \ot (1 \ot X_- + X_- \ot e^{\lambda H})} \,
        e_{\star}^{i\phi \ot (1 \ot H + H \ot 1)} \,
        e_{\star \, q^{-2}}^{i\varphi_- \ot (e^{-\lambda H} \ot X_+ + X_+ \ot 1)}
        \nonumber \\
        = &
        e_{\star \, q^{2}}^{i\varphi_+ \ot 1 \ot X_-}
        e_{\star \, q^{2}}^{i\varphi_+ \ot X_- \ot e^{\lambda H}}\,
        e_{\star}^{i\phi \ot 1 \ot H}
        e_{\star}^{i\phi \ot H \ot 1} \,
        e_{\star \, q^{-2}}^{i\varphi_- \ot e^{-\lambda H} \ot X_+}
        e_{\star \, q^{-2}}^{i\varphi_- \ot X_+ \ot 1}
        \nonumber \\
        = &
        e_{\star \, q^{2}}^{i\varphi_+ \ot 1 \ot X_-}
        e_{\star}^{i\phi \ot 1 \ot H} \,
        e_{\star \, q^{2}}^{i\varphi_+ \ot X_- \ot 1}
        e_{\star \, q^{-2}}^{i\varphi_- \ot 1 \ot X_+} \,
        e_{\star}^{i\phi \ot H \ot 1}
        e_{\star \, q^{-2}}^{i\varphi_- \ot X_+ \ot 1}
        \nonumber \\
        = &
        e_{\star \, q^{2}}^{i\varphi_+ \ot 1 \ot X_-}
        e_{\star}^{i\phi \ot 1 \ot H}
        e_{\star \, q^{-2}}^{i\varphi_- \ot 1 \ot X_+} \,
        e_{\star \, q^{2}}^{i\varphi_+ \ot X_- \ot 1}
        e_{\star}^{i\phi \ot H \ot 1}
        e_{\star \, q^{-2}}^{i\varphi_- \ot X_+ \ot 1} 
        := \sigma_{13} \, \sigma_{12} \,.
    \end{align}
    The second identity is more involved, as it requires the star product on $\cU_q(\su(2))$. We refer to App. \ref{App_Bi-Alg_Derivation} which provides the general setting for such type of computation. 
    Note that also the co-unit relations are trivially satisfied.
\end{proof}
\noindent
We note that the plane wave \eqref{qDeformedSkewEL} is similar to the exponential map obtained in \cite{Bonechi:1993sn}.

\subsubsection{Closure constraint and Feynman diagram amplitude}

Let us now derive the fundamental ingredients of the $q$-deformation of three dimensional group field theory, using the HAFT based on the dual of the generalized quantum double of $\cU_q(\su(2))$ above. We review the expressions of the closure constraint and the amplitude of a given graph. \\
The closure constraint \eqref{ClosureConstraint} in the triangulation picture is given by the delta function on $\cU_q(\su(2))$, which enforces the following co-products to vanish 
\be
    \begin{aligned}
        &
        H \ot 1 \ot 1 + 1 \ot H \ot 1 + 1 \ot 1 \ot H = 0 \,,\\
        &
        X_+ \ot 1 \ot 1 + e^{-\lambda H} \ot X_+ \ot 1 + e^{-\lambda H} \ot e^{-\lambda H} \ot X_+ = 0 \,,\\
        &
        X_- \ot e^{\lambda H} \ot e^{\lambda H} + 1 \ot X_- \ot e^{\lambda H} + 1 \ot 1 \ot X_- = 0 \,.
    \end{aligned}
\ee
Such closure condition can be re-packaged as a product of three $AN_q(2)$ group elements, see \cite{Dupuis:2020ndx}. This is the quantum Gauss constraint associated to a triangle \cite{Dupuis:2013lka}.
\medskip \\
The amplitude \eqref{GeneratingFunction_Double} for a graph $\Gamma^*$ is  
\be
    \cA_{\Gamma^*} = 
    \prod_{\{\cL_N\}} \,
    \frac{1}{V_{\SU_q(2)}^{N}} \,
    \int [\dd \phi \dd \varphi_{\pm}]^{N} [\dd H \dd X_{\pm}] \,
    e_{\star \, q^{2}}^{i \Delta^{2N} \varphi_+ \ot X_-} \,
    e_{\star}^{i \Delta^N \phi \ot H} \,
    e_{\star \, q^{-2}}^{i \Delta^N \varphi_- \ot X_+} \,,
    \label{PartitionFunction_qDeformed}
\ee
where $V_{\SU_q(2)}$ is the volume of $F(\SU_q(2))$, and for simplicity, we assumed\footnote{We recall that $\mu$ comes from the normalization of the integration of the delta function (or of the double integration of  the plane-wave), see \eqref{Normalization_Factor}.} $\mu = 1$.

\subsubsection{Relation with Turaev-Viro model (with $q$ real)}

We are going to show in the next section that the Feynman amplitude \eqref{PartitionFunction_qDeformed} is naturally associated to the discretized action of a $BF$ model with non-vanishing cosmological constant.  However before that, we are going to discuss how it is related to the Turaev-Viro model.
To this aim, it is enough to explain how to recover the original Boulatov model \cite{Boulatov:1992vp}. In his work, Boulatov defines a field theory based on the representations of a given group $G$, where the fundamental field is expanded in the Fourier decomposition
\be
    \begin{aligned}
        \Phi(x,y,z) = \sum_{i_1,j_2,j_3} \sum_{\{m,n,k\}} \,
        &
        \Phi^{m_1 \, m_2 \, m_3 ; k_1 \, k_2 \, k_3}_{j_1 \, j_2 \, j_3} \,
        D^{j_1}_{m_1,n_1}(x) \, D^{j_2}_{m_2,n_2}(y) \, D^{j_3}_{m_3,n_3}(z) \\
        \int \dd \omega \, &
        D^{j_1}_{n_1,k_1}(\omega) \, D^{j_2}_{n_2,k_2}(\omega) D^{j_3}_{n_3,k_3}(\omega) \,,
    \end{aligned}
\ee
where $x,y,z,\omega \in G$ and $D^j_{m,n}(x)$ are matrix elements obeying the orthogonality condition
\be
    \int \dx \, D^{j_1}_{m_1,n_1}(x) \, D^{j_2}_{m_2,n_2}(x) = 
    \frac{1}{d_j} \, \delta_{j_1,j_2} \delta_{m_1,m_2} \delta_{n_1,n_2} \,,
\ee
with $d_j$ being the dimension of the irreducible representation associated to $j$. Taking $G = \SU(2)$, the matrix elements $D^j_{m,n}(x)$ are the standard Wigner $D$-matrices, whereas -- as Boulatov explains -- for $G = SU_q(2)$ the matrix elements $D^j_{m,n}(x)$ become the $q$-deformed Wigner matrices \cite{Majid:1996kd}. 
In Hopf algebra field theory, the fundamental field $\phi$ is an element of the tensor product Hopf algebra $H^3$. We claimed that the $q$-deformed group field theory is obtained by choosing $H = F(\SU_q(2))$, for which we used the parametrization in terms of the coordinates $\{\varphi_+,\phi,\varphi_-\}$ of the $\SU_q(2)$ group element \eqref{SUq_GroupElement}. Each field $\Phi$ in Hopf algebra field theory is thus given by the tensor product of three copies of the linear combination of monomials $\varphi_+^j \, \phi^i \, \varphi_-^k$. Therefore, to make contact between Hopf algebra field theory and the original Boulatov model, we define the Fourier expansion for the fields $\Phi \in F(\SU_q(2)^{\times 3})$
\be
    \begin{aligned}
        \Phi = \sum_{i_1,j_2,j_3} \sum_{\{m,n,k\}} \,
        &
        \Phi^{m_1 \, m_2 \, m_3 ; k_1 \, k_2 \, k_3}_{j_1 \, j_2 \, j_3} \,
        D^{j_1}_{m_1,n_1}(\{\varphi_{\pm},\phi\}_1) \, 
        D^{j_2}_{m_2,n_2}(\{\varphi_{\pm},\phi\}_2) \, 
        D^{j_3}_{m_3,n_3}(\{\varphi_{\pm},\phi\}_3) 
        \\
        \int \dd \varphi_+' \dd \phi' \dd \varphi_-' \, &
        D^{j_1}_{n_1,k_1}(\{\varphi_{\pm}',\phi'\}) \, D^{j_2}_{n_2,k_2}(\{\varphi_{\pm}',\phi'\}) \, 
        D^{j_3}_{n_3,k_3}(\{\varphi_{\pm}',\phi'\}) \,,
    \end{aligned}
\ee
where we used the symbol $\{\varphi_{\pm},\phi\}_i$ for the monomial in the coordinates $\varphi_+,\phi,\varphi_-$ in the $i^{th}$ tensor space, and once again the $D^{j}_{m,n}$ are matrix elements obeying the same orthogonality condition. 
The HAFT defined for $H=\SU_q(2)$ we proposed is thus equivalent to the Boulatov model. In particular, the Feynman amplitude of the Boulatov model, and thus the one of HAFT \eqref{GeneratingFunction_Triangulation}, can be given in terms of the $q$-deformed 6-$j$ symbols and thus it allows to recover the Turaev-Viro invariant \cite{Turaev:1992hq} with $q$ real.

\subsection{Plane-wave and discretization of the $BF$ action}
\label{sec:pw-disc}

The amplitudes we calculated are expressed in terms of the plane-wave. Since the model is topological we expect that it should be related to the $BF$-action. More concretely, we expect that the amplitude \eqref{PartitionFunction_qDeformed} provides a regularization/discretization of $e^{i\cS^{\lambda}_{BF}}$, where $\cS^{\lambda}_{BF}$ is the $BF$ action with a cosmological constant as given in \cite{Dupuis:2020ndx}.

\subsubsection{$BF$ action and its limits}
\label{Sec_BF}

Let us focus on the Euclidian case with a negative or null cosmological constant $\Lambda \leq 0$, as we know this corresponds to the the deformation with $q$ real \cite{Dupuis:2020ndx}.
The fundamental fields are the frame field $e$ and the connection $A$, which are respectively 1-forms with values in the boosts $K$ and the Lie algebra $\su(2)$. Noting $\la\,,\,\ra$ the Killing form of $\sl(2,C)$, the $BF$ model is governed by the action
\be
\label{BFAction_Boosts}
    \cS^\Lambda_{BF}[A,e] = \int_{\cM} \, 
    \la e \,,\, \Big(F + \frac{\Lambda}{6} \, (e \times e)\Big)\ra \,.
\ee
The frame field variables are difficult to quantize since they are valued in the boosts. It was shown in \cite{Dupuis:2020ndx} that it is convenient to do a canonical transformation
\be
    \omega_I = A_I + \epsilon_{IJK} n^J e_K
    \,\,,\quad 
    n^I = (0,0,\lambda)
    \,\,,\quad 
    n^2 = \lambda^2 = -\Lambda \,,
\ee
which makes the frame field discretizable with values in the Lie algebra $\an(2)$. With this new connection the action becomes, up to a boundary term, 
\be
\label{BFAction_AN}
    \cS^\lambda_{BF} [\omega,e] = \int_{\cM} \, 
    \big(e^I \wedge F_I + \lambda \, (\omega_- \wedge e_3\wedge e_+-\omega_+ \wedge e_-\wedge e_3)\big) \,.
\ee
Note that this action is also valid if we have $\Lambda = \lambda^2 = 0$ but $\lambda \neq 0$. On the other hand if $\lambda = 0$, we just recover the standard $BF$ action $\cS_{BF}^{\Lambda=0}= \cS_{BF}^{\lambda=0}$.
\smallskip \\
Let us emphasize that this action depends on two constants, Newton's constant $G$ (hence the Planck length $\ell$) and the cosmological constant (hence $\lambda$). The constant $\ell$ typically encodes the curvature (radius) of $\SU(2)$, which is why  the fluxes (Poisson) non-commutativity is encoded by $\ell$, the Planck length. The cosmological constant or $\lambda$ encodes the curvature of the group $\AN(2)$. It encodes then the (Poisson) non-commutativity of the holonomy components.  

When looking at the action \eqref{BFAction_AN}, we can therefore take two limits, one where $\lambda\rightarrow0$ and another one where $\ell\rightarrow0$. This latter one is less common from the (quantum) gravity side since in a sense it amounts to switch off gravity.    

In terms of the underlying symmetry structure, the actions \eqref{BFAction_Boosts} or \eqref{BFAction_AN} are built from the classical Drinfeld double $\sl(2,C)\cong \su(2)\bowtie \an(2)$. The two limits we just discussed become then 
\begin{align}
    \su(2)\bowtie \an(2) 
    \,\,\,\rightarrow\,\,\,
    \left\{\begin{array}{c}
         \su(2)\ltimes \bbR^3, \quad \lambda=0  \\
         \bbR^3\rtimes \an(2), \quad \ell=0
    \end{array} \right. 
\end{align}
In terms of the action, we can write \eqref{BFAction_AN} highlighting the different brackets which would be turned to zero, 
\begin{align}
    \cS^\lambda_{BF} [\omega,e] = 
    \int_\cM \, \Big(
     \la e \,,\, d \omega\ra  +\frac{1}{2}  \la e\,,\,  [\omega\,,\,\omega]_{\su(2)} \ra + \frac{1}{2}
     \la [e \,,\, e]_{\an(2)} \,,\, \omega\ra\Big) \,.
\end{align}
We have therefore the two types of limiting actions
\be
\label{limiting actions}
    \cS^{\lambda}_{BF} [\omega,e]  
    \,\,\,\to\,\,\,
    \left\{\,\,
    \begin{aligned}
        &
        \cS^{\su(2)} =
        \int_{\cM} \, \Big(
        \la e \,,\, d \omega + \frac{1}{2} [\omega\,,\,\omega]_{\su(2)} \ra\Big)
        \,\,\,,\quad \lambda=0 \\
        &
        \cS^{\an(2)} =
        \int_{\cM} \, \Big(
        \la d e + \frac{1}{2}
        [e \,,\, e]_{\an(2)} \,,\, \omega\ra\Big) \, 
        \,\,,\quad \ell=0
    \end{aligned} 
    \right. 
\ee
where in the last case, we omitted the boundary term and we recall that in this case, $\omega$ is with value in the abelian Lie algebra $\bbR^3$.  In the first case, it is $e$ which is with value in $\bbR^3$. The first action is $\su(2)$ $BF$ theory, while the second one is a $\an(2)$ $BF$ theory.

\subsubsection{Discretized $BF$ actions}

Let us recall the standard discretization procedure of a $\g$ $BF$ theory, with no cosmological constant.
\begin{align}
\cS= \int_\cM \la B, F(A)\ra\,.   
\end{align}
The field $B$ is a $\g^*$ valued one-form and $A$ is a $\g$-valued connection. The 3$d$ manifold $\cM$ is discretized on a cellular decomposition $\Gamma$, noting $\Gamma^*$ its dual complex. According to the Poincar\'e duality, we discretize $B$ and $F$ on  dual structures. 
The curvature $F(A)$ is chosen to be discretized along an holonomy $g_e$ in $G$ forming a closed loop, spanning a face $e^*$ dual to an edge $e$. In this construction, we can  attribute to each link $l$ an holonomy $g_l \in G$, such that $g_e = \prod_l \, g_l$. Upon quantization, such product is actually characterized by the co-product
\be
    g_e = \prod_l g_l \,\,\,\rightarrow\,\,\, \Delta^n g \,.
\ee
The closed loop defines the boundary of a surface dual to an edge. We discretize $B$ on such edge as a 3$d$ vector\footnote{We mean that that the coordinates are $\star$ non-commutative coordinates, so that $\mathbf{x}_e \in F_{\star}(\bbR^d)$. } $\mathbf{x}_e \in \bbR^d_{\star}$, where $d$ is the dimension of $\g$. 
The natural discretization of the three dimensional $BF$ action can be represented with the help of the co-product or the $\star$ product representation. 
\be
    e^{i \cS_{BF}[A,B]} \approx 
    e^{\la g_e \, \mathbf{x}\ra} 
    \,\,\,\to\,\,\,\left\{ \begin{array}{l}
        (\Delta^N\otimes \id) e^{ g \cdot  \mathbf{x}_e } = e^{ \Delta^{N} g \cdot  \mathbf{x}_e } \\
         e_\star^{ \la g \,,\,  \mathbf{x}_e\ra  } =  e_\star^{ \la\prod_l g_l \,,\,  \mathbf{x}_e\ra } = \star_l  e_\star^{\la  g_l \,,\, \mathbf{x}_e\ra } 
    \end{array}\right.
     \,,
    \label{FlatDiscretizationPlaneWave}
\ee
where $\Delta^{N} g \cdot \mathbf{x}_e $ is the quantization of $\la g_e \,,\, \mathbf{x}\ra$. We emphasize that there is a freedom in the choice of coordinates for the element $g_e$. As operators we will have clearly different operator coordinates, and dually different $\mathbf{x}_e$. We will have then different co-product for the coordinate operators which is equivalent to different $\star$ products. 

For instance, taking the specific example of $\SU(2)$, we can use the Euler angles $(\varphi_\pm, \phi)$ as a parametrization of an element of $\SU(2)$, or the "standard" parameterization  $(p_\pm,p_3,p_0)$
\cite{Baratin:2010wi})  
\be
    g =
    \begin{pmatrix}
        e^{i\phi} & ie^{i\phi}\varphi_- \\
        i e^{i\phi}\varphi_+ & e^{-i\phi} -\varphi_-\varphi_+e^{i\phi} \\
    \end{pmatrix} = 
    \begin{pmatrix}
        p_0 + i p_3 & i  p_- \\
        i  p_+ & p_0 - i p_3
    \end{pmatrix}\,.
    \label{SU(2)_R^3_GroupElements}
\ee
where $p_0$ depends on the other coordinates. The two sets of coordinates are obviously related as explicitly seen in  \eqref{SU(2)_R^3_GroupElements}. 
\begin{align}\label{change}
    p_\pm = e^{i\phi}\varphi_\pm 
    \,\,,\quad 
    p_3 = \sin \phi + \frac{i}{2}\varphi_+\varphi_- e^{i\phi}
    \,\,,\quad  
    p_0 =\cos \phi - \frac{1}{2}\varphi_+\varphi_- e^{i\phi}.
 \end{align}
We also always have to bear in mind that one might need more than one coordinate patch so one needs to be cautious \cite{Joung:2008mr}. 

The dual variables to the coordinates $(\varphi_\pm, \phi)$ and  $(p_\pm,p_3,p_0)$ are different. They can be seen as the vectors fields generating the translations on the sphere $S^3\cong \SU(2)$. Depending on the choice of coordinates, they take different shapes. The co-product in the $p$-variables is different than the one in the $\varphi$-variables, or equivalently the $\star$ product for the dual variables of the $p$-variables is different than the one of the dual of the $\varphi$-variables. Nevertheless they are related by the change of coordinates \eqref{change}. 

Finally, we emphasize that while we chose the group $\SU(2)$ to illustrate our discussion, this is true for any group, in particular $\AN(2)$.

\medskip

We have seen that we can have two actions $\cS^{\su(2)}$, $\cS^{\an(2)}$ obtained as different limits of a single action $\cS^{\lambda}$. The limiting actions in \eqref{limiting actions} can be discretized along the same procedure. Since we use  the same fields $(e,\omega)$ in both cases, the fields $(e,\omega)$ should be discretized on the same structure in each case. Following the loop quantum gravity picture, we choose to discretize the frame field $e$ on the edges of $\Gamma$, while $\omega$ is discretized on the links of $\Gamma^*$. 

The fundamental element in both discretization  takes the same general shape but with different locations according to the  different cases. 
\be
\left\{ \begin{array}{c} e^{\la g_l \textbf{x}_e\ra}, \textrm{ with }
     g_l\in \SU(2) \textrm{ on } \Gamma^*, \quad  \textbf{x}_e \in \bbR^3  \textrm{ on } \Gamma\\
     e^{\la \textbf{x}_l g_e \ra}, \textrm{ with } g_e\in \AN(2) \textrm{ on } \Gamma, \quad  \textbf{x}_l \in \bbR^3  \textrm{ on } \Gamma^* 
\end{array}\right.
\ee
Using the co-product, we can construct the discretized actions. We note the choice of coordinates $k$, $q$ on respectively $\SU(2)$ and  $\AN(2)$ and $X$, $\textbf{x}$ on respectively $\bbR^3_{\star_\su}$ and $\bbR^3_{\star_\an}$.  We have therefore the discretization of the actions \eqref{limiting actions} given by    
\begin{align}
    \,\cS^{\su(2)}_d \approx \, \cS_d^{\lambda=0}
    & = \Delta^n k ^a \otimes X_a
    \label{DiscreteAction_Flat} \\
    \,\cS^{\an(2)}_d \approx \, \cS_d^{\ell=0}
    & = \Delta^n \textbf{x} ^a \otimes q_a 
    \label{no-grav}
\end{align}
To obtain the discretization of the general case $e^{i\cS^\lambda_{BF} [\omega,e]}$, we need to consider the HAFT and its associated plane-wave.

\subsubsection{Recovering the discretized $BF$ actions from the plane-wave}

As we alluded earlier, we expect that the plane-wave \eqref{qDeformedSkewEL} would provide a discretization of \eqref{BFAction_AN} since this formulation is the proper one leading to the quantum group variables \cite{Dupuis:2020ndx}. The fact that the triangulation is decorated by quantum group elements indicates that we are dealing with the quantum version of a homogeneously curved discrete geometry \cite{Dupuis:2013haa}.

Following the previous section, if we consider an edge dual to a loop formed of $n$ links, we would have the discretization scheme,
\be
    (\Delta^n \ot \id)\, \sigma 
    \,\,\,\leftrightarrow\,\,\,
    e^{i\cS^\lambda_{BF} [\omega,e]} \,.
    \label{PlaneWave_BFAction_curved}
\ee
We note that the main complication relies on the fact that the amplitude expressed in terms of the  fields \eqref{PlaneWave_BFAction_curved} is expressed in terms of the fields all included in a single exponential, contrary to the plane-wave formula \eqref{qDeformedSkewEL}, where each component has its own exponential (similarly to the Euler angles parametrization). Putting them together under a common exponential generally requires to use  the Baker-Campbell-Hausdorff formula
\be
    e^{A} e^{B} \approx e^{A+ B+ \demi [A,B]} \,,
    \label{BCHFormula}
\ee 
which we will truncate at first order to identify the relationship between the continuum actions and their discrete versions.

\medskip

As recalled in the table \ref{Tab_ClassicalLimit}, one can obtain different bi-algebras as limiting cases of the deformed case. We recover the bi-algebras associated to $\AN(2)$ in the limit $\ell \to 0$ and those associated to $\SU(2)$ when $\lambda \to 0$. This is consistent with the different limits we discussed in the previous section. 
\begin{table}
    \centering
    \begin{tabular}{ | l | l | l | }
        \cline{2-3}
        \multicolumn{1}{c|}{$q = e^{\ell \lambda}$} & 
        \multicolumn{1}{|c|}{\cellcolor{lightgray} $\lambda \rightarrow 0$ : $\cU(\su(2))$} &
        \multicolumn{1}{|c|}{\cellcolor{lightgray} $\ell \rightarrow 0$ : $F(AN(2))$} \\
        \hline \hline
        $[H , X_{\pm}] = {\pm} 2 \ell J_{\pm}$ &
        $[H , X_{\pm}] = {\pm} 2 \ell J_{\pm}$ &
        $[H , X_{\pm}] = 0$ \\
        $[X_+ , X_-] = \ell^2 \frac{\sinh(\lambda H)}{\sinh(\ell\lambda)}$ &
        $[X_+ , X_-] = \ell H$ &
        $[X_+ , X_-] = 0$ \\
        $\Delta H = H \ot 1 + 1 \ot H$ &
        $\Delta H = H \ot 1 + 1 \ot H$ &
        $\Delta H = H \ot 1 + 1 \ot H$ \\
        $\Delta X_+ = X_+ \ot 1 + e^{-\lambda H} \ot X_+$ &
        $\Delta X_+ = X_+ \ot 1 + 1 \ot X_+$ &
        $\Delta X_+ = X_+ \ot 1 + e^{-\lambda H} \ot X_+$ \\
        $\Delta X_- = X_- \ot e^{\lambda H} + 1 \ot X_-$ &
        $\Delta X_- = X_- \ot 1 + 1 \ot X_-$ &
        $\Delta X_- = X_- \ot e^{\lambda H} + 1 \ot X_-$ \\
        \hline 
        \cline{2-3}
        \multicolumn{1}{c|}{} & 
        \multicolumn{1}{|c|}{\cellcolor{lightgray} $\lambda \rightarrow 0$ : $F(SU(2))$} &
        \multicolumn{1}{|c|}{\cellcolor{lightgray} $\ell \rightarrow 0$ : $\cU(\an(2))$} \\
        \hline
        $\varphi_0^2 = 1 - \ell^2 \varphi_+ \varphi_-$ &
        $\varphi_0^2 = 1 - \ell^2 \varphi_+ \varphi_-$ &
        $\varphi_0^2 = 1$ \\
        $[\phi , \varphi_{\pm}] = -i\lambda \varphi_{\pm}$ &
        $[\phi , \varphi_{\pm}] = 0$ &
        $[\phi , \varphi_{\pm}] = -i\lambda \varphi_{\pm}$ \\
        $[\varphi_+ , \varphi_-] = 0$ &
        $[\varphi_+ , \varphi_-] = 0$ &
        $[\varphi_+ , \varphi_-] = 0$ \\
        \scriptsize{$\Delta (\varphi_0 e^{-i\ell\phi}) = 
        \big( \varphi_0 e^{-i\ell \phi} \ot \varphi_0 e^{-i\ell \phi} - q \ell^2 \varphi_- \ot \varphi_+ \big)$} &
        \scriptsize{$\Delta (\varphi_0 e^{-i\ell\phi}) = 
        \big( \varphi_0 e^{-i\ell \phi} \ot \varphi_0 e^{-i\ell \phi} - q \ell^2 \varphi_- \ot \varphi_+ \big)$} &
        $\Delta \phi = \phi \ot 1 + 1 \ot \phi$ \\
        $\Delta \varphi_+ = \varphi_+ \ot \varphi_0 e^{-i\ell \phi} + e^{i\ell \phi} \varphi_0 \ot \varphi_+$ &
        $\Delta \varphi_+ = \varphi_+ \ot \varphi_0 e^{-i\ell \phi} + e^{i\ell \phi} \varphi_0 \ot \varphi_+$ &
        $\Delta \varphi_+ = \varphi_+ \ot 1 + 1 \ot \varphi_+$ \\
        $\Delta \varphi_- = \varphi_- \ot e^{i\ell \phi} \varphi_0 + \varphi_0 e^{-i\ell \phi} \ot \varphi_-$ &
        $\Delta \varphi_- = \varphi_- \ot e^{i\ell \phi} \varphi_0 + \varphi_0 e^{-i\ell \phi} \ot \varphi_-$ &
        $\Delta \varphi_- = \varphi_- \ot 1 + 1 \ot \varphi_-$
        \\
        \hline
    \end{tabular} 
    \caption{The different limiting cases.}
    \label{Tab_ClassicalLimit}
\end{table} 

One of our main statements is that the plane wave \eqref{qDeformedSkewEL} provides the discretization of the $BF$ amplitude with non-vanishing cosmological constant
\be
    e^{i\cS_{BF}^{\lambda}[\omega,e]} \approx 
    e^{i\cS_d^{\Lambda}} = (\Delta^n \ot \id) \, \big(
    e_{\star \, q^{2}}^{i \varphi_+ \ot X_-} \,
    e_{\star}^{i \phi \ot H} \,
    e_{\star \, q^{-2}}^{i \varphi_- \ot X_+} \big) \,.
\ee
Using the full BCH formula would provide us the exact discretization of the $BF$ action with or without cosmological constant. However since it is very complicated, we will truncate the formula at the first order in $\ell$ and $\lambda$.

Since we can recover the actions \eqref{limiting actions} from the main action $\cS^\lambda$, we should be recovering the discretized actions  \eqref{DiscreteAction_Flat} and \eqref{no-grav} from the full plane-wave. 
To this aim, we use the BCH formula on  the $q$-deformed plane wave \eqref{qDeformedSkewEL} to express it as a single exponential and take the  different limits.
\begin{align}
    \sigma & =
    e_{\star \, q^{2}}^{i \varphi_+ \ot X_-} \,
    e_{\star}^{i \phi \ot H} \,
    e_{\star \, q^{-2}}^{i \varphi_- \ot X_+} 
    \approx
    e_{\star}^{i \varphi_+ \ot X_- + \frac{1}{2} \ell\lambda \varphi_+^2 \ot X_-^2} \,
    e_{\star}^{i \phi \ot H} \,
    e_{\star}^{i \varphi_- \ot X_+ - \frac{1}{2} \ell\lambda \varphi_-^2 \ot X_+^2} 
    \nonumber \\
    & \approx
    e_{\star}^{
    i \varphi_+ \ot X_- + i \phi \ot H + i \varphi_- \ot X_+ +
    \frac{1}{2} \ell\lambda \varphi_+^2 \ot X_-^2 - 
    \frac{1}{2} \ell\lambda \varphi_-^2 \ot X_+^2 +
    \frac{1}{2} [\phi \ot H \,,\, \varphi_+ \ot X_-] -
    \frac{1}{2} [\phi \ot H \,,\, \varphi_- \ot X_+] +
    \frac{1}{2} [\varphi_- \ot X_+ \,,\, \varphi_+ \ot X_-]} 
    \nonumber \\
    & \approx  
    e_{\star}^{
    i\varphi_+ (1 + i\ell \phi) \ot X_- + 
    i(1 + i\ell \phi) \varphi_- \ot X_+ +
    i(\phi - \frac{i}{2}\ell \varphi_+\varphi_-) \ot H +
    \frac{i\lambda}{2} \, (\varphi_- \ot H X_+ - \varphi_+ \ot X_- H)} \,.
    \label{CurvedPlaneWave_BCH}
\end{align}
In the case where $\lambda=0$, we are looking at the standard GFT with group $\SU(2)$, and in this case   $H = F(\SU(2))$ and $A = \mathcal{U}(\mathfrak{su}(2)) \cong F_\star(\bbR^3)$. 
At first order in $\ell$, we have the contribution in the exponent  is given by 
\begin{align}
    \varphi_+ (1+\ell \phi) \ot X_- + 
    \varphi_- (1+\ell \phi) \ot X_+ H + 
    (\phi - i\ell/2 \, \varphi_- \varphi_+) \ot H\,,
\end{align}
which allows us to identify the coordinates $k$ in the discretization \eqref{DiscreteAction_Flat}, at first order in $\ell$, 
\begin{align}
    k_+= \varphi_+ (1+\ell \phi), \quad k_-= \varphi_- (1+\ell \phi), \quad k_3= (\phi - i\ell/2 \, \varphi_- \varphi_+)\,. 
\end{align}
This means that the discretized $BF$ action with $\lambda=0$ can be written as 
\begin{align}
    \,\cS_{\su(2)} \approx \, \cS_d^{\lambda=0}
    & = \Delta^n k^a \otimes X_a\nonumber \\  
    &\approx  (\Delta^n \ot \id) 
    \, \big(
    \varphi_+ (1+\ell \phi) \ot X_- + 
    \varphi_- (1+\ell \phi) \ot X_+  + 
    (\phi - i\ell/2 \, \varphi_- \varphi_+) \ot H\big) \,.
\end{align}

We are now interested at looking at the case where $H = F_\star(\bbR^3) \cong \cU(\an(2))$ and $A = F(AN(2))$. 
The contribution in the exponent  is 
\begin{align}
    \varphi_+  \ot X_-(1-\frac{\lambda}{2} H) + 
    \varphi_-  \ot (1+\frac{\lambda}{2} H)X_+ + 
 \phi  \ot H\,. 
\end{align}
The coordinates $(\varphi_\pm, \phi)=(x_\pm,x)=\textbf{x}$ are non-commutative variables of the $\an(2)$ Lie algebra type, while the coordinates $H,X_\pm$ are commutative according to table \ref{Tab_ClassicalLimit}. Noting the coordinate $q$ on the group $\AN(2)$, we have then at first order in $\lambda$
\begin{align}
    q_-=X_-(1-\frac{\lambda}{2} H), \quad q_+= (1+\frac{\lambda}{2} H)X_+, \quad q_3=H. 
\end{align}

The discretized action \eqref{no-grav} in the $\ell=0$ case becomes then 
\begin{align}
    \,\cS_{\an(2)} \approx \, \cS_d^{\ell=0}
    & = (\Delta^n \textbf{x}) ^a \otimes q_a\nonumber \\  
    &\approx  (\Delta^n \ot \id) 
    \, \big(
    \varphi_+ \ot X_-(1-\frac{\lambda}{2} H) + 
    \varphi_- \ot (1+\frac{\lambda}{2} H)X_+  + 
    \phi\ot H\big) \,.
\end{align}

\medskip 

Finally, the case where both $\ell$ and $\lambda$ are not zero corresponds to the case  specified by  $H = F_q(\SU(2)) \cong \cU_q(\an(2))$ and $A = \mathcal{U}_q(\mathfrak{su}(2)) \cong F_q(AN(2))$. The approximation \eqref{CurvedPlaneWave_BCH} provides therefore the basic building block to construct the discretization of the $BF$ action $\cS^{\lambda}_{BF}$. 
The extra term is naturally identified with the cosmological constant contribution in \eqref{BFAction_AN}. Therefore,  working  at the first order in $\ell$ and $\lambda$, we obtain the discretization of the the $BF$ action with cosmological constant.
\begin{align}
    \cS_{BF}^{\lambda}[\omega,e] &=\cS_{BF}^{\lambda=0}[\omega,e] 
    +\lambda 
    \, \big( \omega_- \wedge e_3\wedge e_+-\omega_+ \wedge e_-\wedge e_3\big) 
    \nonumber \\
    \approx \cS_d^{\lambda} & =
    \cS_d^{\lambda=0} + \frac{\lambda}{2}  \,  \big(\Delta ^n\varphi_- \ot H X_+ - \Delta ^n\varphi_+ \ot X_- H\big) \,.
\end{align}
This corresponds to the generalization of the non-commutative simplicial gravity picture introduced by Baratin and Oriti \cite{Baratin:2010wi}. We have non-commutative coordinates both on the triangulation side (as they did) and in the dual complex. On the triangulation side the non-commutativity is further modified with respect to \cite{Baratin:2010wi} due to the presence of the cosmological constant.

\section{Conclusion}

We proposed a  model based on Hopf algebras that provides a (3d) topological state sum. The formulation of the model is a generalization of the notion group field theories, where the fundamental degrees of freedom are encoded in a gauge invariant field and the Feynman diagram amplitudes are expressed as a sum over the amplitudes associated to three dimensional geometries. 

The field theory is discussed both in the configuration and momentum spaces, and a map (generalized Fourier transform) between the two is provided. The object used as the kernel of such transformation is a generalized notion of plane-wave given by the canonical element of $D^*(A,H,\sigma)$.
It is the first time, to the best of our knowledge, that such canonical element takes a central place in the definition of of a topological model and is used to provide a discretization of the $BF$ action. 

The main scope of this work was to construct a field theory generating curved discrete geometries as Feynman diagrams, in such a way that the amplitudes would provide a discretization of the $BF$ action with non-vanishing cosmological constant. In this sense, HAFT provides a non-commutative formulation of simplicial 3d gravity. The non-commutativity appears due to the homogeneous curvature induced by the cosmological constant (and also due to the Planck length \cite{Baratin:2010wi}).  We emphasize that the discretization is not  done at the level of the action but instead at the level of the amplitude $e^{i\cS^\lambda_{BF}}$. 

Let us mention several directions that we would find interesting to explore. 

\paragraph{Relation with integrable systems.}
The canonical element/plane-wave associated to the dual Hopf algebras also appeared in the context of integrable systems under the shape of the transfer/transport matrix, or T-matrix \cite{Fronsdal:1991gf, Fronsdal:1993ip, Bonechi_1994}. In this context, we consider a lattice which sites are associated with a T-matrix $\sigma$. It can be written as 
\be 
    \sigma=e^{x\cdot k} \,, 
    \label{Lax}
\ee 
where $L=x\cdot k$ originates from a Lax pair and the $x^i$ are the dynamical variables, which can be seen as generating a Lie algebra $\g_x$, while the $k_i$ are the generators of the Lie algebra $\g_k$. We can multiply the T-matrices sitting at different sites, with dynamical variables $x$ and $x'$ to obtain a new one $x''=x \oplus x'$. This is precisely the structure we would expect from a plane-wave. The dynamical variables are equipped with a Poisson bracket which induces a Poisson structure for the T-matrix. We expect then that the Poisson bracket (associated with the dynamical variables) should be compatible with the product of the T-matrices. Since there is an obvious symmetry in \eqref{Lax}, we can make the same type of construction for the $k$ sector. 
The main features of the T-matrix are captured by the expected plane-wave properties 
\be
    \begin{aligned}
        &
        \sigma_{x,k_1}\sigma_{x,k_2} = e^{ix\cdot k_1} e^{ix\cdot k_2} = e^{ix\cdot (k_1\oplus k_2)} \equiv  \sigma_{x, (k_1\oplus k_2)}  \,,\\
        &
        \sigma_{x_1,k}\sigma_{x_2,k} = e^{ix_1\cdot k}  e^{ix_2\cdot k} = e^{i(x_1\oplus x_2)\cdot k} \equiv \sigma_{(x_1\oplus x_2)\cdot k} \,,
    \end{aligned}
\ee
such that the product is a Poisson map.
Note that, at the quantum level, by performing some projection in terms of representations, we can recover from the T-matrix the notion of $R$-matrix \cite{Majid:1996kd} which also play a fundamental role in the study of integrable systems.   
It would be interesting to explore whether the fact that this common structure appears both for integrable systems and 3d gravity could clarify some interesting questions, such as holography for example.

\paragraph{Generalization to quasi Hopf algebras.} 
Our construction relied on (co-associative) Hopf algebras. The original TV model is defined for quasi Hopf algebras, such as for the deformation at $q$ root of unity of $\SU(2)$. It would be interesting to explore how our proposed framework extends to this quasi case. As a starting point we could explore how twisting the Drinfeld double and weakening the co-associativity would modify our construction.

\paragraph{Introducing matter.} 
Krasnov has argued that to introduce matter, one should not only consider a field on some copies of $H=F(\SU(2))$ as we did, but instead on the full Drinfeld double $H =  F(\SU(2)) \rtimes \cU(\su(2))$ \cite{Krasnov:2005GFT3dMatterDrinfeldDouble}. 
It would be interesting to see how this is implemented in our general framework so that we could deal with particles coupled to gravity in the case where the cosmological constant is not zero. Along a similar line, it is well understood now that the Kitaev model is equivalent to a $BF$ theory with particle excitations, in particular the fundamental symmetry is given by the Drinfeld double. In \cite{Buerschaper2010}, some abstract Fourier transform was discussed, for the specific case of a quadrangulation which is therefore self-dual. It would be interesting to check how our proposal of Fourier transform could be applied in this set up.

\paragraph{Four dimensional model: quantum 2-groups.}
Despite the HAFT can be easily generalized to the $d$ dimensional case, we discussed the three dimensional version of the model for two reasons. 
First,  in 3$d$ it is easier  to visualize the underlying geometric objects. Second, if we were to consider the 4d case, instead of using Hopf algebras we could/should use categorified Hopf algebras according to the categorical ladder \cite{Crane:1994fw}. 
In particular, in 4d we expect to have triangulation faces, as dual to the links of the dual complex to be decorated. Without any edge decoration, it is actually not possible to decorate faces by non-abelian group elements (eg at the classical level) due to the Eckmann-Hilton argument \cite{EckmannHilton:19622Group}. This therefore  prevents  the description of a discrete curved geometry unless we also decorate the edges. A 2-group or crossed module or their deformation appears therefore as the natural structure to describe discrete curved geometries in 4d. Even though there already exists a proposal for the notion quantum 2-group (see \cite{Majid:2008iz}) the relevant structures and duality properties (such as a Fourier transform for 2-groups) to extend our construction   are not clear yet. Note that some recent work has extended the notion of GFT when using 2-groups instead of groups \cite{Girelli:2022bdf}. It is therefore an interesting question to determine what is the notion of plane-wave when dealing with 2-groups.

\section*{Acknowledgment}
FG would like to thank A. Baratin for discussions at the early stage of this work, a long long time ago.

\appendix

\section{Further mathematical details}
\label{app:hopfproof}

\subsection{Hopf algebra proofs}
\label{App_HopfAlgebraProofs}

Here we provide some relevant proofs regarding the bi-algebra notions introduced in Sec. \ref{Sec_HopfAlgebra_Intro}. \\
\medskip \\
\textbf{Inverse of the Fourier transform: Prop. \ref{Prop_Fourier}.} \\
We prove Prop. \ref{Prop_Fourier}, which states that the map \eqref{Inverse_FourierTransform} is the inverse of the Fourier transform \eqref{FourierTransform}. 
Here we first use both the definitions of Fourier transform and inverse
\be
    (\mathcal{F} \circ \mathcal{F}^{-1}) [\hat{\Phi}]
    =
    \frac{1}{\sqrt{\mu}} \left( \int_{H}^L \ot \, \id \right) \,
    \big(\sigma \cdot (\mathcal{F}^{-1} [\hat{\Phi}] \ot 1)\big)
    = 
    \frac{1}{\mu} \left( \int_{H}^L \ot \int_{A}^L \ot \id \right) \,
    \big(\sigma_{13} \cdot \sigma_{12}\mone \cdot
    (1 \ot \hat{\Phi} \ot 1)\big) \,.
\ee
Then, using the definition of inverse skew co-pairing \eqref{InvSkewEl}, the expression becomes
\be
    (\mathcal{F} \circ \mathcal{F}^{-1}) [\hat{\Phi}]
    =
    \frac{1}{\mu} \left( \int_{H}^L \ot \int_{A}^L \ot \id \right) \,
    (\id \ot S \ot \id) \, \big((1 \ot S\mone\hat{\Phi} \ot 1) \cdot (\sigma_{13} \cdot \sigma_{12})\big) \,.
\ee
Since the co-integral of $H$ of the product of co-pairing $\sigma_{13} \cdot \sigma_{12}$ can be written as the co-product of the delta function of $A$,
\be
    \frac{1}{\sqrt{\mu}} \left( \int_{H}^L \ot \id \ot \id \right) \,
    (\sigma_{13} \cdot \sigma_{12}) =
    \frac{1}{\sqrt{\mu}} \left(\int_H^L \ot \Delta\right) \,
    \sigma =
    \Delta \dela \,,
\ee
one can use the first of the identities \eqref{Delta_CoProduct_Property} to reduce the expression as
\be
    (\mathcal{F} \circ \mathcal{F}^{-1}) [\hat{\Phi}]
    =
    \frac{1}{\sqrt{\mu}} \left( \int_{A}^L \ot \id \right) \,
    (S \ot \id) \, \big((S\mone \hat{\Phi} \ot 1) \cdot \Delta \dela\big)
    =
    \frac{1}{\sqrt{\mu}} \left( \int_{A}^L \ot \id \right) \,
    (S \ot \id) \, \big((1 \ot \hat{\Phi}) \cdot \Delta \dela\big) \,.
\ee
Last, since the antipode is an anti-homomorphism map, is passes through the co-product, which becomes $\Delta_{21}$. This allows us to use the left invariance property of the co-integral and compute it using the normalization property of the delta function \eqref{Delta_Normalization}. This closes the proof of the proposition.
\be
    (\mathcal{F} \circ \mathcal{F}^{-1}) [\hat{\Phi}]
    =
    \frac{1}{\sqrt{\mu}} \left( \int_{A}^L \ot \id \right) \,
    \big(\Delta_{-21} (S\dela) \cdot (1 \ot \hat{\Phi})\big)
    =
    \frac{1}{\sqrt{\mu}} \left( \int_{A}^L \ot \id \right) \,
    \big((S\dela \ot 1) \cdot (1 \ot \hat{\Phi})\big)
    =
    \hat{\Phi} \,.
\ee
Similarly, we can also prove the identity $(\mathcal{F}^{-1} \circ \mathcal{F}) = \id$. \\
\medskip \\
\textbf{Delta function of an Hopf algebra: Prop. \ref{Prop_Delta}.} \\
Let us prove that the delta function of the algebra $A$, defined in \eqref{Delta},
satisfies the identity \eqref{Delta_CoProduct_Property} of Prop. \ref{Prop_Delta}. We use first the definition of delta function and that of Fourier transform, plus the property \eqref{SkewEl_id-CoProduct} of the skew co-pairing element.
\begin{align}
    (\hat{\Phi} \ot 1) \cdot \Delta \dela
    & =
    \frac{1}{\sqrt{\mu}} \left( \int_{H}^L \ot \id \ot \id \right) \,
    \big((1 \ot \hat{\Phi} \ot 1) \cdot (\id \ot \Delta) \, \sigma\big)
    =
    \frac{1}{\sqrt{\mu}} \left( \int_{H}^L \ot \id \ot \id \right) \,
    \big((1 \ot \hat{\Phi} \ot 1) \cdot \sigma_{13} \cdot \sigma_{12}\big)
    \nonumber \\
    & =
    \frac{1}{\mu} \left( \int_{H}^L \ot \int_{H}^L \ot \id \ot \id \right) \,
    \big(\sigma_{13} \cdot \sigma_{24} \cdot \sigma_{23} \cdot (\Phi \ot 1 \ot 1 \ot 1)\big) \,.
\end{align}
Now multiply the skew co-pairings from the left by $\sigma_{14}\mone \cdot \sigma_{14}$ and use the property \eqref{SkewEl_CoProduct-id}, plus the left invariance of the co-integral on $H$.
\begin{align}
    (\hat{\Phi} \ot 1) \cdot \Delta \dela
    & =
    \frac{1}{\mu} \left( \int_{H}^L \ot \int_{H}^L \ot \id \ot \id \right) \,
    \big(\sigma_{14}\mone \cdot \sigma_{14} \cdot \sigma_{24} \cdot \sigma_{13} \cdot \sigma_{23} \cdot (\Phi \ot 1 \ot 1 \ot 1)\big)
    \nonumber \\
    & =
    \frac{1}{\mu} \left( \int_{H}^L \ot \int_{H}^L \ot \id \ot \id \right) \,
    \big(\sigma_{14}\mone \cdot (\Delta \ot \id \ot \id) \, (\sigma_{12} \cdot \sigma_{13}) \cdot (\Phi \ot 1 \ot 1 \ot 1)\big)
    \nonumber \\
    & =
    \frac{1}{\mu} \left( \int_{H}^L \ot \int_{H}^L \ot \id \ot \id \right) \,
    \big(\sigma_{14}\mone \cdot \sigma_{23} \cdot \sigma_{24} \cdot (\Phi \ot 1 \ot 1 \ot 1)\big) \,.
\end{align}
Last, using the definitions of inverse skew co-pairing \eqref{InvSkewEl}, that of Fourier transform and delta function, we prove the first part of the proposition.
\begin{align}
    (\hat{\Phi} \ot 1) \cdot \Delta \dela
    & =
    \frac{1}{\mu} \left( \int_{H}^L \ot \int_{H}^L \ot \id \ot \id \right) \,
    \big((\id \ot \id \ot \id \ot S) \, \sigma_{14} \cdot (\Phi \ot 1 \ot 1 \ot 1) \cdot (\id \ot \id \ot \Delta) \, \sigma_{23}\big)
    \nonumber \\
    & =
    \frac{1}{\sqrt{\mu}} \left( \int_{H}^L \ot \id \ot \id \right) \,
    \big((\id \ot \id \ot S) \, \sigma_{13} \cdot (\Phi \ot 1 \ot 1)\big)
    \, \cdot \, 
    (\id \ot \Delta) \, \dela
    \nonumber \\
    & =
    (1 \ot S \hat{\Phi}) \cdot \Delta \dela \,.
\end{align}
The proofs of the other three identities of the first part of Prop. \ref{Prop_Delta} are similar.
\smallskip \\
The proof of normalization of the delta function \eqref{Delta_Normalization}, in the second part of the proposition, is straightforward. \\
\medskip \\
\textbf{Delta function of an Hopf algebra as integral in it: Prop. \ref{Prop_DeltaIntegral}.} \\
We show that the delta function in the Hopf algebra $A$ is a left integral in it, as stated in Prop. \ref{Prop_DeltaIntegral}. First note that the co-unit of an element of $A$ coincides with the co-integral of its dual:
\be
    \varepsilon \hat{\Phi}
    =
    \frac{1}{\sqrt{\mu}} \left(\int_H^L \ot \varepsilon\right) \,  \big(\sigma \cdot (\Phi \ot 1)\big)
    =
    \frac{1}{\sqrt{\mu}} \, \int_H^L \, \Phi \,.
\ee
Here we simply used the definition of Fourier transform \eqref{FourierTransform} and the unit property \eqref{SkewEl_id-CoProduct} of the skew co-pairing element.   
Now take the product $\hat{\Phi} \cdot \dela$, use the definition of delta function \eqref{Delta} and that of Fourier transform.
\be
    \hat{\Phi} \cdot \dela
    =
    \frac{1}{\sqrt{\mu}} \left( \int_{H}^L \ot \id \right) \,
    \big((1 \ot \hat{\Phi}) \cdot \sigma\big)
    =
    \frac{1}{\mu} \left( \int_{H}^L \ot \int_{H}^L \ot \id \right) \,
    \big(\sigma_{13} \cdot \sigma_{23} \cdot (\Phi \ot 1 \ot 1)\big) \,.
\ee
Then, using the property \eqref{SkewEl_CoProduct-id} and the left invariance of the co-integral on $H$, we prove the identity.
\begin{align}
    \hat{\Phi} \cdot \dela
    & =
    \frac{1}{\mu} \left( \int_{H}^L \ot \int_{H}^L \ot \id \right) \,
    \big((\Delta \ot \id) \sigma \cdot (\Phi \ot 1 \ot 1)\big)
    =
    \frac{1}{\mu} \left( \int_{H}^L \ot \int_{H}^L \ot \id \right) \,
    \big(\sigma_{23} \cdot (\Phi \ot 1 \ot 1)\big)
    \nonumber \\
    & =
    \frac{1}{\sqrt{\mu}} \, \dela \int_H \Phi
    =
    \dela \, \varepsilon(\hat{\Phi}) .
\end{align}
Similarly, one can prove that $\delh$ is the left integral in $H$ and that $\dela^{-1}$ and $\delh^{-1}$ defined in \eqref{Delta_Opposite} are the right integrals respectively in $A$ and $H$.
\medskip \\
Let us provide a further property of the skew co-pairing element.
\begin{proposition}[Integral property of the skew co-pairing element]
\label{Prop_SkewEl}
    Let $H$ and $A$ be skew co-paired bi-algebras with invertible skew co-pairing. It satisfies the identities below.
    \be
        \begin{aligned}
            &
            \left(\int_H \ot \id\right) \, \sigma = 
            \left(\int_H \ot \id \ot \id\right) \, (\sigma^{\pm 1}_{23} \cdot \sigma_{13})
            \,,\\
            &
            \left(\int_H \ot \id\right) \, \sigma\mone = 
            \left(\int_H \ot \id \ot \id\right) \, (\sigma\mone_{13} \cdot \sigma^{\pm 1}_{23}) \,,
        \end{aligned}
        \qquad
        \begin{aligned}
            &
            \left(\id \ot \int_A\right) \, \sigma = 
            \left(\id \ot \id \ot \int_A\right) \, (\sigma_{13} \cdot \sigma^{\pm 1}_{12})
            \,,\\
            &
            \left(\id \ot \int_A\right) \, \sigma\mone = 
            \left(\id \ot \id \ot \int_A\right) \, (\sigma^{\pm 1}_{12} \cdot \sigma\mone_{13}) \,.
        \end{aligned}
        \label{SkewEl_ConvInv_CheatProperty}
    \ee
\end{proposition}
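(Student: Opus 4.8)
The plan is to reduce every one of these identities to the single mechanism that \emph{partially integrating one copy of $\sigma$ over $H$ produces a delta function, which is a genuine integral in $A$ and therefore swallows whatever multiplies it}, after which the leftover tensor leg collapses to the unit by the counit property of the skew co-pairing. Concretely, for the first identity in the left column I would write $\sigma = \sum_i x_i \ot a_i$ with $x_i \in H$, $a_i \in A$, and recall from the definition of the delta functions that $(\int_H \ot \id)\,\sigma = \sqrt{\mu}\,\dela$, together with Prop.~\ref{Prop_DeltaIntegral}, which says $\dela$ is a \emph{left} integral of $A$, i.e. $a\cdot\dela = \varepsilon(a)\,\dela$. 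Expanding the right-hand side inside $H\ot H\ot A$ gives $\sigma_{23}^{\pm 1}\cdot\sigma_{13} = \sum_{i,j} x_i \ot x_j \ot (a_j^{\pm})\,a_i$, where $a_j^{+}=a_j$ and $a_j^{-}=Sa_j$.

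Integrating the first leg then leaves $\sum_j x_j \ot \big(a_j^{\pm}\big)\cdot\big(\sum_i \int_H(x_i)\,a_i\big) = \sqrt{\mu}\sum_j x_j \ot \big(a_j^{\pm}\big)\,\dela = \sqrt{\mu}\sum_j \varepsilon(a_j)\,x_j \ot \dela$, using the integral property in the last step (and $\varepsilon\circ S=\varepsilon$, which is exactly what makes the two choices of sign agree). The surviving $H$-leg is $\sum_j \varepsilon(a_j)\,x_j = (\id\ot\varepsilon)\,\sigma = 1_H$, by the counit property $(\id\ot\varepsilon)\,\sigma = 1$ of the skew co-pairing element. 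Hence the right-hand side equals $1_H \ot \sqrt{\mu}\,\dela = 1_H\ot(\int_H\ot\id)\,\sigma$, which is the left-hand side under the canonical identification $A\cong 1_H\ot A$. The second identity of the left column runs identically, except that integrating $\sigma\mone = (\id\ot S)\sigma$ over $H$ yields $\sqrt{\mu}\,\dela\mone = \sqrt{\mu}\,S\dela$, which by Prop.~\ref{Prop_DeltaIntegral} is a \emph{right} integral of $A$; one therefore uses $\dela\mone\cdot a = \varepsilon(a)\,\dela\mone$ in place of the left-invariance, and the same counit collapse finishes it.

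The two identities in the right column follow by the mirror argument with $H\leftrightarrow A$, $\int_H\leftrightarrow\int_A$, and $\dela\leftrightarrow\delh$ interchanged, working now in $H\ot A\ot A$: here $(\id\ot\int_A)\,\sigma = \sqrt{\mu}\,\delh\mone$ is a right integral of $H$ and $(\id\ot\int_A)\,\sigma\mone = \sqrt{\mu}\,\delh$ is a left integral, while the leftover $A$-leg collapses through $(\varepsilon\ot\id)\,\sigma = 1_A$. In every case the adjacent factor is replaced by its counit, so the result is manifestly insensitive to the $\pm 1$ exponent, which is the whole content of the statement.

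The only genuinely error-prone part—and the step I would slow down on—is the tensor-leg bookkeeping: one must keep straight \emph{which} of the two factors gets integrated, \emph{on which side} the resulting delta function sits relative to the surviving $A$- (or $H$-) element, and hence whether the left- or the right-integral half of Prop.~\ref{Prop_DeltaIntegral} applies. Once those placements are fixed the computation is mechanical, and no appeal to the co-pairing axioms \eqref{SkewEl_id-CoProduct}--\eqref{SkewEl_CoProduct-id} is even needed; the integral property of the delta function together with the counit property of $\sigma$ does all the work.
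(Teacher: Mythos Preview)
Your argument is correct, and it takes a genuinely different route from the paper's. The paper proves the first identity by inserting $\sigma_{23}^{\mp 1}\sigma_{23}^{\pm 1}=1$ in front of $\sigma_{13}$, then using the co-pairing axiom $(\Delta_H\ot\id)\,\sigma=\sigma_{13}\sigma_{23}$ (with legs permuted) to recognise $\sigma_{23}^{\pm 1}\sigma_{13}$ as $(\Delta_{21}\ot\id)\,\sigma^{\pm 1}$, and finally invoking the left invariance of the co-integral on $H$ to strip the coproduct. Your approach instead integrates $\sigma_{13}$ directly to produce $\sqrt{\mu}\,\dela$, then uses Prop.~\ref{Prop_DeltaIntegral} to let the integral element absorb the surviving $A$-leg of $\sigma_{23}^{\pm 1}$ via $a\cdot\dela=\varepsilon(a)\,\dela$, after which the counit property $(\id\ot\varepsilon)\,\sigma=1$ collapses the remaining $H$-leg to $1_H$. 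Both arguments are short and valid; the paper's is marginally more self-contained since it appeals only to the defining axioms of the skew co-pairing and the co-integral, whereas yours repackages those same ingredients through Prop.~\ref{Prop_DeltaIntegral} (whose proof in the appendix does use \eqref{SkewEl_CoProduct-id}). What your route buys is a cleaner conceptual picture---``the delta is an integral and swallows its neighbour''---which makes the $\pm 1$ insensitivity manifest from $\varepsilon\circ S=\varepsilon$ rather than from a separate manipulation for each sign.
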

\begin{proof}
    Consider the first of the identities \eqref{SkewEl_ConvInv_CheatProperty}. We prove it simply using the property \eqref{SkewEl_CoProduct-id} and the left invariance of the co-integral on $H$.
    \begin{align}
        \left(\int_H \ot \id\right) \, \sigma 
        & =
        \left(\int_H \ot \id \ot \id\right) \, (\sigma_{23}\mone \cdot \sigma_{23} \cdot \sigma_{13})
        \nonumber \\
        & =
        \left(\int_H \ot \id \ot \id\right) \, (\sigma_{23}\mone \cdot (\Delta_{2 1} \ot \id) \, \sigma)
        =
        \left(\int_H \ot \id \ot \id\right) \, (\sigma_{23}\mone \cdot \sigma_{13}) \,.
    \end{align}
    Alternatively, using the same properties, we have
    \begin{align}
        \left(\int_H \ot \id\right) \, \sigma 
        & =
        \left(\int_H \ot \id \ot \id\right) \, (\sigma_{23} \cdot \sigma\mone_{23} \cdot \sigma_{13})
        =
        \left(\int_H \ot \id \ot \id\right) \, (\sigma_{23} \cdot (\id \ot S\mone \ot \id) \, (\sigma_{23} \cdot \sigma_{13}))
        \nonumber \\
        & =
        \left(\int_H \ot \id \ot \id\right) \, (\sigma_{23} \cdot (\id \ot S\mone \ot \id) \circ (\Delta_{21} \ot \id) \, \sigma)
        =
        \left(\int_H \ot \id \ot \id\right) \, (\sigma_{23} \cdot \sigma_{13}) \,.
    \end{align}
    The other three identities in \eqref{SkewEl_ConvInv_CheatProperty} can be proved in a similar way.
\end{proof}
\noindent
Note that, by direct application of the above proposition, it is straightforward to derive the further identities
\be
    \left(\int_H \ot \id\right) \, \sigma =
    \left(\int_H \ot \id\right) \, \sigma\mone
    \,\,,\quad
    \left(\id \ot \int_A\right) \, \sigma =
    \left(\id \ot \int_A\right) \, \sigma\mone \,.
\ee

\subsection{Geometric computations}
\label{App_GeometricComputations}

Let us provide here some useful results which will be used later in this section. The computations below are strictly related to the geometric interpretation of Hopf algebra field theory, as a path integral formulation of a model of discrete curved geometries. 
\smallskip \\
We remind that the closure constraint \eqref{ClosureConstraint} represents the closure of the boundary of a triangle. 
Let us use the tensor notation $\hat{\cC} = \hat{\cC}_{1 \, 2 \, 3 } \in A^3$.
We list below some propositions involving the geometric operations on the closure constraint that we will need in Hopf algebra field theory, such as the cyclic permutation of the edges, inversion or gluing of two triangles.
\begin{proposition}[Cyclic permutation and inversion properties of a the closure constraint]
\label{Prop_ClosureConstraint_Properties}
    The closure constraint \eqref{ClosureConstraint} is invariant under cyclic permutation of its elements and under inversion.
    \be
        \begin{aligned}
            &
            \hat{\cC}_{1 \, 2 \, 3} =
            \hat{\cC}_{-3 \, -2 \, -1}
            \,,\\
            &
            \hat{\cC}_{1 \, 2 \, 3} =
            \hat{\cC}_{2 \, 3 \, 1} \,.    
        \end{aligned}
    \label{ClosureConstraint_CyclicPermutation+Inversion}
    \ee
    Moreover, the closure constraint is  a projector,
    \be
        \hat{\cC} \cdot \hat{\cC} = \hat{\cC} \,. 
    \label{ClosureConstraint_Reduction}
    \ee
\end{proposition}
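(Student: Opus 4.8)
The plan is to treat the three assertions separately, establishing the projector identity \eqref{ClosureConstraint_Reduction} first, since it is by far the cleanest, and then turning to the cyclic and inversion symmetries \eqref{ClosureConstraint_CyclicPermutation+Inversion}, which carry the geometric content and the genuine difficulty. Throughout I would write $\hat{\cC} = \frac{\sqrt{\mu}}{V_H}\,\Delta^3\dela\mone$ as in \eqref{ClosureConstraint} and set $C \equiv \dela\mone$, with $\Delta^3 C = C_{(1)} \ot C_{(2)} \ot C_{(3)}$ in Sweedler notation.

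For the projector identity I would exploit that $\Delta^3 : A \to A^3$ is an algebra homomorphism, while the product on $A^3$ is componentwise \eqref{d-d_Multiplication}; hence $\hat{\cC}\cdot\hat{\cC} = \frac{\mu}{V_H^2}\,(\Delta^3 C)(\Delta^3 C) = \frac{\mu}{V_H^2}\,\Delta^3(C\cdot C)$. By Prop.~\ref{Prop_DeltaIntegral} the opposite delta $C=\dela\mone$ is a right integral of $A$, so $C\cdot C = \varepsilon(C)\,C$. It then remains only to evaluate the scalar $\varepsilon(\dela\mone)$: applying $(\int_H \ot \varepsilon)$ to $\sigma\mone$ and using the co-unit property $(\id \ot \varepsilon)\sigma\mone = 1$ together with $\int_H 1 = V_H$ gives $\varepsilon(\dela\mone) = V_H/\sqrt{\mu}$. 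Substituting back, the prefactors collapse to $\frac{\sqrt{\mu}}{V_H}$ and one recovers $\Delta^3\dela\mone$, i.e. $\hat{\cC}\cdot\hat{\cC} = \hat{\cC}$. This step is essentially immediate once the right-integral property is in hand.

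For the two symmetries I would first record the relation $\dela\mone = S\dela$, obtained by feeding $\sigma\mone = (\id \ot S)\sigma$ from \eqref{InvSkewEl} into the definition \eqref{Delta_Opposite} of $\dela\mone$, and the antipode--co-product rule $\Delta^3 \circ S = (S\ot S\ot S)\circ \tau^3 \circ \Delta^3$. For the inversion identity this pair reduces the claim to a single structural statement: expanding the right-hand side of \eqref{ClosureConstraint_CyclicPermutation+Inversion} gives $\hat{\cC}_{-3\,-2\,-1} = \frac{\sqrt{\mu}}{V_H}\big(SC_{(3)} \ot SC_{(2)} \ot SC_{(1)}\big) = \frac{\sqrt{\mu}}{V_H}\,\Delta^3(SC)$, so that, since $\Delta^3$ is injective (apply $\varepsilon\ot\varepsilon\ot\id$), the inversion identity is equivalent to the $S$-invariance $S\dela\mone = \dela\mone$ of the opposite delta. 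Geometrically this is exactly the statement that traversing the closed boundary loop in the opposite orientation (inverting each edge and reversing their order) returns the same closure. For the cyclic identity the same expansion shows it is equivalent to the tensor identity $C_{(1)} \ot C_{(2)} \ot C_{(3)} = C_{(3)} \ot C_{(1)} \ot C_{(2)}$, the ``class-function'' property of the integral; I would attack this using the co-product property \eqref{Delta_CoProduct_Property}, namely $\Delta\dela\mone\cdot(a\ot 1) = \Delta\dela\mone\cdot(1\ot Sa)$, lifted to $\Delta^3$ by coassociativity, so that an element in one tensor slot can be transported into a neighbouring slot at the cost of an antipode, then telescoping these transports around the three factors and closing them with the antipode axiom \eqref{HopfAlg_Axiom_Antipode}.

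The hard part will be precisely this last bookkeeping. The projector identity is the easy warm-up, but the inversion and cyclic symmetries are the trace-like and $S$-invariance properties of the integral delta, and making them rigorous requires a careful Sweedler-index computation in $A^3$ that repeatedly invokes \eqref{Delta_CoProduct_Property}, the integral property of Prop.~\ref{Prop_DeltaIntegral}, and the antipode axioms, rather than any single structural shortcut. I expect the orientation-reversal (inversion) identity to be the most delicate, since through the reduction above it concentrates the entire difficulty into the $S$-invariance $S\dela\mone=\dela\mone$, which must be extracted from the interplay of the co-integral $\int_H$ with the skew co-pairing $\sigma$; the cyclic identity should then follow from the same family of transport moves together with the right-invariance of $\dela\mone$.
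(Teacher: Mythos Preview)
Your route is genuinely different from the paper's. The paper never leaves the $\sigma$-representation: it writes $\hat{\cC}=\frac{1}{V_H}\big(\int_H\ot\idd\big)(\sigma\mone_{12}\sigma\mone_{13}\sigma\mone_{14})$, inserts a well-chosen identity such as $(\sigma\mone_{23}\sigma\mone_{24}\sigma\mone_{25})(\sigma_{25}\sigma_{24}\sigma_{23})$, and then uses \eqref{SkewEl_CoProduct-id}/\eqref{SkewEl_CoProduct-id_inv} together with left-invariance of $\int_H$ to shuffle the factors; both the inversion and the cyclic identity are obtained this way, and the projector identity is handled by the same mechanism. Your approach instead works intrinsically in $A$, using that $\dela\mone$ is a right integral (Prop.~\ref{Prop_DeltaIntegral}). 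For the projector this is strictly cleaner: $\Delta^3$ multiplicative plus $C\cdot C=\varepsilon(C)C$ and $\varepsilon(\dela\mone)=V_H/\sqrt{\mu}$ does the job in one line. For the inversion your reduction to $S\dela\mone=\dela\mone$ is correct, and the target does follow from the corollary of Prop.~\ref{Prop_SkewEl}, namely $\big(\int_H\ot\id\big)\sigma=\big(\int_H\ot\id\big)\sigma\mone$, which gives $\dela=\dela\mone$ and hence $S\dela\mone=S\dela=\dela\mone$.

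There is, however, a gap in your cyclic argument. Your reduction to $C_{(1)}\ot C_{(2)}\ot C_{(3)}=C_{(3)}\ot C_{(1)}\ot C_{(2)}$ (equivalently $\Delta\dela\mone=\tau\Delta\dela\mone$) is correct, but the tool you invoke, \eqref{Delta_CoProduct_Property}, transports an \emph{external} element $a$ across $\Delta\dela\mone$; it does not permute the internal Sweedler legs of $\Delta^3\dela\mone$ among themselves, so the ``telescoping'' you sketch does not close using only that move and the antipode axiom. Cocommutativity of the integral is not a generic Hopf-algebra fact and here genuinely rests on the skew co-pairing structure. The paper resolves this precisely by returning to the $\sigma$-representation and using left-invariance of $\int_H$ (``multiply by the properly chosen identity''); you should either do the same for this one step, or make explicit how the two transport identities from \eqref{Delta_CoProduct_Property} together with $\dela=\dela\mone$ combine to force $\tau\Delta\dela\mone=\Delta\dela\mone$, which is more involved than your sketch suggests.
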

\begin{proof}
    Let us prove first the invariance under the inversion. 
    We carry out the computation just using the definition of closure constraint \eqref{ClosureConstraint} and that of opposite delta function \eqref{Delta_Opposite}, plus the left invariance of the co-integral on $H$.
    \begin{align}
        \hat{\cC} 
        & =
        \frac{1}{V_H} \left(\int_H \ot \idd \right) \,
        \big((\id \ot \Delta^3) \, \sigma\mone\big)
        = \frac{1}{V_H} \left(\int_H \ot \idd\right) \,
        \big(\sigma\mone_{12} \sigma\mone_{13} \sigma\mone_{14}\big)
        \nonumber \\
        & =
        \frac{1}{V_H^2} \left(\int_{H^2} \ot \idd\right) \,
        \big(
        (\sigma\mone_{13} \sigma\mone_{14} \sigma\mone_{15}) \cdot
        (\sigma\mone_{23} \sigma\mone_{24} \sigma\mone_{2 5}) \cdot
        (\sigma_{2 5} \sigma_{24} \sigma_{23})\big)
        \qquad \text{\tiny{use \eqref{SkewEl_CoProduct-id} and the left invariance}}
        \nonumber \\
        & =
        \frac{1}{V_H^2} \left(\int_{H^2} \ot \idd\right) \,
        \big(
        (\sigma\mone_{13} \sigma\mone_{14} \sigma\mone_{1 5}) \cdot 
        (\sigma_{2 5} \sigma_{24} \sigma_{23})\big)
        \qquad \text{\tiny{use \eqref{SkewEl_CoProduct-id} with the antipode and the left invariance}}
        \nonumber \\
        & =
        \frac{1}{V_H^2} \left(\int_{H^2} \ot \idd\right) \,
        \big(\sigma_{2 5} \sigma_{24} \sigma_{23}\big)
        =
        \frac{1}{V_H} \left(\int_{H} \ot \idd\right) \,
        \big(\sigma_{1 4} \sigma_{13} \sigma_{12}\big) 
        =
        S^3 \circ \tau^3 \, \hat{\cC} = \hat{\cC}_{-3 \, -2 \, -1} \,.
    \end{align}
    We can prove in a similar way the invariance under cyclic permutations by multiplying the closure constraint by the properly chosen identity. 

    The fact that the product of two closure constraints reduces to a single one, eq. \eqref{ClosureConstraint_Reduction}, is expected since the closure constraint is the Fourier transform of a projector. We carry out the proof writing the closure constraints in terms of skew co-pairing elements and then using the left invariance of the co-integral on $H$.
    \begin{align}
        \hat{\cC} \cdot \hat{\cC} 
        & =
        \frac{\mu}{V_H^2} \big(\Delta^3 \dela\mone \cdot \Delta^3 \dela\mone\big)
        \nonumber \\
        & =
        \frac{1}{V_H^2} \left(\int_{H^2} \ot \id^{\ot 3}\right) \,
        \big((\sigma\mone_{13} \sigma\mone_{14} \sigma\mone_{15}) \cdot
        (\sigma\mone_{23} \sigma\mone_{24} \sigma\mone_{25})\big)
        \qquad \text{\tiny{use \eqref{SkewEl_CoProduct-id_inv} and the left invariance}}
        \nonumber \\
        & =
        \frac{1}{V_H^2} \left(\int_{H^2} \ot \idd\right) \,
        \big(\sigma\mone_{13} \sigma\mone_{14} \sigma\mone_{15}\big)
        =
        \frac{1}{V_H} \left(\int_{H} \ot \idd\right) \,
        \big(\sigma\mone_{12} \sigma\mone_{14} \sigma\mone_{15}\big)
        =
        \hat{\cC} \,.
    \end{align}
\end{proof}
\medskip
\noindent
We now consider the kernel of the vertex amplitude \eqref{VertexAmplitude} associated to a tetrahedron and illustrate how to reduce its expression. The reduction consists in \textit{gluing} the closure constraints in it, by evaluating the co-product $\uDe$. Geometrically, it amounts to identifying the edges of the four triangles pairwise.
\begin{proposition}[Gluing triangles]
\label{Prop_ClosureConstraint_Gluing}
    Consider the kernel \eqref{VertexAmplitude} and two closure constraints \eqref{ClosureConstraint} in it. 
    Consider also the co-product $\uDe$, given in \eqref{Underline_CoProduct_3d} (which is a combination of six co-products), and take the co-product $\Delta_{ij}$ shared by the two closure constraints.
    The gluing of the two triangles is performed by taking the co-integral on the $j^{th}$ tensor space:
    \be
        \int_{A_{j_1}} \,
        \big(\Delta_{i_1 \, j_1} \dela\mone \cdot 
        \cC_{i_1 \, i_2 \, i_3} \cdot 
        \cC_{j_1 \, j_2 \, j_3}\big) 
        =
        \sqrt{\mu} \,
        \cC_{-i_1 \, j_2 \, j_3} \cdot 
        \cC_{i_1 \, i_2 \, i_3} \,.
        \label{ClosureConstraint_Gluing}
    \ee
    Similarly, the following identities hold.
    \be
        \begin{aligned}
            &
            \int_{A_{j_1}} \,
            \big(\Delta_{-i_1 \, j_1} \dela\mone \cdot 
            \cC_{i_1 \, i_2 \, i_3} \cdot 
            \cC_{j_1 \, j_2 \, j_3}\big) 
            =
            \sqrt{\mu} \,
            \cC_{i_1 \, i_2 \, i_3} \cdot 
            \cC_{i_1 \, j_2 \, j_3} 
            \,,\\
            &
            \int_{A_{j_1}} \,
            \big(\cC_{i_1 \, i_2 \, i_3} \cdot 
            \cC_{j_1 \, j_2 \, j_3} \cdot 
            \Delta_{i_1 \, -j_1} \dela\mone\big) 
            =
            \sqrt{\mu} \,
            \cC_{i_1 \, i_2 \, i_3} \cdot 
            \cC_{i_1 \, j_2 \, j_3} 
            \,,\\
            &
            \int_{A_{j_1}} \,
            \big(\cC_{i_1 \, i_2 \, i_3} \cdot 
            \cC_{j_1 \, j_2 \, j_3} \cdot 
            \Delta_{-i_1 \, -j_1} \dela\mone\big) 
            =
            \sqrt{\mu} \,
            \cC_{-i_1 \, j_2 \, j_3} \cdot
            \cC_{i_1 \, i_2 \, i_3}
            \,.
        \end{aligned}
        \label{ClosureConstraint_Gluing_Antipode}
    \ee
\end{proposition}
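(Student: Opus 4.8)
The plan is to localise the computation to the single tensor slot $j_1$ that is integrated out, and to show that the co-integral there collapses the gluing co-product $\Delta_{i_1\,j_1}\dela\mone$ onto one antipoded edge shared by the two triangles. Writing the two closure constraints in Sweedler-style tensor notation as $\hat\cC=\sum a_1\ot a_2\ot a_3$ and a second copy $\hat\cC=\sum b_1\ot b_2\ot b_3$, the integrand on the left of \eqref{ClosureConstraint_Gluing} carries $(\dela\mone)_{(1)}\,a_1$ in slot $i_1$ and $(\dela\mone)_{(2)}\,b_1$ in slot $j_1$, all other slots being untouched. First I would invoke the co-product property \eqref{Delta_CoProduct_Property} of the opposite delta, $\Delta\dela\mone\cdot(a\ot 1)=\Delta\dela\mone\cdot(1\ot Sa)$, to push the shared factor $a_1$ out of slot $i_1$ across the co-product and into slot $j_1$ at the cost of an antipode, so that slot $i_1$ retains only $(\dela\mone)_{(1)}$ while slot $j_1$ becomes $(\dela\mone)_{(2)}\,Sa_1\,b_1$.

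The crux is the integral lemma $(\id\ot\int_A)\big(\Delta\dela\mone\cdot(1\ot x)\big)=\sqrt{\mu}\,S\mone(x)$, valid for every $x\in A$. I would establish it by setting $a=S\mone x$, so that $Sa=x$, rewriting $\Delta\dela\mone\cdot(1\ot x)=\Delta\dela\mone\cdot(a\ot 1)$ by \eqref{Delta_CoProduct_Property}, and then using the left invariance \eqref{CoIntegral_LeftRightInvariance} of $\int_A$ together with the normalization $\int_A\dela\mone=\sqrt{\mu}$ from \eqref{Delta_Normalization}; the fixed factor $a$ survives the integration on the right and reproduces $\sqrt{\mu}\,S\mone x$. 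Applying this lemma with $x=Sa_1\,b_1$ and using that $S\mone$ is an anti-homomorphism yields $\sqrt{\mu}\,S\mone(b_1)\,a_1$ in slot $i_1$, so that after integrating slot $j_1$ the surviving data is $\sqrt{\mu}\,\sum S\mone(b_1)\,a_1\ot a_2\ot a_3\ot b_2\ot b_3$, distributed over $i_1,i_2,i_3,j_2,j_3$.

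It then remains to recognise this as $\sqrt{\mu}\,\hat\cC_{-i_1\,j_2\,j_3}\cdot\hat\cC_{i_1\,i_2\,i_3}$. The triple $(S\mone b_1,b_2,b_3)$ is $\hat\cC$ with its first leg antipoded, and I would convert this $S\mone$ into the $S$ demanded by the notation $\Delta_{-i_1\,j_1}$ of \eqref{ij_CoProduct_Antipode} by means of the inversion symmetry $\hat\cC_{123}=\hat\cC_{-3\,-2\,-1}$ of Prop.~\ref{Prop_ClosureConstraint_Properties}. This antipode bookkeeping is exactly where I expect the main obstacle to lie: matching the convention $S$ versus $S\mone$ consistently across the two triangles does not follow from a one-line manipulation but requires careful use of the inversion identity, and one must check that no spurious factor of $S^2$ is introduced. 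The three variants in \eqref{ClosureConstraint_Gluing_Antipode} follow by the same scheme, with two modifications: placing the antipode on the $i_1$ leg (the $\Delta_{-i_1\,j_1}$ case) removes the antipode from the answer, and moving the opposite delta to the right of the two closure constraints replaces \eqref{Delta_CoProduct_Property} and left invariance by their right-handed counterparts, the cyclic and inversion symmetries \eqref{ClosureConstraint_CyclicPermutation+Inversion} of $\hat\cC$ reconciling the resulting edge labels.
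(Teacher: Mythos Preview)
Your proposal is correct and follows essentially the same route as the paper: shuttle the shared edge across the co-product of $\dela\mone$ via \eqref{Delta_CoProduct_Property}, then kill the remaining $\Delta\dela\mone$ with left invariance of $\int_A$ and the normalization $\int_A\dela\mone=\sqrt\mu$. Your ``integral lemma'' $(\id\ot\int_A)\big(\Delta\dela\mone\cdot(1\ot x)\big)=\sqrt\mu\,S\mone x$ is simply a clean packaging of those two steps into one. The only organizational difference is that you first push $a_1$ from $i_1$ into $j_1$, whereas the paper pulls $b_1$ from $j_1$ into $i_1$ directly---tacitly exploiting that $a_1\ot 1$ and $1\ot b_1$ commute in the tensor algebra, so $b_1$ can be brought adjacent to the delta co-product before applying \eqref{Delta_CoProduct_Property}. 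Both routes land on $S\mone b_1\cdot a_1$ in slot $i_1$. The antipode bookkeeping you correctly flag (matching the resulting $S\mone$ against the $S$ implicit in the notation $\hat\cC_{-i_1\,j_2\,j_3}$) is the one genuinely delicate point; the paper records the intermediate expression as $\hat\cC_{-i_1\,j_2\,j_3}\cdot\hat\cC_{i_1\,i_2\,i_3}$ without dwelling on it, so your level of care there already exceeds what the paper makes explicit.
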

\begin{proof}
    In order to prove the proposition we first use the property \eqref{Delta_CoProduct_Property} for the co-product of the opposite delta function, moving the first component of the second closure constraint from the $j_1^{th}$ to the $i_1^{th}$ tensor space.
    \be
        \int_{A_{j_1}} \,
        \big(\Delta_{i_1 \, j_1} \dela\mone \cdot 
        \cC_{i_1 \, i_2 \, i_3} \cdot 
        \cC_{j_1 \, j_2 \, j_3}\big) 
        =
        \int_{A_{j_1}} \,
        \big(\Delta_{i_1 \, j_1} \dela\mone \cdot 
        \cC_{-i_1 \, j_2 \, j_3} \cdot 
        \cC_{i_1 \, i_2 \, i_3}\big) \,.
    \ee
    Then use the left invariance of the co-integral and the normalization property \eqref{Delta_Normalization} to remove the opposite delta function. 
    \be
        \int_{A_{j_1}} \,
        \big(\Delta_{i_1 \, j_1} \dela\mone \cdot 
        \cC_{i_1 \, i_2 \, i_3} \cdot 
        \cC_{j_1 \, j_2 \, j_3}\big) 
        =
        \left(\int_{A} \, \dela\mone\right) \, 
        \big(\cC_{-i_1 \, j_2 \, j_3} \cdot 
        \cC_{i_1 \, i_2 \, i_3}\big) 
        =
        \sqrt{\mu} \, 
        \cC_{-i_1 \, j_2 \, j_3} \cdot 
        \cC_{i_1 \, i_2 \, i_3} \,.
    \ee
    The proof of the other three identities \eqref{ClosureConstraint_Gluing_Antipode} is completely analogous.
\end{proof}
\medskip
\noindent
Let us now illustrate how to glue two tetrahedra (amplitude \eqref{VertexAmplitude}) by identifying their faces (one closure constraint for each amplitude). 
\begin{proposition}[Gluing two tetrahedra]
\label{Prop_Tetrahedra_Gluing}
    Consider the tensor product of a pair of independent tetrahedron amplitudes \eqref{GeneratingFunction_1-Tetra} obtained by the use of Prop. \ref{Prop_ClosureConstraint_Gluing}; take their product with a propagator amplitude \eqref{PropagatorAmplitude}. 
    The gluing of two tetrahedra is performed by taking the co-integral of such product:
    \begin{align}
        \int_{A^{3}_{j_2 \, j_4 \, j_6}} \,
        &
        (\hat{\cC}_{-i_6 -i_4 -i_2} \cdot 
        \hat{\cC}_{-i_5 i_6 -i_1} \cdot 
        \hat{\cC}_{-i_3 i_4 i_5} \cdot
        \hat{\cC}_{i_1 i_2 i_3}) \cdot 
        \Delta^{(3)}_{i_1 i_2 i_3 \,\, j_4 j_6 j_2} \dela\mone \cdot
        (\hat{\cC}_{-j_6 -j_4 -j_2} \cdot 
        \hat{\cC}_{-j_5 j_6 -j_1} \cdot 
        \hat{\cC}_{-j_3 j_4 j_5} \cdot
        \hat{\cC}_{j_1 j_2 j_3})
        \nonumber \\
        & =
        \sqrt{\mu^3} \,
        (\hat{\cC}_{-i_6 -i_4 -i_2} \cdot 
        \hat{\cC}_{-i_5 i_6 -i_1} \cdot 
        \hat{\cC}_{-i_3 i_4 i_5} \cdot
        \hat{\cC}_{i_1 i_2 i_3}) \cdot
        (\hat{\cC}_{-j_5 -i_2 -j_1} \cdot 
        \hat{\cC}_{-j_3 -i_1 j_5} \cdot
        \hat{\cC}_{-i_3 j_3 j_1}) \,.
        \label{VertexAmplitude_Gluing}
    \end{align}
    As in Prop. \ref{Prop_ClosureConstraint_Gluing}, there are other identities similar to \eqref{VertexAmplitude_Gluing}, involving antipodes or different combinatorics, whose derivation is completely analogous to that of the above identity.
\end{proposition}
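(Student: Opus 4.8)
The plan is to reduce the stated identity to three successive applications of the single-edge gluing lemma, Prop.~\ref{Prop_ClosureConstraint_Gluing}, one for each of the three edges that the propagator identifies. First I would factorize the propagator co-product $\Delta^{(3)}_{i_1 i_2 i_3 \, j_4 j_6 j_2}\dela\mone$. Since the six-dimensional opposite delta function is a tensor product of three one-dimensional ones and $\Delta^{(3)}$ acts component-wise (see \eqref{d-d_CoProduct}), this element splits into a product of three independent two-legged co-products $\Delta_{i_1 j_4}\dela\mone \cdot \Delta_{i_2 j_6}\dela\mone \cdot \Delta_{i_3 j_2}\dela\mone$, each living in a disjoint pair of tensor slots. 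This is the key structural observation that turns the single propagator into three separate gluing data and guarantees that the three integrations commute.

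Next I would integrate out the three shared edges $j_2,j_4,j_6$ one at a time. For each edge I would apply the matching orientation of the gluing identity \eqref{ClosureConstraint_Gluing}--\eqref{ClosureConstraint_Gluing_Antipode}, using the propagator co-product $\Delta_{i_k j_l}\dela\mone$ to tie the integrated edge $j_l$ to the corresponding edge $i_k$ of the first tetrahedron's glued face $\hat{\cC}_{i_1 i_2 i_3}$ and to substitute it into the adjacent surviving face of the second tetrahedron. Carried out for all three edges, this dissolves the glued face $\hat{\cC}_{-j_6 -j_4 -j_2}$ completely, produces the substitutions $j_4\mapsto -i_1$, $j_6\mapsto -i_2$, $j_2\mapsto -i_3$ in the three remaining second-tetrahedron faces, and leaves the four faces of the first tetrahedron untouched since none of the $i$-slots is integrated. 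Each single-edge gluing contributes a factor $\sqrt{\mu}$, which accounts for the overall $\sqrt{\mu^3}$. Finally I would use the cyclic-permutation and inversion invariance of the closure constraint (Prop.~\ref{Prop_ClosureConstraint_Properties}) to reorder the edges of the three surviving faces into the canonical arrangement $\hat{\cC}_{-j_5 -i_2 -j_1} \cdot \hat{\cC}_{-j_3 -i_1 j_5} \cdot \hat{\cC}_{-i_3 j_3 j_1}$ displayed on the right-hand side.

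The routine part is identical to the proof of Prop.~\ref{Prop_ClosureConstraint_Gluing}: each single-edge step uses the co-product property \eqref{Delta_CoProduct_Property} of the opposite delta to shift a leg between the two slots of $\Delta_{i_k j_l}\dela\mone$, followed by the left invariance of the co-integral and the normalization \eqref{Delta_Normalization}. The main obstacle is purely combinatorial: one must track, for each of the three edges, which of the four orientations in \eqref{ClosureConstraint_Gluing}--\eqref{ClosureConstraint_Gluing_Antipode} is the correct one, so that the antipode signs on $i_1,i_2,i_3$ come out consistently and the face that is absorbed is the glued one rather than a neighbour. Keeping the orientation conventions of the tetrahedron faces coherent through the three successive merges is where care is needed, but no new idea beyond Prop.~\ref{Prop_ClosureConstraint_Gluing} is required.
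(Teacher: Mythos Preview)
Your overall strategy---factor the three-dimensional propagator, shift content from the $j$-slots to the $i$-slots via the delta property, integrate out, and collect $\sqrt{\mu^3}$---is the same as the paper's. But there is a genuine gap in your accounting of the glued face. You claim the three integrations ``dissolve the glued face $\hat{\cC}_{-j_6-j_4-j_2}$ completely''; they do not. Each of $j_4,j_6,j_2$ sits in \emph{two} closures of the second tetrahedron (the glued face and one neighbour), so the substitutions $j_4\mapsto -i_1$, $j_6\mapsto -i_2$, $j_2\mapsto -i_3$ act on both. After the three integrations the glued face has become $\hat{\cC}_{-i_1-i_2-i_3}$ (up to a cyclic reordering), i.e.\ a redundant copy of the first tetrahedron's $\hat{\cC}_{i_1 i_2 i_3}$. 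The paper removes it in a separate final step using the projector identity \eqref{ClosureConstraint_Reduction}, $\hat{\cC}\cdot\hat{\cC}=\hat{\cC}$, together with the inversion identity in \eqref{ClosureConstraint_CyclicPermutation+Inversion}. Without this step your count of surviving closures is eight, not the seven on the right-hand side of \eqref{VertexAmplitude_Gluing}.

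A related remark: Prop.~\ref{Prop_ClosureConstraint_Gluing} as stated concerns exactly two closures sharing one slot, so it cannot be invoked verbatim here, where the integrated slot appears in several closures simultaneously. What the paper actually does---and what your argument needs---is to use the underlying delta property \eqref{Delta_CoProduct_Property} directly: first apply \eqref{ClosureConstraint_CyclicPermutation+Inversion} to bring the $j_4,j_6,j_2$ legs of \emph{all four} second-tetrahedron closures to the correct side of $\Delta^{(3)}_{i_1 i_2 i_3\;j_4 j_6 j_2}\dela\mone$, then shift them en bloc, then integrate. Your edge-by-edge picture is morally correct, but the lemma you cite is too narrow for this configuration.
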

\begin{proof}
    The proof of the proposition follows the same pattern of Prop. \ref{Prop_ClosureConstraint_Gluing}: we use the property \eqref{Delta_CoProduct_Property} to move the components of the second amplitude from the $j_1 j_2 j_3$ to the $i_1 i_2 i_3$ tensor spaces, so that we can use the left invariance of the co-integral and the normalization property \eqref{Delta_Normalization} to remove the propagator amplitude. Note that, for non-commutativity reasons, in order to correctly use the property \eqref{Delta_CoProduct_Property}, some of the closure constraints of the second tetrahedron amplitude have to be permuted or inverted using Prop. \ref{Prop_ClosureConstraint_Properties}.
    \begin{align}
        \int_{A^{3}_{j_2 \, j_4 \, j_6}} \,
        &
        (\hat{\cC}_{-i_6 -i_4 -i_2} \cdot 
        \hat{\cC}_{-i_5 i_6 -i_1} \cdot 
        \hat{\cC}_{-i_3 i_4 i_5} \cdot
        \hat{\cC}_{i_1 i_2 i_3}) \cdot 
        \Delta^{(3)}_{i_1 i_2 i_3 \,\, j_4 j_6 j_2} \dela\mone \cdot
        (\hat{\cC}_{-j_6 -j_4 -j_2} \cdot 
        \hat{\cC}_{-j_5 j_6 -j_1} \cdot 
        \hat{\cC}_{-j_3 j_4 j_5} \cdot
        \hat{\cC}_{j_1 j_2 j_3})
        \nonumber \\
        & =
        \int_{A^{3}_{j_2 \, j_4 \, j_6}} \,
        (\hat{\cC}_{-i_6 -i_4 -i_2} \cdot 
        \hat{\cC}_{-i_5 i_6 -i_1} \cdot 
        \hat{\cC}_{-i_3 i_4 i_5} \cdot
        \hat{\cC}_{i_1 i_2 i_3}) \cdot 
        \Delta^{(3)}_{i_1 i_2 i_3 \,\, j_4 j_6 j_2} \dela\mone \cdot
        (\hat{\cC}_{j_4 j_6 j_2} \cdot 
        \hat{\cC}_{-j_5 j_6 -j_1} \cdot 
        \hat{\cC}_{-j_3 j_4 j_5} \cdot
        \hat{\cC}_{j_2 j_3 j_1})
        \nonumber \\
        & =
        \int_{A^{3}_{j_2 \, j_4 \, j_6}} \,
        (\hat{\cC}_{-i_6 -i_4 -i_2} \cdot 
        \hat{\cC}_{-i_5 i_6 -i_1} \cdot 
        \hat{\cC}_{-i_3 i_4 i_5} \cdot
        \hat{\cC}_{i_1 i_2 i_3}) \cdot 
        \Delta^{(3)}_{i_1 i_2 i_3 \,\, j_4 j_6 j_2} \dela\mone \cdot
        (\hat{\cC}_{-i_1 -i_2 -i_3} \cdot 
        \hat{\cC}_{-j_5 -i_2 -j_1} \cdot 
        \hat{\cC}_{-j_3 -i_1 j_5} \cdot
        \hat{\cC}_{-i_3 j_3 j_1})
        \nonumber \\
        & =
        \Big(\int_{A^{3}_{j_4 \, j_4 \, j_2}} \, \dela\mone\Big) \,
        (\hat{\cC}_{-i_6 -i_4 -i_2} \cdot 
        \hat{\cC}_{-i_5 i_6 -i_1} \cdot 
        \hat{\cC}_{-i_3 i_4 i_5} \cdot
        \hat{\cC}_{i_1 i_2 i_3}) \cdot
        (\hat{\cC}_{-i_1 -i_2 -i_3} \cdot 
        \hat{\cC}_{-j_5 -i_2 -j_1} \cdot 
        \hat{\cC}_{-j_3 -i_1 j_5} \cdot
        \hat{\cC}_{-i_3 j_3 j_1})
        \nonumber \\
        & =
        \sqrt{\mu^3} \,
        (\hat{\cC}_{-i_6 -i_4 -i_2} \cdot 
        \hat{\cC}_{-i_5 i_6 -i_1} \cdot 
        \hat{\cC}_{-i_3 i_4 i_5} \cdot
        \hat{\cC}_{i_1 i_2 i_3}) \cdot
        (\hat{\cC}_{-i_1 -i_2 -i_3} \cdot 
        \hat{\cC}_{-j_5 -i_2 -j_1} \cdot 
        \hat{\cC}_{-j_3 -i_1 j_5} \cdot
        \hat{\cC}_{-i_3 j_3 j_1})
        \nonumber \\
        & =
        \sqrt{\mu^3} \,
        (\hat{\cC}_{-i_6 -i_4 -i_2} \cdot 
        \hat{\cC}_{-i_5 i_6 -i_1} \cdot 
        \hat{\cC}_{-i_3 i_4 i_5} \cdot
        \hat{\cC}_{i_1 i_2 i_3}) \cdot
        (\hat{\cC}_{-j_5 -i_2 -j_1} \cdot 
        \hat{\cC}_{-j_3 -i_1 j_5} \cdot
        \hat{\cC}_{-i_3 j_3 j_1}) \,.
    \end{align}
    In the last step we used a the identity \eqref{ClosureConstraint_Reduction} to remove the closure constraint $\hat{\cC}_{-i_1 -i_2 -i_3}$.
\end{proof}
\medskip
\noindent
Finally, we give the details of the gluing of three tetrahedra amplitudes \eqref{VertexAmplitude} sharing a single \textit{internal} edge.
\begin{proposition}[Gluing three tetrahedra: internal edge]
\label{Prop_InternalEdge}
    Consider the tensor product of three independent tetrahedron amplitudes \eqref{GeneratingFunction_1-Tetra}. Take their product with three propagators as in Prop. \ref{Prop_Tetrahedra_Gluing}, in such a way that the three pairs of closure constraints identified (multiplied by the propagators) share a single tensor space (a common edge). Focus on the six closure constraints involved in the product.
    The pairwise identification of the six triangles (closure constraints) sharing one edge is performed by taking the co-integral of such product:
    \be
        \begin{aligned}
            \int_{A^3_{i,j,k}} \,
            &
            (\hat{\cC}_{-i \, i_3 \, i_4} \cdot
            \hat{\cC}_{i \, i_1 \, i_2}) \cdot
            \Delta_{-i \, j} \dela\mone \cdot
            \Delta_{-i \, k} \dela\mone \cdot
            (\hat{\cC}_{-j \, j_3 \, j_4} \cdot
            \hat{\cC}_{j \, j_1 \, j_2}) \cdot
            \Delta_{-j \, k} \dela\mone \cdot
            (\hat{\cC}_{-k \, k_3 \, k_4} \cdot
            \hat{\cC}_{k \, k_1 \, k_2})
            \\
            & =
            \sqrt{\mu^3} \,
            \big(\hat{\cC}_{i_4 \, i_3 \, i_1 \, i_2} \cdot
            \hat{\cC}_{-i_3 \, -i_4 \, j_3 \, j_4} \cdot
            \hat{\cC}_{i_4 \, i_3 \, j_1 \, j_2} \cdot
            \hat{\cC}_{-i_3 \, -i_4 \, k_3 \, k_4} \cdot
            \hat{\cC}_{i_4 \, i_3 \, k_1 \, k_2}\big) \,.
        \end{aligned}
    \ee
    Here the co-integral is on the Hopf algebras related to the internal edge: tensor spaces $i,j,k$.
\end{proposition}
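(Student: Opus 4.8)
The plan is to reduce the triple-gluing integral to the stated product of four-valent closure constraints by eliminating the three opposite delta functions one at a time, following exactly the template already used in the proofs of Prop.~\ref{Prop_ClosureConstraint_Gluing} and Prop.~\ref{Prop_Tetrahedra_Gluing}. The only genuinely new feature here is that all three propagators $\Delta_{-i \, j} \dela\mone$, $\Delta_{-i \, k} \dela\mone$, $\Delta_{-j \, k} \dela\mone$ attach to the common internal edge, so the three integration spaces $i,j,k$ are entangled and the order of elimination must be chosen with care.

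First I would fix an order, say eliminate the $j$-$k$ propagator, then the $i$-$k$ propagator, then the $i$-$j$ propagator (processing from the ``outside'' toward the shared space $i$). For the chosen propagator $\Delta_{-x \, y} \dela\mone$, I would use the co-product property \eqref{Delta_CoProduct_Property} of the opposite delta function to relocate every closure-constraint component living in the $y$-slot onto the $x$-slot, picking up an antipode each time as dictated by the minus sign. Before applying \eqref{Delta_CoProduct_Property} I must bring the component to be moved adjacent to the delta function; this is done by repeatedly permuting and inverting the relevant closure constraints via the cyclic-permutation and inversion identities \eqref{ClosureConstraint_CyclicPermutation+Inversion} of Prop.~\ref{Prop_ClosureConstraint_Properties}. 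Once all $y$-components have been shifted, the co-integral over the $y$-space factorizes as $\int_A \dela\mone = \sqrt{\mu}$ by the left invariance of the co-integral together with the normalization \eqref{Delta_Normalization}, so each eliminated propagator contributes one factor $\sqrt{\mu}$ and the three together give the overall $\sqrt{\mu^3}$.

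After the three delta functions are removed, the six original three-valent closure constraints have merged pairwise into four-valent ones sharing the indices $i_3,i_4$ (the images of the internal edge), and auxiliary closure constraints of $\hat{\cC}_{-i_1 \, -i_2 \, -i_3}$ type appear exactly as in the proof of Prop.~\ref{Prop_Tetrahedra_Gluing}. I would then collapse these redundant factors using the projector identity \eqref{ClosureConstraint_Reduction}, $\hat{\cC} \cdot \hat{\cC} = \hat{\cC}$, to arrive at the five four-valent closure constraints on the right-hand side. A final relabelling using \eqref{ClosureConstraint_CyclicPermutation+Inversion} puts the indices into the advertised form.

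The main obstacle is the non-commutative bookkeeping. Because the three propagators overlap on $i,j,k$, moving a component out of one space can re-order factors that must later be moved out of another space, so one has to track carefully how the antipodes compound (this is the source of the $-i_3,-i_4$ entries in the result) and verify at each stage that the factor to be relocated is genuinely adjacent to the correct delta function after the permutations. I expect the antipode and ordering accounting across the three successive eliminations to be the only delicate point; the analytic content is entirely contained in the three lemmas already proved.
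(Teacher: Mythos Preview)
Your plan has a genuine gap. The three propagators $\Delta_{-i\,j}\dela\mone$, $\Delta_{-i\,k}\dela\mone$, $\Delta_{-j\,k}\dela\mone$ form a triangle on the slots $i,j,k$: each integration slot carries \emph{two} propagator legs, so the eliminations are not independent in the way you describe. When you integrate over, say, $k$ using $\Delta_{-i\,k}\dela\mone$ to push $k$-content to $i$, the other propagator $\Delta_{-j\,k}\dela\mone$ does not disappear; its $k$-leg gets pushed too and it becomes $\Delta_{-j\,i}\dela\mone$. After the second integration (over $j$, using $\Delta_{-i\,j}\dela\mone$) this remnant turns into the self-loop $\Delta_{-i\,i}\dela\mone$, which is a scalar $V_H/\sqrt{\mu}$, not another clean $\sqrt{\mu}$. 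So after two integrations you have burned the $j$- and $k$-integrals, the third propagator has collapsed to a constant, and you are left with the $i$-integral and all six closure constraints still carrying $\pm i$. Your accounting ``three propagators $\Rightarrow$ three factors of $\sqrt{\mu}$'' therefore does not hold, and more importantly you have no mechanism left to perform the remaining $\int_{A_i}$.

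The missing idea, which the paper supplies, is to manufacture a new delta at this stage by expanding one of the surviving closure constraints via its definition \eqref{ClosureConstraint}, $\hat{\cC}_{-i\,i_3\,i_4}=\frac{\sqrt{\mu}}{V_H}\,\Delta^3_{-i_4\,-i_3\,i}\dela\mone$ (after an inversion using \eqref{ClosureConstraint_CyclicPermutation+Inversion}). This furnishes a fresh $\Delta^3_{\ldots\,i}\dela\mone$ to which \eqref{Delta_CoProduct_Property} can be applied once more, pushing every remaining $i$ onto $i_3,i_4$; only then does the $i$-integral factorize as $\int_A\dela\mone=\sqrt{\mu}$ and produce the four-valent constraints $\hat{\cC}_{i_4\,i_3\,\cdots}$. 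The projector identity \eqref{ClosureConstraint_Reduction} by itself cannot do this, since it only collapses two constraints with \emph{identical} index sets and does not eliminate the internal label $i$. With this extra step the factors combine as $\sqrt{\mu}\cdot\sqrt{\mu}\cdot(V_H/\sqrt{\mu})\cdot(\sqrt{\mu}/V_H)\cdot\sqrt{\mu}=\sqrt{\mu^3}$, matching the statement.
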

\begin{proof}
    The proof follows a pattern similar to that of Prop. \ref{Prop_Tetrahedra_Gluing}. We first use the identity \eqref{Delta_CoProduct_Property} on the co-product $\Delta_{-i \, k}$ and then the left invariance of the co-integral on the $k^{th}$ tensor space and the normalization property of the delta function \eqref{Delta_Normalization} to remove this co-product.
    \begin{align}
        \int_{A^3_{i,j,k}} \,
        &
        (\hat{\cC}_{-i \, i_3 \, i_4} \cdot
        \hat{\cC}_{i \, i_1 \, i_2}) \cdot
        \Delta_{-i \, j} \dela\mone \cdot
        \Delta_{-i \, k} \dela\mone \cdot
        (\hat{\cC}_{-j \, j_3 \, j_4} \cdot
        \hat{\cC}_{j \, j_1 \, j_2}) \cdot
        \Delta_{-j \, k} \dela\mone \cdot
        (\hat{\cC}_{-k \, k_3 \, k_4} \cdot
        \hat{\cC}_{k \, k_1 \, k_2})
        \nonumber \\
        & =
        \int_{A^3_{i,j,k}} \,
        (\hat{\cC}_{-i \, i_3 \, i_4} \cdot
        \hat{\cC}_{i \, i_1 \, i_2}) \cdot
        \Delta_{-i \, j} \dela\mone \cdot
        (\hat{\cC}_{-j \, j_3 \, j_4} \cdot
        \hat{\cC}_{j \, j_1 \, j_2}) \cdot
        \Delta_{-j \, i} \dela\mone \cdot
        (\hat{\cC}_{-i \, k_3 \, k_4} \cdot
        \hat{\cC}_{i \, k_1 \, k_2}) \cdot
        \Delta_{-i \, k} \dela\mone
        \nonumber \\
        & =
        \int_{A^2_{i,j}} \,
        (\hat{\cC}_{-i \, i_3 \, i_4} \cdot
        \hat{\cC}_{i \, i_1 \, i_2}) \cdot
        \Delta_{-i \, j} \dela\mone \cdot
        (\hat{\cC}_{-j \, j_3 \, j_4} \cdot
        \hat{\cC}_{j \, j_1 \, j_2}) \cdot
        \Delta_{-j \, i} \dela\mone \cdot
        (\hat{\cC}_{-i \, k_3 \, k_4} \cdot
        \hat{\cC}_{i \, k_1 \, k_2}) \,
        \Big(\int_{A_k} \, \dela\mone \Big)
        \nonumber \\
        & =
        \sqrt{\mu} \, \int_{A^2_{i,j}} \,
        (\hat{\cC}_{-i \, i_3 \, i_4} \cdot
        \hat{\cC}_{i \, i_1 \, i_2}) \cdot
        \Delta_{-i \, j} \dela\mone \cdot
        (\hat{\cC}_{-j \, j_3 \, j_4} \cdot
        \hat{\cC}_{j \, j_1 \, j_2}) \cdot
        \Delta_{-j \, i} \dela\mone \cdot
        (\hat{\cC}_{-i \, k_3 \, k_4} \cdot
        \hat{\cC}_{i \, k_1 \, k_2}) \,.
    \end{align}
    Then repeating the same procedure with the co-product $\Delta_{-i \, j}$, we get
    \begin{align}
        \int_{A^3_{i,j,k}} \,
        &
        (\hat{\cC}_{-i \, i_3 \, i_4} \cdot
        \hat{\cC}_{i \, i_1 \, i_2}) \cdot
        \Delta_{-i \, j} \dela\mone \cdot
        \Delta_{-i \, k} \dela\mone \cdot
        (\hat{\cC}_{-j \, j_3 \, j_4} \cdot
        \hat{\cC}_{j \, j_1 \, j_2}) \cdot
        \Delta_{-j \, k} \dela\mone \cdot
        (\hat{\cC}_{-k \, k_3 \, k_4} \cdot
        \hat{\cC}_{k \, k_1 \, k_2})
        \nonumber \\
        & =
        \mu \, \int_{A_i} \,
        (\hat{\cC}_{-i \, i_3 \, i_4} \cdot
        \hat{\cC}_{i \, i_1 \, i_2}) \cdot
        (\hat{\cC}_{-i \, j_3 \, j_4} \cdot
        \hat{\cC}_{i \, j_1 \, j_2}) \cdot
        \Delta_{-i \, i} \dela\mone \cdot
        (\hat{\cC}_{-i \, k_3 \, k_4} \cdot
        \hat{\cC}_{i \, k_1 \, k_2})
        \nonumber \\
        & =
        V_H \sqrt{\mu} \, \int_{A_i} \,
        (\hat{\cC}_{-i \, i_3 \, i_4} \cdot
        \hat{\cC}_{i \, i_1 \, i_2}) \cdot
        (\hat{\cC}_{-i \, j_3 \, j_4} \cdot
        \hat{\cC}_{i \, j_1 \, j_2}) \cdot
        (\hat{\cC}_{-i \, k_3 \, k_4} \cdot
        \hat{\cC}_{i \, k_1 \, k_2}) \,.
    \end{align}
    Here in the last step we used that $\sqrt{\mu} \, \Delta_{-i \, i} \dela\mone = V_H$, which is straightforward to verify. The last step to prove the proposition consists in using the definition of closure constraint \eqref{ClosureConstraint} for the first constraint $\hat{\cC}_{-i \, i_1 \, i_2}$, together with the inverse property \eqref{ClosureConstraint_CyclicPermutation+Inversion}. Use the property \eqref{Delta_CoProduct_Property} on this co-product, then the left invariance of the co-integral on the $i^{th}$ tensor space and the normalization property of the delta to remove it.
    \begin{align}
        \int_{A^3_{i,j,k}} \,
        &
        (\hat{\cC}_{-i \, i_3 \, i_4} \cdot
        \hat{\cC}_{i \, i_1 \, i_2}) \cdot
        \Delta_{-i \, j} \dela\mone \cdot
        \Delta_{-i \, k} \dela\mone \cdot
        (\hat{\cC}_{-j \, j_3 \, j_4} \cdot
        \hat{\cC}_{j \, j_1 \, j_2}) \cdot
        \Delta_{-j \, k} \dela\mone \cdot
        (\hat{\cC}_{-k \, k_3 \, k_4} \cdot
        \hat{\cC}_{k \, k_1 \, k_2})
        \nonumber \\
        & =
        V_H \sqrt{\mu} \, \int_{A_i} \,
        (\hat{\cC}_{-i_4 \, -i_3 \, i} \cdot
        \hat{\cC}_{i \, i_1 \, i_2}) \cdot
        (\hat{\cC}_{-i \, j_3 \, j_4} \cdot
        \hat{\cC}_{i \, j_1 \, j_2}) \cdot
        (\hat{\cC}_{-i \, k_3 \, k_4} \cdot
        \hat{\cC}_{i \, k_1 \, k_2})
        \nonumber \\
        & =
        \mu \, \int_{A_i} \,
        \Delta^3_{-i_4 \, -i_3 \, i} \, \dela\mone \cdot
        \big(\hat{\cC}_{i \, i_1 \, i_2} \cdot
        \hat{\cC}_{-i \, j_3 \, j_4} \cdot
        \hat{\cC}_{i \, j_1 \, j_2} \cdot
        \hat{\cC}_{-i \, k_3 \, k_4} \cdot
        \hat{\cC}_{i \, k_1 \, k_2}\big)
        \nonumber \\
        & =
        \mu \, \int_{A_i} \,
        \big(\hat{\cC}_{i_4 \, i_3 \, i_1 \, i_2} \cdot
        \hat{\cC}_{-i_3 \, -i_4 \, j_3 \, j_4} \cdot
        \hat{\cC}_{i_4 \, i_3 \, j_1 \, j_2} \cdot
        \hat{\cC}_{-i_3 \, -i_4 \, k_3 \, k_4} \cdot
        \hat{\cC}_{i_4 \, i_3 \, k_1 \, k_2}\big) \cdot
        \Delta^3_{-i_4 \, -i_3 \, i} \, \dela\mone
        \nonumber \\
        & =
        \mu \,
        \big(\hat{\cC}_{i_4 \, i_3 \, i_1 \, i_2} \cdot
        \hat{\cC}_{-i_3 \, -i_4 \, j_3 \, j_4} \cdot
        \hat{\cC}_{i_4 \, i_3 \, j_1 \, j_2} \cdot
        \hat{\cC}_{-i_3 \, -i_4 \, k_3 \, k_4} \cdot
        \hat{\cC}_{i_4 \, i_3 \, k_1 \, k_2}\big) \,
        \Big(\int_{A_i} \, \dela\mone \Big)
        \nonumber \\
        & =
        \sqrt{\mu^3} \,
        \big(\hat{\cC}_{i_4 \, i_3 \, i_1 \, i_2} \cdot
        \hat{\cC}_{-i_3 \, -i_4 \, j_3 \, j_4} \cdot
        \hat{\cC}_{i_4 \, i_3 \, j_1 \, j_2} \cdot
        \hat{\cC}_{-i_3 \, -i_4 \, k_3 \, k_4} \cdot
        \hat{\cC}_{i_4 \, i_3 \, k_1 \, k_2}\big) \,.
    \end{align}
\end{proof}

\section{Further details on the Hopf algebra field theory}

In this part we provide further mathematical details of the Hopf algebra field theory presented in section \ref{Sec_HopfAlgebraFieldTheory}, from the definition of gauge projector \eqref{GaugeProj_Field}, to the derivation of tetrahedron amplitude \eqref{VertexAmplitude}, and then we provide the details of the derivation of the Feynman diagram amplitudes of Hopf algebra field theory \eqref{GeneratingFunction_Triangulation}, \eqref{GeneratingFunction_DualComplex}, \eqref{GeneratingFunction_Double} and show the topological invariance of the model.

\subsection{Gauge projector}
\label{App_GaugeProjector}

First notice that, by analogy with ordinary GFT, we referred to the map $\cP_L$ in \eqref{GaugeProj_Field} as a \textit{projector}. A map $\cP$ is a projector $\cP$ if it satisfies the condition
\be
    \cP \circ \cP = \cP \,.
\ee
To verify that the map $\cP_L$ is actually a projector, we need the following identity to hold
\be\label{prop1}
    (m^n \ot m^n) \circ \Delta^{(n)} = \Delta \circ m^n \,.
\ee
Given an Hopf algebra $H \ni h$, let the elements of its tensor product $H^n$ be $(h_1 \ot \cdots \ot h_n)$. We verify the above identity simply using the definitions of product \eqref{1-d_Multiplication} and co-product \eqref{d-d_CoProduct}. Using the Sweedler notation, the lhs is
\begin{align}
    (m^n \ot m^n) \circ \Delta^{(n)} (h_1 \ot \cdots \ot h_n)
    & =
    (m^d \ot m^d) \big((h_{1 \, (1)} \ot \cdots \ot h_{n \, (1)}) \ot (h_{1 \, (2)} \ot \cdots \ot h_{n \, (2)}) \big)
    \nonumber \\
    & =
    (h_{1 \, (1)} \cdot \cdots h_{n \, (1)}) \ot 
    (h_{1 \, (2)} \cdot \cdots h_{n \, (2)}) \,,
\end{align}
while the rhs is
\be
    \Delta \circ m^n (h_1 \ot \cdots \ot h_n)
    =
    \Delta (h_1 \cdot \cdots h_n)
    =
    (h_{1 \, (1)} \cdot \cdots h_{n \, (1)}) \ot 
    (h_{1 \, (2)} \cdot \cdots h_{n \, (2)}) \,,
\ee
which proves the identity.
Then, using it together with the definition of $\cP_L$ \eqref{GaugeProj_Field} and the left invariance of the co-integral, we get
\begin{align}
    (\cP_L \, \cP_L \, \Phi)
    & =
    \frac{1}{V_H^2} \left(\int_H \ot \int_H \ot \idd \right) \, 
    \big((m^3 \ot m^3 \ot \idd) \circ (\idd \ot \Delta^{(3)}) \circ \Delta^{(3)} \Phi\big) \,  \textrm{\tiny use co-associativity}
    \nonumber \\
    & =
    \frac{1}{V_H^2} \left(\int_H \ot \int_H \ot \idd \right) \, 
    \big((m^3 \ot m^3 \ot \idd) \circ (\Delta^{(3)} \ot \idd) \circ \Delta^{(3)} \Phi\big)\, \textrm{\tiny use property \eqref{prop1}} 
    \nonumber \\
    & =
    \frac{1}{V_H^2} \left(\int_H \ot \int_H \ot \idd \right) \, 
    \big((\Delta \ot \idd) \circ (m^3 \ot \idd) \circ \Delta^{(3)} \Phi\big)\,  \textrm{\tiny use the left invariance of the co-integral}  
    \nonumber \\
    & =
    \frac{1}{V_H^2} \left(\int_H \ot \int_H \ot \idd \right) \, 
    \big((\id \ot m^3 \ot \idd) \circ (1 \ot \Delta^{(3)} \Phi)\big)\,  \textrm{\tiny use the normalization of the co-integral}
    \nonumber \\
    & =
    \frac{1}{V_H} \left(\int_H \ot \idd \right) \, 
    \big((m^3 \ot \idd) \circ \Delta^{(3)} \Phi\big)
    =
    (\cP_L \, \Phi) \,.
\end{align}
In the second last step we also used the normalization of the co-integral $\int_H \, 1 = V_H$.

\subsection{Tetrahedron amplitude}
\label{App_TetrahedronAmplitude}

We show here that the interaction term \eqref{InteractionTerm} can be expressed as in \eqref{KineticTerm_Amplitude} with kernels of the tetrahedron amplitude \eqref{VertexAmplitude}.
To simplify the computation we use the tensor product notation for the field $\Phi \equiv \Phi_{1 \, 2 \, 3} \in H^3$.
Let us start by the kernel $\hat{\cV}$ in the representation of $A$. We use the inverse Fourier transform together with the closure constraint \eqref{ClosureConstraint} for each of the projected fields $(\cP_L \, \Phi)$. We also use the definition of the co-product \eqref{Underline_CoProduct_3d} and that of the opposite delta function.
\begin{align}
    \cS_{\cV} 
    & =
    \int_{H^6} \, \um (\cP\Phi \ot \cP\Phi \ot \cP\Phi \ot \cP\Phi)
    =
    \int_{H^6} \, (\cP\Phi_{1 \, 2 \, 3} \ot \cP\Phi_{3 \, 4 \, 5} \ot \cP\Phi_{5 \, 6 \, 1} \ot \cP\Phi_{6 \, 4 \, 2})
    \nonumber \\
    & =
    \frac{1}{\mu^{12}} \left(\int_{H^6} \ot \int_{A^{12}}\right) \, 
    \big(
    (\sigma\mone_{1 7} \cdot
    \sigma\mone_{2 8} \cdot
    \sigma\mone_{3 9}) \cdot
    (\sigma\mone_{3 \, 10} \cdot
    \sigma\mone_{4 \, 11} \cdot
    \sigma\mone_{5 \, 12}) \cdot
    (\sigma\mone_{5 \, 13} \cdot
    \sigma\mone_{6 \, 14} \cdot
    \sigma\mone_{1 \, 15}) \cdot
    (\sigma\mone_{6 \, 16} \cdot
    \sigma\mone_{4 \, 17} \cdot
    \sigma\mone_{2 \, 18})
    \nonumber \\
    & \qquad \qquad \qquad \qquad \qquad \cdot
    (1^{\ot 6} \ot \hat{\cC} \cdot \hat{\Phi} \ot \hat{\cC} \cdot \hat{\Phi} \ot \hat{\cC} \cdot \hat{\Phi} \ot \hat{\cC} \cdot \hat{\Phi})\big)
    \nonumber \\
    & =
    \frac{1}{\mu^{12}} \left(\int_{H^6} \ot \int_{A^{12}}\right) \, 
    \big(
    (\id^{\ot 6} \ot \uDe)\,  (\Sigma^6)\mone \cdot
    (1^{\ot 6} \ot \hat{\cC} \cdot \hat{\Phi} \ot \hat{\cC} \cdot \hat{\Phi} \ot \hat{\cC} \cdot \hat{\Phi} \ot \hat{\cC} \cdot \hat{\Phi})\big)
    \nonumber \\
    & =
    \frac{1}{\mu^{6}} \, \int_{A^{12}} \, 
    \big(
    \uDe \dela\mone \cdot
    (\hat{\cC} \cdot \hat{\Phi} \ot \hat{\cC} \cdot \hat{\Phi} \ot \hat{\cC} \cdot \hat{\Phi} \ot \hat{\cC} \cdot \hat{\Phi})\big) \,.
\end{align}
Note that the opposite delta function is an element of the tensor product Hopf algebra $A^6$. 
In order to derive the kernel $\cV$ in the dual complex picture, we first use the definition of Fourier transform on each of the four projected fields, eq. \eqref{ClosureConstraint_Transform}, and write the projected field as in \eqref{GaugeProj_Field}.
\begin{align}
    \cS_{\cV} 
    & =
    \frac{1}{\mu^{12}} \left(\int_{H^{12}} \ot \int_{A^{12}}\right) \, 
    \big(
    (1^{\ot 12} \ot \uDe \dela\mone) \cdot
    \prod_{i=1}^{12} \, \sigma_{i \, 12+i} \cdot
    (\cP\Phi \ot \cP\Phi \ot \cP\Phi \ot \cP\Phi \ot 1^{\ot 12})\big)
    \nonumber \\
    & =
    \frac{1}{V_H^4 \, \mu^{12}} \left(\int_{H^4} \ot \int_{H^{12}} \ot \int_{A^{12}}\right) \, 
    \big(
    (1^{\ot 16} \ot \uDe \dela\mone) \cdot
    \prod_{i=1}^{12} \, \sigma_{4+i \, 16+i} \cdot
    ((m^3 \ot \id^{\ot 3}) \circ \Delta^{3} \, \Phi \ot 1^{\ot 9})
    \nonumber \\
    & \qquad \qquad \qquad \qquad \qquad \qquad \qquad \, \cdot
    (1 \ot (m^3 \ot \id^{\ot 3}) \circ \Delta^{3} \,\Phi \ot 1^{\ot 6}) \cdot
    (1^{\ot 2} \ot (m^3 \ot \id^{\ot 3}) \circ \Delta^{3} \,\Phi \ot 1^{\ot 3})
    \nonumber \\
    & \qquad \qquad \qquad \qquad \qquad \qquad \qquad \, \cdot
    (1^{\ot 3} \ot (m^3 \ot \id^{\ot 3}) \circ \Delta^{3} \,\Phi) \ot 1^{\ot 12})\big) \,.
\end{align}
Now multiply the skew co-pairings from the left, such that they can be written as products of three dimensional co-products. Then use the left invariance of the co-integral on $H$ to remove such co-products.
\begin{align}
    \cS_{\cV} 
    & =
    \frac{1}{V_H^4 \, \mu^{12}} \left(\int_{H^4} \ot \int_{H^{12}} \ot \int_{A^{12}}\right) \, 
    \big(
    (1^{\ot 16} \ot \uDe \dela\mone) \cdot
    (\sigma_{1 \, 19}\mone \cdot \sigma_{1 \, 18}\mone \cdot \sigma_{1 \, 17}\mone) \cdot
    (\sigma_{2 \, 22}\mone \cdot \sigma_{2 \, 21}\mone \cdot \sigma_{2 \, 20}\mone) \cdot
    (\sigma_{3 \, 25}\mone \cdot \sigma_{3 \, 24}\mone \cdot \sigma_{3 \, 23}\mone) \nonumber \\
    & \qquad \qquad \qquad \qquad \qquad \qquad \qquad \, \cdot
    (\sigma_{4 \, 28}\mone \cdot \sigma_{4 \, 27}\mone \cdot \sigma_{4 \, 26}\mone) \cdot 
    ((m^3 \ot \id^{\ot 15} \ot \id^{\ot 12}) \circ (\Delta^{3} \ot \id^{\ot 3}) \, (\Sigma^3 \cdot (\Phi \ot 1^{\ot 9} \ot 1^{\ot 12}))
    \nonumber \\
    & \qquad \qquad \qquad \qquad \qquad \qquad \qquad \, \cdot
    (1 \ot (m^3 \ot \id^{\ot 14} \ot \id^{\ot 12}) \circ (\Delta^{3} \ot \id^{\ot 3}) \, (\Sigma^3 \cdot (\Phi \ot 1^{\ot 6} \ot 1^{\ot 12}))
    \nonumber \\
    & \qquad \qquad \qquad \qquad \qquad \qquad \qquad \, \cdot
    (1^{\ot 2} \ot (m^3 \ot \id^{\ot 13} \ot \id^{\ot 12}) \circ (\Delta^{3} \ot \id^{\ot 3}) \, (\Phi \ot 1^{\ot 3} \ot 1^{\ot 12}))
    \nonumber \\
    & \qquad \qquad \qquad \qquad \qquad \qquad \qquad \, \cdot
    (1^{\ot 3} \ot (m^3 \ot \id^{\ot 12} \ot \id^{\ot 12}) \circ (\Delta^{3} \ot \id^{\ot 3}) \, (\Sigma \cdot (\Phi \ot 1^{\ot 12}))\big) 
    \nonumber \\
    & =
    \frac{1}{V_H^4 \, \mu^{12}} \left(\int_{H^4} \ot \int_{H^{12}} \ot \int_{A^{12}}\right) \, 
    \big(
    (1^{\ot 16} \ot \uDe \dela\mone) \cdot
    (\sigma_{1 \, 19}\mone \cdot \sigma_{1 \, 18}\mone \cdot \sigma_{1 \, 17}\mone) \cdot
    (\sigma_{2 \, 22}\mone \cdot \sigma_{2 \, 21}\mone \cdot \sigma_{2 \, 20}\mone) \cdot
    (\sigma_{3 \, 25}\mone \cdot \sigma_{3 \, 24}\mone \cdot \sigma_{3 \, 23}\mone) \nonumber \\
    & \qquad \qquad \qquad \qquad \qquad \qquad \qquad \, \cdot
    (\sigma_{4 \, 28}\mone \cdot \sigma_{4 \, 27}\mone \cdot \sigma_{4 \, 26}\mone) \cdot 
    (\sigma_{5 \, 17} \cdot \sigma_{6 \, 18} \cdot \sigma_{7 \, 19}) \cdot 
    (\sigma_{8 \, 20} \cdot \sigma_{9 \, 21} \cdot \sigma_{10 \, 22}) \cdot 
    (\sigma_{11 \, 23} \cdot \sigma_{12 \, 24} \cdot \sigma_{13 \, 25}) 
    \nonumber \\
    & \qquad \qquad \qquad \qquad \qquad \qquad \qquad \, \cdot
    (\sigma_{14 \, 26} \cdot \sigma_{15 \, 27} \cdot \sigma_{16 \, 28}) \cdot 
    (1^{\ot 4} \ot \Phi \ot \Phi \ot \Phi \ot \Phi \ot 1^{\ot 12})\big) \,.
\end{align}
Last, we merge the closure constraints writing the co-product $\uDe$ as in \eqref{Underline_CoProduct_3d} and using the property \eqref{Delta_CoProduct_Property}; then we group the skew co-pairings four by four with the property \eqref{SkewEl_CoProduct-id}.
To group them properly we need to use the inverse and cyclic permutation identities \eqref{ClosureConstraint_CyclicPermutation+Inversion} on the skew co-pairings.
Then use the left invariance of the co-integral on $A$ and the normalization property \eqref{Delta_Normalization} to remove of the opposite delta function. Finally using the definition of opposite delta function and the co-product \eqref{Underline_CoProduct_Simplex_3d} to group them, we get the expression \eqref{VertexAmplitude} of the kernel $\cV$ in the $H$ representation.
\begin{align}
    \cS_{\cV} 
    & =
    \frac{1}{V_H^4 \, \mu^{12}} \left(\int_{H^4} \ot \int_{H^{12}} \ot \int_{A^{12}}\right) \, 
    \big(
    \Delta_{17 \, 25} \dela\mone \cdot
    \Delta_{18 \, 28} \dela\mone \cdot
    \Delta_{19 \, 20} \dela\mone \cdot
    \Delta_{21 \, 27} \dela\mone \cdot
    \Delta_{22 \, 23} \dela\mone \cdot
    \Delta_{24 \, 26} \dela\mone
    \nonumber \\
    & \qquad \qquad \qquad \qquad \qquad \qquad \qquad \, \cdot
    (\sigma_{1 \, 19}\mone \cdot \sigma_{1 \, 18}\mone \cdot \sigma_{1 \, 17}\mone) \cdot
    (\sigma_{2 \, 22}\mone \cdot \sigma_{2 \, 21}\mone \cdot \sigma_{2 \, 20}\mone) \cdot
    (\sigma_{3 \, 25}\mone \cdot \sigma_{3 \, 24}\mone \cdot \sigma_{3 \, 23}\mone) \cdot 
    (\sigma_{4 \, 28}\mone \cdot \sigma_{4 \, 27}\mone \cdot \sigma_{4 \, 26}\mone) \nonumber \\
    & \qquad \qquad \qquad \qquad \qquad \qquad \qquad \, \cdot
    (\sigma_{5 \, 17} \cdot \sigma_{6 \, 18} \cdot \sigma_{7 \, 19}) \cdot 
    (\sigma_{8 \, 20} \cdot \sigma_{9 \, 21} \cdot \sigma_{10 \, 22}) \cdot 
    (\sigma_{11 \, 23} \cdot \sigma_{12 \, 24} \cdot \sigma_{13 \, 25}) 
    \nonumber \\
    & \qquad \qquad \qquad \qquad \qquad \qquad \qquad \, \cdot
    (\sigma_{14 \, 26} \cdot \sigma_{15 \, 27} \cdot \sigma_{16 \, 28}) \cdot 
    (1^{\ot 4} \ot \Phi \ot \Phi \ot \Phi \ot \Phi \ot 1^{\ot 12})\big)
    \nonumber \\
    & =
    \frac{1}{V_H^4 \, \mu^{12}} \left(\int_{H^4} \ot \int_{H^{12}} \ot \int_{A^{12}}\right) \, 
    \big(
    \Delta_{17 \, 25} \dela\mone \cdot
    \Delta_{18 \, 28} \dela\mone \cdot
    \Delta_{19 \, 20} \dela\mone \cdot
    \Delta_{21 \, 27} \dela\mone \cdot
    \Delta_{22 \, 23} \dela\mone \cdot
    \Delta_{24 \, 26} \dela\mone
    \nonumber \\
    & \qquad \qquad \qquad \qquad \qquad \qquad \qquad \, \cdot
    (\sigma_{4 \, 28}\mone \cdot \sigma_{4 \, 27}\mone \cdot \sigma_{4 \, 26}\mone)  \cdot
    (\sigma_{3 \, 24}\mone \cdot \sigma_{3 \, 23}\mone \cdot \sigma_{3 \, 25}\mone) \cdot 
    (\sigma_{2 \, 21} \cdot \sigma_{2 \, 22} \cdot \sigma_{2 \, 20}) \cdot
    (\sigma_{1 \, 18}\mone \cdot \sigma_{1 \, 17}\mone \cdot \sigma_{1 \, 19}\mone)
    \nonumber \\
    & \qquad \qquad \qquad \qquad \qquad \qquad \qquad \, \cdot
    (\sigma_{5 \, 17} \cdot \sigma_{6 \, 18} \cdot \sigma_{7 \, 19}) \cdot 
    (\sigma_{8 \, 20} \cdot \sigma_{9 \, 21} \cdot \sigma_{10 \, 22}) \cdot 
    (\sigma_{11 \, 23} \cdot \sigma_{12 \, 24} \cdot \sigma_{13 \, 25}) 
    \nonumber \\
    & \qquad \qquad \qquad \qquad \qquad \qquad \qquad \, \cdot
    (\sigma_{14 \, 26} \cdot \sigma_{15 \, 27} \cdot \sigma_{16 \, 28}) \cdot 
    (1^{\ot 4} \ot \Phi \ot \Phi \ot \Phi \ot \Phi \ot 1^{\ot 12})\big)
    \nonumber \\
    & =
    \frac{1}{V_H^4 \, \mu^{12}} \left(\int_{H^4} \ot \int_{H^{12}} \ot \int_{A^{12}}\right) \, 
    \big(
    (1^{\ot 16} \ot 1^{\ot 3} \ot \dela\mone \ot 1^{\ot 2} \ot \dela\mone \ot 1 \ot \dela\mone \ot \dela\mone \ot \dela\mone \ot \dela\mone)
    \nonumber \\
    & \qquad \qquad \qquad \qquad \qquad \qquad \qquad \, \cdot
    (\sigma_{16 \, 18}\mone \cdot \sigma_{4 \, 18} \cdot \sigma_{1 \, 18}\mone \cdot \sigma_{6 \, 18}) \cdot
    (\sigma_{15 \, 21}\mone \cdot \sigma_{4 \, 21} \cdot \sigma_{2 \, 21} \cdot \sigma_{9 \, 21}) \cdot
    (\sigma_{14 \, 24}\mone \cdot \sigma_{4 \, 24} \cdot \sigma_{3 \, 24}\mone  \cdot \sigma_{12 \, 24}) 
    \nonumber \\
    & \qquad \qquad \qquad \qquad \qquad \qquad \qquad \, \cdot
    (\sigma_{11 \, 22}\mone \cdot \sigma_{3 \, 22} \cdot \sigma_{2 \, 22} \cdot \sigma_{10 \, 22}) \cdot
    (\sigma_{13 \, 17}\mone \cdot  \sigma_{3 \, 17} \cdot \sigma_{1 \, 17}\mone \cdot \sigma_{5 \, 17}) \cdot
    (\sigma_{8 \, 19}\mone \cdot \sigma_{2 \, 19}\mone \cdot \sigma_{1 \, 19}\mone \cdot \sigma_{7 \, 19})
    \nonumber \\
    & \qquad \qquad \qquad \qquad \qquad \qquad \qquad \, \cdot
    (1^{\ot 4} \ot \Phi \ot \Phi \ot \Phi \ot \Phi \ot 1^{\ot 12})\big)
    \nonumber \\
    & =
    \frac{1}{V_H^4 \, \mu^{9}} \left(\int_{H^4} \ot \int_{H^{12}} \ot \int_{A^{6}}\right) \, 
    \big(
    \Delta^4_{-16 \, 4 \, -1 \, 6} \, \sigma \cdot
    \Delta^4_{-15 \, 4 \, 2 \, 9} \, \sigma \cdot
    \Delta^4_{-14 \, 4 \, -3 \, 12} \, \sigma \cdot
    \Delta^4_{-11 \, 3 \, 2 \, 10} \, \sigma \cdot
    \Delta^4_{-13 \, 3 \, -1 \, 5} \, \sigma \cdot
    \Delta^4_{-8 \, -2 \, -1 \, 7} \, \sigma
    \nonumber \\
    & \qquad \qquad \qquad \qquad \qquad \qquad \quad \, \cdot
    (1^{\ot 4} \ot \Phi \ot \Phi \ot \Phi \ot \Phi \ot 1^{\ot 6})\big)
    \nonumber \\
    & =
    \frac{1}{V_H^4 \, \mu^{6}} \left(\int_{H^4} \ot \int_{H^{12}}\right) \, 
    \big(
    \Delta^4_{-16 \, 4 \, -1 \, 6} \, \delh\mone \cdot
    \Delta^4_{-15 \, 4 \, 2 \, 9} \, \delh\mone \cdot
    \Delta^4_{-14 \, 4 \, -3 \, 12} \, \delh\mone \cdot
    \Delta^4_{-11 \, 3 \, 2 \, 10} \, \delh\mone \cdot
    \Delta^4_{-13 \, 3 \, -1 \, 5} \, \delh\mone 
    \nonumber \\
    & \qquad \qquad \qquad \qquad \qquad \,\, \cdot
    \Delta^4_{-8 \, -2 \, -1 \, 7} \, \delh\mone \cdot
    (1^{\ot 4} \ot \Phi \ot \Phi \ot \Phi \ot \Phi)\big)
\end{align}
In the fourth line we used the tensor notation \eqref{ij_CoProduct_Antipode} to encode the co-product and the inverse antipodes.

\subsection{Feynman diagram amplitude}
\label{App_PartitionFunction}

We now have all the ingredients to derive the Feynman amplitudes \eqref{GeneratingFunction_Triangulation}, \eqref{GeneratingFunction_DualComplex} and \eqref{GeneratingFunction_Double} following Prop. \ref{Prop_GeneratingFunction}.

\paragraph{Triangulation picture. }
The definition of Feynman amplitude of a given triangulation in the $A$ representation is straightforward.
As the amplitude of a single tetrahedron is associated to the kernel \eqref{VertexAmplitude}, the amplitude of a triangulation $\Gamma$ made of $M$ tetrahedra is simply given as a combination of $M$ (reduced) amplitudes \eqref{GeneratingFunction_1-Tetra} glued together by the proper propagator amplitudes \eqref{PropagatorAmplitude}. The details of the gluing of several tetrahedra through the propagators are provided in Prop. \ref{Prop_Tetrahedra_Gluing} and \ref{Prop_InternalEdge}. 
Following the gluing procedure explained in these propositions, one obtains the expression \eqref{GeneratingFunction_Triangulation}.

\paragraph{Dual complex picture. }
The amplitude of the Feynman diagrams of Hopf algebra field theory in the dual complex is more subtle. We consider the combination of an arbitrary number of tetrahedron amplitudes \eqref{VertexAmplitude} merged through the propagator amplitudes \eqref{PropagatorAmplitude} such that they share a single edge, as in Prop. \ref{Prop_InternalEdge}. We isolate the contribution of the links in the bulk of such tetrahedra. They form a closed loop around the internal edge. The resulting amplitude encodes the local curvature around the edge, and the overall Hopf algebra field theory Feynman amplitude of a dual complex is given by the sum over all the edges of the triangulation, of such local curvatures.
\smallskip \\
The vertex amplitude is given in terms of the co-product $\sDe$ defined in \eqref{Underline_CoProduct_Simplex_3d} and the propagator amplitude in terms of the three dimensional co-product \eqref{d-d_CoProduct}. 
The gluing that we are interested in, involves only one loop of solid blue and red links of Fig. \ref{Fig_VertexAmplitude} for each graph dual to a tetrahedron (one four dimensional co-product in $\sDe$ in the kernel \eqref{VertexAmplitude}) and a single co-product $\Delta$ for each propagator amplitude. 
We thus multiply a number of partial vertex amplitudes of the type
\be
    \coV = \frac{1}{\mu \, V_H} \left(\id \ot \int_H^2 \ot \id\right) \, \Delta^4_{-1 \, 2 \, -3 \, 4} \delh\mone \,,  
\ee
with a number of partial propagator amplitudes of the type
\be
    \coK = \frac{1}{\sqrt{\mu}} \, (S \ot \id) \circ \Delta \delh
    =
    \frac{1}{\sqrt{\mu}} \, \Delta_{-1 \, 2} \delh \,.
\ee
Let us first merge two (partial) vertex amplitudes and denote the resulting amplitude $\coV_2$.
It is given by the product
\be
    \coV_2 =
    \frac{1}{\sqrt{\mu^{5}} \, V_H^2} \left(\id \ot \int_{H^{2}} \ot \id^{\ot 2} \ot \int_{H^{2}} \ot \id\right) \,
    \big(
    \Delta^4_{-1 \, 2 \, -3 \, 4} \delh\mone \cdot 
    \Delta_{-5 \, 4} \delh \cdot 
    \Delta^4_{-5 \, 6 \, -7 \, 8} \delh\mone
    \big) \,.
\ee
Use the property \eqref{Delta_CoProduct_Property} of the delta function, between the propagator amplitude and the first vertex amplitude. 
\be
    \coV_2 =
    \frac{1}{\sqrt{\mu^{5}} \, V_H^2} \left(\id \ot \int_{H^{2}} \ot \id^{\ot 2} \ot \int_{H^{2}} \ot \id\right) \,
    \big(
    \Delta_{-5 \, 4} \delh \cdot
    \Delta^4_{-1 \, 2 \, -3 \, 5} \delh\mone \cdot 
    \Delta^4_{-5 \, 6 \, -7 \, 3} \delh\mone
    \big) \,.
\ee
We get rid of the propagator amplitude by taking the (normalized) co-integral on the fourth tensor space, so that we can use the left invariance of the co-integral on $H$ and then the normalization property of the delta function.
\begin{align}
    \left(\id \ot \int_H \ot \id^{\ot 2}\right) \, \coV_2
    & =
    \frac{1}{\sqrt{\mu^{5}} \, V_H^2} \left(\id \ot \int_{H^{3}} \ot \id \ot \int_{H^{2}} \ot \id\right) \,
    \big(
    (1^{\ot 3} \ot \delh \ot 1^{\ot 4}) \cdot
    \Delta^4_{-1 \, 2 \, -3 \, 5} \delh\mone \cdot
    \Delta^4_{-5 \, 6 \, -7 \, 8} \delh\mone
    \big)
    \nonumber \\
    & =
    \frac{1}{\mu^2 \, V_H^2} \left(\id \ot \int_{H^{2}} \ot \id \ot \int_{H^{2}} \ot \id\right) \,
    \big(
    \Delta^4_{-1 \, 2 \, -3 \, 4} \delh\mone \cdot
    \Delta^4_{-4 \, 5 \, -6 \, 7} \delh\mone
    \big) \,.
\end{align}
The last step consists in merging the two co-products. 
The easiest way to proceed is to use the definition of delta function to write them as skew co-pairing elements and use the properties \eqref{SkewEl_CoProduct-id} and \eqref{SkewEl_CoProduct-id_inv}.
\begin{align}
    \Delta^4_{-1 \, 2 \, -3 \, 4} \delh \cdot
    \Delta^4_{-4 \, 5 \, -6 \, 7} \delh
    & =
    \frac{1}{\mu} \left(\id^{\ot 7} \ot \int_{A^2}\right) \,
    \big(
    (\Delta^4_{-1 \, 2 \, -3 \, 4} \ot \id) \, \sigma \cdot 
    (\Delta^4_{-4 \, 5 \, -6 \, 7} \ot \id^{\ot 2}) \, \sigma\big)
    \nonumber \\
    & =
    \frac{1}{\mu} \left(\id^{\ot 7} \ot \int_{A^2}\right) \,
    \big(
    \sigma_{1 \, 8}\mone \cdot 
    \sigma_{2 \, 8} \cdot
    \sigma_{3 \, 8}\mone \cdot
    \sigma_{4 \, 8} \cdot
    \sigma_{4 \, 9}\mone \cdot
    \sigma_{5 \, 9} \cdot
    \sigma_{6 \, 9}\mone \cdot
    \sigma_{7 \, 9}
    \big) \,.
\end{align}
Now we would like to group the above elements in a co-product so that we can use the left invariance of the co-integral on $A$. To do this, we multiply for the proper combination of skew co-pairing $\sigma$ and convolution inverses $\sigma\mone$ and use the properties \eqref{SkewEl_CoProduct-id} and \eqref{SkewEl_CoProduct-id_inv}.
\begin{align}
    \Delta^4_{-1 \, 2 \, -3 \, 4} \delh \cdot
    \Delta^4_{-4 \, 5 \, -6 \, 7} \delh
    =
    \frac{1}{\mu} \left(\id^{\ot 7} \ot \int_{A^2}\right) \,
    &
    \big(
    (\id \ot S \ot \id^{\ot 5} \ot S \ot \id) \,
    (\sigma_{3 \, 9} \cdot
    \sigma_{2 \, 9} \cdot
    \sigma_{1 \, 9} \cdot
    (\sigma_{1 \, 9}\mone \cdot
    \sigma_{1 \, 8}\mone)
    \nonumber \\
    & \cdot
    (\sigma_{2 \, 9}\mone \cdot
    \sigma_{2 \, 8}\mone) \cdot
    (\sigma_{3 \, 9}\mone \cdot
    \sigma_{3 \, 8}\mone) \cdot
    (\sigma_{4 \, 8} \cdot
    \sigma_{4 \, 9})) \cdot 
    (\sigma_{5 \, 9} \cdot
    \sigma_{6 \, 9}\mone \cdot
    \sigma_{7 \, 9})
    \big)
    \nonumber \\
    =
    \frac{1}{\mu} \left(\id^{\ot 7} \ot \int_{A^2}\right) \,
    &
    \big(
    (\id \ot S \ot \id^{\ot 5} \ot S \ot \id) \,
    (\sigma_{3 \, 9} \cdot
    \sigma_{2 \, 9} \cdot
    \sigma_{1 \, 9} \cdot
    (\sigma_{1 \, 8}\mone \cdot \sigma_{2 \, 8}\mone \cdot \sigma_{3 \, 8}\mone \cdot \sigma_{4 \, 8}))
    \nonumber \\
    & \cdot
    (\sigma_{5 \, 9} \cdot
    \sigma_{6 \, 9}\mone \cdot
    \sigma_{7 \, 9})
    \big)
    \nonumber \\
    =
    \frac{1}{\mu} \left(\id^{\ot 7} \ot \int_{A^2}\right) \,
    &
    \big(
    (\id \ot S \ot \id^{\ot 5} \ot S \ot \id) \,
    (\sigma_{3 \, 9} \cdot
    \sigma_{2 \, 9} \cdot
    \sigma_{1 \, 9} \cdot
    (\Delta^4_{-1 \, -2 \, -3 \, 4} \ot \id) \, \sigma) 
    \nonumber \\
    & \cdot
    (\sigma_{5 \, 9} \cdot
    \sigma_{6 \, 9}\mone \cdot
    \sigma_{7 \, 9})
    \big) 
    \nonumber \\
    =
    \frac{1}{\sqrt{\mu}} \left(\id^{\ot 7} \ot \int_{A}\right) \,
    &
    \big(
    (\id \ot S \ot \id^{\ot 5} \ot S) \,
    (\sigma_{3 \, 8} \cdot
    \sigma_{2 \, 8} \cdot
    \sigma_{1 \, 8} \cdot
    \Delta^4_{-1 \, -2 \, -3 \, 4} \delh\mone) 
    \cdot
    (\sigma_{5 \, 8} \cdot
    \sigma_{6 \, 8}\mone \cdot
    \sigma_{7 \, 8})
    \big) \,.
\end{align}
In the last two lines we grouped all the skew co-pairings belonging to the eighth tensor space into a four dimensional co-product, and then we used the definition of delta function. 
Now taking the co-integral on the fourth tensor product, we can use its left invariance to remove the co-product on the delta function in the expression above and then its normalization property to get rid of it.
\begin{align}
    \left(\id \ot \int_{H^2} \ot \id\right) \, \coV_2
    & =
    \frac{\sqrt{\mu}}{\mu^3 \, V_H^2} \left(\id \ot \int_{H^{5}} \ot \id \ot \int_A\right) \,
    \big(
    (\id \ot S \ot \id^{\ot 5} \ot S) \,
    (\sigma_{3 \, 8} \cdot
    \sigma_{2 \, 8} \cdot
    \sigma_{1 \, 8} \cdot
    \Delta^4_{-1 \, -2 \, -3 \, 4} \delh\mone) 
    \nonumber \\
    & \qquad \qquad \qquad \qquad \qquad \qquad \quad \,\,\,\, \cdot
    (\sigma_{5 \, 8} \cdot
    \sigma_{6 \, 8}\mone \cdot
    \sigma_{7 \, 8})
    \big)
    \nonumber \\
    & =
    \frac{\sqrt{\mu}}{\mu^3 \, V_H^2} \left(\id \ot \int_{H^{5}} \ot \id \ot \int_A\right) \,
    \big(
    (1^{\ot 3} \ot \delh\mone \ot 1^{\ot 4}) \cdot
    (\sigma_{1 \, 8}\mone \cdot
    \sigma_{2 \, 8} \cdot
    \sigma_{3 \, 8}\mone \cdot
    (\sigma_{5 \, 8} \cdot
    \sigma_{6 \, 8}\mone \cdot
    \sigma_{7 \, 8})
    \big)
    \nonumber \\
    & =
    \frac{1}{\mu^2 \, V_H^2} \left(\id \ot \int_{H^{4}} \ot \id \ot \int_A\right) \,
    \big(
    (\sigma_{1 \, 7}\mone \cdot
    \sigma_{2 \, 7} \cdot
    \sigma_{3 \, 7}\mone \cdot
    (\sigma_{4 \, 7} \cdot
    \sigma_{5 \, 7}\mone \cdot
    \sigma_{6 \, 7})
    \big)
    \nonumber \\
    & =
    \frac{1}{\mu^2 \, V_H^2} \left(\id \ot \int_{H^{4}} \ot \id\right) \,
    \Delta^{6}_{-1 \, 2 \, -3 \, 4 \, -5 \, 6} \, \delh\mone \,.
\end{align}
Despite the cumbersome computation, the pattern of the gluing of two (partial) vertex amplitudes turned out to be very simple: take the tensor product of two (partial) vertex amplitudes and a (partial) propagator amplitude insert in the middle of them, connecting the last component of the first vertex with the first of the second. The co-integral of this expression reduces to a single (six dimensional) co-product defined in all the initial tensor spaces, except those that have been involved in the product with the propagator.
\begin{align}
    \left(\id \ot \int_{H^2} \ot \id\right) \, \coV_2 
    & = 
    \frac{1}{\sqrt{\mu^5} \, V_H^2} \left(\id \ot \int_{H^6} \ot \id\right) \,
    \big(\Delta^4_{-1 \, 2 \, -3 \, 4} \delh\mone \cdot
    \Delta_{-5 \, 4} \delh \cdot
    \Delta^4_{-5 \, 6 \, -7 \, 8} \delh\mone
    \big)
    \nonumber \\
    & =
    \frac{1}{\mu^2 \, V_H^2} \left(\id \ot \int_{H^{4}} \ot \id\right) \,
    \Delta^{6}_{-1 \, 2 \, -3 \, 4 \, -5 \, 6} \, \delh\mone \,. 
\end{align}
We can thus keep on gluing interaction terms by taking the co-integral of the alternate product of propagator and vertex amplitudes. Call $\coV_N$ the amplitude derived by gluing $N$ (partial) tetrahedron amplitudes through $N-1$ propagators.
\be
    \left(\id \ot \int_{H^{2N-2}} \ot \id\right) \, \coV_N
    = 
    \frac{1}{\mu^N \, V_H^{N}} \left(\id \ot \int_{H^{2N}} \ot \id\right) \,
    \Delta^{2N+2}_{-1 \, 2 \, -3 \, 4 \, \dots \, 2N \, -(2N+1) \, (2N+2)} \, \delh\mone \,.
\ee
In order to turn the above co-product into a \textit{loop} of links, we multiply it from the right by the (partial) propagator amplitude $\Delta_{-1 \, (2N+2)} \delh$, connecting the last and first tensor spaces. 
Again, after such gluing we take the co-integral over those tensor spaces and use its left invariance plus the normalization property of the delta function to remove the propagator amplitude. The result is a single loop of $2N$ links around an edge. 
Each of these links goes from the center of its tetrahedron to the center of the respective face. They can be repackaged as a product of $N$ links that connect the center of two tetrahedra.
The HAFT amplitude \eqref{GeneratingFunction_DualComplex} is 
given by combinations of all loops $\cL_N$ made of an arbitrary number $N$ of links, which span faces dual to the edges of the triangulation:
\be
    \cZ _{\Gamma^*}= 
    \prod_{\{\cL_N\}} \, \frac{1}{\mu^N \, V_H^{N}} \, \int_{H^{2 N}} \,
    \Delta^{2N}_{1 \, -2 \, 3 \, -4 \, \dots \, (2N-1) \, -2N} \, \delh\mone \,.
\ee
The expression of the Feynman amplitude in terms of the plane wave \eqref{GeneratingFunction_Double} can be derived by the above one simply using the definition of opposite delta function.
\be
    \cZ = 
    \prod_{\{\cL_N\}} \, \, \frac{1}{\sqrt{\mu^{2N+1}} \, V_H^{N}} \, \left(\int_{H^{2 N}} \ot \int_A\right) \,
    \big(\Delta^{2N}_{1 \, -2 \, 3 \, -4 \, \dots \, (2N-1) \, -2N} \ot \id\big) \, \sigma \,.
\ee

\subsection{Topological invariance}
\label{App_TopologicalInvariance}

Finally, following \cite{GirelliOecklPerez:2001PachnerSpinFoam}, we show that the fundamental building blocks of the Hopf algebra field theory Feynman diagrams are invariant under the action of the three dimensional Pachner moves \cite{Pachner1991Pachner}.
In three dimensions we have two Pachner moves, denoted $P_{(1,4)}$ and $P_{(2,3)}$. 
Let $\cA_{\cV^{\alpha}}$ be the amplitude of a combination of $\alpha$ tetrahedra, the two Pachner moves are defined as the maps
\be
    (P_{(1,4)} \, \cA_{\cV}) = \cA_{\cV^4}
    \,\,,\quad
    (P_{(2,3)} \, \cA_{\cV^2}) = \cA_{\cV^3} \,.
\ee
\begin{center}
    \textit{Amplitudes}
\end{center}
\smallskip
In order to prove that the Hopf algebra field theory Feynman diagram amplitudes are invariant under these moves, we first list the amplitudes of one, two, three and four tetrahedra. \\
For the amplitude of a single tetrahedron we consider the co-integral of the kernel \eqref{VertexAmplitude} in the $A$ representation.
\be
    \cA_{\cV}  =
    \frac{1}{\mu^6} \, \int_{A^{12}} \,
    \big(\uDe \dela\mone \cdot (\hat{\cC}_{1 \, 2 \, 3} \cdot \hat{\cC}_{4 \, 5 \, 6} \cdot \hat{\cC}_{7 \, 8 \, 9} \cdot \hat{\cC}_{10 \, 11 \, 12})\big) \,.
\ee
For simplicity, we used a shorthand tensor product notation for the closure constraints $\cC$.
Now, writing the co-product $\uDe$ as in \eqref{Underline_CoProduct_3d} and using Prop. \ref{Prop_ClosureConstraint_Gluing} for each of the co-products in its expression, we glue the four closure constraints.
The amplitude of a single tetrahedron is thus
\be
    \cA_{\cV} =
    \frac{1}{\mu^3} \, \int_{A^{6}} \,
    \big(\hat{\cC}_{-6 \, -4 \, -2} \cdot 
    \hat{\cC}_{-5 \, 6 \, -1} \cdot 
    \hat{\cC}_{-3 \, 4 \, 5} \cdot \hat{\cC}_{1 \, 2 \, 3}\big) \,.
    \label{GeneratingFunction_1-Tetra}
\ee
\smallskip \\
For the amplitude of two tetrahedra, we consider the tensor product of two copies of \eqref{GeneratingFunction_1-Tetra} and one propagator amplitude \eqref{VertexAmplitude} that connects the last face of the first tetrahedron (closure constraint $\hat{\cC}_{1\,2\,3}$) with the first of the second (closure constraint $\hat{\cC}_{-12\,-10\,-8}$).
\be
    \cA_{\cV^2} =
    \frac{1}{\mu^6 \sqrt{\mu^3}} \, \int_{A^{12}} \,
    \big(\hat{\cC}_{-6 \, -4 \, -2} \cdot 
    \hat{\cC}_{-5 \, 6 \, -1} \cdot 
    \hat{\cC}_{-3 \, 4 \, 5} \cdot 
    \hat{\cC}_{1 \, 2 \, 3}\big) \, 
    \Delta^{(3)}_{-1 \, -2 \, -3 \,;\, 10 \, 12 \, 8} \dela\mone \,
    \big(\hat{\cC}_{-12 \, -10 \, -8} \cdot 
    \hat{\cC}_{-11 \, 12 \, -7} \cdot 
    \hat{\cC}_{-9 \, 8 \, 11} \cdot 
    \hat{\cC}_{7 \, 10 \, 9}\big) \,.
\ee
Here we simply use Prop. \ref{Prop_Tetrahedra_Gluing} to merge the two amplitudes (by identifying the closure constraints) and remove the propagator
\begin{align}
    \cA_{\cV^2} & =
    \frac{1}{\mu^6} \, \int_{A^{9}} \,
    \big(\hat{\cC}_{-6 \, -4 \, -2} \cdot 
    \hat{\cC}_{-5 \, 6 \, -1} \cdot 
    \hat{\cC}_{-3 \, 4 \, 5}\big) \cdot 
    \hat{\cC}_{1 \, 2 \, 3} \cdot 
    \big(\hat{\cC}_{-11 \, 2 \, -7} \cdot 
    \hat{\cC}_{-9 \, 1 \, 11} \cdot 
    \hat{\cC}_{3 \, 9 \, 7}\big) 
    \nonumber \\
    & =
    \frac{1}{\mu^6} \, \int_{A^{9}} \,
    \big(\hat{\cC}_{-6 \, -4 \, -2} \cdot 
    \hat{\cC}_{-5 \, 6 \, -1} \cdot 
    \hat{\cC}_{-3 \, 4 \, 5} \cdot 
    \hat{\cC}_{1 \, 2 \, 3} \cdot 
    \hat{\cC}_{-9 \, 2 \, -7} \cdot 
    \hat{\cC}_{-8 \, 1 \, 9} \cdot 
    \hat{\cC}_{3 \, 8 \, 7}\big) \,,
    \label{GeneratingFunction_2-Tetra}
\end{align}
where in the second step we re-labelled the tensor spaces for simpicity.
\smallskip \\
Let us now move to the computation of the amplitude of three tetrahedra. The starting point is now a tensor product of three independent amplitudes of the type \eqref{GeneratingFunction_1-Tetra}, with three propagators used for identify the fourth and third triangles (closure constraints) of the first tetrahedron resp. to the first triangle of the second tetrahedron and to the second triangle of the fourth; plus a third propagator that identifies the fourth triangle of the second tetrahedron with the first triangle of the third.
\be
    \begin{aligned}
        \cA_{\cV^3} =
        \frac{1}{\mu^{9} \sqrt{\mu^9}} \, \int_{A^{18}} \,
        &
        \big(\hat{\cC}_{-6 \, -4 \, -2} \cdot 
        \hat{\cC}_{-5 \, 6 \, -1} \cdot 
        \hat{\cC}_{-3 \, 4 \, 5} \cdot 
        \hat{\cC}_{1 \, 2 \, 3}\big) \, 
        \Delta^{(3)}_{-1 \, -2 \, -3 \,;\, 10 \, 12 \, 8} \dela\mone \,\,
        \Delta^{(3)}_{-3 \, -4 \, -5 \,;\, 18 \, 17 \, 13} \dela\mone \,
        \\ 
        &
        \big(\hat{\cC}_{-12 \, -10 \, -8} \cdot 
        \hat{\cC}_{-11 \, 12 \, -7} \cdot 
        \hat{\cC}_{-9 \, 10 \, 11} \cdot 
        \hat{\cC}_{7 \, 8 \, 9}\big) \,
        \Delta^{(3)}_{-7 \, -8 \, -9 \,;\, 16 \, 18 \, 14} \dela\mone \,
        \\
        &
        \big(\hat{\cC}_{-18 \, -16 \, -14} \cdot 
        \hat{\cC}_{-17 \, 18 \, -13} \cdot 
        \hat{\cC}_{-15 \, 16 \, 17} \cdot 
        \hat{\cC}_{13 \, 14 \, 15}\big) \,.
    \end{aligned}
\ee
The pattern is similar to the computation of the amplitude of two tetrahedra \eqref{GeneratingFunction_2-Tetra}, with a subtle difference due to the fact that the three tetrahedra share one common edge. From a geometric prospective, the result is that this edge is internal to the combination of the tetrahedra (it is not part of the boundary) and hence one expects its contribution to disappear. From an algebraic point of view, this peculiarity is reflected from the fact that we can not use Prop. \ref{Prop_ClosureConstraint_Gluing} for gluing the closure constraints related to that edge. Instead, use first proposition \ref{Prop_InternalEdge} for gluing the closure constraints that involve the internal edge, the result is
\begin{align}
    \cA_{\cV^3} =
    \frac{\sqrt{\mu^3}}{\mu^{9} \sqrt{\mu^9}} \, \int_{A^{15}} \,
    &
    \big(\hat{\cC}_{-6 \, -4 \, -2} \cdot 
    \hat{\cC}_{-5 \, 6 \, -1} \cdot 
    \hat{\cC}_{1 \, 2 \, 4 \, 5}\big) \, 
    \Delta^{(2)}_{-1 \, -2 \,;\, 10 \, 12} \dela\mone \,\,
    \Delta^{(2)}_{-4 \, -5 \,;\,17 \, 13} \dela\mone \,
    \nonumber \\ 
    &
    \big(\hat{\cC}_{-12 \, -10 \, -5 \, -4} \cdot 
    \hat{\cC}_{-11 \, 12 \, -7} \cdot 
    \hat{\cC}_{-9 \, 10 \, 11} \cdot 
    \hat{\cC}_{7 \, 4 \, 5 \, 9}\big) \,
    \Delta^{(2)}_{-7 \, -9 \,;\, 16 \, 14} \dela\mone \,
    \nonumber \\
    &
    \big(\hat{\cC}_{-5 \, -4 \, -16 \, -14} \cdot 
    \hat{\cC}_{-17 \, 4 \, 5 \, -13} \cdot 
    \hat{\cC}_{-15 \, 16 \, 17} \cdot 
    \hat{\cC}_{13 \, 14 \, 15}\big) \,.
\end{align}
Where the symbol $\hat{\cC}_{i_1 \, \dots \, i_n}$ stands for the closure constraint \eqref{ClosureConstraint} where the co-product $\Delta^3$ is re-placed by the $n^{th}$ co-product \eqref{1-d_CoProduct}.
Then, using Prop. \ref{Prop_Tetrahedra_Gluing} we compute the remaining propagators.
\begin{align}
    \cA_{\cV^3} =
    \frac{1}{\mu^{9}} \, \int_{A^{9}} \,
    &
    \big(\hat{\cC}_{-6 \, -4 \, -2} \cdot 
    \hat{\cC}_{-5 \, 6 \, -1} \cdot 
    \hat{\cC}_{1 \, 2 \, 4 \, 5}\big) \, 
    \big(\hat{\cC}_{-1 \, -2 \, -5 \, -4} \cdot 
    \hat{\cC}_{-11 \, 2 \, -7} \cdot 
    \hat{\cC}_{-9 \, 1 \, 11} \cdot 
    \hat{\cC}_{7 \, 4 \, 5 \, 9}\big) \,
    \nonumber \\
    &
    \big(\hat{\cC}_{-5 \, -4 \, -7 \, -9} \cdot 
    \hat{\cC}_{-4 \, 4 \, 5 \, -5} \cdot 
    \hat{\cC}_{-15 \, 7 \, 4} \cdot 
    \hat{\cC}_{5 \, 9 \, 15}\big)
    \nonumber \\
    =
    \frac{1}{\mu^{9}} \, \int_{A^{9}} \,
    &
    \big(\hat{\cC}_{-5 \, -3 \, -2} \cdot 
    \hat{\cC}_{-4 \, 5 \, -1} \cdot 
    \hat{\cC}_{1 \, 2 \, 3 \, 4} \cdot 
    \hat{\cC}_{-8 \, 2 \, -6} \cdot 
    \hat{\cC}_{-7 \, 1 \, 8} \cdot 
    \hat{\cC}_{6 \, 3 \, 4 \, 7} \cdot
    \hat{\cC}_{-9 \, 6 \, 3} \cdot 
    \hat{\cC}_{4 \, 7 \, 9}\big) \,.
    \label{GeneratingFunction_3-Tetra}
\end{align}
In the last step we used \eqref{ClosureConstraint_Reduction} for removing the closure constraint $\hat{\cC}_{-1 \, -2 \, -5 \, -4}$ and the straightforward identity $\hat{\cC}_{-4 \, 4 \, 5 \, -5} = 1$. For the sake of clarity we also re-labelled some of the tensor spaces.
\smallskip \\
Last, let us compute the amplitude of four tetrahedra. The steps here are the same that we showed in the case of three tetrahedra. Hence, we start by the tensor product of four independent copies of the tetrahedron amplitude \eqref{GeneratingFunction_1-Tetra} where, multiplied by the propagators shown in the previous computation, plus three more that have the role of identifying the first, second and third triangles (closure constraints) of the fourth tetrahedron resp. with the fourth triangle of the third tetrahedron, the third of the second and the second of the first.
\be
    \begin{aligned}
        \cA_{\cV^4} =
        \frac{1}{\mu^{12} \sqrt{\mu^{18}}} \, \int_{A^{24}} \,
        &
        \big(\hat{\cC}_{-6 \, -4 \, -2} \cdot 
        \hat{\cC}_{-5 \, 6 \, -1} \cdot 
        \hat{\cC}_{-3 \, 4 \, 5} \cdot 
        \hat{\cC}_{1 \, 2 \, 3}\big) \, 
        \Delta^{(3)}_{-1 \, -2 \, -3 \,;\, 10 \, 12 \, 8} \dela\mone \,\,
        \Delta^{(3)}_{-3 \, -4 \, -5 \,;\, 18 \, 17 \, 13} \dela\mone \,
        \\ 
        &
        \big(\hat{\cC}_{-12 \, -10 \, -8} \cdot 
        \hat{\cC}_{-11 \, 12 \, -7} \cdot 
        \hat{\cC}_{-9 \, 10 \, 11} \cdot 
        \hat{\cC}_{7 \, 8 \, 9}\big) \,
        \Delta^{(3)}_{-7 \, -8 \, -9 \,;\, 16 \, 18 \, 14} \dela\mone \,
        \\
        &
        \big(\hat{\cC}_{-18 \, -16 \, -14} \cdot 
        \hat{\cC}_{-17 \, 18 \, -13} \cdot 
        \hat{\cC}_{-15 \, 16 \, 17} \cdot 
        \hat{\cC}_{13 \, 14 \, 15}\big) \,
        \Delta^{(3)}_{-5 \, -6 \, -1 \,;\, 22 \, 21 \, 23} \dela\mone \,
        \Delta^{(3)}_{-9 \, -10 \, -11 \,;\, 24 \, 23 \, 19} \dela\mone
        \\
        &
        \Delta^{(3)}_{-13 \, -14 \, -15 \,;\, 22 \, 24 \, 20} \dela\mone \,
        \big(\hat{\cC}_{-24 \, -22 \, -20} \cdot 
        \hat{\cC}_{-23 \, 24 \, -19} \cdot 
        \hat{\cC}_{-21 \, 22 \, 23} \cdot 
        \hat{\cC}_{19 \, 20 \, 21}\big) \,.
    \end{aligned}
\ee
The pattern is precisely the same of the computation of the amplitude of three tetrahedra \eqref{GeneratingFunction_3-Tetra}: first use (four times) proposition \ref{Prop_InternalEdge} for gluing the closure constraints that involve the four internal edges, and then proposition \ref{Prop_Tetrahedra_Gluing} to compute the remaining propagators.
\begin{align}
    \cA_{\cV^4} =
    \frac{\mu^6}{\mu^{12} \sqrt{\mu^{18}}} \, \int_{A^{9}} \,
    &
    \big(\hat{\cC}_{-6 \, -4 \, -2} \cdot 
    \hat{\cC}_{-5 \, 6 \, 2 \, 4 \, 5}\big) \, 
    \Delta_{-2 \,;\, 12} \dela\mone \,\,
    \Delta_{-4 \,;\, 17} \dela\mone \,
    \big(\hat{\cC}_{-12 \, -6 \, 5 \, -5 \, -4} \cdot 
    \hat{\cC}_{-11 \, 12 \, -7} \cdot 
    \hat{\cC}_{7 \, 4 \, 6 \, 11 \, -9 \, 9}\big) \,
    \nonumber \\
    &
    \Delta_{-7 \,;\, 16} \dela\mone \,
    \big(\hat{\cC}_{9 \, -11 \, -6 \, -4 \, -16 \, -9} \cdot 
    \hat{\cC}_{-17 \, 4 \, 6 \, 11 \, -9 \, 9 \, -11 \, -6} \cdot 
    \hat{\cC}_{-15 \, 16 \, 17} \cdot 
    \hat{\cC}_{6 \, 11 \, -9 \, 9 \, 15}\big) \,
    \Delta_{-6 \,;\, 21} \dela\mone \,
    \nonumber \\
    &
    \Delta_{-11 \,;\, 19} \dela\mone \,
    \Delta_{-15 \,;\, 20} \dela\mone \,
    \big(\hat{\cC}_{-9 \, 9 \, -11 \, -6 \, -20} \cdot 
    \hat{\cC}_{-6 \, 6 \, 11 \, -9 \, 9 \, -19} \cdot 
    \hat{\cC}_{-21 \, 5 \, -5 \, 6} \cdot 
    \hat{\cC}_{19 \, 20 \, 21}\big)
    \nonumber \\
    =
    \frac{1}{\mu^{12} \sqrt{\mu^6}} \, \int_{A^{9}} \,
    &
    \big(\hat{\cC}_{-6 \, -4 \, -2} \cdot 
    \hat{\cC}_{6 \, 2 \, 4}\big) \, 
    \Delta_{-2 \,;\, 12} \dela\mone \,\,
    \Delta_{-4 \,;\, 17} \dela\mone \,
    \big(\hat{\cC}_{-12 \, -6 \, -4} \cdot 
    \hat{\cC}_{-11 \, 12 \, -7} \cdot 
    \hat{\cC}_{7 \, 4 \, 6 \, 11}\big) \,
    \Delta_{-7 \,;\, 16} \dela\mone \,
    \nonumber \\
    &
    \big(\hat{\cC}_{-11 \, -6 \, -4 \, -16} \cdot 
    \hat{\cC}_{-17 \, 4} \cdot 
    \hat{\cC}_{-15 \, 16 \, 17} \cdot 
    \hat{\cC}_{6 \, 11 \, 15}\big) \,
    \Delta_{-6 \,;\, 21} \dela\mone \,
    \Delta_{-11 \,;\, 19} \dela\mone \,
    \Delta_{-15 \,;\, 20} \dela\mone \,
    \nonumber \\
    &
    \big(\hat{\cC}_{-11 \, -6 \, -20} \cdot 
    \hat{\cC}_{11 \, -19} \cdot 
    \hat{\cC}_{-21 \, 6} \cdot 
    \hat{\cC}_{19 \, 20 \, 21}\big)
    \nonumber \\
    =
    \frac{1}{\mu^{12}} \, \int_{A^{9}} \,
    &
    \big(\hat{\cC}_{6 \, 2 \, 4} \cdot 
    \hat{\cC}_{-2 \, -6 \, -4} \cdot 
    \hat{\cC}_{-11 \, 2 \, -7} \cdot 
    \hat{\cC}_{7 \, 4 \, 6 \, 11} \cdot 
    \hat{\cC}_{-11 \, -6 \, -4 \, -7} \cdot 
    \hat{\cC}_{-4 \, 4} \cdot 
    \hat{\cC}_{-15 \, 7 \, 4} \cdot 
    \hat{\cC}_{6 \, 11 \, 15} \cdot
    \hat{\cC}_{-11 \, -6 \, -15}
    \nonumber \\
    & \cdot 
    \hat{\cC}_{11 \, -11} \cdot 
    \hat{\cC}_{-6 \, 6} \cdot 
    \hat{\cC}_{11 \, 15 \, 6}\big)
    \nonumber \\
    =
    \frac{1}{\mu^{12}} \, \int_{A^{6}} \,
    &
    \big(\hat{\cC}_{-1 \, -3 \, -2} \cdot  
    \hat{\cC}_{-5 \, 1 \, -4} \cdot 
    \hat{\cC}_{4 \, 2 \, 3 \, 5} \cdot
    \hat{\cC}_{-6 \, 4 \, 2} \cdot 
    \hat{\cC}_{3 \, 5 \, 6}\big) \,.
    \label{GeneratingFunction_4-Tetra}
\end{align}
In the computation above, we have also used the identities \eqref{ClosureConstraint_Reduction}, \eqref{ClosureConstraint_CyclicPermutation+Inversion} and $\hat{\cC}_{i \, -i} = 1$, and at the end we re-labelled the tensor space for simplicity.
We now have all the ingredients to compute the Pachner moves.

\begin{center}
    \textit{Pachner move $P_{(1,4)}$}
\end{center}
\smallskip
The proof of the invariance of the amplitudes of Hopf algebra field theory under this move is particularly simple. Consider the amplitudes of one and four tetrahedra \eqref{GeneratingFunction_1-Tetra} and \eqref{GeneratingFunction_4-Tetra}
\be
    \begin{aligned}
        \cA_{\cV} & =
        \frac{1}{\mu^3} \, \int_{A^{6}} \,
        \big(\hat{\cC}_{-6 \, -4 \, -2} \cdot 
        \hat{\cC}_{-5 \, 6 \, -1} \cdot 
        \hat{\cC}_{-3 \, 4 \, 5} \cdot \hat{\cC}_{1 \, 2 \, 3}\big)
        \,,\\
        \cA_{\cV^4} & =
        \frac{1}{\mu^{12}} \, \int_{A^{6}} \,
        \big(
        \hat{\cC}_{-1 \, -3 \, -2} \cdot
        \hat{\cC}_{-5 \, 1 \, -4} \cdot 
        \hat{\cC}_{4 \, 2 \, 3 \, 5} \cdot
        \hat{\cC}_{-6 \, 4 \, 2} \cdot 
        \hat{\cC}_{3 \, 5 \, 6}\big) \,.
    \end{aligned}
\ee
We represent the transformation in Fig. \ref{Fig_Pachner(1,4)}. Notice that the closure constraint $\hat{\cC}_{4 \, 2 \, 3 \, 5}$ is not independent: in the picture on the left, it represents a loop of four edges on the boundary; it can be indeed removed by using the definition \eqref{ClosureConstraint} for the closure constraint $\hat{\cC}_{-5 \, 1 \, -4}$, which is thus expressed as the co-product of a delta function so that one can use the property \eqref{Delta_CoProduct_Property}
\be
    \hat{\cC}_{-5 \, 1 \, -4} \cdot \hat{\cC}_{4 \, 2 \, 3 \, 5} =
    \hat{\cC}_{-5 \, 1 \, -4} \cdot \hat{\cC}_{-5 \, 1 \, 2 \, 3 \, 5} =
    \hat{\cC}_{-5 \, 1 \, -4} \cdot \hat{\cC}_{1 \, 2 \, 3} \,.
\ee
We then remove the closure constraint $\hat{\cC}_{1 \, 2 \, 3}$ using \eqref{ClosureConstraint_Reduction}.
Finally, recalling that the tensor spaces are independent, we consider the identification below of the tensor space of the amplitude of four tetrahedra
\be
    1 \,\to\, 4
    \,\,,\quad
    2 \,\to\, 6 
    \,\,,\quad
    3 \,\to\, 2 
    \,\,,\quad
    4 \,\to\, -5
    \,\,,\quad
    5 \,\to\, 3
    \,\,,\quad
    6 \,\to\, 1 \,.
\ee
Under this identification (which is just a re-naming of the tensor spaces), it is straightforward to check that the two amplitudes are proportional each other. The action of the Pachner move on the amplitude of a single tetrahedron, is thus
\be
    (P_{(1,4)} \, \cA_{\cV}) 
    = 
    \cA_{\cV^4} =
    \frac{1}{\mu^9} \,\, \cA_{\cV} \,.
\ee
\smallskip 
\begin{center}
    \textit{Pachner move $P_{(2,3)}$}
\end{center}
\smallskip
The proof of the invariance of the Hopf algebra field theory amplitudes under the second Pachner move is equivalent to the first one. Consider the amplitudes of two \eqref{GeneratingFunction_2-Tetra} and three \eqref{GeneratingFunction_3-Tetra} tetrahedra
\be
    \begin{aligned}
        \cA_{\cV^2} & =
        \frac{1}{\mu^{6}} \, \int_{A^{9}} \,
        \big(\hat{\cC}_{-6 \, -4 \, -2} \cdot 
        \hat{\cC}_{-5 \, 6 \, -1} \cdot 
        \hat{\cC}_{-3 \, 4 \, 5} \cdot 
        \hat{\cC}_{1 \, 2 \, 3} \cdot 
        \hat{\cC}_{-9 \, 2 \, -7} \cdot 
        \hat{\cC}_{-8 \, 1 \, 9} \cdot 
        \hat{\cC}_{3 \, 8 \, 7}\big)
        \,,\\
        \cA_{\cV^3} & =
        \frac{1}{\mu^{9}} \, \int_{A^{9}} \,
        \big(\hat{\cC}_{-5 \, -3 \, -2} \cdot 
        \hat{\cC}_{-4 \, 5 \, -1} \cdot 
        \hat{\cC}_{1 \, 2 \, 3 \, 4} \cdot 
        \hat{\cC}_{-8 \, 2 \, -6} \cdot 
        \hat{\cC}_{-7 \, 1 \, 8} \cdot 
        \hat{\cC}_{6 \, 3 \, 4 \, 7} \cdot 
        \hat{\cC}_{-9 \, 6 \, 3} \cdot 
        \hat{\cC}_{4 \, 7 \, 9}\big) \,.
    \end{aligned}
\ee
In Fig. \ref{Fig_Pachner(2,3)} (on the left) we represented the gluing of two tetrahedra. 
Once again, we notice that the closure constraints $\hat{\cC}_{1 \, 2 \, 3 \, 4}$ and $\hat{\cC}_{6 \, 3 \, 4 \, 7}$ in the amplitude of three tetrahedra are not independent. In the picture, the represent
two loops of four edges of the boundary. Similarly to the first move, they can be removed by using the definition of closure constraint \eqref{ClosureConstraint} and then the property \eqref{Delta_CoProduct_Property} of the delta function, for the proper combination of the terms $\hat{\cC}_{-4 \, 5 \, -1}$, $\hat{\cC}_{-5 \, -3 \, -2}$ and $\hat{\cC}_{-7 \, 1 \, 8}$:
\be
    \begin{aligned}
        \hat{\cC}_{1 \, 2 \, 3 \, 4} &
        \quad\to\quad
        \hat{\cC}_{1 \, 2 \, 3 \, 5 \, -1} = \hat{\cC}_{2 \, 3 \, 5}
        \\
        \hat{\cC}_{6 \, 3 \, 4 \, 7} &
        \quad\to\quad
        \hat{\cC}_{6 \, 3 \, -3 \, -2 \, -1 \, 7} = \hat{\cC}_{6 \, -2 \, -1 \, 7}
        \quad\to\quad
        \hat{\cC}_{6 \, -2 \, -1 \, 1 \, 8} = \hat{\cC}_{6 \, -2 \, 8} \,. 
    \end{aligned}
\ee
The two resulting constraints can be r\eqref{ClosureConstraint_Reduction}.
In a similar way, we also remove the closure constraint $\hat{\cC}_{1 \, 2 \, 3}$ in the amplitude of two tetrahedra. This is associated to the triangle shared by the two tetrahedra in Fig. \ref{Fig_Pachner(2,3)} on the left, which is thus not independent on the other closures. To remove it, we use the same procedure explained above, with the closures $\hat{\cC}_{-3 \, 4 \, 5}$ and $\hat{\cC}_{-5 \, 6 \, 1}$.
The two resulting amplitudes then become
\be
    \begin{aligned}
        \cA_{\cV^2} & =
        \frac{1}{\mu^{6}} \, \int_{A^{9}} \,
        \big(\hat{\cC}_{-6 \, -4 \, -2} \cdot 
        \hat{\cC}_{-5 \, 6 \, -1} \cdot 
        \hat{\cC}_{-3 \, 4 \, 5} \cdot
        \hat{\cC}_{-9 \, 2 \, -7} \cdot 
        \hat{\cC}_{-8 \, 1 \, 9} \cdot 
        \hat{\cC}_{3 \, 8 \, 7}\big)
        \,,\\
        \cA_{\cV^3} & =
        \frac{1}{\mu^{9}} \, \int_{A^{9}} \,
        \big(\hat{\cC}_{-5 \, -3 \, -2} \cdot 
        \hat{\cC}_{-4 \, 5 \, -1} \cdot 
        \hat{\cC}_{-8 \, 2 \, -6} \cdot 
        \hat{\cC}_{-7 \, 1 \, 8} \cdot 
        \hat{\cC}_{-9 \, 6 \, 3} \cdot 
        \hat{\cC}_{4 \, 7 \, 9}\big) \,.
    \end{aligned}
\ee
As we did for the first move, consider the identification below of the tensor spaces of the amplitude of three tetrahedra
\be
    \begin{aligned}
        &
        1 \,\to\, -9
        \,,\\
        &
        4 \,\to\, 8
        \,,\\
        &
        7 \,\to\, 7
        \,,
    \end{aligned}
    \qquad
    \begin{aligned}
        &
        2 \,\to\, 6
        \,,\\
        &
        5 \,\to\, 1
        \,,\\
        &
        8 \,\to\, 2
        \,,
    \end{aligned}
    \qquad
    \begin{aligned}
        &
        3 \,\to\, 5
        \,,\\
        &
        6 \,\to\, 4
        \,,\\
        &
        9 \,\to\, 3 \,.
    \end{aligned}
\ee
Under this re-naming of tensor spaces, up to some permutations \eqref{ClosureConstraint_CyclicPermutation+Inversion} of the closure constraints, it is straightforward to check that the two amplitudes are proportional each other and the second Pachner move thus gives
\be
    (P_{(2,3)} \, \cA_{\cV^2}) 
    = 
    \cA_{\cV^3} =
    \frac{1}{\mu^3} \,\, \cA_{\cV^2} \,.
\ee

\section{Derivation of skew symmetric bi-algebras}
\label{App_Bi-Alg_Derivation}

Here we propose a procedure to derive skew symmetric bi-algebras, as in Def. \ref{Def_SkewSymmetricBi-Algebras}. 

\paragraph{Identification of the problem.}
Given any pair of bi-algebras $H$ and $A$, we would like to derive the skew pairing map $\sigma$ that satisfies the properties \eqref{SkewMap_Multiplication-CoProduct} to construct the generalized quantum double, or dually the skew co-pairing element $\sigma$ that satisfies the properties \eqref{SkewEl_id-CoProduct}, \eqref{SkewEl_CoProduct-id} to construct the dual of the quantum double. 
However, this derivation is highly non trivial, since the explicit expressions of both the map and the element $\sigma$ strongly depend on the choice of coordinates for each bi-algebra and even on the normal order of the bi-algebra generators.

\paragraph{Solution: reverse problem.}
Here we consider a possible solution to this problem.
The idea is to consider first a pair of co-algebras, to choose the  coordinates and the normal order for both of them. Then we define the skew pairing map and the skew co-pairing element, and derive the product of the two co-algebras in such that the map and element $\sigma$ satisfy the required properties. 
\smallskip \\
More specifically, consider a pair of co-algebras $C_A$ and $C_H$, and denote $a \in C_A$ and $h \in C_H$ their respective basis. Consider further the map $\sigma : C_H \ot C_A \to \bbC$, such that
\be
    \sigma(a_i \,,\, h_j) = \delta_{ij} \,.
\ee
We use this map $\sigma$ to define the  maps 
\be
    \begin{aligned}
        &
        \cL_h : C_H \to \bbC
        \,\,,\qquad
        \cL_h \equiv \sigma(\,\,\,,\,h) \,,\\
        &
        \cL_a : C_A \to \bbC
        \,\,,\qquad
        \cL_a \equiv \sigma(a\,,\,\,\,) \,.
    \end{aligned}   
\ee
Consider further an element $\sigma \in C_H \ot C_A$\footnote{By abuse of notation, we use the same symbol $\sigma$ for the map.} such that
\be
    \sigma(\tau \circ \sigma\mone) = 1 \,.
\ee
Last, let us consider two more maps, called \textit{star products}
\be
    \star : C_A \ot C_A \to C_A
    \,\,,\quad
    \ast : C_H \ot C_H \to C_H \,,
\ee
defined as
\be
    \begin{aligned}
        &
        a \star b = \big(\cL_a \ot \cL_b \ot \text{id}) \, (\tau \circ \Delta_H \ot \text{id}) \, \sigma
        \,,\\
        &
        h \ast g = \big(\text{id} \ot \cL_h \ot \cL_g) \, (\text{id} \ot \Delta_A) \, \sigma \,,
    \end{aligned} 
    \label{GenDef_StarProducts}
\ee
for all $h,g \in C_H$ and $a,b \in C_A$. The two star products are associative by co-associativity of the co-products. \\

\begin{proposition}[Bi-algebra compatibility]
    Consider two co-algebras $C_H$ and $C_A$ equipped with the star products \eqref{GenDef_StarProducts} as above. Denote the two resulting structures resp. $H$ and $A$, they are bi-algebras.
\end{proposition}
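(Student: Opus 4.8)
The plan is to verify the bi-algebra axioms for one of the two structures, say $A=(C_A,\star,\Delta_A,\varepsilon_A)$, and then to invoke the manifest symmetry of the construction under the exchange $(C_A,\star,\Delta_A)\leftrightarrow(C_H,\ast,\Delta_H)$ to read off the statement for $H$. The co-algebra data $(\Delta_A,\varepsilon_A)$ is given and satisfies \eqref{HopfAlg_Axiom_Co-unit} by hypothesis, and associativity of $\star$ has already been recorded as a consequence of the co-associativity of $\Delta_H$. Hence three things remain: the existence of a unit for $\star$, the compatibility of $\Delta_A$ with $\star$ (that $\Delta_A$ is an algebra morphism), and the compatibility of $\varepsilon_A$ with $\star$.

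First I would exploit the non-degeneracy of the pairing $\sigma(a_i,h_j)=\delta_{ij}$, which lets me identify $C_A$ with the linear dual of $C_H$ and encode every identity in $C_A$ by testing it against elements of $C_H$. Unwinding \eqref{GenDef_StarProducts} with the reproducing property of the element $\sigma$ (i.e. $(\cL_a\ot\id)\sigma=a$ together with its $C_H$-analogue) yields the transpose descriptions $\sigma(a\star b,g)=\sigma(a,g_{(2)})\,\sigma(b,g_{(1)})$ and $\sigma(c,h\ast g)=\sigma(c_{(1)},h)\,\sigma(c_{(2)},g)$; that is, $\star$ is dual to the \emph{opposite} of $\Delta_H$, the flip $\tau$ in \eqref{GenDef_StarProducts} supplying the opposition, while $\Delta_A$ is dual to $\ast$. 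The unit of $\star$ is then the element $1_A\in C_A$ representing $\varepsilon_H$ through the pairing, $\sigma(1_A,h)=\varepsilon_H(h)$; existence follows from non-degeneracy, and the co-unit axiom \eqref{HopfAlg_Axiom_Co-unit} for $\Delta_H$ gives $1_A\star b=b=b\star 1_A$ at once. The co-unit compatibility $\varepsilon_A(a\star b)=\varepsilon_A(a)\varepsilon_A(b)$ is handled dually, writing $\varepsilon_A(c)=\sigma(c,1_H)$ with $1_H$ the unit of $\ast$ and checking that $1_H$ is group-like for $\Delta_H$, again from \eqref{HopfAlg_Axiom_Co-unit}.

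The core of the argument is the homomorphism property $\Delta_A(a\star b)=\Delta_A a\star\Delta_A b$, which by the preceding transpose dictionary is equivalent to the skew co-pairing identities \eqref{SkewEl_id-CoProduct}–\eqref{SkewEl_CoProduct-id} for $\sigma$ (equivalently \eqref{SkewMap_Multiplication-CoProduct} for the paired map). I would prove it by pairing both sides against an arbitrary $h\ot g\in C_H\ot C_H$: the left side becomes $\sigma(a\star b,h\ast g)$, while the right side, after distributing $\star$ componentwise over the tensor square and regrouping the $a$- and $b$-factors, becomes a product of $\sigma(a,\,\cdot\,)$ and $\sigma(b,\,\cdot\,)$ evaluated on the two legs of $(\Delta_H\ot\Delta_H)$ applied to $h\ast g$. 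The two expressions are then matched by a single application of co-associativity of $\Delta_H$ together with the reproducing property of $\sigma$; the role of the condition $\sigma(\tau\circ\sigma\mone)=1$ on the element is to guarantee the convolution-invertibility needed for this bookkeeping to close.

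The hard part will be precisely this matching: $\star$ is built from data living on $C_H$ whereas $\Delta_A$ lives on $C_A$, so the compatibility cannot be checked inside a single algebra and only becomes tractable after transposing through $\sigma$. There the delicate issue is the placement of the transposition $\tau$ — a naive computation collapses the two sides to a would-be co-commutativity relation $g_{(1)}\ot g_{(2)}=g_{(2)}\ot g_{(1)}$, and it is exactly the flip built into \eqref{GenDef_StarProducts} (which makes $\sigma$ a \emph{skew} rather than an ordinary co-pairing) that must be tracked so that this spurious obstruction cancels. I would therefore isolate, as a preliminary lemma, the pair of transpose formulas for $\star$ and $\ast$ and the reproducing property of $\sigma$, and reduce all remaining verifications to repeated use of co-associativity; the symmetry of the construction then settles $A$ and $H$ simultaneously.
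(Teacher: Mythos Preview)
Your approach is essentially the dual of the paper's own argument: where the paper manipulates the canonical element $\sigma$ directly in Sweedler notation and invokes \eqref{SkewEl_id-CoProduct}, you pair both sides against test elements $h\ot g$ and rewrite everything through the transpose formulas $\sigma(a\star b,k)=\sigma(a,k_{(2)})\sigma(b,k_{(1)})$ and $\sigma(c,h\ast g)=\sigma(c_{(1)},h)\sigma(c_{(2)},g)$. The two routes are formally equivalent, and your treatment of the unit and co-unit axioms is actually more explicit than the paper's short proof, which only writes out the multiplicativity of $\Delta_A$.

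There is, however, a genuine gap in the key step, and it is shared by the paper. After pairing, the identity $\Delta_A(a\star b)=\Delta_A a\star\Delta_A b$ reduces to matching $\sigma(a,(h\ast g)_{(2)})\,\sigma(b,(h\ast g)_{(1)})$ against $\sigma(a,h_{(2)}\ast g_{(2)})\,\sigma(b,h_{(1)}\ast g_{(1)})$; by non-degeneracy this is precisely $\Delta_H(h\ast g)=\Delta_H h\ast\Delta_H g$, the bi-algebra axiom for $H$. The two compatibilities are therefore \emph{equivalent} under the pairing, so proving one and then ``invoking the manifest symmetry'' for the other is circular, and a single application of co-associativity of $\Delta_H$ does not close it. Concretely, take $C_H\cong C_A=\mathrm{span}\{1,x\}$ with $1$ group-like, $x$ primitive, and the diagonal pairing $\sigma$: the recipe \eqref{GenDef_StarProducts} gives $x\star x=0$, yet $\Delta_A x\star\Delta_A x=2\,x\ot x\neq 0$. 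So the statement, as phrased for arbitrary $C_H,C_A$, needs an extra hypothesis (in the paper's applications the co-algebras already come from genuine bi-algebras whose product is being reconstructed, which is what actually makes the examples work).
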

\begin{proof}
    Let us give a short proof of this proposition. In order for $H$ and $A$ to be bi-algebras, the two star products have to be compatible with the respective co-product
    \be
        \Delta_A (a \star b) = \Delta_A a \star \Delta_A b
        \,\,,\quad
        \Delta_H (h \ast g) = \Delta_H h \ast \Delta_H g \,.
    \ee
    Let us prove the first identity, the second is completely analogous. Let us suppose that the canonical element $\sigma \in C_H \ot C_A$ can be decomposed as $\sigma = \sum \, \sigma_H \ot \sigma_A$. Then, using the Sweedler notation, the property \eqref{SkewEl_id-CoProduct} can be expressed as
    \be
        (\id \ot \Delta_A) \, \sigma := 
        \sum \, \sigma_H \ot \sigma_{A_{(1)}} \ot \sigma_{A_{(2)}} =
        \sum \, \sigma_{H_1} \sigma_{H_2} \ot \sigma_{A_2} \ot \sigma_{A_1} :=
        \sigma_{13} \sigma_{12} \,.
    \ee
    Using this identity and the star product of $H$, the compatibility condition for the Hopf algebra $A$ follows as a direct computation.
    \begin{align}
        \Delta_A a \star \Delta_A b : & =
        (\cL_{\Delta_A a} \ot \cL_{\Delta_A b} \ot \id^{\ot 2}) \, (\tau \circ \Delta_{H \, 13} \ot \tau \circ \Delta_{H \, 24} \ot \id^{\ot 2}) \, (\sigma \ot \sigma) 
        \nonumber \\
        & =
        (\cL_{a_{(1)}} \ot \cL_{a_{(2)}} \ot \cL_{b_{(1)}} \ot \cL_{b_{(2)}} \ot \id^{\ot 2}) \, \big((\Delta_{H \, 31} \ot \id) \, \sigma) \, (\Delta_{H \, 42} \ot \id) \, \sigma)\big) 
        \nonumber \\
        & =
        \sum \, 
        \sigma(a_{(1)} \,,\, \sigma_{H_{1 \, (2)}}) \,
        \sigma(a_{(2)} \,,\, \sigma_{H_{2 \, (2)}}) \,
        \sigma(b_{(1)} \,,\, \sigma_{H_{1 \, (1)}}) \,
        \sigma(b_{(2)} \,,\, \sigma_{H_{2 \, (1)}}) \,
        \sigma_{A_1} \ot \sigma_{A_2}
        \nonumber \\
        & =
        \sum \, 
        \sigma(a \,,\, \sigma_{H_{2 \, (2)}} \ast \sigma_{H_{1 \, (2)}}) \,
        \sigma(b \,,\, \sigma_{H_{2 \, (1)}} \ast \sigma_{H_{1 \, (1)}}) \,
        \sigma_{A_{(1)}} \ot \sigma_{A_{(2)}}
        \nonumber \\
        & =
        (\cL_{a} \ot \cL_{b} \ot \Delta_A) \, (\tau \circ \Delta \ot \id) \, \sigma
        := \Delta_A (a \star b) \,.
    \end{align}
\end{proof}
\noindent
The two bi-algebras $H$ and $A$ are automatically skew paired by the map $\sigma$, which by construction of the star products, satisfies the axioms \eqref{SkewMap_Multiplication-CoProduct}, and they are also skew co-paired by the element $\sigma$, which by construction of the star products, satisfies the axioms \eqref{SkewEl_id-CoProduct}, \eqref{SkewEl_CoProduct-id}.
\smallskip \\
Note that, taking $C_A$ to be a trivial co-algebra (with a primitive co-product) the star product $\ast$ becomes a commutative pointwise product. As a particular example, the bi-algebras $H$ and $A$ can be taken respectively the bi-algebras $F(G)$ and $\cU(\g)$, with $G$ being a Lie group and $\g$ its Lie algebra. In this case the map $\cL_a$ can be seen as the ordinary Lie derivative on $G$.
%
\begin{proposition}[Generalized quantum double of $\cU_q(\su(2))$ and its dual]
    Consider the $\cU_q(\su(2)) \cong F(\AN_q(2))$ and $\cU_q(\an(2)) \cong F(\SU_q(2))$ co-algebras given in \eqref{Bi-Algebra_Uq(su)} and \eqref{Bi-Algebra_F(SUq)}. 
    The full bi-algebra structures as well as the generalized quantum double of $\cU_q(\su(2))$ in Prop. \ref{Prop_qDeformedGenQuantumDouble} and its dual in Prop. \ref{Prop_qDeformedDualGenQuantumDouble} can be deduced by the skew pairing map \eqref{qDeformedSkewMap}
    \be
        \sigma\big(X_-^j H^i X_+^k \,,\, \varphi_+^b \phi^a \varphi_-^c\big)
        = i^{a+b+c} \,
        \delta_{ai} \delta_{bj} \delta_{ck} \,
        a![b]_{q^{2}}![c]_{q^{-2}}! \,,
    \ee
    and the skew co-pairing element \eqref{qDeformedSkewEL}
     \be
        \sigma
        =
        e_{\star \, q^{2}}^{i\varphi_+ \ot X_-} \,
        e_{\star}^{i\phi \ot H} \,
        e_{\star \, q^{-2}}^{i\varphi_- \ot X_+} \,.
    \ee
\end{proposition}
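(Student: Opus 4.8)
The plan is to present this proposition as the concrete instantiation of the reverse-problem construction developed just above: we are handed the two co-algebra structures of \eqref{Bi-Algebra_Uq(su)} and \eqref{Bi-Algebra_F(SUq)} together with the candidate skew pairing \eqref{qDeformedSkewMap}, and the Bi-algebra compatibility proposition guarantees that the star products built from $\sigma$ through \eqref{GenDef_StarProducts} equip $C_H=\cU_q(\an(2))\cong F(\SU_q(2))$ and $C_A=\cU_q(\su(2))$ with compatible algebra products, which are automatically skew paired by \eqref{SkewMap_Multiplication-CoProduct} and skew co-paired by \eqref{SkewEl_id-CoProduct}, \eqref{SkewEl_CoProduct-id}. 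Thus the statement reduces to three tasks: (i) fixing the input data, (ii) computing the induced star products and checking that they reproduce the products recorded in \eqref{Bi-Algebra_Uq(su)} and \eqref{Bi-Algebra_F(SUq)}, and (iii) feeding the resulting bi-algebras into the (dual) quantum double formulas to extract the actions and co-actions.

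First I would fix $C_A=\cU_q(\su(2))$ with PBW basis $\{X_+^j H^i X_-^k\}$ and $C_H=\cU_q(\an(2))$ with basis $\{\varphi_-^b\phi^a\varphi_+^c\}$, retaining the co-products, co-units and antipodes of \eqref{Bi-Algebra_Uq(su)}, \eqref{Bi-Algebra_F(SUq)} as given. Since the pairing \eqref{qDeformedSkewMap} is diagonal on these monomials, the functionals $\cL_a,\cL_h$ of App.~\ref{App_Bi-Alg_Derivation} simply read off the matching PBW monomial up to the normalization $a![b]_{q^2}![c]_{q^{-2}}!$. The canonical element $\sigma\in C_H\ot C_A$ dual to this pairing is then the sum over the basis weighted by the inverse factorials; recognizing the three resulting series through $e_q^x=\sum_n x^n/[n]_q!$ gives exactly the factorized $q$-star exponential \eqref{qDeformedSkewEL}, establishing the second displayed formula of the proposition.

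Next I would compute the star products from \eqref{GenDef_StarProducts}. For the product on $C_A$ one applies $\tau\circ\Delta_H$ to the $H$-slot of $\sigma$ and contracts the two resulting factors against the arguments via the pairing; diagonality collapses this to a finite sum controlled by the co-products of $\varphi_\pm,\phi$ and by the $q$-deformed binomial theorem for generators obeying $AB=q^\alpha BA$. Carrying this out on the low-degree generators, I would check that $\star$ reproduces the commutators of \eqref{Bi-Algebra_Uq(su)} — in particular $[H,X_\pm]=\pm2\ell X_\pm$ and the nontrivial $[X_+,X_-]=\ell^2 q^{-1}\sinh(\lambda H)/\sinh(\ell\lambda)$ — and, dually, that $\ast$ reproduces \eqref{Bi-Algebra_F(SUq)}, including $[\phi,\varphi_\pm]=-i\lambda\varphi_\pm$ and the logarithmic co-product $\Delta\phi$. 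With both products in place, I would finally substitute $\sigma$ and $\sigma^{-1}$ into the action formulas \eqref{QuantumDouble_Actions} to recover the mutual actions $\rhd,\lhd$ of Prop.~\ref{Prop_qDeformedGenQuantumDouble}, and into the co-action formulas \eqref{QuantumDouble_CoActions}, $\alpha(h)=\sigma^{-1}(h\ot1)\sigma$ and $\beta(a)=\sigma^{-1}(1\ot a)\sigma$, to recover the co-actions of Prop.~\ref{Prop_qDeformedDualGenQuantumDouble}, the stated expansions following by truncating the $q$-star exponentials in $\ell$ and $\lambda$.

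The main obstacle is squarely step (ii): the star-product computation is heavy precisely because the co-products carry the $e^{\pm\lambda H}$ and $\varphi_0 e^{\pm i\ell\phi}$ grading factors, so reproducing $[X_+,X_-]$ and the logarithmic $\Delta\phi$ demands resumming the binomial sums into closed hyperbolic and logarithmic forms rather than a term-by-term match. This is exactly the "more involved" computation that was deferred to App.~\ref{App_Bi-Alg_Derivation} in the proofs of Prop.~\ref{Prop_qDeformedGenQuantumDouble} and Prop.~\ref{Prop_qDeformedDualGenQuantumDouble}, and I expect the careful bookkeeping of normal ordering and the convergence of these resummations to be where the real work lies.
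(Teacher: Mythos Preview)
Your plan is correct and follows essentially the same route as the paper. The paper's proof likewise invokes the reverse-problem construction of App.~\ref{App_Bi-Alg_Derivation} and notes that the $\ast$-side was already handled in Props.~\ref{Prop_qDeformedGenQuantumDouble} and \ref{Prop_qDeformedDualGenQuantumDouble}; its actual content is precisely the computation you flag as the main obstacle, namely deriving $[X_+,X_-]_\star=\ell^2 q^{-1}\sinh(\lambda H)/\sinh(\ell\lambda)$ by expanding $(\Delta^{op}\phi)^u$, keeping only the terms with a single $\varphi_-\wedge\varphi_+$, and resumming. One small slip: in step (ii) you list ``the logarithmic co-product $\Delta\phi$'' among the things $\ast$ should reproduce, but $\Delta\phi$ is input co-algebra data, not output of the star product --- what $\ast$ must reproduce on the $F(\SU_q(2))$ side are only the commutators $[\phi,\varphi_\pm]$ and $[\varphi_+,\varphi_-]$.
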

\begin{proof}
    In the propositions \ref{Prop_qDeformedGenQuantumDouble} and \ref{Prop_qDeformedDualGenQuantumDouble} we have already proved that the multiplication on $F(\SU_q(2))$ (in \eqref{Bi-Algebra_F(SUq)}) can be derived as a $\ast$-product, from the co-algebra sector of $\cU_q(\su(2))$ in \eqref{Bi-Algebra_Uq(su)}. 
    It remains to show that the multiplication of $\cU_q(\su(2))$ (in \eqref{Bi-Algebra_Uq(su)}) can be derived as a $\star$-product from the co-algebra sector of $F(\SU_q(2))$ in \eqref{Bi-Algebra_F(SUq)}.
    We show how to derive the most complicated commutator
    \be
        [X_+ , X_-] = \ell^2 q\mone \, \frac{\sinh(\lambda H)}{\sinh(\ell \lambda)} \,.
    \ee
    The computation of the commutators $[H,X_{\pm}]$ trivially follows in a similar way. 
    According to the definition \eqref{GenDef_StarProducts}, we consider
    \be
        \begin{aligned}
            [X_+ \,\overset{\star}{,}\, X_-] := &
            (\cL_{X_+} \wedge \cL_{X_-} \ot \id) \, (\Delta^{op} \ot \id) \, \sigma \\
            = &
            \big(\sigma(X_+ \wedge X_- \,,\,\,\,) \ot \id\big)\, (\Delta^{op} \ot \id) \, \sigma \,.
        \end{aligned}
    \ee
    Note that, as we evaluate the co-product of the skew co-pairing element in $\sigma(X_+ \,,\,\,\,)$ and $\sigma(X_- \,,\,\,\,)$, only the terms with a single power of $\varphi_- \ot \varphi_+$ or $\varphi_+ \ot \varphi_-$ survive. Hence, in the following computation we will use the symbol $\approx$ to discard all such non-relevant contributions.
    The co-product of the skew co-pairing element is
    \be
        \begin{aligned}
            (\Delta^{op} \ot \id) \, \sigma & =
            e_{\star \, q^{2}}^{i\Delta^{op}\varphi_+ \ot X_-} \,
            e_{\star}^{i\Delta^{op}\phi \ot H} \,
            e_{\star \, q^{-2}}^{i\Delta^{op}\varphi_- \ot X_+} 
            \\
            & =
            \sum_{u,v,w=0}^{\infty} \, \frac{i^{u+v+w}}{u![v]_{q^2}![w]_{q^{-2}}!} \,
            \big((\Delta^{op} \varphi_+)^v \, (\Delta^{op} \phi)^u \, (\Delta^{op} \varphi_-)^w\big) \ot X_-^v H^u X_+^w \,.
        \end{aligned}
    \ee
    The co-products of $\varphi_{\pm}$ are trivially computed using the generalized binomial theorem:
    \begin{align}
        (\Delta^{op} \varphi_+)^v & =
        \sum_{i=0}^v \, 
        \begin{bmatrix}
            v \\ i
        \end{bmatrix}_{q^{-2}} \, 
        (\varphi_0 e^{-i\ell \phi} \ot \varphi_+)^i 
        (\varphi_+ \ot e^{i\ell \phi}\varphi_0)^{v-i}
        \approx
        \delta_{v,0} 1 \ot 1 + \delta_{v,1}(\varphi_+ \ot 1 + 1 \ot \varphi_+) \,,\\
        (\Delta^{op} \varphi_+)^w & =
        \sum_{j=0}^w \, 
        \begin{bmatrix}
            w \\ j
        \end{bmatrix}_{q^{2}} \, 
        (e^{i\ell \phi}\varphi_0 \ot \varphi_-)^{j}
        (\varphi_- \ot \varphi_0 e^{-i\ell \phi})^{w-j} 
        \approx
        \delta_{w,0} 1 \ot 1 + \delta_{w,1}(\varphi_- \ot 1 + 1 \ot \varphi_-) \,.
    \end{align}
    The computation for the co-product of $\phi$ is more involved. Let us first simplify it by discarding all the unnecessary contributions.
    \begin{align}
        \Delta^{op} \phi & =
        \frac{i}{\ell} 
        \log\bigg(
        \frac{1}{\Delta^{op} \sqrt{1-q\mone\ell^2 \varphi_-\varphi_+}} 
        \big(\varphi_0 e^{-i\ell \phi} \ot \varphi_0 e^{-i\ell \phi} - \ell^2 \varphi_+ \ot \varphi_-\big)\bigg)
        \nonumber \\
        & \approx
        \frac{i}{\ell} 
        \log\bigg(
        \frac{1}{\sqrt{1 \ot 1 -q\mone\ell^2 (\varphi_- \ot e^{-i\ell \phi} + e^{i\ell \phi} \ot \varphi_-) (\varphi_+ \ot e^{i\ell \phi} + e^{-i\ell \phi} \ot \varphi_+)}}
        \big(e^{-i\ell \phi} \ot e^{-i\ell \phi} - \ell^2 \varphi_+ \ot \varphi_-\big)\bigg)
        \nonumber \\
        & \approx
        \frac{i}{\ell} 
        \log\bigg(
        \big(1 \ot 1 + \frac{1}{2}q\mone\ell^2 
        (e^{-i\ell \phi} \varphi_- \ot \varphi_+ e^{-i\ell \phi} + 
        \varphi_+ e^{-i\ell \phi} \ot e^{-i\ell \phi} \varphi_-)\big)
        \big(e^{-i\ell \phi} \ot e^{-i\ell \phi} - \ell^2 \varphi_+ \ot \varphi_-\big)\bigg)
        \nonumber \\
        & \approx
        \frac{i}{\ell} 
        \log\bigg(
        e^{-i\ell \phi} \ot e^{-i\ell \phi} + \frac{1}{2}\ell^2 \big(
        e^{-2i\ell \phi} \varphi_- \ot \varphi_+ e^{-2i\ell \phi} - 
        \varphi_+ \ot \varphi_-)
        \big)\bigg)
        \nonumber \\
        & \approx
        \frac{i}{\ell} 
        \log\Big(
        e^{-i\ell \phi} \ot e^{-i\ell \phi} + \frac{1}{2}\ell^2
        \varphi_- \wedge \varphi_+
        \Big) \,.
    \end{align}
    For simplicity, call $\Phi = \phi \ot 1 + 1 \ot \phi$ and $\psi = \frac{1}{2}\ell^2 (\varphi_- \wedge \varphi_+)$, and notice that 
    \be
        \Phi^n \psi^m = \psi^m (\Phi -2i\lambda m)^n
        \,\,,\quad
        e^{i\ell \Phi} \psi = q^{2} \, \psi e^{i\ell \Phi} \,.
    \ee    
    The co-product of $\phi$ then reduces to
    \begin{align}
        \Delta^{op} \phi & =
        \frac{i}{\ell} 
        \log\big(1 \ot 1 + (e^{-i\ell \Phi} - 1 \ot 1 + \psi)\big) =
        \frac{i}{\ell} \sum_{n=1}^{\infty} \, \frac{(-1)^{n+1}}{n} \, 
        (e^{-i\ell \Phi} - 1 \ot 1 + \psi)^n
        \nonumber \\
        & =
        \frac{i}{\ell} 
        \sum_{n=1}^{\infty} \, \frac{(-1)^{n+1}}{n} \, 
        \sum_{a=0}^n \, \binom{n}{a} (-1)^{n-a} 
        (e^{-i\ell \Phi} + \psi)^a
        \nonumber \\
        & =
        \frac{i}{\ell} 
        \sum_{n=1}^{\infty} \, \frac{(-1)^{n+1}}{n} \, 
        \sum_{a=0}^n \, \binom{n}{a} (-1)^{n-a} \,
        \sum_{k=0}^a \, 
        \begin{bmatrix}
            a \\ k    
        \end{bmatrix}_{q^{-2}} \,
        e^{-i(a-k)\ell \Phi} \psi^k
        \nonumber \\
        & \approx
        \frac{i}{\ell} 
        \sum_{n=1}^{\infty} \, \frac{(-1)^{n+1}}{n} \, 
        \sum_{a=0}^n \, \binom{n}{a} (-1)^{n-a} \,
        \big(e^{-ia\ell \Phi} + [a]_{q^{-2}} e^{-i(a-1)\ell \Phi} \psi \big)
        \nonumber \\
        & =
        \frac{i}{\ell} 
        \sum_{n=1}^{\infty} \, \frac{(-1)^{n+1}}{n} \, 
        \sum_{a=0}^n \, \binom{n}{a} (-1)^{n-a} \,
        e^{-ia\ell \Phi} 
        \Big(1 + \frac{1-q^{-2a-2}}{1-q^{-2}} e^{i\ell \Phi} \psi \Big)
        \nonumber \\
        & =
        \Phi \Big(1 + e^{i\ell \Phi} \psi\Big) +
        \frac{2i\lambda}{q^2-1} e^{i\ell \Phi} \psi
        \approx
        \Phi + \frac{2i\lambda}{q^2-1} e^{i\ell \Phi} \psi \,.
    \end{align}
    According to the product rule between $\Phi$ and $\psi$, we finally obtain
    \begin{align}
        (\Delta^{op} \phi)^u & =
        \Big(\Phi + \frac{2i\lambda}{q^2-1} e^{i\ell \Phi} \psi\Big)^u 
        \nonumber \\
        & =
        \Phi^u + 
        \Big(\Phi^{u-1} \frac{2i\lambda}{q^2-1} \psi + \Phi^{u-2} \frac{2i\lambda}{q^2-1} \psi \Phi + \cdots \Big) + 
        \Big(\Phi^{u-2} \Big(\frac{2i\lambda}{q^2-1}\psi\Big)^2 + \Phi^{u-3} \Big(\frac{2i\lambda}{q^2-1}\psi\Big)^2 \Phi + \cdots \Big) + \cdots
        \nonumber \\
        & \approx
        \Phi^u + \frac{2i\lambda}{q^2-1} \sum_{a=0}^{u-1} \, 
        \Phi^{u-1-a} \psi \Phi^a 
        \nonumber \\
        & =
        \Phi^u + \frac{2i\lambda}{q^2-1} 
        \sum_{a=0}^{u-1} \sum_{\alpha=0}^{u-1-a} \sum_{\beta=0}^{a} \,
        \binom{u-1-a}{\alpha} \binom{a}{\beta} \, (-i\lambda)^{u-1-a-\alpha} (i\lambda)^{a-\beta}
        :\Phi^{\alpha + \beta} \psi: 
        \nonumber \\
        & \approx
        \Phi^u + \frac{2i\lambda}{q^2-1} 
        \sum_{a=0}^{u-1} \, (-i\lambda)^{u-1} (-1)^{a} \, \psi
        =
        \Phi^u - 2\frac{(-i\lambda)^u}{q^2-1} \frac{1-(-1)^u}{2} \psi \,.
    \end{align}
    Above we used the symbol $:\,:$ to address at the normal order of the basis.
    Plugging the results in the $[X_+ \,\overset{\star}{,}\, X_-]$ commutator, we get
    \begin{align}
        [X_+ \,\overset{\star}{,}\, X_-] & =
        \delta_{v,1}\delta_{w,1}\delta_{u,0} \,
        \sigma\big(X_+ \wedge X_- \,,\,
         (\varphi_+ \ot \varphi_- + \varphi_- \ot \varphi_+)\big) \, X_-^v H^u X_+^w 
        \nonumber \\
        & \quad 
        - \delta_{v,0}\delta_{w,0} \sum_{u=0}^{\infty} \frac{i^u}{u!} \frac{(-i\lambda)^u\ell^2}{q^2-1} \frac{1-(-1)^u}{2} \,
        \sigma\big(X_+ \wedge X_- \,,\, \varphi_- \wedge \varphi_+\big) \, X_-^v H^u X_+^w
        \nonumber \\
        & = \ell^2 q\mone \frac{\sinh(\lambda H)}{\sinh(\ell\lambda)} \,.
    \end{align}
\end{proof}

\section{Further examples}
\label{App_Examples}

In order to make contact with more familiar versions of group field theories, we consider here some explicit realizations of Hopf algebra field theory. 
In the first example we pick the Hopf algebras associated to a finite group; this choice reduces the Hopf algebra field theory to ordinary group field theory based on finite groups.
In the second example we show how to derive ordinary three dimensional group field theories \cite{Boulatov:1992vp,Oriti:2006GFT} based on the $\SU(2)$ Lie group. This example can be alternatively derived as the limiting case of the group field theory based on the $q$-deformation of $\SU(2)$ presented in Sec. \ref{Sec_Examples}.

\subsection{Finite Groups}
\label{Sub_FiniteGroup}

Let $G$ a finite group of cardinality $|G|$, $F(G)$ the algebra of functions on $G$ and $\mathbb{C}[G]$ the group algebra constructed as the vector space generated by the span of the group elements $g \in G$. 
Since the generators of the algebra $\mathbb{C}[G]$ are group elements $g$, any element of the group algebra can be written as the linear combination $\bbC[G] \ni \hat{f} = \sum_u \hat{f} (g) g$, where the coefficients $\hat{f}(g)$ are regarded as functions on $G$. The $\bbC[G]$ bi-algebra structure can either be given in terms of the group element $g \in G$
\be
    \begin{aligned}
        \text{Co-product:} 
        & \quad
        \Delta g = g \ot g \,,\\
        \text{Co-unit:}
        & \quad 
        \varepsilon g = e \,,
        \label{Bi-Algebra_C[G]}
    \end{aligned}
\ee
or alternatively in terms of the coefficients $\hat{f}(u)$:
\be
    \begin{aligned}
        \text{Product:} 
        & \quad
        (\hat{f}_1 \cdot \hat{f}_2) (g) = \sum_v \hat{f}_1(v)\hat{f}_2(v^{-1} g) \,,\\
        \text{Co-product:} 
        & \quad
        \Delta \hat{f} (g,v) = \delta_g(v) \hat{f}(g) \,,\\
        \text{Co-unit:}
        & \quad 
        \varepsilon \hat{f} = \sum_g \hat{f}(g) \,.
    \end{aligned}
\ee
A basis for the algebra of functions $F(G)$ is given instead by the Kronecker delta on the group $\delta_g$, and the bi-algebra structure, given in terms of functions $f$ on the group, is
\be
    \begin{aligned}
        \text{Product:} 
        & \quad
        (f_1 \cdot f_2) (g) = f_1(g) \, f_2(g) \,,\\
        \text{Co-product:} 
        & \quad
        \Delta f(u,v) = f(u\cdot v) \,,\\
        \text{Co-unit:}
        & \quad 
        \varepsilon f = f(e) \,.
    \end{aligned}
    \label{Bi-Algebra_F(G)}
\ee
The co-integrals on the algebras $H$ and $A$ are the maps
\be
    \int_{F(G)} \equiv \frac{1}{|G|} \sum_g g : F(G) \rightarrow \mathbb{K}
    \,\,,\qquad
    \int_{\mathbb{C}[G]} \equiv \delta_e : \mathbb{C}[G] \rightarrow \mathbb{K} \,.
\ee
We used the symbol $e$ as the  identity in the group $G$.
\begin{proposition}[Generalized quantum double of a finite group]
    The generalized quantum double of the finite group $G$, $\cD(F(G),\bbC[G],\sigma)$, is given by the bi-algebras $F(G)$ and $\bbC[G]$ skew paired by the map
    \be
        \sigma(u \,,\, f) = f(u)
        \,\,,\quad
        \forall f \in F(G) , u \in G \subset \bbC[G] \,,
    \ee
    with mutual actions \eqref{MatchedPair}
    \be
        \begin{aligned}
            (\hat{f} \lhd f)(u) & = \varepsilon(f) \hat{f}(u) \,,\\
            (\hat{f} \rhd f)(u) & = \sum_v f(v\mone u v) \hat{f}(v) \,.
        \end{aligned}
    \ee
\end{proposition}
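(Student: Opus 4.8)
The plan is to read off the two actions directly from the general quantum-double formula \eqref{QuantumDouble_Actions}, once the skew pairing $\sigma$ and its convolution inverse have been made explicit for this pair. Here the roles are $H = F(G)$ and $A = \bbC[G]$, so $\sigma : \bbC[G] \ot F(G) \to \bbK$, and the two structural facts that do all the work are the group-like co-product $\Delta u = u \ot u$ on $\bbC[G]$ (from \eqref{Bi-Algebra_C[G]}) and the co-product $\Delta f(s,t) = f(st)$ on $F(G)$ (from \eqref{Bi-Algebra_F(G)}).

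First I would check that $\sigma(u \ot f) = f(u)$ is genuinely a skew pairing, i.e. that it satisfies \eqref{SkewMap_Multiplication-CoProduct} and \eqref{SkewMap_CoUnit}. Both identities collapse onto the defining properties of the two co-products: $\sigma(uv \ot f) = f(uv) = \sum f_{(1)}(u) f_{(2)}(v)$ is exactly the co-product of $F(G)$, while $\sigma(u \ot fg) = f(u) g(u)$ uses the pointwise product on $F(G)$ together with $\Delta u = u \ot u$; the unit relations follow from $\varepsilon f = f(e)$. Next I would record the convolution inverse: since the antipode on $\bbC[G]$ is $S u = u\mone$, Definition \ref{Def_SkewSymmetricBi-Algebras} gives $\sigma\mone(u \ot f) = \sigma(u\mone \ot f) = f(u\mone)$.

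With these in hand the actions are a short Sweedler manipulation. For a group element $u \in G \subset \bbC[G]$ one has $\Delta^2 u = u \ot u \ot u$, so \eqref{QuantumDouble_Actions} gives
\be
    u \lhd f = u \, \sigma\mone(u \ot f_{(1)}) \, \sigma(u \ot f_{(2)})
    = u \, f_{(1)}(u\mone) \, f_{(2)}(u) = \varepsilon(f) \, u \,,
\ee
where I used $\sum f_{(1)}(u\mone) f_{(2)}(u) = f(u\mone u) = f(e) = \varepsilon(f)$; extending by linearity over $\hat f = \sum_v \hat f(v)\, v$ yields $(\hat f \lhd f)(u) = \varepsilon(f)\, \hat f(u)$. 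For the other action,
\be
    u \rhd f = f_{(2)} \, \sigma\mone(u \ot f_{(1)}) \, \sigma(u \ot f_{(3)})
    = f_{(2)} \, f_{(1)}(u\mone) \, f_{(3)}(u) \,,
\ee
and evaluating this element of $F(G)$ at a group element $s$ gives $f_{(1)}(u\mone) f_{(2)}(s) f_{(3)}(u) = f(u\mone s u)$ by the triple co-product of $F(G)$; linearity in $\hat f$ then produces $(\hat f \rhd f)(s) = \sum_v \hat f(v)\, f(v\mone s v)$, which is the claimed formula.

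There is no genuine obstacle; the only thing to watch is the bookkeeping — keeping straight which algebra plays the role of $A$ and which of $H$, the direction in which $\sigma$ is contracted, and the correct splitting of $\Delta^2 f$ through $f_{(1)}(a) f_{(2)}(b) f_{(3)}(c) = f(abc)$. I would finally note that the resulting maps automatically satisfy the matched-pair compatibility \eqref{MatchedPair}: this is guaranteed by Definition \ref{Def_GenQuantumDouble}, which states that the actions \eqref{QuantumDouble_Actions} built from any convolution-invertible skew pairing make $H$ and $A$ a matched pair, so it need not be reverified by hand.
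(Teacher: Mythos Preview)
Your proposal is correct and follows the same approach as the paper for the skew-pairing verification, using the co-products on $\bbC[G]$ and $F(G)$ exactly as the authors do. In fact you go further than the paper's own proof, which only checks the axioms \eqref{SkewMap_Multiplication-CoProduct} and \eqref{SkewMap_CoUnit} for $\sigma$ and leaves the explicit derivation of the actions from \eqref{QuantumDouble_Actions} to the reader; your Sweedler computation of $\lhd$ and $\rhd$ fills that in and is done correctly.
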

\begin{proof}
    Let us prove that the skew pairing $\sigma(u \,,\, f) = f(u)$ satisfies the relations \eqref{SkewMap_Multiplication-CoProduct}.
    We carry out the proof just using the bi-algebra structures \eqref{Bi-Algebra_F(G)} of $F(G)$ and \eqref{Bi-Algebra_C[G]} $\bbC[G]$. 
    Consider first
    \be
        \sigma(\Delta u \,,\, f_1 \ot f_2) := 
        f_1(u) f_2(u) = (f_1 \cdot f_2)(u) :=
        \sigma(u \,,\, (f_1 \cdot f_2)) \,. 
    \ee
    The second identity is instead
    \be
        \sigma(u_1 \ot u_2 \,,\, \Delta f) := 
        \Delta f(u_1 \ot u_2) = f(u_1 \cdot u_2) :=
        \sigma((u_1 \cdot u_2) \,,\, f) \,. 
    \ee
    Note that also the unit identities hold
    \be
        \sigma(e \,,\, f) = f(e) = 1
        \,\,,\quad
        \sigma(u \,,\, 1_f) = 1_f(u) = 1 \,,
    \ee
    where we defined $1_f(g) = 1, \forall g \in G$ as the constant function on the group (identity element of $F(G)$) and we normalized $f(e)$ to the identity.
\end{proof}
\noindent
\begin{proposition}[Dual of the generalized quantum double of a finite group]
    The dual of the generalized quantum double $\cD^*(\bbC[G],F(G),\sigma)$ of a finite group $G$ is given by the bi-algebras $\bbC[G]$ and $F(G)$ skew co-paired by the element
    \be
        \sigma = \sum_u \, \delta_u \ot u \,,
    \ee
    with mutual co-actions \eqref{QuantumDouble_CoActions}
    \be
        \begin{aligned}
            \alpha(f) & = f \ot 1 \,\\
            \beta(u) & = \sum_{v} \, \delta_v \ot v u v\mone \,.
        \end{aligned}
    \ee
\end{proposition}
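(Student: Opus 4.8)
The plan is to reduce the statement to the general machinery already established. Once $\sigma = \sum_u \delta_u \ot u \in F(G)\ot\bbC[G]$ is shown to be an invertible skew co-pairing element in the sense of \eqref{SkewEl_id-CoProduct}--\eqref{SkewEl_CoProduct-id}, Definition \ref{Def_DualGenQuantumDouble} immediately produces the bi-algebra $\cD^*(\bbC[G],F(G),\sigma)$ together with the co-actions \eqref{QuantumDouble_CoActions} that make $F(G)$ and $\bbC[G]$ a matched co-pair; the only remaining work is then to evaluate those co-actions explicitly. So the central step is verifying the two co-pairing axioms, and here (with $H=F(G)$, $A=\bbC[G]$) I would use the explicit coproducts $\Delta u = u\ot u$ on $\bbC[G]$ and, in the basis of Kronecker deltas, $\Delta\delta_w = \sum_{ab=w}\delta_a\ot\delta_b$ on $F(G)$, together with the pointwise product $\delta_a\delta_b = \delta_{a,b}\,\delta_a$ in $F(G)$ and the group product in $\bbC[G]$.

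For the first axiom I would compute $(\id\ot\Delta_A)\sigma = \sum_u \delta_u\ot u\ot u$ and compare with $\sigma_{13}\sigma_{12} = \big(\sum_u\delta_u\ot 1\ot u\big)\big(\sum_v\delta_v\ot v\ot 1\big) = \sum_{u,v}\delta_u\delta_v\ot v\ot u = \sum_u\delta_u\ot u\ot u$, the collapse being forced by $\delta_u\delta_v=\delta_{u,v}\delta_u$. For the second axiom, $(\Delta_H\ot\id)\sigma = \sum_{a,b}\delta_a\ot\delta_b\ot ab$ must match $\sigma_{13}\sigma_{23} = \big(\sum_a\delta_a\ot 1\ot a\big)\big(\sum_b 1\ot\delta_b\ot b\big) = \sum_{a,b}\delta_a\ot\delta_b\ot ab$, which it does. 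The co-unit identities $(\varepsilon\ot\id)\sigma=(\id\ot\varepsilon)\sigma=1$ are then automatic, and are in any case visible from $\varepsilon\delta_u=\delta_{u,e}$ and $\sum_u\delta_u = 1$.

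The next step is to record the convolution inverse: by \eqref{InvSkewEl}, $\sigma\mone = (\id\ot S)\sigma = \sum_u \delta_u\ot u\mone$, using $S(u)=u\mone$ on $\bbC[G]$; one checks the alternative form $(S\mone\ot\id)\sigma$ yields the same element since $S\delta_u = \delta_{u\mone}$. With $\sigma$ and $\sigma\mone$ available, the co-actions follow by substitution into $\alpha(f)=\sigma\mone(f\ot 1)\sigma$ and $\beta(u)=\sigma\mone(1\ot u)\sigma$, carried out in the tensor-product algebra $F(G)\ot\bbC[G]$ with factorwise multiplication. The computation of $\alpha(f)$ telescopes: $(f\ot 1)\sigma = \sum_v f(v)\delta_v\ot v$, and left multiplication by $\sigma\mone$ collapses the sum (again via $\delta_u\delta_v=\delta_{u,v}\delta_u$ and $v\mone v=e$) to $\alpha(f)=f\ot 1$. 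For $\beta(u)$ the two delta-sums force the two group indices to coincide, leaving a single sum over $G$ of adjoint-type terms, reproducing the stated $\beta(u)=\sum_v\delta_v\ot vuv\mone$.

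I expect the only genuine obstacle to be careful leg-label bookkeeping: keeping straight which tensor factor each $\sigma_{ij}$ occupies, remembering that multiplication is taken factorwise in $F(G)\ot\bbC[G]$ (pointwise in the first leg, group multiplication in the second), and above all tracking the direction of the group-element inverses coming from the antipode so that the conjugation in $\beta$ comes out in the correct orientation. None of this is conceptually deep, but it is precisely where a sign- or inverse-direction slip would most easily occur.
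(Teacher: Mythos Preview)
Your proposal is correct and follows essentially the same approach as the paper: verify the two skew co-pairing axioms for $\sigma=\sum_u\delta_u\ot u$ directly from the bi-algebra structures of $F(G)$ and $\bbC[G]$, then check the co-unit identities. The paper carries out the axiom checks by evaluating at points $(v_1,v_2;f)$ and $(v;f_1,f_2)$ whereas you work in the $\delta_u$-basis, but this is the same computation in slightly different notation. You actually do more than the paper, which stops after the co-pairing axioms: your explicit evaluation of $\sigma\mone$ and the co-actions $\alpha$, $\beta$ fills in what the paper merely states in the proposition without proof.
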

\begin{proof}
    Let us prove that the skew co-pairing $\sigma = \sum_u \, \delta_u \ot u$ satisfies the relations \eqref{SkewEl_CoProduct-id}, \eqref{SkewEl_id-CoProduct}. 
    We carry out the proof just using the bi-algebra structures \eqref{Bi-Algebra_F(G)} of $F(G)$ and \eqref{Bi-Algebra_C[G]} $\bbC[G]$. 
    Consider first
    \begin{align}
        (\sigma_{13} \, \sigma_{23}) (v_1,v_2;f) & :=
        \sum_{u_1,u_2} \, \delta_{u_1}(v_1) \delta_{u_2}(v_2) \ot f(u_1 \cdot u_2) =
        \sum_{u_1,u_2} \, \delta_{u_1 \cdot u_2\mone}(v_1) \delta_{u_2}(v_2) \ot f(u_1) 
        \nonumber \\
        & \, =
        \sum_{u} \, \delta_{u}(v_1 \cdot v_2) \ot f(u) :=
        (\Delta \ot \id) \, \sigma (v_1,v_2;f) \,.
    \end{align}
    The second identity is instead
    \begin{align}
        (\sigma_{13} \, \sigma_{12}) (v,;f_1,f_2) & :=
        \sum_{u_1,u_2} \, (\delta_{u_1} \, \delta_{u_2})(v) \ot f_1(u_1) f_2(u_2) =
        \sum_{u_1,u_2} \, \delta_{u_1}(v) \delta_{u_2}(v) \ot f_1(u_1) f_2(u_2)
        \nonumber \\
        & \, =
        \sum_{u} \, \delta_{u}(v) \ot f_1(u) f_2(u) :=
        (\id \ot \Delta) \, \sigma (v;f_1,f_2) \,.
    \end{align}
    Note that also the co-unit identities hold
    \be
        (\varepsilon \ot \id) \, \sigma =
        \sum_u \, \varepsilon \delta_u \ot u =
        \sum_u \, \delta_u(e) \ot u = e
        \,\,,\quad
        (\id \ot \varepsilon) \, \sigma =
        \sum_u \, \delta_u \ot \varepsilon u =
        \delta_e \,.
    \ee
\end{proof}
\smallskip
\noindent
Using the dual of the quantum double of a finite group as the symmetry bi-algebra for the Hopf algebra field theory, the gauge projection \eqref{GaugeProj_Field} for the field $\Phi \in {F(G)}^3$ turns out to be
\be
    (\mathcal{P}_L \, \Phi)(u_1,u_2,u_3)
    =
    \frac{1}{|G|} \sum_h \, \Phi(h \cdot u_1, h \cdot u_2 , h \cdot u_3) \,.
\ee
The closure constraint \eqref{ClosureConstraint} is instead
\be
    \hat{\mathcal{C}} 
    =
    \frac{1}{|G|} \sum_u \, u \ot u \ot u \,.
\ee
Lastly, the amplitude \eqref{GeneratingFunction_Double} for a finite group is
\be
    \begin{aligned}
        \cA_{\Gamma^*} & = 
        \prod_{\{\cL_N\}} 
        \sum_h \sum_{h_1, \dots , h_{2N}} \, 
        \delta_h(h_1 \cdot h_2\mone \dots h_{2N-1}\cdot h_{2N}) \ot \delta_e(h) \\
        & =
        \prod_{\{\cL_N\}} 
        \sum_{h_{i,i+1} ; i=1}^N \, 
        \delta_e(h_{1,2} \cdot h_{2,3} \dots h_{N,1}) \,,
    \end{aligned}
\ee
where we defined the composed links $h_{i,i+1} = h_{2i-1} \, h_{2i}\mone \in G$ between tetrahedron $i$ and tetrahedron $i+1$.

\subsection{$\SU(2)$ group field theory}
\label{SubSec_GFT}

We consider the Lie group $\SU(2)$ and the associated Hopf algebras $F(SU(2))$ and $F_{\star}(\mathbb{R}^3) \cong \mathcal{U}(\mathfrak{su}(2))$.
The Hopf algebra $F_{\star}(\mathbb{R}^3)$  is specified by
\be
    \begin{aligned}
        \text{Product:} 
        & \quad
        [x_0 , x_{\pm}]_\star = \pm 2\ell x_{\pm} \,,
        & \quad
        [x_+ , x_-]_\star = \ell x_0 \,,\\
        \text{Co-product:} 
        & \quad
        \Delta x_a = x_a \ot 1 + 1 \ot x_a \,,\\
        \text{Co-unit:}
        & \quad 
        \varepsilon(x_a) = 0 \,,\\
        \text{Antipode:}
        & \quad 
        S(x_a) = - x_a \,,
    \end{aligned}
    \label{Bi-Algebra_U(su)}
\ee
with $x_a = x_0,x_{\pm}$. 
Here the symbol $\star$ stands for the star product of the universal enveloping algebra $\cU(\su(2))$, which implies that the functions on it (such as the coordinates $x_a$) satisfy a non commutative product, which antisymmetric contribution is given in \eqref{Bi-Algebra_U(su)}. 

The Hopf algebra $F(\SU(2))$ is specified by
\be
    \begin{aligned}
        \text{Product:} 
        & \quad
        [p_a , p_b] = 0 \,,\\
        \text{Co-product:} 
        & \quad
        \Delta p_3 = p_3 \ot p_0 + p_0 \ot p_3 -i\ell p_+ \wedge p_- \,,\\
        & \quad
        \Delta p_{\pm} = p_{\pm} \ot p_0 + p_0 \ot p_{\pm} \pm i\ell p_{\pm} \wedge p_3 \,,\\
        & \quad
        \Delta p_0 = p_0 \ot p_0 + -\frac{1}{2}\ell^2 (2 p_3 \ot p_3 + p_+ \ot p_- + p_- \ot p_+) \,,\\
        \text{Co-unit:}
        & \quad 
        \varepsilon(p_a) = 0 \,,\\
        \text{Antipode:}
        & \quad 
        S(p_a) = - p_a \,,
    \end{aligned}
    \label{Bi-Algebra_F(SU)}
\ee
with $p_0 = \sqrt{1 - \ell^2 |p|^2}$ and $p_a = \{p_3,p_{\pm}\}$.
Note that by choosing $p_0 > 0$ the coordinates $p_a$ only cover half of the $\SU(2)$ group. To be rigorous, one should take both the coordinate patches spanned by $p_a$ with $p_0 > 0$ and $p_0 < 0$ \cite{Joung:2008mr}.
The co-integrals on $F(\SU(2))$ and $\cU(\su(2)) \cong \bbR^3_{\star}$ are resp. given by the usual Haar measure on $\SU(2)$ and by the Lebesgue measure on $\bbR^3$.
\medskip \\
Boulatov group field theory is a specific example of Hopf algebra field theory with the choices $H = F(SU(2))$ and $A= F_{\star}(\mathbb{R}^3) \cong \mathcal{U}(\mathfrak{su}(2))$. Since we use $F_{\star}(\mathbb{R}^3)$ to decorate the triangulation, the triangulation is "flat".
The dual complex is  decorated by the elements of $F(SU(2))$. 
\begin{proposition}[Group field theory: generalized quantum double]
    The generalized quantum double (Def. \ref{Def_GenQuantumDouble}) of $\SU(2)$ group field theory is the Drinfeld double of $\mathcal{U}(\mathfrak{su})$:
    \be
        \mathcal{D}(\mathbb{R}^3_{\star}, F(SU(2)), \sigma) 
        =
        \mathcal{D}(\mathcal{U}(\mathfrak{su}(2))) \cong
        F(SU(2)) \rtimes_{\sigma} \mathbb{R}^3_{\star} \,,
    \ee
    with pairing 
    \be
        \sigma(x_-^j x_0^i x_+^k \,,\, p_+^b p_3^a p_-^c) = i^{a+b+c} \, 
        \delta_{ia} \delta_{jb} \delta_{kc} \, a!b!c! \,,
    \ee
    and actions \eqref{QuantumDouble_Actions}
    \be
        \begin{aligned}
            &
            x_0 \rhd p_3 = 0 
            \,\,,\quad
            x_0 \rhd p_{\pm} = \mp 2 \ell p_{\pm} \,,\\
            &
            x_{\pm} \rhd p_3 = \pm 2 \ell p_{\pm} 
            \,\,,\quad
            x_{\pm} \rhd p_{\pm} = 0 
            \,\,,\quad
            x_{\pm} \rhd p_{\mp} = \mp 2 \ell p_3 \,,\\
        \end{aligned}
        \qquad
        x_a \lhd p_b = 0 \,.
    \ee    
\end{proposition}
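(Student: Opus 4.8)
The plan is to establish the statement exactly as in the proof of Proposition~\ref{Prop_qDeformedGenQuantumDouble}, of which this is the undeformed ($\lambda\to 0$, i.e. $q\to 1$) specialisation: first show that the diagonal map $\sigma$ is a skew pairing of the bi-algebras \eqref{Bi-Algebra_U(su)} and \eqref{Bi-Algebra_F(SU)}, and then read off the mutual actions from \eqref{QuantumDouble_Actions}. Because $F(SU(2))$ is commutative and the coproduct on $\mathbb{R}^3_\star\cong\cU(\su(2))$ is primitive, the $q$-binomials appearing in the deformed computation collapse to ordinary binomials and the whole argument simplifies considerably under the identifications $x_0\leftrightarrow H$, $x_\pm\leftrightarrow X_\pm$, $p_3\leftrightarrow\phi$, $p_\pm\leftrightarrow\varphi_\pm$.

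First I would check the co-unit normalisations \eqref{SkewMap_CoUnit} (immediate from $\varepsilon(x_a)=\varepsilon(p_a)=0$ and the diagonal form of $\sigma$) and then the compatibility \eqref{SkewMap_Multiplication-CoProduct} relating the coproduct on $\cU(\su(2))$ to the commutative product on $F(SU(2))$. Here $\Delta(x_-^j x_0^i x_+^k)=(\Delta x_-)^j(\Delta x_0)^i(\Delta x_+)^k$ with each $\Delta x_a=x_a\ot 1+1\ot x_a$ primitive, so the binomial expansion lands directly in PBW order with no reordering; pairing against $p_+^b p_3^a p_-^c\cdot p_+^r p_3^p p_-^s=p_+^{b+r}p_3^{a+p}p_-^{c+s}$ and using $\binom{a+p}{p}\,p!\,a!=(a+p)!$ (and similarly for the other two indices) shows both sides equal $i^{(a+p)+(b+r)+(c+s)}(a+p)!(b+r)!(c+s)!\,\delta_{i,a+p}\delta_{j,b+r}\delta_{k,c+s}$.

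The harder half is the dual compatibility, relating the non-commutative product of $\cU(\su(2))$ to the non-cocommutative coproduct \eqref{Bi-Algebra_F(SU)} of $F(SU(2))$. Rather than reorder products $u\cdot u'$ by hand, I would invoke the general procedure of Appendix~\ref{App_Bi-Alg_Derivation}: in the Lie-group/Lie-algebra special case noted there (where $\cL_a$ is the ordinary Lie derivative on $G$), taking $C_H=F(SU(2))$ and $C_A=\mathbb{R}^3_\star$ and running the star product \eqref{GenDef_StarProducts} reproduces precisely the $\su(2)$ brackets $[x_0,x_\pm]_\star=\pm 2\ell x_\pm$ and $[x_+,x_-]_\star=\ell x_0$ from the coproduct \eqref{Bi-Algebra_F(SU)}. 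This is the simpler, undeformed analogue of the $[X_+\,\overset{\star}{,}\,X_-]$ computation carried out there, and it is where the factor $p_0=\sqrt{1-\ell^2|p|^2}$ and the cross-terms of $\Delta p_0$ must be handled carefully; I expect this to be the main obstacle.

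Finally, with $\sigma$ and its convolution inverse $\sigma^{-1}$ (obtained from $S(x_a)=-x_a$) in hand, I would substitute the generators into $a\rhd h=h_{(2)}\sigma^{-1}(a_{(1)}\ot h_{(1)})\sigma(a_{(2)}\ot h_{(3)})$ and the corresponding $\lhd$ formula in \eqref{QuantumDouble_Actions}. Diagonality of $\sigma$ together with the primitive coproducts truncates each expression to degree-one terms and yields the tabulated actions (for instance the two surviving contributions to $x_+\rhd p_3$ each give $\ell p_+$, reproducing $2\ell p_+$), while $x_a\lhd p_b$ collapses to $\sigma^{-1}(x_a\ot p_b)+\sigma(x_a\ot p_b)+x_a\varepsilon(p_b)=0$, reflecting that the $\mathbb{R}^3_\star$ factor is acted on trivially. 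Since one action is trivial, the double cross product $H\bowtie A$ of Definition~\ref{Def_GenQuantumDouble} degenerates to the semidirect product $F(SU(2))\rtimes_\sigma\mathbb{R}^3_\star$, which is the Drinfeld double $\mathcal{D}(\cU(\su(2)))$, giving the claimed identification. As a consistency check one may note the result is the $\lambda\to0$ limit of Proposition~\ref{Prop_qDeformedGenQuantumDouble} transported through the change of coordinates \eqref{change} between $(\varphi_\pm,\phi)$ and $(p_\pm,p_3)$, although the direct verification above is cleaner since it avoids carrying the pairing through that substitution.
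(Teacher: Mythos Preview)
Your proposal is correct and follows essentially the same approach as the paper: verify the first skew-pairing axiom directly via the binomial expansion of the primitive coproducts on $\cU(\su(2))$ against the commutative product on $F(\SU(2))$, defer the dual axiom to the star-product machinery of Appendix~\ref{App_Bi-Alg_Derivation}, and check the (co)unit normalisations. You in fact go slightly further than the paper, which merely states the actions without deriving them; your sketch of extracting $x_a\rhd p_b$ and $x_a\lhd p_b$ from \eqref{QuantumDouble_Actions} using primitivity and diagonality of $\sigma$ is a welcome addition, and your explanation of why the double cross product collapses to the semidirect product $F(\SU(2))\rtimes_\sigma\mathbb{R}^3_\star$ is not spelled out in the paper either.
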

\begin{proof}
    Let us prove that the skew pairing $\sigma(x_0^i x_+^j x_-^k \,,\, p_3^a p_-^b p_+^c) = i^{a+b+c} \, \delta_{ia} \delta_{jb} \delta_{kc} \, a!b!c!$ satisfies the relations \eqref{SkewMap_Multiplication-CoProduct}.
    We carry out the proof just using the bi-algebra structures \eqref{Bi-Algebra_U(su)} of $\cU(\su(2))$ and \eqref{Bi-Algebra_F(SU)} of $F(\SU(2))$.
    Consider first
    \be
        \begin{aligned}
            \sigma(\Delta (x_-^j x_0^i x_+^k) \,,\, p_+^b p_3^a p_-^c \ot p_+^s p_3^r p_-^t) & :=
            \sum_{u,v,w=0}^{i,j,k} \, 
            \binom{i}{u} \binom{j}{v} \binom{k}{w} \,
            \sigma(x_-^v x_0^u x_+^w \ot x_-^{j-v} x_0^{i-u} x_+^{k-w} \,,\, p_+^b p_3^a p_-^c \ot p_+^s p_3^r p_-^t) 
            \\
            & \, =i^{a+b+c+r+s+t} \,
            \delta_{i,a+r} \delta_{j,b+s} \delta_{k,c+t} \,
            \binom{a+r}{a} \binom{b+s}{b} \binom{c+t}{c} \, (a!b!c!r!s!t!) \,,\\ 
            \sigma(x_-^j x_0^i x_+^k \,,\, p_+^b p_3^a p_-^c \cdot p_+^s p_3^r p_-^t) & :=
            \sigma(x_-^j x_0^i x_+^k \,,\, p_+^{b+s} p_3^{a+r} p_-^{c+t}) \\ 
            & \, = i^{a+b+c+r+s+t} \,
            \delta_{i,a+r} \delta_{j,b+s} \delta_{k,c+t} \,
            (a+r)!(b+s)!(c+t)! \,.
        \end{aligned}
    \ee
    The second identity is more involved, since it requires the use of the star product on $\cU(\su(2))$. 
    We refer to App. \ref{App_Bi-Alg_Derivation} which provides a general setting for such type of computations.
    Note that also the unit identities hold
    \be
        \sigma(1 \,,\, p_+^b p_3^a p_-^c) = 
        \delta_{0a} \delta_{0b} \delta_{0c} = 1
        \,\,,\quad
        \sigma(x_-^j x_0^i x_+^k \,,\, 1) = 
        \delta_{i0} \delta_{j0} \delta_{k0} = 1 \,.
    \ee 
\end{proof}
\noindent
\begin{proposition}[Group field theory: dual of the generalized quantum double]
    The dual of the generalized quantum double (Def. \ref{Def_DualGenQuantumDouble}) of the standard group field theory is the dual of the Drinfeld double of $\cU(\su(2))$:
    \be
        \mathcal{D}(F(SU(2)), \mathbb{R}^3_{\star}, \sigma) = \mathbb{R}^3_{\star} \semicoprodl_{\sigma} F(SU(2)) \,,
    \ee
    with skew co-pairing element given by the star exponential
    \be
        \sigma = e_{\star}^{i p_+ \ot x_-} \, e_{\star}^{i p_3 \ot x_3} \, e_{\star}^{i p_- \ot x_+} \,,
    \ee
    with co-actions \eqref{QuantumDouble_CoActions}
    \be
        \alpha(p_a) = p_a \ot 1
        \,\,,\quad
        \begin{aligned}
            &
            \beta(x_0) = 1 \ot x_0 + 2i\ell (p_- \ot x_+ - p_+ \ot x_-)
            + \cO(\ell^2) \,,\\
            &
            \beta(x_{\pm}) = 1 \ot x_{\pm} \pm i\ell p_{\pm} \ot x_0 
            + \cO(\ell^2) \,.
        \end{aligned}
    \ee
    Since the full expression of second co-action involves the star product on $\cU(\su(2))$, for simplicity we truncated its expansion in $\ell$ at the first order.
\end{proposition}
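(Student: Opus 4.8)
The plan is to establish the two assertions of the proposition separately: first that the stated star exponential $\sigma = e_{\star}^{ip_+ \ot x_-}\, e_{\star}^{ip_3 \ot x_3}\, e_{\star}^{ip_- \ot x_+}$ is a genuine skew co-pairing for the pair $H = F(SU(2))$, $A = \mathbb{R}^3_{\star} \cong \cU(\su(2))$, so that Def.~\ref{Def_DualGenQuantumDouble} applies verbatim and identifies the dual of the generalized quantum double as $\mathbb{R}^3_{\star} \semicoprodl_{\sigma} F(SU(2))$; and second that the induced co-actions \eqref{QuantumDouble_CoActions} reduce to the displayed expressions. Throughout I would follow the same template as the proof of Prop.~\ref{Prop_qDeformedDualGenQuantumDouble}, of which this is the undeformed specialization. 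Indeed a shortcut is available, namely to recover the whole statement by taking the limit $\lambda \to 0$ (equivalently $q \to 1$) in Prop.~\ref{Prop_qDeformedDualGenQuantumDouble}, using Table~\ref{Tab_ClassicalLimit} to pass from $\cU_q(\su(2))$, $F(\SU_q(2))$ to $\cU(\su(2))$, $F(SU(2))$; I would present the direct verification as the main argument and cite the limit as a consistency check.

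For the skew co-pairing axioms I would check \eqref{SkewEl_id-CoProduct} and \eqref{SkewEl_CoProduct-id} in turn. The relation $(\id \ot \Delta_A)\,\sigma = \sigma_{13}\sigma_{12}$ is the easy one: since the $\cU(\su(2))$ co-product is primitive, $\Delta x_a = x_a \ot 1 + 1 \ot x_a$ from \eqref{Bi-Algebra_U(su)}, each factor splits by the ordinary relation $e_{\star}^{A+B} = e_{\star}^{A}e_{\star}^{B}$ valid for star-commuting exponents, and the resulting slot-$2$ and slot-$3$ exponentials can be freely reordered because $F(SU(2))$ is commutative (the first legs $p_a$ all commute) and the two $A$-legs live in different tensor factors. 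Collecting the factors gives exactly $\sigma_{13}\sigma_{12}$, mirroring the computation in Prop.~\ref{Prop_qDeformedDualGenQuantumDouble} with all $e^{\pm\lambda H}$ dressings set to unity.

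The second relation $(\Delta_H \ot \id)\,\sigma = \sigma_{13}\sigma_{23}$ is where the real work lies, because the $F(SU(2))$ co-product in \eqref{Bi-Algebra_F(SU)} is non-primitive: it carries the $\ell$-dependent cross terms $\Delta p_{\pm} = p_{\pm}\ot p_0 + p_0 \ot p_{\pm} \pm i\ell\, p_{\pm}\wedge p_3$, and similarly for $p_3$ and $p_0$. Verifying that the star exponential reproduces this co-product is not a formal manipulation of commuting exponents; it requires the reverse-engineering of App.~\ref{App_Bi-Alg_Derivation}, i.e. recovering the $\cU(\su(2))$ multiplication \eqref{Bi-Algebra_U(su)} as the $\star$-product \eqref{GenDef_StarProducts} built from the $F(SU(2))$ co-algebra, exactly as done for the $q$-case. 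I expect this to be the main obstacle, and I would dispatch it by invoking that appendix rather than by a direct expansion.

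Finally, for the co-actions I would use \eqref{QuantumDouble_CoActions}. The identity $\alpha(p_a) = \sigma\mone(p_a \ot 1)\sigma = p_a \ot 1$ is immediate: every first-leg generator appearing in $\sigma$ is a $p_b$, and these commute with $p_a$ by commutativity of $F(SU(2))$, so $(p_a \ot 1)$ passes through $\sigma$ and cancels against $\sigma\mone$. For $\beta(x_a) = \sigma\mone(1\ot x_a)\sigma$ I would treat the expression as a star-adjoint conjugation and expand to first order in $\ell$: each nested star-commutator of $1 \ot x_a$ with an exponent $ip_{\pm}\ot x_{\mp}$ or $ip_3 \ot x_3$ produces one factor of $\ell$ through the $\cU(\su(2))$ relations $[x_0,x_{\pm}]_{\star} = \pm 2\ell x_{\pm}$ and $[x_+,x_-]_{\star} = \ell x_0$, so only single commutators survive at order $\ell$. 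Conjugating successively by the three exponentials and collecting the $\cO(\ell)$ terms yields the stated $\beta(x_0)$ and $\beta(x_{\pm})$, which agree term by term with the $\lambda\to 0$ limit of the corresponding $q$-deformed co-actions \eqref{CoActions_qDeformed_0}--\eqref{CoActions_qDeformed_pm} of Prop.~\ref{Prop_qDeformedDualGenQuantumDouble}. As there, the truncation in $\ell$ is only for readability, the exact co-actions being fixed by the same star-product data.
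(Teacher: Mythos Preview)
Your proposal is correct and follows essentially the same approach as the paper: you verify $(\id \ot \Delta_A)\sigma = \sigma_{13}\sigma_{12}$ directly from the primitive co-product of $\cU(\su(2))$ and the commutativity of $F(SU(2))$, and you defer the harder relation $(\Delta_H \ot \id)\sigma = \sigma_{13}\sigma_{23}$ to the reverse-engineered $\star$-product machinery of App.~\ref{App_Bi-Alg_Derivation}, exactly as the paper does. The paper's proof stops there (plus the co-unit check), without spelling out the co-actions; your additional discussion of $\alpha$ and $\beta$ via star-adjoint expansion, and your $\lambda\to 0$ consistency check against Prop.~\ref{Prop_qDeformedDualGenQuantumDouble}, are welcome supplements that the paper leaves implicit.
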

\begin{proof}
    Let us prove that the skew co-pairing $\sigma = e_{\star}^{i p_+ \ot x_-} \, e_{\star}^{i p_3 \ot x_3} \, e_{\star}^{i p_- \ot x_+}$ satisfies the relations \eqref{SkewEl_id-CoProduct}, \eqref{SkewEl_CoProduct-id}. We carry out the proof just using the bi-algebra structures \eqref{Bi-Algebra_U(su)} of $\cU(\su(2))$ and \eqref{Bi-Algebra_F(SU)} of $F(\SU(2))$.
    Consider first
    \begin{align}
        (\id \ot \Delta) \, \sigma & :=
        e_{\star}^{i p_+ \ot \Delta x_-} \, e_{\star}^{i p_3 \ot \Delta x_3} \, e_{\star}^{i p_- \ot \Delta x_+} =
        e_{\star}^{i p_+ \ot 1 \ot x_-} e_{\star}^{i p_+ \ot x_- \ot 1} \,
        e_{\star}^{i p_3 \ot 1 \ot x_3} e_{\star}^{i p_3 \ot x_3 \ot 1} \,
        e_{\star}^{i p_- \ot 1 \ot x_+} e_{\star}^{i p_- \ot x_+ \ot 1} 
        \\ 
        & \, =
        e_{\star}^{i p_+ \ot 1 \ot x_-} e_{\star}^{i p_3 \ot 1 \ot x_3} e_{\star}^{i p_- \ot 1 \ot x_+} \,\,
        e_{\star}^{i p_+ \ot x_- \ot 1} e_{\star}^{i p_3 \ot x_3 \ot 1} e_{\star}^{i p_- \ot x_+ \ot 1} :=
        \sigma_{13} \, \sigma_{12} \,.
    \end{align}
    The second identity is more involved as it requires the definition of star exponential for $\cU(\su(2))$. 
    We refer to App. \ref{App_Bi-Alg_Derivation} which provides a general setting for such type of computations. 
    Note that also the co-unit relations hold
    \be
        (\varepsilon \ot \id) \, \sigma = 
        e_{\star}^{i 0 \ot x_-} \, e_{\star}^{i 0 \ot x_3} \, e_{\star}^{i 0 \ot x_+} = 1
        \,\,,\qquad
        (\id \ot \varepsilon) \, \sigma =
        e_{\star}^{i p_+ \ot 0} \, e_{\star}^{i p_3 \ot 0} \, e_{\star}^{i p_- \ot 0} = 1 \,.
    \ee
\end{proof}
\smallskip
\noindent
Let us now derive the fundamental ingredients of ordinary three dimensional group field theory using the dual of the generalized quantum double above. The gauge projection \eqref{GaugeProj_Field} for the field $\Phi \in F(\SU(2)^{\times 3})$ turns out to be
\be
    (\cP_L \, \Phi)(p^1,p^2,p^3) = \int_{\SU(2)} [\dd k^3] \,
    \Phi(k \op p^1, k \op p^2, k \op p^3) \,,
\ee
where $[\dd k^3]$ is the Haar measure on $\SU(2)$ in the coordinates $\{k_3,k_{\pm}\}$ and  $(k \op p^i)_a$ is the coordinate $a=3,\pm$ of the product of two group elements, where the left element is parametrized by the coordinates $k_a(h)$ and the right one by the coordinates $p_a(g)$. 
Omitting the coordinates, the gauge averaging assumes the more familiar form
\be
    (\cP_L \, \Phi)(g_1,g_2,g_3) = \int_{\SU(2)} [\dh] \,
    \Phi(hg_1,hg_2,hg_3) \,.
\ee
Noting $x_a^i \equiv 1 \ot \cdots 1 \ot x_a \ot 1 \cdots \ot 1$ for $a=0,\pm$, the closure constraint as given in \eqref{ClosureConstraint} would be
\be
    \hat{\cC}(x^1,x^2,x^3) =  \delta(x_1 + x_2 + x_3) \,.  
\ee

The Feynman diagram amplitude \eqref{GeneratingFunction_Double} is 
\be
    \cA_{\Gamma^*} = \prod_{\{\cL_N\}} \, \int [\dd p^3]^{N} [\dd x^3] \,
    e_{\star}^{i \Delta^{N} p_+ \ot x^e_-} \,
    e_{\star}^{i \Delta^{N} p_3 \ot x^e_3} \,
    e_{\star}^{i \Delta^{N} p_- \ot x^e_+} \,
    \,,
    \label{GeneratingFunction_GroupFieldTheory}
\ee
where we normalized the volume of the group $\SU(2)$ to the identity. 
The element living in the $n^{th}$ tensor space of the co-product $\Delta^N p_a$ is the coordinate $a=3,\pm$ of the $\SU(2)$ group element that decorates the link that goes from the center of tetrahedron $n$ to the center of tetrahedron $n+1$, where tetrahedron $N+1$ is identified to the first tetrahedron.
As explained in Sec. \ref{sec:pw-disc}, this expression is associated to the discretization of $BF$ theory without any cosmological constant, where the $x^e_a$ are the coordinates of the variable associated to the frame field discretized on an edge $e$ of the triangulation $\Gamma$, while the $N^{th}$ co-product gives the coordinates of the curvature $F(A)$ around such edge, where $A$ is the connection whose holonomy is discretized on the links of the dual complex.

\bibliographystyle{Biblio}
\bibliography{biblio}

\end{document}